\newtheorem{thrm}{Theorem}
\newtheorem{lmn}{Lemma}
\DeclareMathOperator{\tr}{Tr}
\DeclareMathOperator{\re}{Re}
\DeclareMathOperator{\im}{Im}
\DeclareMathOperator{\str}{str}
\DeclareMathOperator{\rank}{rank}
\DeclareMathOperator{\Sp}{Sp}
\renewcommand{\vec}[1]{\mathbf{#1}}
\begin{document}
	
\title{Revisiting Gaussian genuine entanglement witnesses with modern software}
\author{E. Shchukin}
\email{shchukin@uni-mainz.de}
\author{P. van Loock}
\email{loock@uni-mainz.de}
\affiliation{Johannes-Gutenberg University of Mainz, Institute of Physics, Staudingerweg 7, 55128 Mainz, Germany}

\begin{abstract}
Continuous-variable Gaussian entanglement is an attractive notion, both as a
fundamental concept in quantum information theory, based on the well-established
Gaussian formalism for phase-space variables, and as a practical resource in
quantum technology, exploiting in particular, unconditional room-temperature
squeezed-light quantum optics. The readily available high level of scalability,
however, is accompanied by an increased theoretical complexity when the
multipartite entanglement of a growing number of optical modes is considered.
For such systems, we present several approaches to reconstruct the most probable
physical covariance matrix from a measured non-physical one and then test the
reconstructed matrix for different kinds of separability (factorizability,
concrete partite separability or biseparability) even in the presence of
measurement errors. All these approaches are based on formulating the desired
properties (physicality or separability) as convex optimization problems, which
can be efficiently solved with modern optimization solvers, even when the system
grows. To every optimization problem we construct the corresponding dual problem
used to verify the optimality of the solution. Besides this numerical part of
work, we derive an explicit analytical expression for the symplectic trace of a
positive definite matrix, which can serve as a simple witness of an entanglement
witness, and extend it for positive semidefinite matrices. In addition, we show
that in some cases our optimization problems can be solved analytically. As an
application of our analytical approach, we consider small instances of bound
entangled or genuine multipartite entangled Gaussian states, including some
examples from the literature that were treated only numerically, and a family of
non-Gaussian states.
\end{abstract}

\pacs{03.67.Mn, 03.65.Ud, 42.50.Dv}

\keywords{genuine multipartite entanglement, convex optimization, Julia}

\maketitle

\section{Introduction} 

The largest entangled system ever experimentally demonstrated is that of a fully
inseparable light-mode state for more than one million modes
\cite{10.1063/1.4962732}. This extraordinary level of scalability of
continuous-variable systems \cite{RevModPhys.77.513, Adesso_2007,
RevModPhys.84.621, serafini} relies upon the deterministic nature of
field-quadrature squeezing for photonic ``qumodes'', not available for photonic
qubits encoded in single photons and using standard quantum optics sources, and
on advanced methods such as time-domain multiplexing. While the photonic qumode
and qubit approaches share the benefit of room-temperature operations, unlike
typical solid-state platforms for quantum computing, with regards to a universal
set of quantum gates the main experimental hurdles are distinct: deterministic
non-Gaussian gates are hard to obtain for optical qumodes and deterministic
entangling gates are not readily available for photonic qubits. Since building
large continuous-variable cluster states, or multipartite multi-mode entangled
states \cite{Furusawa-Look, PhysRevA.109.040101}, can be done efficiently with
optical qumodes, it is of high interest to theoretically characterize and
experimentally determine the entanglement properties of such states ---
efficiently and reliably, even on a large scale. The Gaussian entanglement can
then be employed for the ultimate applications such as universal
measurement-based quantum computing
\cite{RevModPhys.77.513, Furusawa-Look, PhysRevA.109.040101, RevModPhys.84.621}
or for more near-term, sub-universal quantum advantage demonstrations
\cite{Nature.606.75}. In this work, we shall focus on the increased complexity
of Gaussian entangled states of up to almost 20 optical modes, exhibiting a rich
variety of (in)separability properties including the ultimate form of genuine
multipartite multi-mode entanglement. Supplementing such large-scale optical
systems with additional non-Gaussian elements may eventually lead to the
ultimate quantum applications \cite{PhysRevA.64.012310, PhysRevLett.123.200502}.
Non-Gaussian (entangled) states reach a high level of complexity even on a
smaller scale \cite{PRXQuantum.2.030204, 10.1088/1367-2630/ad6475}.

We present convex optimization formulations of two problems for multi-mode
Gaussian states --- how to extract the best possible physical covariance matrix
(CM) from the measured set of data and how to verify genuine multipartite
entanglement of a physical covariance matrix. Our work is not the first one to
apply methods of convex optimization for studying entanglement properties of
multipartite multi-mode states, but we go beyond just verifying entanglement. We
also compute the degree to which such a verification can be trusted in the
presence of imperfect measurement data. Taking into account the recent progress
in hardware and, more importantly, in software, it is the right time to
reconsider application of convex optimization to such problems. In contrast with
most other works where just a few states are tested for multipartite
entanglement, we can perform such tests ``on an industrial scale'' for a
sufficiently large number of parts. Our approach is not restricted to CMs only,
it can be extended to a more general problem of finding the most likely set of
data with special properties from a set of measurements which do not have these
properties. As long as the properties can be formulated in terms of convexity,
the ideas presented in this work can also be applied to such a problem as well.

For every problem developed in this work we present a detailed comparison of
three solvers --- Clarabel \cite{Clarabel}, Mosek and SCS \cite{SCS}. The
hardware is a high-end desktop PC with 24 cores and 128Gb of RAM. We do not
program to each solver's API individually, we use the JuMP \cite{JuMP} framework
instead. All solvers are used with the default settings, except Clarabel, where
we changed the default single-threaded linear solver to the multithreaded mkl
solver. Using non-default settings for parameters that influence the convergence
to the solution is mostly ``guess and try'' approach. We thus avoid guessing the
parameters and use the defaults. Note that thanks to the dual problem and the
Karush-Kuhn-Tucker (KKT) conditions we never have to guess whether the produced
solution is correct or not --- the dual solution is a witness of optimality and
accuracy. Irrespective of the parameters chosen for the execution of a solver,
we can always control the quality of the produced output. 

Our results will be illustrated by six figures for each problem. We show the
average time needed to solve a problem of a given size (averaged over 10 runs)
together with its standard deviation, the maximal amount of RAM required
\footnote{We obtain the maximal amount of RAM by reading the \texttt{VmHWM}
value from \texttt{/proc/PID/status}, where \texttt{PID} is the process ID of
the Julia script that solves the problem. Clearly, this approach works in Linux
only. We do not know, nor do we care about, how to obtain this value on other
platforms.} and three accuracy metrics (which are not averages, but the worst
cases of the ten runs). The first two metrics are the magnitudes ($\log_{10}$)
of the minimal eigenvalue of the primal and dual optimal solution, the other one
being the magnitude of the relative duality gap, which we define as the absolute
value of the difference between the optimal values of the primal and dual
objective functions divided by the optimal value of the primal function. It
turns out that the primal and dual solutions are boundary, so their minimal
eigenvalues are zero, the other metric characterizes the degree of optimality of
the solution. This choice of the metrics is motivated by the fact that boundary
conditions of the primal and dual solutions are not explicit in the problem
formulation. It is a consequence of one of the KKT conditions, but solvers do
not have to preserve it explicitly. These metrics are thus a good test for the
quality of the produced solution. For a perfect solution the minimal eigenvalues
and the duality gap must be zero, so all these metrics must be $-\infty$. Due to
limited precision of the machine arithmetic the numerical solutions are
imperfect and the metrics are finite. The more negative they are, the more
accurate the solution is. SCS solver is the least accurate for the problems
considered, but its truly remarkable feature is low RAM requirement by
comparison with the other two solvers, so we included it into our comparison. In
addition, its precision is likely to be improved by increasing the number of
iteration (which would also increase the time to find a solution), but that is
the topic of a more detailed comparison and is outside the scope of this work.
Whether the default accuracy is enough or not can be decided on a case-by-case
basis. 

The figures also demonstrate that the reduced case can be solved faster and with
less RAM than the full case, so our specialization of the problems for the full
and reduced cases separately is \textit{very valuable for practical applications
of states with a large number of modes.} We emphasize that the exact results
might depend on the test sets (on the same hardware), but the curves give a
rough estimate of the complexity to solve the problems of various sizes. For
Clarabel and Mosek we show the maximal problem that is possible to solve with
these solvers with 128 Gb of RAM. For SCS, which typically requires far less
RAM, the main limiting factor is the time to perform the tests.

A significant feature of our ``modern treatment'' is that we can deal with
Gaussian multipartite multi-mode states of a larger scale compared with existing
works and even treat them analytically in some cases. More specifically, we
shall treat systems of up to almost 20 optical modes, thus significantly going
beyond the existing schemes. With SCS we can test 17-partite full covariance
matrices and 18-partite reduced covariance matrices (when $xp$-part is zero).
This range of state sizes can be divided into three categories --- trivial,
simple and difficult. States with up to $\sim$5 modes are totally trivial and
can be solved almost instantly, for the number of modes from $\sim$6 to $\sim$10
the solver gives small but noticable delay and if the number of modes is larger
than $\sim$10 then the time required to perform the test rapidly increases.
These numbers give a very rough boundary between trivial, simple and difficult
sizes of states for a typical, non-high-end PC. For the next generations of
hardware and new versions of the optimization software these numbers will likely
increase, so boundary for difficult state sizes will increase and more states
become simple or trivial to solve. 

We explicitly specialize every optimization problem to this reduced case. This
specialization is insignificant for small values of the number of parts $n$ (let
us say, for $n \leqslant 8$), has observable effect for $n = 9, 10$, and
dramatically changes the situation for $n > 10$. The optimization problem in the
full case of $n=17$ has nearly $19$M variables (1M = $10^6$), and the problem in
the reduced case of $n=18$ has more than $22$M variables. Though the precision
of SCS is somewhat reduced by comparison with the other two solvers, by tweaking
the parameters of the solver it is likely possible to improve the precision (and
increase the time to solve the problem), but this should be done for each
problem individually.

Another observation is the RAM required to perform the genuine entanglement
tests. It turned out that in all cases (different problems, full and reduced
forms, various solvers) the RAM grows super-exponentially. This effect becomes
important only for the difficult category of state sizes, which currently means
that the number of modes is larger than $\sim$10.

Our work is a combination of both numerical and analytical results. In the first
part, we formulate all the optimization problems and show how computational
resources needed to solve them are scaled as functions of the numbers of parts.
In the second part, we demonstrate that in some cases the problems can be solved
exactly and provide analytical solutions. In more details, the manuscript is
structured as follows. In Sec.~\ref{sec:II} we give a non-exhaustive list of
existing works on the topic. In Sec.~\ref{sec:realIII} we give a short review of
the convex optimization theory. Sec.~\ref{sec:realIV} is devoted to physical
covariance matrix reconstruction. After a short introduction to physicality of
covariance matrices and establishing the notation, we formulate the problem of
reconstructing a physical CM from measured data as optimization problems. In
Sec.~\ref{sec:IIA} we use Chebyshev approximation to determine the closest
physical CM to a measured non-physical one. In Sec.~\ref{sec:IIB} we use an
alternative approach to find the most probable physical CM under the assumption
of a Gaussian distribution of the results of the measurements. In
Sec.~\ref{sec:III} we derive an analytical expression for the symplectic trace
of a symmetric, positive-defined matrix as an optimal objective value of some
trace minimization problem and construct its dual problem. In Sec.~\ref{sec:VI}
we review the symmetric states used later as a benchmark for our entanglement
detection procedures. In Sec.~\ref{sec:IV} we review the CM-based test for
entanglement and establish terminology we use later. In Sec.~\ref{sec:IVA},
using the dual problem for symplectic trace obtained in Sec.~\ref{sec:III}, we
derive the optimization problem and its dual for the best witness of a given
physical CM. Here we also construct a parametric family of $n$-partite states
and analytically show that these states are genuine multipartite entangled.
These states are then used for benchmarking of our programs. In addition, from
the optimal values of the optimization problems we construct two entanglement
measures. These measures are zero on separable states and only on them, and they
are convex, i.e. mixtures can only decrease these measures, but never increase.
These measures can be computed numerically and in some cases even analytically.
In Sec.~\ref{sec:IVB} we give an alternative approach, based on scaling, and
construct the corresponding optimization problem and its dual. These tests do
not take into account the inaccuracy of the measurements. In Sec.~\ref{sec:IVC}
we include the inaccuracy in the form of covariances of the measurements. The
formulations of the theorems are rather long and very detailed, with lots of KKT
conditions which might look excessive at first. But it is these conditions that
allow one to construct a system of equations for the optimal solution and obtain
an analytical solution for it. Without the KKT conditions it would be impossible
to construct such a system. In Sec.~\ref{sec:VIII} we analyze a rather new
condition obtained in Ref.~\cite{NewJPhys.20.023030} and show its performance
for testing the genuine multipartite entanglement relative to the standard,
``classical'' approach. For testing single partitions the performance advantage
can be huge, but for multiple partitions this advantage is more modest. In
Sec.~\ref{sec:VIIA} we present applications of the established results by
comparing them with some existing works. In Sec.~\ref{sec:V} B-D we construct
analytical solutions of two optimization problems for three different states ---
fixed bipartitions for a bound 4-partite entangled state and all bipartitions
for a 3- and 4-partite state. Our examples are either new or they are known from
the literature, but were only numerically treated before. Our solutions turn out
to be algebraic numbers and we explicitly present their minimal polynomials.
These solutions demonstrate highly non-trivial relations between some algebraic
numbers of various orders. Explicit expression of the solution as algebraic
numbers in terms of their minimal polynomials might not be the most practical
way to work with this solution, but it allows one to get the result exactly,
without relying on any approximation at all. In addition, it gives a very
detailed illustration of how the primal and dual solutions work in harmony to
deliver the common optimal value. The KKT conditions used to obtain analytical
results can thus be tested for a numerical result to verify the accuracy of the
numerical solution produced by the solver, which is especially important for
large-partite systems. In Sec.~\ref{sec:secPar} we construct an analytical
solution for a parametric family of states, not just for one concrete state. In
Sec.~\ref{sec:X} we give a detailed study of a family of tripartite symmetric
states and obtain analytical conditions of the various kinds of entanglement. We
present a method to find an analytical equation for the boundary of an
entanglement region even if the problem cannot be solved analytically. In
addition we show that the boundary of the genuine entanglement region consists
of two different curves. As an extra, more special example, we consider a family
of non-Gaussian states and test our CM-based formalism for such states. In
Sec.~\ref{sec:IX} we give some conjectures and open problems related to the
results presented in this work. In the conclusion we summarize our results.

\section{Existing works}\label{sec:II}

As our goal is to revisit the subject of Gaussian entanglement from a modern
perspective, let us first discuss the similarity and difference to some other
works on the topic. The semidefinite optimization approach to determine Gaussian
entanglement properties was introduced in Ref.~\cite{NewJPhys.8.51}. The
condition on the witness matrix that it is really a witness was given in an
appendix and referred to a manuscript in preparation. To our knowledge, that
manuscript has never been published. The KKT conditions that relate primal and
dual solutions of the optimization problem were not explicitly presented in
\cite{NewJPhys.8.51}. In some cases KKT conditions allow one to solve the
problem analytically. In fact, this is how we derived an explicit expression for
the symplectic trace and obtained explicit analytical solutions for a few 3- and
4-partite states. Where an analytical solution is either impossible or not
practical, the KKT conditions can be used to estimate the correctness and
quality of the numerical solution. The page with the software that the authors
of \cite{NewJPhys.8.51} used in their work does not seem to exist anymore, so it
is not immediately possible to use that software and test how well it performs
on modern hardware for larger problems. In the arXiv version of their work the
authors of Ref.~\cite{NewJPhys.8.51} give concrete numbers --- 2 seconds to test
for 3-partite entanglement and 40 minutes for 4-partite. It means that each new
part increases the computation time by more than one thousand. These numbers
have not been included in the published journal version which, together with the
absence of the software, makes it impossible to reproduce that result. However,
if the scalability of the computation time with the size of the states (number
of modes) was similar to those cases mentioned, that software appeared to be no
longer usable in practice for more than five parts. It gives a strong incentive
to reconsider the problem and explore the new possibilities of modern hardware
and software. Our analytical solution also shows the progress made in hardware
and software during the last 20 years --- what required a nontrivial amount of
time to solve numerically, now can be solved analytically in a second. 

To our knowledge, the term ``symplectic trace'', which is defined as the sum of
the symplectic eigenvalues, was introduced in Ref.~\cite{anders2006estimating}.
The symplectic eigenvalues were introduced in Ref.~\cite{GiedkeECP03}, see also
Refs.~\cite{Adesso_2007, S1230161214400010}. The Gaussian state represented by
the CM in Eq.~\eqref{eq:gammaM} below, which plays an important role in deducing
an expression for the symplectic trace through some minimization problem, was
mentioned in \cite{NewJPhys.20.023030} without relation to its minimization
properties. The expression of the symplectic trace as the minimal value of some
quantity, Eq.~\eqref{eq:minstr} below, was mentioned in
\cite{NewJPhys.25.113023} and referred back to \cite{anders2006estimating}. To
our knowledge, the explicit expression for the symplectic trace and its dual
characterization as a maximum of some quantity has not been known before. The
dual form allows one to construct an appropriate optimization problem, and the
explicit expression can be used to easily test the obtained solution. 

A treatment to incorporate error analysis into entanglement detection was
initiated in \cite{PhysRevLett.114.050501}. That approach was based on heuristic
genetic programming, which generally scales rather poorly and also delivers
suboptimal results. In that work the authors dealt with only fixed partitions
and our Fig.~\ref{fig:10} below clearly demonstrates the superiority of our new
approach. Other than that figure, we do not consider fixed partitions
numerically due to their computational simplicity and exclusively study genuine
multipartite entanglement. We deal with fixed bipartitions analytically when we
construct an exact solution for a bound entangled 4-partite state.

In Ref.~\cite{arXiv-2401.04376} several conditions for genuine multipartite
entanglement in terms of the covariance matrix (CM) are derived. Those
conditions work with subsets of the elements of CM, so they require less
measurements, but in general they are only sufficient for entanglement and thus
somewhat incomplete. The conditions are also based on optimization, but not
convex, which makes it harder to perform in practice. The authors of that work
give examples for states with up to $n=6$ parts. Because they say nothing about
how they performed the optimization, how much computational resources was needed
and how much time it took, we cannot compare the efficiency of their approach to
ours. With our method it takes just a fraction of a second to test 6-partite
states. The scalability of the entanglement verification (how much time and RAM
it takes) has not been discussed in the papers cited, and the accuracy of the
optimal solution has not been addressed as well.

\section{Convex optimization}\label{sec:realIII}

All problems in this work are instances of the following general convex
optimization problem:
\begin{alignat}{3}\label{eq:CO}
	\text{minimize} \quad &f_0(\vec{x}) & \nonumber \\
	\text{subject to} \quad &f_i(\vec{x}) \preccurlyeq_{\mathcal{K}_i} 0 \quad & i = 1, \ldots, m \\
	                        &h_j(\vec{x}) = 0 & j = 1, \ldots, p, \nonumber
\end{alignat}
where $f_i(\vec{x})$, $i = 0, \ldots, m$ are convex functions, $h_j(\vec{x})$,
$j = 1, \ldots, p$ are linear functions and $\mathcal{K}_i$ are convex cones of
dimensions $k_i$. The problems considered in this work use three types of cones
\begin{equation}\label{eq:cones}
\begin{split}
	\mathcal{K} &= \mathbf{R}^n_+ = \{(x_1, \ldots, x_n) |\ x_1 \geqslant 0, \ldots, x_n \geqslant 0 \} \\
	\mathcal{K} &= \mathbf{S}^n_+ = \{X \in \mathbf{R}^{n \times n} |\ X = X^{\mathrm{T}}, X \succcurlyeq 0\} \\
	\mathcal{K} &= \mathbf{K}^{n+1} = \{(x_1, \ldots, x_n, t) | \ \|\vec{x}\| \leqslant t\},
\end{split}
\end{equation} 
where the norm in the last cone is the standard Euclidean norm. The dimensions
of these cones are 
\begin{displaymath}
	\dim\mathbf{R}^n_+ = n, \ \ \dim\mathbf{S}^n_+ = \frac{n(n+1)}{2}, \ \ 
	\dim\mathbf{K}^{n+1} = n+1.
\end{displaymath}
The objective functions $f_0(\vec{x})$ are linear except the case of the most
probable CM, where $f_0(\vec{x})$ is quadratic.

To construct the dual problem to the problem \eqref{eq:CO}, we define the
Lagrangian function 
\begin{equation}
	\mathcal{L}(\vec{x}, \lambda, \bm{\nu}) = 
	f_0(\vec{x}) + \sum^m_{i=1} \bm{\lambda}^{\mathrm{T}}_i f_i(\vec{x})
	+ \sum^p_{j=1} \nu_j h_j(\vec{x}),
\end{equation}
where $\bm{\nu} = (\nu_1, \ldots, \nu_p)$ is a vector of numbers and $\lambda =
(\bm{\lambda}_1, \ldots, \bm{\lambda}_m)$ is a vector of vectors, where each
component is of the dimension of the corresponding cone $\mathcal{K}_i$. The
dual objective function is 
\begin{equation}
	g(\lambda, \bm{\nu}) = \inf_{\vec{x}} \mathcal{L}(\vec{x}, \lambda, \bm{\nu}).
\end{equation}
The dual problem then reads as
\begin{alignat}{3}\label{eq:DCO}
	\text{maximize} \quad &g(\lambda, \bm{\nu}) & \nonumber \\
	\text{subject to} \quad &\bm{\lambda}_i \succcurlyeq_{\mathcal{K}^*_i} 0 & \quad i = 1, \ldots, m \\
	                        &\text{finite constraints} & \nonumber
\end{alignat}
where $\mathcal{K}^*_i$ is the cone dual to $\mathcal{K}_i$ and finite
constraints are the conditions under which the dual objective $g(\lambda,
\bm{\nu})$ is finite. All the cones in \eqref{eq:cones} are self-dual, so no new
types of cones appear. The KKT conditions, which connect the primal
$\vec{x}^\star$ and dual $(\lambda^\star, \bm{\nu}^\star)$ optimal solutions,
are
\begin{align}
	h_j(\vec{x}^\star) &= 0 \\
	\bm{\lambda}^{\star\mathrm{T}}_i f_i(\vec{x}^\star) &= 0 \\
	\nabla f_0(\vec{x}^\star) + \sum^m_{i=1} D f_i(\vec{x}^\star)^{\mathrm{T}} \bm{\lambda}^\star_i +
	\sum^p_{j=1} \nu^\star_j \nabla h_j(\vec{x}^\star) &= 0, 
\end{align}
where the $k_i \times n$ matrix $D f_i(\vec{x}^\star)$ is the derivative of
$f_i(\vec{x})$ evaluated at $\vec{x} = \vec{x}^\star$. If there is a feasible
point $\vec{x}$, interior to the domain of all $f_i(\vec{x})$, $i = 1, \ldots,
m$, and such that the strong inequalities 
\begin{equation}
	f_i(\vec{x}) \prec_{\mathcal{K}_i} 0
\end{equation}
hold, then the strong duality between the problems \eqref{eq:CO} and
\eqref{eq:DCO} also holds,
\begin{equation}
	f_0(\vec{x}^\star) = g(\lambda^\star, \bm{\nu}^\star).
\end{equation}
The problems considered in this work, when feasible, have strong duality
property.

\section{Physical covariance matrix reconstruction}\label{sec:realIV}

The basic property that defines a physical $n$-partite CM $\gamma$ (of size $2n
\times 2n$) is the physicality condition
\begin{equation}\label{eq:PhysCond}
	\mathcal{P}(\gamma) = \gamma + \frac{i}{2}\Omega_n \succcurlyeq 0,
\end{equation}
where $\Omega_n$ is given by
\begin{equation}
	\Omega_n = 
	\begin{pmatrix}
		0 & E_n \\ 
		-E_n & 0 
	\end{pmatrix},
\end{equation}
and $E_n$ is the identity matrix of size $n$. We use the notation $A
\succcurlyeq 0$ to denote a positive semidefinite matrix $A$ and $A \geqslant 0$
to express that all elements of $A$ are nonnegative. Similar notation is used
for strict inequalities. From the physicality condition \eqref{eq:PhysCond} it
follows that any physical CM $\gamma$ is strictly positive-definite, $\gamma
\succ 0$, \cite{*[{}] [{, Theorem 98, p. 67.}] SMHA}.

We now establish a real form of physicality of a CM $\gamma$. We group all $x$
parts before all $p$ parts in covariance matrices (another notation in use is to
alternate $x$s and $p$s), so a general $n$-partite CM can be written in this
notation as
\begin{equation}\label{eq:matdec}
    \gamma = 
    \begin{pmatrix}
        \gamma_{xx} & \gamma_{xp} \\
        \gamma^{\mathrm{T}}_{xp} & \gamma_{pp}
    \end{pmatrix},
\end{equation}
where $\gamma_{xx}$, $\gamma_{pp}$ and $\gamma_{xp}$ are $n \times n$ blocks.
Let $H = S + i A$ be a Hermitian, where $S$ is a real symmetric matrix,
$S^{\mathrm{T}} = S$, and $A$ is a real antisymmetric matrix, $A^{\mathrm{T}} =
-A$. If $H$ is the matrix of a linear map in some basis $\{\vec{e}_1, \ldots,
\vec{e}_n\}$ over the field of complex numbers $\mathbb{C}$, then the same map
has the matrix
\begin{equation}\label{eq:tildeH}
	\tilde{H} = 
	\begin{pmatrix}
		S & \mp A \\
		\pm A & S
	\end{pmatrix}
\end{equation}
in the basis $\{\vec{e}_1, \ldots, \vec{e}_n, \pm i\vec{e}_1, \ldots, \pm
i\vec{e}_n\}$ over the field of real numbers $\mathbb{R}$ (where we always chose
either the upper sign or always the bottom sign). It follows that 
\begin{equation}
	H = S + i A \succcurlyeq 0 \quad \Leftrightarrow \quad \tilde{H} = 
	\begin{pmatrix}
		S & \mp A \\
		\pm A & S
	\end{pmatrix}\succcurlyeq 0.
\end{equation}
In particular, we have an equivalence 
\begin{equation}\label{eq:SA}
	S+iA \succcurlyeq 0 \quad \Leftrightarrow \quad S-iA \succcurlyeq 0.
\end{equation}
It also shows that it does not matter which sign to choose in
Eq.~\eqref{eq:tildeH}, so we choose the bottom one. Thus, the physicality
condition $\mathcal{P}(\gamma) \succcurlyeq 0$ of the covariance matrix
$\gamma$, which involves a complex Hermitian matrix, can be written in the real
form as
\begin{equation}\label{eq:physcond}
	\begin{pmatrix}
		\gamma & \frac{1}{2}\Omega \\
		-\frac{1}{2}\Omega & \gamma
	\end{pmatrix} = 
	\begin{pmatrix}
		\gamma_{xx} & \gamma_{xp} & 0 & \frac{1}{2}E \\
		\gamma^{\mathrm{T}}_{xp} & \gamma_{pp} & -\frac{1}{2}E & 0 \\
		0 & -\frac{1}{2}E & \gamma_{xx} & \gamma_{xp} \\
		\frac{1}{2}E & 0 & \gamma^{\mathrm{T}}_{xp} & \gamma_{pp}
	\end{pmatrix}
	\succcurlyeq 0.
\end{equation}
By rearranging the rows and columns of the latter matrix, which is achieved by
multiplying it with appropriate permutation matrices, we obtain the following
equivalent condition:
\begin{equation}
	\begin{pmatrix}
		\gamma_{xx} & \frac{1}{2}E & 0 & \gamma_{xp} \\
		\frac{1}{2}E & \gamma_{pp} & \gamma^{\mathrm{T}}_{xp} & 0 \\
		0 & \gamma_{xp} & \gamma_{xx} & -\frac{1}{2}E \\
		\gamma^{\mathrm{T}}_{xp} & 0 & -\frac{1}{2}E & \gamma_{pp}
	\end{pmatrix}
	\succcurlyeq 0.
\end{equation}
It follows that for the reduced states, i.e. the states with zero off-diagonal
part, $\gamma_{xp} = 0$, the condition \eqref{eq:physcond} is equivalent to two
simpler conditions,
\begin{equation}\label{eq:pcred}
	\mathcal{P}(\gamma_{xx}, \gamma_{pp}) = 
	\begin{pmatrix}
		\gamma_{xx} & \pm\frac{1}{2}E \\
		\pm\frac{1}{2}E & \gamma_{pp}
	\end{pmatrix}
	\succcurlyeq 0.
\end{equation}
As we have already seen above, the choice of the sign does not matter, so it is
enough to test just one of them. Below we use this condition with the sign $+$.
Additionally, if \eqref{eq:matdec} is a general, full CM, then the reduced CM
\begin{equation}\label{eq:matdecred}
	\gamma' = 
	\begin{pmatrix}
		\gamma_{xx} & 0 \\
		0 & \gamma_{pp}
	\end{pmatrix},
\end{equation}
obtained by setting $\gamma_{xp} = 0$ in $\gamma$, is also physical. It is this
observation that allows us to simplify the optimization problems developed in
later sections of this work in the cases where $\gamma_{xp}$ does not appear in
the objective functions. In such cases it is enough to optimize for
$\gamma_{xx}$ and $\gamma_{pp}$ only, which requires $n^2$ variables less for an
$n$-partite state. If the reduced $\gamma'$, obtained from a full physical CM
$\gamma$, were not always physical, we would have to optimize over the whole
matrices $\gamma$ in all cases. In the following, by a reduced state we mean a
pair of matrices $\gamma_{xx}$, $\gamma_{pp}$ that satisfies the condition
\eqref{eq:pcred}. \textit{In all the optimization problems presented below the
physicality conditions for all the CM variables $\gamma$, with or without sub-
or super-scripts, are assumed. We do not explicitly state them not to repeat them
all over again.}

An $n$-mode Gaussian state with CM $\gamma$ is pure iff
\begin{equation}\label{eq:purity}
	(\gamma \Omega_n)^2 = -\frac{1}{4}E_{2n}.
\end{equation}
For reduced states, $\gamma_{xp}=0$, this condition reads as follows:
\begin{equation}\label{eq:pureReduced}
	\gamma_{xx}\gamma_{pp} = \frac{1}{4}E_n.
\end{equation}
This condition will be used below to test states for purity.

A positive-semidefinite matrix $A \succcurlyeq 0$ is called boundary if it is on
the boundary of the positive-semidefinite cone, i.e. if the minimal eigenvalue
$\lambda_0(A)$ of $A$ is zero. A CM $\gamma$ is called boundary if
$\mathcal{P}(\gamma)$ is boundary. In the reduced case a pair $\gamma_{xx}$,
$\gamma_{pp}$ is boundary if $\mathcal{P}(\gamma_{xx}, \gamma_{pp})$ is
boundary. Since each physical CM is also positive-semidefinite, an ambiguity can
arise. Because we always distinguish physical CMs and arbitrary matrices, it is
always clear from the context which notion of boundary is used. Below we need
the following simple statement about arbitrary boundary matrices.
\begin{lmn}\label{lmn:AB} 
If $A$ and $B$ are Hermitian positive-semidefeinite
matrices, $A, B \succcurlyeq 0$, with $\tr(AB) = 0$, then $AB = 0$. In
addition, if $B \not= 0$, then $A$ is boundary.
\end{lmn}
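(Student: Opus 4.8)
The plan is to reduce the trace hypothesis to a squared-norm identity, from which both assertions fall out immediately. Since $A, B \succcurlyeq 0$, each admits a unique Hermitian positive-semidefinite square root, $A^{1/2}$ and $B^{1/2}$, and these are self-adjoint, $(A^{1/2})^{\dagger} = A^{1/2}$. Using cyclicity of the trace together with self-adjointness, I would rewrite the objective as a genuine Gram expression,
\begin{equation*}
	\tr(AB) = \tr\bigl(A^{1/2} A^{1/2} B^{1/2} B^{1/2}\bigr) = \tr\bigl((A^{1/2} B^{1/2})^{\dagger} (A^{1/2} B^{1/2})\bigr) = \|A^{1/2} B^{1/2}\|^2_F,
\end{equation*}
where $\|\cdot\|_F$ denotes the Frobenius norm. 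Because the right-hand side is a sum of squared moduli of matrix entries, the hypothesis $\tr(AB) = 0$ forces the matrix $M := A^{1/2} B^{1/2}$ to vanish.

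The passage from $M = 0$ to $AB = 0$ is then purely algebraic: writing $AB = A^{1/2}(A^{1/2} B^{1/2}) B^{1/2} = A^{1/2} M B^{1/2}$ and substituting $M = 0$ gives $AB = 0$, which is the first claim. For the addendum, recall that by the definition adopted above a positive-semidefinite $A$ is boundary exactly when its minimal eigenvalue $\lambda_0(A)$ is zero, i.e.\ when $A$ is singular. I would argue by contradiction: if $A$ were invertible, then left-multiplying the established identity $AB = 0$ by $A^{-1}$ would yield $B = 0$, contradicting the assumption $B \neq 0$. Hence $A$ must be singular, so $\lambda_0(A) = 0$ and $A$ is boundary.

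I do not expect a genuine obstacle here; the single point demanding care is the first rewriting of $\tr(AB)$, where one must invoke \emph{both} the cyclicity of the trace and the self-adjointness of the square roots to recognize the expression as $\|A^{1/2}B^{1/2}\|^2_F$ rather than merely a nonnegative quantity. Once that identity is in place, both conclusions are forced. As an alternative route worth mentioning, one may instead expand $A$ and $B$ in their spectral decompositions, so that $\tr(AB) = \sum_{i,j} \alpha_i \beta_j |\langle a_i \mid b_j\rangle|^2$ with $\alpha_i, \beta_j \geqslant 0$; vanishing of this nonnegative sum forces the range of $B$ to be orthogonal to the range of $A$, which again gives $AB = 0$ while simultaneously exhibiting a nonzero kernel vector of $A$ whenever $B \neq 0$, recovering the boundary conclusion directly.
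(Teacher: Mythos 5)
Your proof is correct and follows essentially the same route as the paper's: both hinge on writing $\tr(AB)$ as a vanishing squared norm of the product of Hermitian square roots (your $M=A^{1/2}B^{1/2}$ is just the adjoint of the paper's $C=\sqrt{B}\sqrt{A}$, and your one-step Frobenius identity merely compresses the paper's intermediate step $\sqrt{A}B\sqrt{A}=0$), and the boundary claim is handled by the identical invertibility contradiction. The only cosmetic difference is your direct recognition of the Gram structure, plus the optional spectral argument, neither of which changes the substance.
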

\begin{proof}
The proof is based on a simple fact that a positive semidefinite matrix $X$ is
zero, $X = 0$, iff its trace is zero, $\tr X = 0$. Suppose that $\tr(AB) = 0$.
We cannot directly apply the fact above because $AB$ is, in general, not
Hermitian, but we can rewrite this equality as
\begin{equation}
	\tr(\sqrt{A}B\sqrt{A}) = 0,
\end{equation}
where $\sqrt{A}$ is a Hermitian square root of $A$. Square roots of matrices are
not unique, so there might be non-Hermitian square roots, but a Hermitian square
root of a positive-semidefinite Hermitian matrix always exists. The under trace
is Hermitian, so from the fact above we derive that
\begin{equation}
	\sqrt{A}B\sqrt{A} = 0.
\end{equation}
This equality can be rewritten as $C^\dagger C = 0$, where $C =
\sqrt{B}\sqrt{A}$. It follows that $C = 0$ and by multiplying by $\sqrt{B}$ on
the left and by $\sqrt{A}$ on the right, we obtain the equality $AB = 0$.

Now suppose that $B \not=0 $, but $A$ is not boundary. Then $A$ is invertible
and from the equality $AB=0$ we derive that $B=0$, which contradicts our
assumption. Thus $A$ must be boundary.
\end{proof}
\noindent As a consequence, if $\tr(AB) = 0$ and both $A$ and $B$ are non-zero,
then they are both boundary.

The reduced CM given by Eq.~\eqref{eq:matdecred} is boundary if 
\begin{displaymath}
	\det \mathcal{P}(\gamma') = 
	\det
	\begin{pmatrix}
		\gamma_{xx} & \frac{1}{2}E \\
		\frac{1}{2}E & \gamma_{pp}
	\end{pmatrix}
	\det
	\begin{pmatrix}
		\gamma_{xx} & -\frac{1}{2}E \\
		-\frac{1}{2}E & \gamma_{pp}
	\end{pmatrix}
	= 0.
\end{displaymath}
According to \cite{*[] [{ p. 27}] horn-johnson3} we have the equality
\begin{equation}
	\det
	\begin{pmatrix}
		\gamma_{xx} & \pm\frac{1}{2}E \\
		\pm\frac{1}{2}E & \gamma_{pp}
	\end{pmatrix}
	= \det\left(\gamma_{xx}\gamma_{pp} - \frac{1}{4}E\right),
\end{equation}
so the two matrices in Eq.~\eqref{eq:pcred} are simultaneously degenerate or
non-degenerate. It follows that a reduced state $\gamma_{xx}$, $\gamma_{pp}$ is
boundary if $\det \mathcal{P}(\gamma_{xx}, \gamma_{pp}) = 0$.

The experimentally measured covariance matrices $\gamma^\circ$ sometimes happen
to be non-physical, i.e. they violate the physicality condition
\eqref{eq:PhysCond}. If performing a more precise experiment is not an option,
then one needs to determine somehow the most likely physical covariance matrix
that agrees with the experimental data. In addition to the matrix $\gamma^\circ$
itself, the matrix $\sigma$ of standard deviations of the elements of
$\gamma^\circ$ is measured as well, so the true physical covariance matrix
$\gamma$ should not deviate too much from the measured $\gamma^\circ$, where the
distance is measured in units of $\sigma$. Below we present two approaches to
this problem.

Given a non-physical experimentally measured CM $\gamma^\circ$ with the matrix
of corresponding standard deviations $\sigma$ of measurements, what is the most
likely physical CM consistent with these data? Below we propose two approaches
to answer this question and test each approach on some randomly generated data.
The tests are performed on 10 randomly generated matrices for the number of
modes from 20 to 200 with step 10. The problems with sizes below 20 are too
simple to include into the figures. The covariance test matrices are randomly
generated with the elements in the range $[-1, 1]$ and the test matrices of
standard deviations are randomly generated with elements in the range $[0.001,
0.1]$. In general, these covariance matrices are far away from the set of
physical covariance matrices, but we do not have a better test suite.
Experiments with hundreds-partite states, determining and exploiting the full
CM, are not commonplace, although even larger-scale experiments have been
performed relying on a much smaller set of measurement data used as sufficient
criteria to witness full inseparability \cite{10.1063/1.4962732,
PhysRevA.67.052315}. These tests should be considered as a comparison of the
efficiency of different solvers on some randomly chosen problems. Realistic
data, which are typically much closer to the physical set should be easier to
solve, so the reported time is like to give an upper bound for the time it takes
to solve the corresponding problems in practice.

\subsection{Closest covariance matrix}\label{sec:IIA}

In \cite{PhysRevLett.117.140504} we presented a way to determine the closest
physical CM to a given non-physical one. Here we reproduce that optimization
problem in a more detailed form and specialize it to the reduced CM (where
$\gamma_{xp} = 0$). 
\begin{thrm}\label{thrm:1}
For any symmetric $2n \times 2n$ matrices $\gamma^\circ$ and $\sigma$, the
latter with strictly positive elements, the following equality takes place:
\begin{equation}\label{eq:covmatopt}
	\min_{\gamma} 
	\max_{1 \leqslant i, j \leqslant 2n} \frac{|\gamma_{ij} - \gamma^\circ_{ij}|}{\sigma_{ij}} = 
	\max_{\Lambda \succcurlyeq 0, U^\pm \geqslant 0} -\tr[\Lambda \mathcal{P}(\gamma^\circ)].
\end{equation}
The minimization is performed over all physical covariance matrices $\gamma$. The
maximization is over a Hermitian $2n\times2n$-matrix $\Lambda$ and a pair of
real symmetric $2n\times2n$-matrices $U^+$ and $U^-$ that satisfy the conditions
$\Lambda \succcurlyeq 0$, $U^\pm \geqslant 0$ and
\begin{equation}\label{eq:UUs}
	\tr[(U^+ + U^-)\sigma] = 1,  \quad U^+ - U^- = \re(\Lambda).
\end{equation}
The primal and dual optimal solutions are related by the equalities (KKT conditions)
$\mathcal{P}(\gamma^\star) \Lambda^\star = 0$ and
\begin{equation}
	U^{\pm\star} \cdot (\gamma^\star - \gamma^\circ \mp s^\star \sigma) = 0,
\end{equation}
where $s^\star$ is the common value of Eq.~\eqref{eq:covmatopt} and where $A
\cdot B$ is the Hadamard (element-wise) product of two matrices.

In the reduced case the previous result is specialized as follows. For any $n
\times n$ matrices $\gamma^\circ_{xx}$, $\gamma^\circ_{pp}$ and $\sigma_{xx}$,
$\sigma_{pp}$, the latter two with strictly positive elements, the dual relation
reads as
\begin{equation}\label{eq:covmatoptred}
\begin{split}
	\min_{\gamma_{xx}, \gamma_{pp}} 
	&\max_{1 \leqslant i,j \leqslant n} 
	\left[\frac{|\gamma_{xx, ij} - \gamma^\circ_{xx, ij}|}{\sigma_{xx, ij}}, 
	\frac{|\gamma_{pp, ij} - \gamma^\circ_{pp, ij}|}{\sigma_{pp, ij}}\right] \\
	&= \max_{\Lambda \succcurlyeq 0, U^\pm, V^\pm \geqslant 0} 
	-\tr[\Lambda \mathcal{P}(\gamma^\circ_{xx}, \gamma^\circ_{pp})].
\end{split}
\end{equation}
The minimization is over all reduced CMs. The maximization is over a real
symmetric $2n \times 2n$ matrix $\Lambda$ and real symmetric $n \times n$
matrices $U^{\pm}$ and $V^{\pm}$ subject to the conditions $\Lambda \succcurlyeq
0$, $U^{\pm}, V^{\pm} \geqslant 0$ and
\begin{equation}\label{eq:UUsxp}
\begin{split}
	&\tr[(U^+ + U^-)\sigma_{xx} + (V^+ + V^-)\sigma_{pp}] = 1, \\
	&\Lambda_{xx} = U^+ - U^-, \quad \Lambda_{pp} = V^+ - V^-.
\end{split}
\end{equation}
The KKT conditions read as $\mathcal{P}(\gamma^\star_{xx}, \gamma^\star_{pp})
\Lambda^\star = 0$ and
\begin{equation}
\begin{split}
	U^{\pm\star} \cdot (\gamma^\star_{xx} - \gamma^\circ_{xx} \mp s^\star \sigma_{xx}) &= 0, \\
	V^{\pm\star} \cdot (\gamma^\star_{pp} - \gamma^\circ_{pp} \mp s^\star \sigma_{pp}) &= 0,
\end{split}
\end{equation}
where $s^\star$ is the common value of Eq.~\eqref{eq:covmatoptred}. In both
cases, the optimal state $\gamma^\star$ or $\gamma^\star_{xx}$,
$\gamma^\star_{pp}$ is a boundary state.
\end{thrm}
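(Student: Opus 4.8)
The plan is to treat the left-hand side of \eqref{eq:covmatopt} as a convex program in epigraph form and build its Lagrangian dual exactly as in Sec.~\ref{sec:realIII}. First I would introduce a scalar $s$ and rewrite the minimax as: minimize $s$ over physical $\gamma$ subject to the two linear inequality families $\gamma_{ij} - \gamma^\circ_{ij} \leqslant s\sigma_{ij}$ and $\gamma^\circ_{ij} - \gamma_{ij} \leqslant s\sigma_{ij}$, together with the cone constraint $\mathcal{P}(\gamma) \succcurlyeq 0$. Attaching element-wise nonnegative multiplier matrices $U^+$ and $U^-$ to the two families and a Hermitian positive-semidefinite multiplier $\Lambda$ to the cone constraint, I would form the Lagrangian and minimize it over the primal variables $\gamma$ and $s$.

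Since the Lagrangian is affine in $\gamma$ and $s$, the inner infimum is $-\infty$ unless the coefficients of these variables vanish, and these finiteness conditions are precisely the dual constraints \eqref{eq:UUs}. The coefficient of $s$ balances the objective against the $\sigma$-weighted multipliers and gives $\tr[(U^+ + U^-)\sigma] = 1$. Setting the coefficient of the \emph{real symmetric} matrix $\gamma$ to zero is the step that needs genuine care and is, to my mind, the main obstacle: the multiplier $\Lambda$ is complex Hermitian, so in the pairing $\tr[\Lambda\gamma]$ the antisymmetric part $\im\Lambda$ annihilates the symmetric $\gamma$ and only $\re\Lambda$ survives. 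This is exactly why the second condition reads $U^+ - U^- = \re(\Lambda)$ rather than $U^+ - U^- = \Lambda$. Substituting both conditions back collapses the infimum to $-\tr[\re(\Lambda)\gamma^\circ] - \tfrac{i}{2}\tr[\Lambda\Omega_n]$, which equals $-\tr[\Lambda\mathcal{P}(\gamma^\circ)]$ because $\tr[\Lambda\gamma^\circ] = \tr[\re(\Lambda)\gamma^\circ]$ for symmetric $\gamma^\circ$; this quantity is real, being the trace of a product of two Hermitian matrices.

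Strong duality, and hence the equality \eqref{eq:covmatopt}, I would obtain from the Slater-type hypothesis of Sec.~\ref{sec:realIII}: any strictly positive-definite physical CM together with $s$ large enough makes every linear inequality strict and the cone constraint interior, so the primal and dual optimal values coincide at a common $s^\star$. The KKT relations then follow from complementary slackness. For the two inequality families this yields the Hadamard identities $U^{\pm\star}\cdot(\gamma^\star - \gamma^\circ \mp s^\star\sigma) = 0$; for the cone constraint it reads $\tr[\Lambda^\star\mathcal{P}(\gamma^\star)] = 0$, and since both factors are Hermitian positive-semidefinite, Lemma~\ref{lmn:AB} upgrades this to $\mathcal{P}(\gamma^\star)\Lambda^\star = 0$.

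The boundary claim is the concluding step, and its one delicate input is that $\Lambda^\star \neq 0$. Because the measured $\gamma^\circ$ is non-physical (the regime of interest) it is itself infeasible, so $s^\star > 0$; by strong duality $-\tr[\Lambda^\star\mathcal{P}(\gamma^\circ)] = s^\star > 0$, which forces $\Lambda^\star \neq 0$. The second half of Lemma~\ref{lmn:AB}, applied to $\mathcal{P}(\gamma^\star)\Lambda^\star = 0$ with $\Lambda^\star \neq 0$, then shows that $\mathcal{P}(\gamma^\star)$ is boundary, i.e. $\gamma^\star$ is a boundary state. The reduced case is a verbatim specialization of this argument: with $\gamma_{xp} = 0$ the off-diagonal block of $\Lambda$ couples to no primal variable, the single matrix condition splits into the block conditions $\Lambda_{xx} = U^+ - U^-$ and $\Lambda_{pp} = V^+ - V^-$ of \eqref{eq:UUsxp} (here $\Lambda$ is real, so no $\re$ is needed), and the identical complementary-slackness and Lemma~\ref{lmn:AB} reasoning delivers both the dual relation \eqref{eq:covmatoptred} and the boundary property of the pair $\gamma^\star_{xx}, \gamma^\star_{pp}$.
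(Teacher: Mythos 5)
Your proposal is correct, and it follows the paper's overall skeleton — epigraph variable $s$, element-wise multipliers $U^{\pm}$ for the two linear families, a PSD multiplier for the physicality constraint, finiteness conditions yielding \eqref{eq:UUs}, complementary slackness for the KKT identities, and Lemma~\ref{lmn:AB} plus $\Lambda^\star \neq 0$ (via $s^\star>0$ and strong duality when $\gamma^\circ$ is non-physical) for the boundary claim. The one genuine divergence is how you handle the complex constraint: you attach a complex Hermitian multiplier $\Lambda$ directly to $\mathcal{P}(\gamma)\succcurlyeq 0$ and observe that the antisymmetric part $\im\Lambda$ annihilates the symmetric primal variable, which delivers $U^+-U^-=\re(\Lambda)$ in one line. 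The paper instead dualizes the \emph{real} $4n\times 4n$ form of the constraint with a real PSD multiplier $Z$, and only afterwards reconstructs $\Lambda(Z)=Z_1+Z_3+i(Z_2-Z_2^{\mathrm{T}})$ from the blocks of $Z$; the bulk of its appendix is spent verifying that $\Lambda(Z)\succcurlyeq 0$ and, conversely, that every Hermitian PSD $\Lambda$ arises as $\Lambda(Z)$ for some real PSD $Z$ (via $Z=\tfrac12\tilde{\Lambda}$), which is what licenses trading $Z$ for $\Lambda$. Your shortcut silently uses the fact that the Hermitian PSD cone is self-dual under the real trace pairing — this is valid, and indeed it is exactly the content of the paper's equivalence $S+iA\succcurlyeq 0 \Leftrightarrow \tilde H \succcurlyeq 0$; the paper's longer route makes that correspondence explicit and also mirrors how the problem is actually fed to a real-arithmetic SDP solver. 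Your treatment of the reduced case — the multiplier is real, its off-diagonal block pairs only with the constant $\tfrac12 E$ block and hence enters the objective rather than a constraint, and the single condition splits into the two block conditions of \eqref{eq:UUsxp} — matches the paper's computation. One cosmetic caveat: you correctly restrict the boundary conclusion to non-physical $\gamma^\circ$, which agrees with the paper's discussion in the main text (for physical $\gamma^\circ$ one has $\gamma^\star=\gamma^\circ$, $s^\star=0$, and no boundary claim is possible), even though the theorem statement omits this qualifier.
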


\begin{figure}
	\includegraphics[scale=1.0]{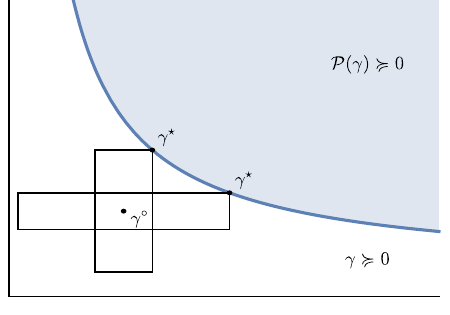}
\caption{The optimal solution of \eqref{eq:covmatopt} for different $\sigma$.}\label{fig:Capprox}
\end{figure}

\begin{figure*}
	\includegraphics[scale=0.73]{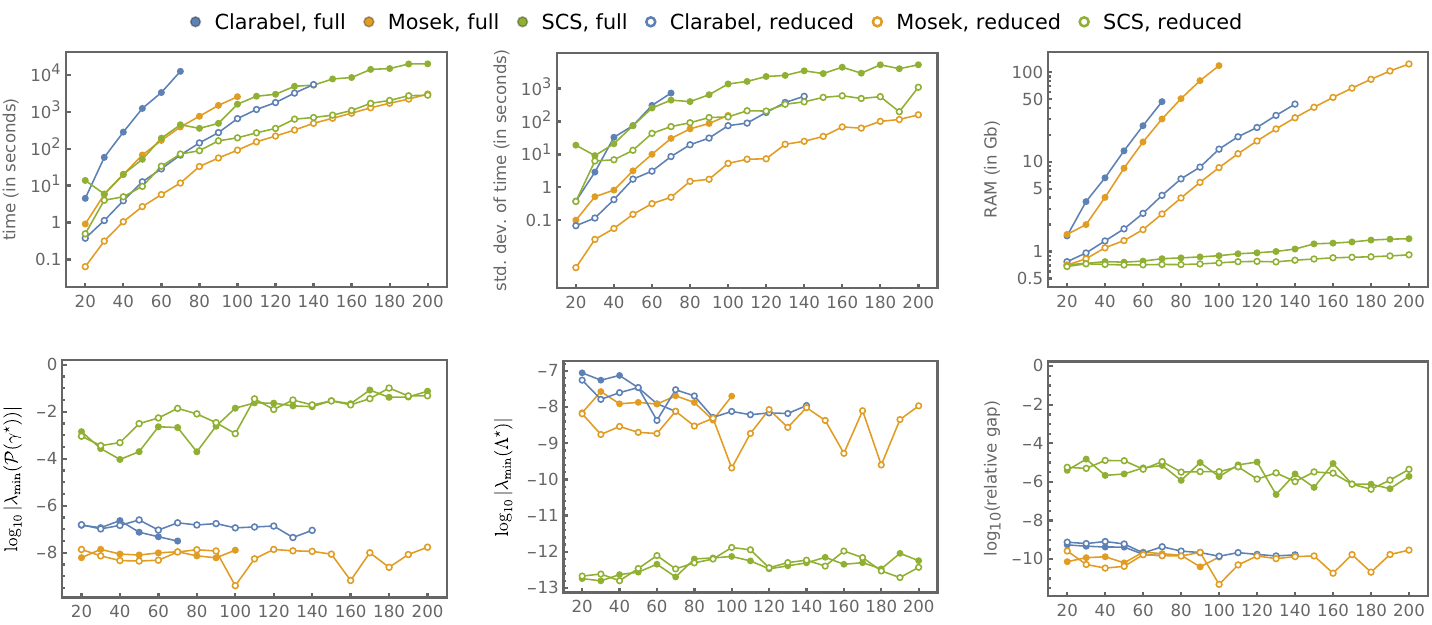}
	\caption{Comparing performance and accuracy characteristics of different solvers for the 
	closest physical covariance matrix problem as a function of the number of modes.}
	\label{fig:closestpcv}
\end{figure*}

The number of variables in the primal optimization problem is
\begin{equation}
	N_v = 
	\begin{cases}
		n(2n+1)+1 & \text{full case} \\
		n(n+1)+1 & \text{reduced case}
	\end{cases}
\end{equation}
In the reduced case the problem has $n^2$ variables less, so for large $n$ it is
almost half of the size of the full case. Such problems, minimization of the
maximal difference, are called Chebyshev approximation problems \cite{*[{}] [{,
Sect. \textbf{1.2.2}, p. 6.}] convex-opt}. If we represent a symmetric matrix as
a vector of its different components and denote by $s$ the objective function of
the primal problem of Eq.~\eqref{eq:covmatopt}, then
\begin{equation}\label{eq:snorminf}
	s = \left\|\frac{\gamma - \gamma^\circ}{\sigma}\right\|_{\infty},
\end{equation}
where the division is performed elementwise, so $\gamma^\star$ is the physical
covariance matrix closest to $\gamma^\circ$ in the $\infty$-norm. The unit
sphere of the $\infty$-norm is a unit square, and the scaling by $\sigma$ turns
square into a general rectangle. The optimization \eqref{eq:covmatopt} then
seeks for the minimal factor $s^\star$ by which we need to expand or shrink the
``unit rectangle'' so that it touches the boundary of the set of the physical
covariance matrices. The optimal value $s^\star$ gives the minimal $s^\star
\sigma$-neighborhood of $\gamma^\circ$ which contains a physical covariance
matrix, $\gamma^\star$. A couple of such minimal neighborhoods for different
$\sigma$ are shown on Fig.~\ref{fig:Capprox}, where the positive orthant denotes
the set of positive-semidefinite matrices, the blue region denotes the set of
physical CMs and the rectangles are the sets of $\gamma$ defined by
$\|(\gamma-\gamma^\circ)/\sigma\| \leqslant s$ for two $\sigma$'s. Put another
way, for any physical $\gamma$ at least one component $\gamma_{ij}$ is $s^\star$
or more $\sigma_{ij}$-intervals away from $\gamma^\circ_{ij}$. The smaller the
optimal $s^\star$, the higher the probability that the experiment was performed
correctly and the optimal solution $\gamma^\star$ is close to the true
covariance matrix of the state under study. If $\gamma^\circ$ is physical from
the very beginning, then $\gamma^\star = \gamma^\circ$ and $s^\star = 0$.
According to the standards of reliability \cite{enc-meas-stat, standards}, if
$s^\star > 2$ then the experiment should be considered very inaccurate. If
$s^\star > 3$ then the measured data are erroneous.

The objective of the dual problem is $-\tr[\Lambda \mathcal{P}(\gamma^\circ)]$.
If $\gamma^\circ$ is physical, then $\mathcal{P}(\gamma^\circ) \succcurlyeq 0$
and $-\tr[\Lambda \mathcal{P}(\gamma^\circ)] \leqslant 0$ for all $\Lambda
\succcurlyeq 0$, so the minimum of the primal objective is zero, and the
corresponding dual optimal solution is $\Lambda^\star = 0$ and $U^{+\star} =
U^{-\star}$, appropriately normalized so that the first condition of
Eq.~\eqref{eq:UUs} is satisfied. If $\gamma^\circ$ is non-physical, then
$\mathcal{P}(\gamma^\circ)$ has negative eigenvalues, which can be amplified by
a positive semidefinite $\Lambda^\star$. Exactly how much the negative
eigenvalues can be amplified is determined by the appropriate conditions on the
dual variables. In addition, $\gamma^\star$ and $\Lambda^\star$ are boundary. In
fact, $\gamma^\star$ is physical by construction and non-zero (its imaginary
part is $\Omega/2$), and $\Lambda^\star$ is non-zero since $\gamma^\circ$ is
non-physical and thus $\gamma^\star \not= \gamma^\circ$. As a consequence of
Lemma~\ref{lmn:AB} it follows that both $\gamma^\star$ and $\Lambda^\star$ are
boundary. The minimal eigenvalues of $\mathcal{P}(\gamma^\star)$ (or
$\mathcal{P}(\gamma^\star_{xx}, \gamma^\star_{pp})$ in the reduced case) and
$\Lambda^\star$ can be used as metrics of the accuracy of the solution produced
by a convex optimization solver. These metrics together with the relative gap
and performance characteristics are shown in Fig.~\ref{fig:closestpcv} for the
three solvers used in this work. From this figure one can conclude that Mosek is
likely to be the most appropriate choice for this problem and the data used as
tests. We see that true CM of a 100-partite state can be obtained in a
reasonable amount of time on a moderate desktop hardware.

\subsection{The most probable covariance matrix}\label{sec:IIB}

In the approach above we made no assumption about the exact form of the
probability distributions of the elements of $\gamma$ except that they have
averages $\gamma^\star$, standard deviations $\sigma$ and have bell-shape so
that the $\sigma$-intervals have meaning close to that of Gaussian
distributions. If we make a stronger assumption about the probability
distributions of the individual elements of $\gamma$, we can define the most
probable physical covariance matrix as the matrix that maximizes the probability
distribution. We assume now that each measured matrix element has perfect
Gaussian distributions with the average $\gamma^\circ_{ij}$ and standard
deviation $\sigma_{ij}$:
\begin{equation}\label{eq:gd}
	p(\gamma_{ij}) = \frac{1}{\sqrt{2\pi}\sigma_{ij}} 
	\exp\left[-\frac{(\gamma_{ij}-\gamma^\circ_{ij})^2}{2\sigma^2_{ij}}\right].
\end{equation}
The perfectly Gaussian distribution is non-physical for diagonal elements (they
cannot be negative), but once the measured covariance matrix is non-physical, we
already work with such objects. If the experiment is not totally inaccurate, the
ratios $\gamma^\circ_{ii}/\sigma_{ii}$ should be large enough, so the
probability of $\gamma_{ii}$ being negative under the distribution \eqref{eq:gd}
is exponentially small.

Assuming that the measurements of the matrix elements are done independently,
the probability density of $\gamma$ being the true physical matrix reads as the
product 
\begin{equation}\label{eq:pgamma}
	p(\gamma) = \prod_{1\leqslant i \leqslant j \leqslant 2n} p(\gamma_{ij}).
\end{equation}
In the reduced case the product is over the elements of $\gamma_{xx}$ and
$\gamma_{pp}$. Now we can state that the most likely physical matrix
$\gamma^\star$ is the one that maximizes the probability distribution
$p(\gamma)$. Expanding the product, we see that we need to minimize the
quadratic form
\begin{equation}\label{eq:Qop}
	\sum_{1\leqslant i \leqslant j \leqslant 2n} 
	\frac{(\gamma_{ij}-\gamma^\circ_{ij})^2}{2\sigma^2_{ij}}.
\end{equation}
This is a quadratic form in the variables $\gamma_{ij}$ and we need to minimize
this form subject to the semidefinite constraint $\mathcal{P}(\gamma)
\succcurlyeq 0$. Some components of the optimal solution might be far away from
their measured averages (but others might be very close), so this approach cannot
be considered completely satisfactory. A more reliable approach is to constrain
all components $\gamma_{ij}$ inside a reasonable interval around the average, 
\begin{equation}\label{eq:s0}
	|\gamma_{ij} - \gamma^\circ_{ij}| \leqslant s_0 \sigma_{ij},
\end{equation}
where, for example, $s_0 = 2$. Then the algorithm to find the most probable
physical covariance matrix reads as: We first solve the problem
\eqref{eq:covmatopt} and find $s^\star$. If $s^\star \geqslant s_0$ then the
experiment is considered inaccurate and further processing stops. If $s^\star <
s_0$, then the most likely physical covariance matrix is the solution that
minimizes the quadratic form given by Eq.~\eqref{eq:Qop}, subject to the
physicality condition and the conditions expressed by Eq.~\eqref{eq:s0}. At
least one component $\gamma^\star_{ij}$ of the solution of this problem will
have the property $s^\star \leqslant |\gamma^\star_{ij} - \gamma^\circ_{ij}|
\leqslant s_0$. It is this solution $\gamma^\star$ that we use in the
applications to entanglement verification below. The dual problem is given by
the following theorem.

\begin{thrm}\label{thrm:2}
For any symmetric $2n \times 2n$ matrices $\gamma^\circ$ and $\sigma$, the
latter with strictly positive elements, the following equality takes place:
\begin{equation}
\begin{split}
	\min_{\mathcal{P}(\gamma) \succcurlyeq 0} &\sum_{1 \leqslant i \leqslant j \leqslant 2n} 
	\frac{(\gamma_{ij}-\gamma^\circ_{ij})^2}{2\sigma^2_{ij}} \\
	&= \max_{\Lambda \succcurlyeq 0} -\tr\left[\Lambda \mathcal{P}(\gamma^\circ) + 
	\frac{1}{2}M^{\prime\cdot 2} \sigma^{\cdot 2}\right],
\end{split}
\end{equation}
where the maximization on the right-hand side is over all Hermitian positive
semidefinite matrices $\Lambda \succcurlyeq 0$ and $M'_{ij}$ are the elements
of $M = \re(\Lambda)$ with non-diagonal elements multiplied by $\sqrt{2}$ so that
\begin{equation}
	M'_{ij} = 
	\begin{cases}
		M_{ij} & i = j \\
		\sqrt{2}M_{ij} & i \not= j
	\end{cases}.
\end{equation}
Note that $M^{\prime\cdot 2}$ and $\sigma^{\cdot 2}$ are Hadamard square, but
their product is an ordinary matrix product. The KKT conditions are given by
$\tr[\Lambda^\star \mathcal{P}(\gamma^\star)] = 0$ and
\begin{equation}
	\gamma^\star = \gamma^\circ + M^{\star\prime\prime} \cdot \sigma^{\cdot 2},
\end{equation}
where $M'$ is matrix with elements $M'_{ij}$ (note that the prime is applied
twice and the product between $M^{\star\prime\prime}$ and $\sigma^{\cdot 2}$ is
Hadamard). 

\begin{figure*}
	\includegraphics[scale=0.73]{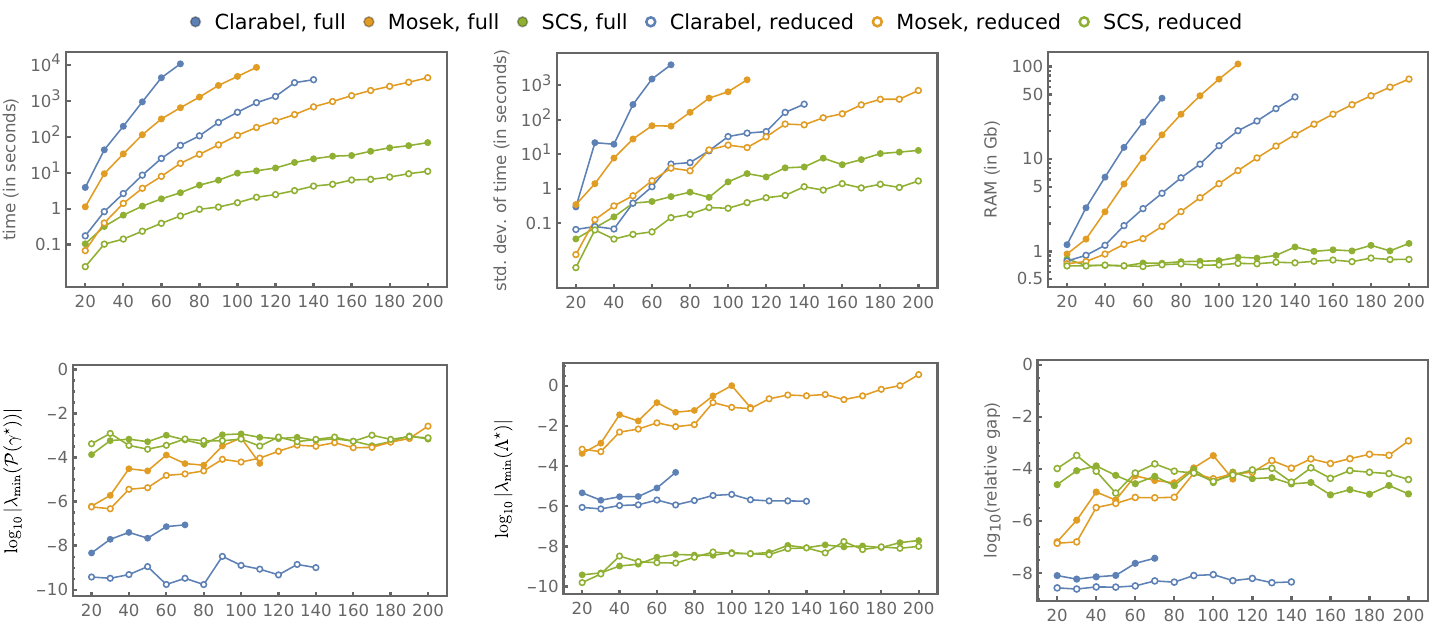}
	\caption{Comparing performance and accuracy characteristics of different solvers for the 
	most probable physical covariance matrix problem (with $s_0=+\infty$) as a function of the number of modes.}
	\label{fig:mostprobablepcv}
\end{figure*}

With the additional constraint \eqref{eq:s0}, where $s_0 > s^\star$ is larger
then the optimal solution of the problem \eqref{eq:covmatopt}, the following
primal and dual problems are feasible and their duality reads as:
\begin{equation}
\begin{split}
	&\min_{\substack{\mathcal{P}(\gamma) \succcurlyeq 0 \\ |\gamma - \gamma^\circ| \leqslant s_0 \sigma}} \sum_{i \leqslant j} 
	\frac{(\gamma_{ij}-\gamma^\circ_{ij})^2}{2\sigma^2_{ij}} = 
	\max_{\Lambda, U^\pm} -\tr\Biggl[\Lambda \mathcal{P}(\gamma^\circ) \Biggr.\\
	&\Biggl. +\frac{1}{2} (M - U^+ + U^-)^{\prime\cdot 2} \sigma^{\cdot 2} 
	+ s_0 (U^+ + U^-)\sigma\Biggr],
\end{split}
\end{equation}
where the maximization is over all Hermitian matrices $\Lambda \succcurlyeq 0$
and a pair of real symmetric matrices $U^\pm \geqslant 0$ with non-negative
elements. The KKT conditions are $\mathcal{P}(\gamma^\star) \Lambda^\star = 0$ and
\begin{equation}
\begin{split}
	&\gamma^\star = \gamma^\circ + (M^\star - U^{+\star} + U^{-\star})^{\prime\prime} \cdot \sigma^{\cdot 2} \\
	&U^{\pm \star} (\gamma^\star - \gamma^\circ \mp s_0 \sigma) = 0.
\end{split}
\end{equation}
In the reduced case the general problem above for $s_0 = +\infty$ is specialized
as follows:
\begin{equation}
\begin{split}
	&\min_{\mathcal{P}(\gamma_{xx}, \gamma_{pp}) \succcurlyeq 0} \sum_{i \leqslant j} 
	\left[\frac{(\gamma_{xx,ij}-\gamma^\circ_{xx,ij})^2}{2\sigma^2_{xx,ij}} + 
	\frac{(\gamma_{pp,ij}-\gamma^\circ_{pp,ij})^2}{2\sigma^2_{pp,ij}}\right] \\
	&= \max_{\Lambda \succcurlyeq 0} \left(-\tr[\Lambda \mathcal{P}(\gamma^\circ_{xx}, \gamma^\circ_{pp})]
	-\frac{1}{2}\tr[\Lambda^{\prime\cdot 2}_{xx} \sigma^{\cdot 2}_{xx} 
	+\Lambda^{\prime\cdot 2}_{pp} \sigma^{\cdot 2}_{pp}] \right),
\end{split}
\end{equation}
where the maximization is over all real $\Lambda \succcurlyeq 0$. The KKT
conditions are $\mathcal{P}(\gamma^\star_{xx}, \gamma^\star_{pp}) \Lambda^\star
= 0$ and  
\begin{equation}
	\gamma^\star_{xx} = \gamma^\circ_{xx} + \Lambda^{\star\prime\prime}_{xx} \cdot \sigma^{\cdot 2}_{xx} \quad
	\gamma^\star_{pp} = \gamma^\circ_{pp} + \Lambda^{\star\prime\prime}_{pp} \cdot \sigma^{\cdot 2}_{pp}.
\end{equation}
For $s_0 < +\infty$ several linear constrains are added in addition to the
semidefinite constraint. The problem reads as follows, where the maximization is
over all real $\Lambda \succcurlyeq 0$ and $U^\pm, V^\pm \geqslant 0$:
\begin{widetext}
\begin{equation}
\begin{split}
	&\min_{\gamma_{xx}, \gamma_{pp}} \sum_{i \leqslant j} 
	\left[\frac{(\gamma_{xx,ij}-\gamma^\circ_{xx,ij})^2}{2\sigma^2_{xx,ij}} + 
	\frac{(\gamma_{pp,ij}-\gamma^\circ_{pp,ij})^2}{2\sigma^2_{pp,ij}}\right] 
	= \max_{\Lambda \succcurlyeq 0, U^\pm, V^\pm \geqslant 0} \Biggl(-\tr[\Lambda \mathcal{P}(\gamma^\circ_{xx}, \gamma^\circ_{pp})]\Biggr. \\
	& -\tr\left[\frac{1}{2}(\Lambda_{xx} - U^+ + U^-)^{\prime\cdot 2} \sigma^{\cdot 2}_{xx} 
	+\frac{1}{2}(\Lambda_{pp} - V^+ + V^-)^{\prime\cdot 2} \sigma^{\cdot 2}_{pp} 
	+s_0 (U^+ + U^-)\sigma_{xx} + s_0 (V^+ + V^-)\sigma_{pp}\right]\Biggr),
\end{split}
\end{equation}
\end{widetext}
with the addition constraints of the primal problem (on the left-hand side) being
\begin{equation}
	|\gamma_{xx} - \gamma^\circ_{xx}| \leqslant s_0 \sigma_{xx}, \quad
	|\gamma_{pp} - \gamma^\circ_{pp}| \leqslant s_0 \sigma_{pp}.
\end{equation}
The KKT conditions are $\mathcal{P}(\gamma^\star_{xx}, \gamma^\star_{pp})
\Lambda^\star = 0$,
\begin{equation}
\begin{split}
	\gamma^\star_{xx} &= \gamma^\circ_{xx} + (\Lambda^\star_{xx} - U^{+\star} + U^{-\star})^{\prime\prime}
	\cdot \sigma^{\cdot 2}_{xx} \\
	\gamma^\star_{pp} &= \gamma^\circ_{pp} + (\Lambda^{\star}_{pp} - V^{+\star} + V^{-\star})^{\prime\prime}
	\cdot \sigma^{\cdot 2}_{pp},
\end{split}
\end{equation}
and
\begin{equation}
\begin{split}
	U^{\pm\star} \cdot (\gamma^\star_{xx} - \gamma^\circ_{xx} \mp s_0 \sigma_{xx}) &= 0 \\
	V^{\pm\star} \cdot (\gamma^\star_{pp} - \gamma^\circ_{pp} \mp s_0 \sigma_{pp}) &= 0.
\end{split}
\end{equation}
If the original CM $\gamma^\circ$ (or the pair $\gamma^\circ_{xx}$ and
$\gamma^\circ_{pp}$) is non-physical, then the optimal, most probable CM
$\gamma^\star$ is boundary.
\end{thrm}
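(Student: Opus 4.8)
The plan is to construct the dual problem via the Lagrangian of the general convex program in Eq.~\eqref{eq:CO}, treating the quadratic objective of Eq.~\eqref{eq:Qop} together with the single semidefinite constraint $\mathcal{P}(\gamma) \succcurlyeq 0$ (and, in the constrained variant, the linear inequalities of Eq.~\eqref{eq:s0}). First I would introduce a Hermitian dual variable $\Lambda \succcurlyeq 0$ for the physicality constraint and write the Lagrangian as the quadratic form of Eq.~\eqref{eq:Qop} minus $\tr[\Lambda \mathcal{P}(\gamma)]$. Since $\mathcal{P}(\gamma) = \gamma + \frac{i}{2}\Omega_n$ is affine in $\gamma$ and $\tr(\Lambda \cdot \frac{i}{2}\Omega_n)$ contributes a constant (plus a term that pairs $\im(\Lambda)$ with $\Omega_n$), the Lagrangian is a strictly convex quadratic in the free variables $\gamma_{ij}$ with $i \leqslant j$. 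I would therefore minimize it by setting $\partial_{\gamma_{ij}} \mathcal{L} = 0$, which is the step that yields the stationarity KKT condition $\gamma^\star = \gamma^\circ + M^{\star\prime\prime} \cdot \sigma^{\cdot 2}$. The factor-of-$\sqrt{2}$ bookkeeping between $M$ and $M'$ is the only delicate point here: because only independent components $\gamma_{ij}$ ($i \leqslant j$) are free variables while $\tr[\Lambda\gamma]$ double-counts off-diagonal entries, the coefficient of $\gamma_{ij}$ in $\tr(M\gamma)$ is $2M_{ij}$ off-diagonal and $M_{ij}$ on-diagonal — this is precisely what the primed/doubly-primed notation tracks, and I would verify the two prime operations are mutually inverse on the relevant scaling.

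Next I would substitute the stationary $\gamma^\star$ back into the Lagrangian to obtain the dual objective $g(\Lambda)$. The quadratic-minus-linear form, after completing the square, collapses to the stated expression $-\tr[\Lambda \mathcal{P}(\gamma^\circ) + \frac{1}{2}M^{\prime\cdot 2}\sigma^{\cdot 2}]$: the term $-\frac{1}{2}M^{\prime\cdot 2}\sigma^{\cdot 2}$ is exactly the negative of the minimized quadratic penalty evaluated at the optimizer, and $-\tr[\Lambda\mathcal{P}(\gamma^\circ)]$ is the residual linear pairing with the measured matrix. Here I would use the equivalence $S + iA \succcurlyeq 0 \Leftrightarrow S - iA \succcurlyeq 0$ from Eq.~\eqref{eq:SA} to argue that only $M = \re(\Lambda)$ enters the real quadratic penalty, so the imaginary part of $\Lambda$ drops out of the squared term. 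Strong duality then follows from the general Slater-type argument recorded after Eq.~\eqref{eq:DCO}, since any strictly positive-definite physical $\gamma$ gives a strictly feasible interior point, and the complementary-slackness condition $\tr[\Lambda^\star \mathcal{P}(\gamma^\star)] = 0$ completes the KKT system.

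For the constrained case with $s_0 < +\infty$ I would repeat the construction, now attaching nonnegative symmetric multiplier matrices $U^\pm$ (Hadamard-paired, since the constraints of Eq.~\eqref{eq:s0} are element-wise) to the two families $\pm(\gamma_{ij} - \gamma^\circ_{ij}) \leqslant s_0\sigma_{ij}$. The stationarity condition then shifts $M$ to $M - U^+ + U^-$ inside the doubly-primed term, giving the modified $\gamma^\star$, and the extra linear term $s_0(U^+ + U^-)\sigma$ appears in the dual objective; the element-wise complementary slackness $U^{\pm\star}\cdot(\gamma^\star - \gamma^\circ \mp s_0\sigma) = 0$ closes the system. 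The reduced-case formulas follow by specializing to $\gamma_{xp} = 0$ and splitting every block-quantity into its $xx$ and $pp$ parts exactly as in the passage preceding Eq.~\eqref{eq:pcred}, with $\Lambda_{xx}$, $\Lambda_{pp}$ and separate multiplier families $U^\pm$, $V^\pm$.

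The hard part will be neither the Lagrangian nor strong duality but the final boundary claim: if $\gamma^\circ$ is non-physical, then the optimal $\gamma^\star$ lies on the boundary of the physical cone. I would argue this exactly as in Theorem~\ref{thrm:1} using Lemma~\ref{lmn:AB}. From complementary slackness $\tr[\Lambda^\star\mathcal{P}(\gamma^\star)] = 0$ with both $\Lambda^\star, \mathcal{P}(\gamma^\star) \succcurlyeq 0$, the lemma forces $\mathcal{P}(\gamma^\star)\Lambda^\star = 0$, and it remains to show $\Lambda^\star \neq 0$. This is where non-physicality of $\gamma^\circ$ is essential: if $\Lambda^\star$ were zero, the stationarity relation would give $\gamma^\star = \gamma^\circ$, contradicting $\mathcal{P}(\gamma^\circ) \not\succcurlyeq 0$ while $\mathcal{P}(\gamma^\star) \succcurlyeq 0$. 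Since $\mathcal{P}(\gamma^\star) \neq 0$ (its imaginary part is $\frac{1}{2}\Omega_n \neq 0$), Lemma~\ref{lmn:AB} yields that $\mathcal{P}(\gamma^\star)$ is boundary, i.e. $\gamma^\star$ is a boundary CM, which is the assertion.
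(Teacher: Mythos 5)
Your route is essentially the paper's own: the appendix proves Theorem~\ref{thrm:1} by exactly this Lagrangian computation and then states that the remaining theorems are ``derived in a similar way, by writing the Lagrangian and simplifying it with matrix manipulations.'' Your computations check out. Stationarity gives $(\gamma^\star_{ij}-\gamma^\circ_{ij})/\sigma^2_{ij} = M^\star_{ij}$ on the diagonal and $2M^\star_{ij}$ off the diagonal, which is precisely the double-prime map, and back-substitution produces $-\tr[\Lambda\mathcal{P}(\gamma^\circ)]-\frac{1}{2}\tr[M^{\prime\cdot2}\sigma^{\cdot2}]$ as claimed. Two cosmetic remarks: the imaginary part $K=\im(\Lambda)$ drops out of the quadratic term simply because $\tr(K\gamma)=0$ for antisymmetric $K$ and symmetric $\gamma$ (it survives only linearly through $\frac{1}{2}\tr(K\Omega)$ inside $-\tr[\Lambda\mathcal{P}(\gamma^\circ)]$), so the appeal to Eq.~\eqref{eq:SA} is not needed there; and where the paper's appendix uses a real $4n\times4n$ multiplier $Z\succcurlyeq0$ and then constructs the Hermitian $\Lambda(Z)$, your direct use of a Hermitian $\Lambda$ is equivalent by that same correspondence, so nothing is lost.

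Two steps need patching in the constrained variant $s_0<+\infty$. First, your Slater point: a strictly physical $\gamma$ alone is not an interior point of the constrained problem, since it must also satisfy $|\gamma-\gamma^\circ|< s_0\sigma$ strictly; this is where the hypothesis $s_0>s^\star$ enters. Take the optimizer $\gamma^{\mathrm{ch}}$ of \eqref{eq:covmatopt}, which satisfies $|\gamma^{\mathrm{ch}}-\gamma^\circ|\leqslant s^\star\sigma$ element-wise, and mix it with any strictly physical $\gamma_1$: since $\mathcal{P}$ is affine, $\mathcal{P}((1-t)\gamma^{\mathrm{ch}}+t\gamma_1)\succ0$ for $t>0$, while for small $t$ the box constraints remain strict because $s^\star<s_0$ and $\sigma>0$. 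Second, your boundary argument ``$\Lambda^\star=0$ implies $\gamma^\star=\gamma^\circ$'' is valid only for $s_0=+\infty$; when $s_0<+\infty$ stationarity with $\Lambda^\star=0$ reads $\gamma^\star=\gamma^\circ+(U^{-\star}-U^{+\star})^{\prime\prime}\cdot\sigma^{\cdot2}$, and you must still rule out $U^{\pm\star}\neq0$. This follows from the box complementary slackness: if $U^{+\star}_{ij}>0$ then $\gamma^\star_{ij}-\gamma^\circ_{ij}=s_0\sigma_{ij}>0$, which with the stationarity relation forces $U^{-\star}_{ij}>U^{+\star}_{ij}>0$, but then $U^{-\star}_{ij}>0$ forces $\gamma^\star_{ij}-\gamma^\circ_{ij}=-s_0\sigma_{ij}<0$, a contradiction (and symmetrically for $U^{-\star}$). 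Hence $U^{\pm\star}=0$, $\gamma^\star=\gamma^\circ$, contradicting non-physicality. Alternatively, observe that at $\Lambda=0$ the dual objective is $-\tr[\frac{1}{2}(U^{-}-U^{+})^{\prime\cdot2}\sigma^{\cdot2}+s_0(U^{+}+U^{-})\sigma]\leqslant0$, whereas the common optimal value is strictly positive because the quadratic objective vanishes only at the infeasible point $\gamma^\circ$; either argument restores $\Lambda^\star\neq0$, after which Lemma~\ref{lmn:AB} concludes as you say.
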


The objectives of the primal problems are positive quadratic forms of primal
variables, thus bounded from below and the minimum of the forms is finite. The
objectives of dual problems are negative quadratic forms of dual variables, thus
bounded from above and the maximum is finite. The duality states that the
minimum of primal quadratic form equals the maximum of the dual quadratic form.
As in the previous case, if $\gamma^\circ$ is non-physical, then both
$\gamma^\star$ and $\Lambda^\star$ are boundary and their minimal eigenvalues
can be used as metrics of the accuracy of the numerical solution. In
Fig.~\ref{fig:mostprobablepcv} we show these metrics together with performance
characteristics of the three solvers. It seems that in this case Clarabel
produces more accurate solutions (in two out of three metrics) than the others.

We now have three approaches to determine the most likely physical CM that is
consistent with the experimental data $\gamma^\circ$ and $\sigma$ --- the
closest CM, the most probable CM without restriction on its elements and the
most probable CM with the restriction given by Eq.~\eqref{eq:s0}. The difference
between the three approaches can be described as follows. The closest CM gives
the best worst-case solution, the one where the component furthest away from the
average is as close to the average as possible for a physical CM. The most
probable CM without restriction on its elements is the best average-case, where
the average ``quality'' of all components is the best possible, but some
components might be of very low ``quality'' (far away from the average).
Finally, the most probable CM with the restriction given by Eq.~\eqref{eq:s0} is
the best average-case with bounded worst-case, where no component can be very
bad. We cannot distinguish one of these approaches as the ``right'' one based on
probability theory only, though the last approach might be more ``correct'' than
the other two (its solution maximizes the probability distribution of the
covariance matrix of which all components are inside a reasonable
$\sigma$-interval around the average). But the ultimate way to have a true
physical covariance matrix is to perform the measurements with higher accuracy.

\section{Symplectic trace}\label{sec:III}

Here we give an analytical expression for the symplectic trace of an arbitrary
positive definite matrix $M \succ 0$ by showing that it is the minimal value of
some optimization problem. In addition, we construct the dual problem, which
will be used below to implement a test for genuine multipartite entanglement.
The theorem that states these results is:
\begin{thrm}
For any $M \succcurlyeq 0$ the following relation is valid:
\begin{equation}\label{eq:Mstr}
\begin{split}
	\min_{\gamma} \tr(M\gamma) &= \max_{M + i K \succcurlyeq 0} \frac{1}{2} \tr (K\Omega) \\
	&= \frac{1}{2} \tr \sqrt{\sqrt{M}\Omega^{\mathrm{T}}M\Omega\sqrt{M}},
\end{split}
\end{equation}
where the optimization variable on the right-hand side is an anti-symmetric
matrix $K$, $K^{\mathrm{T}} = -K$. The KKT condition is given by
\begin{equation}\label{eq:strKKT}
	\mathcal{P}(\gamma^\star)(M + i K^\star) = 0.
\end{equation}
For $M \succ 0$ the optimal value \eqref{eq:Mstr} is equal to the symplectic
trace of $M$ and the optimal covariance matrix is unique and explicitly reads as
\begin{equation}\label{eq:gammaM}
	\gamma_M = \frac{1}{2} \sqrt{M^{-1}} 
	\sqrt{\sqrt{M}\Omega^{\mathrm{T}}M\Omega\sqrt{M}} \sqrt{M^{-1}}.
\end{equation}
In the reduced case we have
\begin{equation}
\begin{split}
	\min_{\gamma_{xx}, \gamma_{pp}} &\tr(X\gamma_{xx} + P\gamma_{pp}) = \max_{Z} \tr Z\\
	&= \tr \sqrt{\sqrt{X}P\sqrt{X}} = \tr \sqrt{\sqrt{P}X\sqrt{P}},
\end{split}
\end{equation}
where the optimization is over all matrices $Z$ such that
\begin{equation}
	\begin{pmatrix}
		X & Z \\
		Z^{\mathrm{T}} & P
	\end{pmatrix}
	\succcurlyeq 0.
\end{equation}
The KKT condition is given by
\begin{equation}\label{eq:KKTreduced}
	\mathcal{P}(\gamma^\star_{xx}, \gamma^\star_{pp})
	\begin{pmatrix}
		X & -Z^{\star} \\
		-Z^{\star\mathrm{T}} & P
	\end{pmatrix}
	= 0.
\end{equation}
For $X, P \succ 0$ the optimal solution is unique and reads as
\begin{equation}\label{eq:gammaXP}
\begin{split}
	\gamma^\star_{xx} &= \frac{1}{2}\sqrt{X^{-1}}\sqrt{\sqrt{X}P\sqrt{X}}\sqrt{X^{-1}}, \\
	\gamma^\star_{pp} &= \frac{1}{2}\sqrt{P^{-1}}\sqrt{\sqrt{P}X\sqrt{P}}\sqrt{P^{-1}},
\end{split}
\end{equation}
and the optimal value is equal to the symplectic trace of the following reduced
matrix:
\begin{equation}\label{eq:M-XP}
	M = 
	\begin{pmatrix}
		X & 0 \\
		0 & P
	\end{pmatrix}.
\end{equation}
The optimal states are pure and on the boundary of the physical states.
\end{thrm}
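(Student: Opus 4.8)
The plan is to read \eqref{eq:Mstr} as a semidefinite program and establish the three-way equality by strong duality together with an explicit construction of the optimal covariance matrix. First I would set up the primal problem $\min_{\gamma}\tr(M\gamma)$ over physical $\gamma$ (the implicit constraint being $\mathcal{P}(\gamma)=\gamma+\tfrac{i}{2}\Omega\succcurlyeq 0$) and form its Lagrangian with a single Hermitian positive-semidefinite multiplier $\Lambda$, split as $\Lambda=\re\Lambda+i\im\Lambda$. Minimizing over real symmetric $\gamma$ annihilates the antisymmetric part of $\Lambda$ in the linear term and forces the finiteness constraint $\re\Lambda=M$; using $\tr(M\Omega)=0$ (the trace of a symmetric times an antisymmetric matrix) the dual objective collapses to exactly $\tfrac{1}{2}\tr(K\Omega)$ with $K=\im\Lambda$ antisymmetric, while dual feasibility becomes $M+iK\succcurlyeq 0$. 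Slater's condition is immediate from $\gamma=cE$ with $c>\tfrac12$, so strong duality gives the first equality. Complementary slackness then reads $\tr[\mathcal{P}(\gamma^\star)\Lambda^\star]=0$ with both factors positive semidefinite, and Lemma~\ref{lmn:AB} upgrades it to the KKT condition \eqref{eq:strKKT}, $\mathcal{P}(\gamma^\star)(M+iK^\star)=0$.

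The second equality, and the explicit optimizer for $M\succ 0$, is the technical core. I would conjugate the physicality constraint by $M^{1/2}$: setting $R:=M^{1/2}\Omega M^{1/2}$ (real antisymmetric, so $iR$ is Hermitian), one checks that $W:=\sqrt{\sqrt M\,\Omega^{\mathrm{T}}M\Omega\,\sqrt M}$ satisfies $W^2=\sqrt M\,\Omega^{\mathrm{T}}M\Omega\,\sqrt M=-R^2=(iR)^2$, hence $W=|iR|$. The candidate $\gamma_M=\tfrac12 M^{-1/2}W M^{-1/2}$ of \eqref{eq:gammaM} then obeys $M^{1/2}\mathcal{P}(\gamma_M)M^{1/2}=\tfrac12(|iR|+iR)$, which is positive semidefinite by the elementary fact that $|H|+H\succcurlyeq 0$ for any Hermitian $H$; this proves feasibility, and shows $\gamma_M$ is boundary since $iR$ has negative eigenvalues. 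The value follows by cyclicity, $\tr(M\gamma_M)=\tfrac12\tr W$, and identifying the spectrum of $W$ with the symplectic one: the eigenvalues of $\sqrt M\,\Omega^{\mathrm{T}}M\Omega\,\sqrt M$ coincide with those of $-(\Omega M)^2$, i.e.\ the squared symplectic eigenvalues each doubled, so $\tfrac12\tr W=\str M$, closing the chain.

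Uniqueness and purity I would settle by rank counting. For $M\succ 0$ the antisymmetric $R$ is invertible, so $iR$ has exactly $n$ positive and $n$ negative eigenvalues; hence $\mathcal{P}(\gamma_M)$ has rank $n$ (equivalently $(\gamma_M\Omega)^2=-\tfrac14 E$, the purity condition \eqref{eq:purity}), and the dual optimum $\Lambda^\star=M+iK^\star$ also has rank $n$. Because the KKT identity forces $\operatorname{range}\Lambda^\star\subseteq\ker\mathcal{P}(\gamma^\star)$ for \emph{every} optimizer $\gamma^\star$, while any physical $\mathcal{P}(\gamma^\star)$ has rank at least $n$, the kernel is pinned to a single $n$-dimensional (Lagrangian) subspace, which determines the pure Gaussian state uniquely. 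The positive-semidefinite case $M\succcurlyeq 0$ in the first two equalities then follows by continuity, letting $M+\epsilon E\to M$.

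The reduced statement is obtained by substituting $M=\diag(X,P)$ and $\gamma=\diag(\gamma_{xx},\gamma_{pp})$ and invoking the reduced physicality \eqref{eq:pcred}: the Lagrangian computation collapses the multiplier to $\begin{pmatrix}X&-Z\\-Z^{\mathrm{T}}&P\end{pmatrix}\succcurlyeq0$ with dual objective $\tr Z$, matching the KKT form \eqref{eq:KKTreduced}, and this equals the standard program $\max\{\tr Z:\begin{pmatrix}X&Z\\Z^{\mathrm{T}}&P\end{pmatrix}\succcurlyeq0\}$ after conjugating by $\diag(E,-E)$. A Schur-complement reduction ($P\succcurlyeq Z^{\mathrm{T}}X^{-1}Z$, i.e.\ $Z=X^{1/2}YP^{1/2}$ with $\|Y\|\leqslant1$) evaluates it to $\tr\sqrt{\sqrt X P\sqrt X}=\tr\sqrt{\sqrt P X\sqrt P}$, and the optimal polar factor yields the explicit $\gamma^\star_{xx},\gamma^\star_{pp}$ of \eqref{eq:gammaXP}, from which purity $\gamma^\star_{xx}\gamma^\star_{pp}=\tfrac14 E$ in \eqref{eq:pureReduced} is a direct check. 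I expect the main obstacle to be the square-root bookkeeping in the second equality — reducing the semidefinite constraint cleanly to $|H|+H\succcurlyeq0$ and matching the normalization of $W$ to the symplectic spectrum — rather than the duality, which is routine.
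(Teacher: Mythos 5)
Your duality setup and KKT derivation follow the paper's Lagrangian computation essentially verbatim, and two of your deviations are legitimate improvements: the feasibility check via $M^{1/2}\mathcal{P}(\gamma_M)M^{1/2}=\tfrac12\bigl(|iR|+iR\bigr)\succcurlyeq 0$ is an elementary replacement for the paper's citation of the Horn--Johnson block-matrix fact, and your reduced-case argument is actually \emph{more} self-contained than the paper's (which merely specializes the full case): the parametrization $Z=X^{1/2}YP^{1/2}$ with $\|Y\|\leqslant 1$ evaluates the dual maximum exactly as $\tr\sqrt{\sqrt{X}P\sqrt{X}}$, so weak duality plus the feasible pair \eqref{eq:gammaXP} sandwiches the common value, and that part is complete.

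In the full case, however, there is a genuine gap: you never prove that $\gamma_M$ is optimal. Feasibility plus the spectral computation $\tr(M\gamma_M)=\tfrac12\tr W=\str M$ yields only the upper bound $\min_\gamma\tr(M\gamma)\leqslant\str M$; ``closing the chain'' needs the matching lower bound, i.e.\ a dual feasible point of value $\str M$, or a proof that every optimizer coincides with $\gamma_M$. Your uniqueness paragraph does not supply this: it pins $\ker\mathcal{P}(\gamma^\star)$ to the range of the dual optimum $\Lambda^\star$, but since you never exhibit $\Lambda^\star$ it cannot identify the (unique) optimizer with your candidate. The paper closes this in two independent ways: it derives from the KKT identity the Riccati equation $\gamma^\star M\gamma^\star=\tfrac14\Omega^{\mathrm{T}}M\Omega$ satisfied by \emph{any} optimizer and invokes the uniqueness of its positive-definite solution, and it separately proves $\min_\gamma\tr(M\gamma)=\str M$ for all feasible $\gamma$ via Williamson normal form and the per-mode uncertainty relation $\gamma'_{xx,jj}+\gamma'_{pp,jj}\geqslant 1$. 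Within your own conjugation framework the cheapest repair is the explicit dual certificate $\Lambda^\star=2\sqrt{M}\,P_-\sqrt{M}$, where $P_-$ is the spectral projector onto the negative eigenspace of the Hermitian matrix $iR$: then $\Lambda^\star\succcurlyeq 0$, $\re\Lambda^\star=\sqrt{M}\bigl(P_-+\overline{P_-}\bigr)\sqrt{M}=M$ because $\overline{P_-}=P_+$, and $\mathcal{P}(\gamma_M)\Lambda^\star\propto\bigl(|iR|+iR\bigr)P_-=0$, so the identity $\tr[\mathcal{P}(\gamma_M)\Lambda^\star]=\tr(M\gamma_M)-\tfrac12\tr(K^\star\Omega)=0$ shows the pair has zero duality gap and both are optimal. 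Be warned that this bookkeeping is less routine than your closing sentence suggests: the naive annihilator $\sqrt{M}\bigl(|iR|-iR\bigr)\sqrt{M}$ of $|iR|+iR$ violates the dual constraint, since its real part is $\sqrt{M}\,|iR|\,\sqrt{M}\neq M$ --- which is precisely why some explicit certificate construction (or the paper's Williamson argument) cannot be omitted.
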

\begin{proof}
The Lagrangian of the primal problem \eqref{eq:Mstr} reads as
\begin{equation}
\begin{split}
	\mathcal{L}(\gamma, \Lambda) &= \tr(M\gamma) - \tr[\mathcal{P}(\gamma)\Lambda] \\
					&= \tr[(M-\re(\Lambda))\gamma] + \frac{1}{2}\tr(\im(\Lambda)\Omega),
\end{split}
\end{equation}
where $\Lambda \succcurlyeq 0$. The dual objective is finite if $\re(\Lambda) =
M$ and is given by
\begin{equation}
	g(\Lambda) = \inf_{\gamma} \mathcal{L}(\gamma, \Lambda) = \frac{1}{2}\tr(\im(\Lambda)\Omega).
\end{equation}
Denoting $K = \im(\Lambda)$, we obtain the desired dual optimization problem and
the KKT condition \eqref{eq:strKKT}.

Let us first assume that $M \succ 0$ and derive an analytical expression for the
solution $\gamma^\star$ of the primal optimization problem \eqref{eq:Mstr}. The
primal and dual optimal solutions $\gamma^\star$ and $K^\star$ satisfy the KKT
condition 
\begin{equation}
	\tr[\mathcal{P}(\gamma^\star)(M + i K^\star)] = 0,
\end{equation}
and from Lemma~\ref{lmn:AB} the equality \eqref{eq:strKKT} follows. Explicitly
it reads as
\begin{equation}
    \left(\gamma^\star + \frac{i}{2}\Omega\right)(M+iK^\star) = 0.
\end{equation}
Separating the real and imaginary parts, we derive two equalities
\begin{equation}\label{eq:MgK}
    \gamma^\star M = \frac{1}{2}\Omega K^\star, \quad \frac{1}{2}\Omega M + \gamma^\star K^\star = 0.
\end{equation}
By multiplying both sides of the first equality by $\gamma^\star \Omega$ on the
left, we obtain
\begin{equation}
	\gamma^\star\Omega\gamma^\star M = -\frac{1}{2}\gamma^\star K^\star = \frac{1}{4}\Omega M.
\end{equation}
Since $M$ is non-degenerate, we can cancel it from both sides and arrive to the equality
\begin{equation}\label{eq:gOg}
    \gamma^\star \Omega \gamma^\star = \frac{1}{4}\Omega,
\end{equation}
which is equivalent to the condition \eqref{eq:purity}. It means that any
optimal solution of the primal problem is pure.

Now we derive an explicit expression for $\gamma^\star$. The second equality of
Eq.~\eqref{eq:MgK} is equivalent to
\begin{equation}
    K^\star \gamma^\star = \frac{1}{2} M \Omega^{\mathrm{T}}.
\end{equation}
Multiplying the first equality of Eq.~\eqref{eq:MgK} by $\gamma^\star$ on the
right, we obtain
\begin{equation}\label{eq:XAXB}
    \gamma^\star M \gamma^\star = \frac{1}{2} \Omega K^\star \gamma^\star
    = \frac{1}{4} \Omega M \Omega^{\mathrm{T}}
	= \frac{1}{4} \Omega^{\mathrm{T}} M \Omega.
\end{equation}
According to \cite{*[] [{ p. 445}] horn-johnson}, the unique solution of this
equation is given by
\begin{equation}\label{eq:gammastar}
	\gamma^\star = \frac{1}{2} \sqrt{M^{-1}} \sqrt{\sqrt{M}\Omega^{\mathrm{T}}M\Omega\sqrt{M}} \sqrt{M^{-1}}.
\end{equation}
Such states were considered in \cite{NewJPhys.20.023030}, but without studying
their minimizing property. For the minimum of the trace $\tr(M\gamma)$ we thus
have
\begin{displaymath}
    \min_{\gamma} \tr(M\gamma) = \tr(M\gamma^\star)
    = \frac{1}{2} \tr{\sqrt{\sqrt{M}\Omega^{\mathrm{T}}M\Omega\sqrt{M}}}. 
\end{displaymath}

We show that this (obviously symmetric) matrix satisfies the physicality
condition. In fact, we have
\begin{equation}
	\mathcal{P}(\gamma^\star) = \frac{1}{2}\sqrt{M^{-1}}
	\left[\sqrt{A^{\mathrm{T}} A} + i A\right]\sqrt{M^{-1}},
\end{equation}
where 
\begin{equation}
	A = \sqrt{M}\Omega\sqrt{M}.
\end{equation}
Since $A A^{\mathrm{T}} = A^{\mathrm{T}} A = -A^2$, according to \cite{*[] [{ p. 504}]
horn-johnson-2}, the matrix
\begin{equation}
	\begin{pmatrix}
		\sqrt{A^{\mathrm{T}} A} & A \\
		A^{\mathrm{T}} & \sqrt{A^{\mathrm{T}} A}
	\end{pmatrix}
	\succcurlyeq 0
\end{equation}
is positive-semidefinite, which proves the physicality condition
$\mathcal{P}(\gamma^\star) \succcurlyeq 0$. In addition, the matrix
$\mathcal{P}(\gamma^\star)$ is singular, i.e. the matrix $\gamma^\star$ is on
the boundary of the physical covariance matrices set, as it must be due to the
KKT condition \eqref{eq:strKKT} and Lemma~\ref{lmn:AB}. It is instructive to see
that the covariance matrix given by Eq.~\eqref{eq:gammastar} satisfies the
relation \eqref{eq:gOg}. In fact, we have
\begin{equation}
\begin{split}
	4\gamma^\star \Omega \gamma^\star &= -\sqrt{M^{-1}}\sqrt{-A^2}\, A^{-1\,} \sqrt{-A^2} \sqrt{M^{-1}} \\
	&= \sqrt{M^{-1}}A\sqrt{M^{-1}} = \Omega.
\end{split}
\end{equation}
Here we used that fact that any two functions of the same matrix $A$ commute
($\sqrt{-A^2}$ and $A^{-1}$).

We now show that 
\begin{equation}\label{eq:minstr}
	\min_\gamma \tr(M\gamma) = \str(M),
\end{equation}
where $\str(M)$ is the symplectic trace of $M$, the sum of the symplectic
eigenvalues of $M$. The term ``symplectic eigenvalue'' was introduced in
Ref.~\cite{GiedkeECP03}. According to the Williamson theorem \cite{*[] [{
Theorem 8.11, p. 244}] SGQM}, for a positive definite $M$ there is a symplectic
matrix $S$ such that 
\begin{equation}\label{eq:SD}
	S^{\mathrm{T}} M S = 
	\begin{pmatrix}
		D & 0 \\
		0 & D
	\end{pmatrix},
\end{equation}
where $D$ is a diagonal matrix with diagonal elements $\lambda_1, \ldots,
\lambda_n$. These diagonal elements are uniquely defined by $M$ as follows: $\pm
i \lambda_j$ are the eigenvalues of $\Omega M$. The sum of these eigenvalues is
referred to as the symplectic trace of $M$
\begin{equation}
	\str(M) = \tr(D) = \lambda_1 + \ldots + \lambda_n.
\end{equation}
For the primal minimization problem we have
\begin{displaymath}
\begin{split}
	\min_\gamma &\tr(M\gamma) = \min_\gamma \tr\left(S^{\mathrm{T}-1}
	\begin{pmatrix}
		D & 0 \\
		0 & D
	\end{pmatrix}
	S^{-1} \gamma
	\right) \\
	&= \min_\gamma \tr\left(
	\begin{pmatrix}
		D & 0 \\
		0 & D
	\end{pmatrix}
	S^{-1} \gamma S^{\mathrm{T}-1}
	\right) \\
	&= \min_{\gamma'} \tr\left(
		\begin{pmatrix}
			D & 0 \\
			0 & D
		\end{pmatrix}
		\gamma'
	\right)
	= \sum^n_{j=1} \lambda_j (\gamma_{xx, jj} + \gamma_{pp, jj}),
\end{split}
\end{displaymath}
where $\gamma' = S^{-1} \gamma S^{\mathrm{T}-1}$ is another physical CM and
$\gamma'$ ranges over all CMs when $\gamma$ does. The classical uncertainty
relation states the inequality
\begin{equation}
\begin{split}
	&\gamma'_{xx, jj} + \gamma'_{pp, jj} \equiv \langle(\Delta x_j)^2\rangle + \langle(\Delta p_j)^2\rangle \\
    &\geqslant 2\sqrt{\langle(\Delta x_j)^2\rangle \langle(\Delta p_j)^2\rangle} \geqslant 1,
\end{split}
\end{equation}
and thus we conclude that
\begin{equation}
	\min_\gamma \tr(M\gamma) = \lambda_1 + \ldots + \lambda_n = \str(M).
\end{equation}
The equality is attained for 
\begin{equation}
	\gamma^{\prime\star} = S^{-1} \gamma^\star S^{\mathrm{T}-1} = \frac{1}{2}E_{2n}.
\end{equation}
For the optimal solution we have
\begin{equation}
	\gamma^\star = \frac{1}{2} S S^{\mathrm{T}},
\end{equation}
where $S$ is defined by Eq.~\eqref{eq:SD}. Note that such $S$ does not have to
be unique, but according to \cite{*[] [{ Proposition 8.12, p. 245}] SGQM-2} the
product $S S^{\mathrm{T}}$ is the same for all symplectic $S$ that diagonalize
$M$. An explicit expression for this product has been obtained above,
Eq.~\eqref{eq:gammastar}.

For any symplectic matrix $S \in \Sp(2n)$ the matrix $\gamma$ defined via
\begin{equation}
	\gamma = \frac{1}{2} S S^{\mathrm{T}}
\end{equation}
is a physical CM, and the minimum in Eq.~\eqref{eq:minstr} is attained on a CM
of this form, so we obtain the following relation:
\begin{equation}
	\min_{S \in \Sp(2n)} \tr(S^{\mathrm{T}}MS) = 2\str(M).
\end{equation}
This relation has been derived in Ref.~\cite{PhysRevA.73.012330} in a more
complicated way.

All the constructions up to now were made in the assumption $M \succ 0$. If $M$
is degenerate then the derivations above are not applicable, but the minimum of
$\tr(M\gamma)$ still exists. Any degenerate $M$ is a limit for $\varepsilon \to
0$ of some non-degenerate $M_\varepsilon$ with small $\varepsilon$-disturbance
of the elements, and the continuity argument shows that the equality
\eqref{eq:Mstr} is also valid in the degenerate case.

The results for the reduced case are straightforwardly derived from the general
case since for any full physical CM $\gamma$ its blocks $\gamma_{xx}$ and
$\gamma_{pp}$ (i.e. setting $\gamma_{xp} = 0$ in $\gamma$) also form a physical
(reduced) CM.
\end{proof}

The symplectic trace is not defined for degenerate matrices, but the equality
\eqref{eq:Mstr} allows one to extend this notion to all positive-semidefinite
matrices $M$:
\begin{equation}\label{eq:strM}
	\str(M) = \frac{1}{2} \tr \sqrt{\sqrt{M}\Omega^{\mathrm{T}}M\Omega\sqrt{M}},
\end{equation}
and to all positive-semidefinite pairs $X$ and $P$:
\begin{equation}
	\str(X, P) = \tr \sqrt{\sqrt{X}P\sqrt{X}} = \tr \sqrt{\sqrt{P}X\sqrt{P}}.
\end{equation}
The latter expression is the symplectic trace of the block-diagonal matrix given
by Eq.~\eqref{eq:M-XP}. For degenerate matrices the minimum given by these
expressions is not necessarily achievable. Consider an extreme (but legal) case
of $X = E$ and $P = 0$. Then $\str(X, P) = 0$ and
\begin{equation}
	\tr(X\gamma_{xx} + P\gamma_{pp}) = \tr(\gamma_{xx}) > 0
\end{equation}
for all physical CMs $\gamma_{xx}$. In this case the minimum is not achievable,
though can be approached arbitrarily close by the CM $\gamma_{xx} = (1/r) E$,
$\gamma_{pp} = (r/4) E$ for $r \to +\infty$.

It is not easy to give necessary and sufficient conditions for the existence of
the optimal solution. In the full case we need to solve the equation
\eqref{eq:XAXB}, which has the form
\begin{equation}
	XAX = B.
\end{equation}
Some general conditions for just the solvability of such equations are given in
Ref.~\cite{quadme}, but we also need that a solution be a physical CM. In the
reduced case the KKT condition \eqref{eq:KKTreduced} gives the equality
\begin{equation}
	X\gamma^{\star}_{xx} = \gamma^{\star}_{pp}P,
\end{equation}
so the matrices $X$ and $P$ should be of the same rank. Because we need only the
optimal value of the optimization problems and not the condition for the
attainability of this value, we do not explore this topic further.

\section{Symmetric states}\label{sec:VI}

We now construct a class of genuine multipartite entangled states with any
number of parts which we will use as a benchmark below. Let us start with the
matrices of a special form
\begin{equation}\label{eq:Gxy}
	G(x, y) = 
	\begin{pmatrix}
		x & y & \ldots & y & y \\
		y & x & \ldots & y & y \\
		\hdotsfor{5} \\
		y & y & \ldots & x & y \\
		y & y & \ldots & y & x
	\end{pmatrix}.
\end{equation}
The eigenvalues of this matrix are
\begin{equation}\label{eq:lmxy}
	\lambda = x + (n-1)y, \quad \mu = x - y,
\end{equation}
where the latter is with multiplicity $n-1$. The eigenvector corresponding
$\lambda$ is $(1, 1, 1, \ldots, 1)$, the eigenvectors of $\mu$ are $(-1, 1, 0,
\ldots, 0)$, $(-1, 0, 1, \ldots, 0)$, \ldots, $(-1, 0, 0, \ldots, 1)$. It
follows that the matrix \eqref{eq:Gxy} is positive semidefinite iff $\lambda,
\mu \geqslant 0$. The necessary condition $x \geqslant 0$ automatically follows
from these two.

To compute the symplectic trace of a matrix, we need to take square roots and
products (and for the minimizing state we also need the inverse, provided that
the matrix is non-degenerate). It turns out that the square root, product and
inverse of matrices of the form \eqref{eq:Gxy} are also matrices of the same
form. For the square root we have
\begin{equation}\label{eq:GGsqrt}
	\sqrt{G(x, y)} = G(u, v),
\end{equation}
where $u$ and $v$ read as
\begin{equation}
\begin{split}
	u &= \frac{1}{n}\left[\sqrt{x + (n-1)y} + (n-1)\sqrt{x-y}\right], \\
	v &= \frac{1}{n}\left[\sqrt{x + (n-1)y} - \sqrt{x-y}\right].
\end{split}
\end{equation}
The product is given by
\begin{displaymath}
	G(x, y) G(p, q) = G(x p + (n-1)y q, x q + y p + (n-2)y q),
\end{displaymath}
and the inverse is
\begin{equation}\label{eq:GGinv}
	G^{-1}(x, y) = G\left(\frac{x+(n-2)y}{\lambda\mu}, -\frac{y}{\lambda\mu}\right),
\end{equation}
where the denominator is non-zero when $\lambda, \mu > 0$. The relations between
the parameters of the resulting matrix and the original matrices are
complicated, so we construct a parametrization in which these relations become
trivial. Note that we can express $x$ and $y$ in terms of the eigenvalues
\eqref{eq:lmxy} as follows:
\begin{equation}\label{eq:lm}
	x = \frac{\lambda + (n-1)\mu}{n}, \quad y = \frac{\lambda - \mu}{n}.
\end{equation}
We now change the parametrization
\begin{equation}
	C(\lambda, \mu) = G\left(\frac{\lambda + (n-1)\mu}{n}, \frac{\lambda - \mu}{n}\right).
\end{equation}
In this parametrization the rules for the square root, product and inverse have
an especially simple form:
\begin{equation}
\begin{split}
	\sqrt{C(\lambda, \mu)} &= C(\sqrt{\lambda}, \sqrt{\mu}), \\
	C(\lambda, \mu) C(\lambda', \mu') &= C(\lambda \lambda', \mu \mu'), \\
	C(\lambda, \mu)^{-1} &= C(\lambda^{-1}, \mu^{-1}),
\end{split}
\end{equation}
where the last equality is valid if $\lambda, \mu > 0$. 

\begin{thrm}
The reduced symmetric states $\gamma^S(\lambda, \mu)$ of the form 
\begin{equation}\label{eq:gammalm}
	\gamma^{S}_{xx} = \frac{1}{2}C(\lambda, \mu), \quad \gamma^{S}_{pp} = \frac{1}{2}C(\lambda^{-1}, \mu^{-1})
\end{equation}
are pure boundary quantum states for all $\lambda, \mu > 0$.
\end{thrm}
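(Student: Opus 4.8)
The plan is to verify in turn the three claimed properties --- physicality, purity and boundary --- using only the simple algebraic rules for the square root, product and inverse of the matrices $C(\lambda,\mu)$ established just above. The single observation that drives everything is that the two blocks are reciprocals of one another: since $C(\lambda,\mu)^{-1} = C(\lambda^{-1},\mu^{-1})$, we have $\gamma^S_{pp} = \tfrac14 (\gamma^S_{xx})^{-1}$. This one identity will simultaneously furnish purity (and hence the boundary property) and make the relevant Schur complement vanish, which is what yields physicality.

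First I would establish strict positivity of $\gamma^S_{xx}$. Because $C(\lambda,\mu)$ is, by construction, the matrix $G$ whose eigenvalues are exactly $\lambda$ (once) and $\mu$ ($n-1$ times), the hypothesis $\lambda,\mu>0$ gives $C(\lambda,\mu)\succ 0$ and therefore $\gamma^S_{xx}=\tfrac12 C(\lambda,\mu)\succ 0$. Next I would invoke the block-matrix criterion for the reduced physicality condition \eqref{eq:pcred}: with $\gamma^S_{xx}\succ 0$, the matrix $\mathcal{P}(\gamma^S_{xx},\gamma^S_{pp})$ is positive semidefinite iff the Schur complement $\gamma^S_{pp}-\tfrac14(\gamma^S_{xx})^{-1}$ is positive semidefinite. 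Using $(\gamma^S_{xx})^{-1} = 2\,C(\lambda^{-1},\mu^{-1})$ one finds $\tfrac14(\gamma^S_{xx})^{-1} = \tfrac12 C(\lambda^{-1},\mu^{-1}) = \gamma^S_{pp}$, so the Schur complement is identically zero. This proves physicality and at the same time shows that $\mathcal{P}(\gamma^S_{xx},\gamma^S_{pp})$ is singular. This Schur-complement step is the least routine part of the argument, though it is still short.

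For purity I would simply multiply the two blocks and use the product rule: $\gamma^S_{xx}\gamma^S_{pp} = \tfrac14 C(\lambda,\mu)C(\lambda^{-1},\mu^{-1}) = \tfrac14 C(1,1) = \tfrac14 E_n$, since $C(1,1)=G(1,0)=E_n$. This is exactly the reduced purity condition \eqref{eq:pureReduced}. The boundary property then follows in either of two equivalent ways: directly from the vanishing Schur complement above, or from $\det(\gamma^S_{xx}\gamma^S_{pp}-\tfrac14 E_n)=\det 0=0$, which by the determinant identity recorded after \eqref{eq:pcred} means $\det\mathcal{P}(\gamma^S_{xx},\gamma^S_{pp})=0$. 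There is essentially no hard step here; the only thing to watch is the bookkeeping of the prefactors $\tfrac12$ and $\tfrac14$ so that the reciprocal relation $\gamma^S_{pp}=\tfrac14(\gamma^S_{xx})^{-1}$ comes out exactly, which is precisely what forces the state onto the boundary rather than into the interior.
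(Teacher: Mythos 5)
Your proof is correct, but it takes a genuinely different route from the paper's. The paper proves the theorem by brute spectral computation: it exhibits the full spectrum of $\mathcal{P}(\gamma^S_{xx},\gamma^S_{pp})$, namely $n$ zeros together with $\lambda+\lambda^{-1}$ (once) and $\mu+\mu^{-1}$ (with multiplicity $n-1$), and writes down explicit eigenvectors for each eigenvalue; non-negativity of the spectrum gives physicality and the presence of zeros gives the boundary property. You instead reduce everything to the single identity $\gamma^S_{pp}=\tfrac14(\gamma^S_{xx})^{-1}$ and the Schur-complement criterion: with $\gamma^S_{xx}\succ 0$, the block matrix $\mathcal{P}(\gamma^S_{xx},\gamma^S_{pp})$ is positive semidefinite iff $\gamma^S_{pp}-\tfrac14(\gamma^S_{xx})^{-1}\succcurlyeq 0$, and here that Schur complement vanishes identically, giving physicality and singularity at once; purity follows from the product rule $C(\lambda,\mu)C(\lambda^{-1},\mu^{-1})=C(1,1)=E_n$, and your fallback via $\det\mathcal{P}=\det(\gamma_{xx}\gamma_{pp}-\tfrac14 E)=0$ uses exactly the determinant identity the paper records after Eq.~\eqref{eq:pcred}. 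Your argument is shorter, more conceptual, and strictly more general --- it shows that \emph{any} reduced pure state ($\gamma_{pp}=\tfrac14\gamma_{xx}^{-1}$ with $\gamma_{xx}\succ0$) is a physical boundary state, with the symmetric $C$-parametrization entering only to compute the inverse; it also makes the purity claim explicit, which the paper's printed proof leaves implicit. What the paper's computation buys in exchange is finer spectral data: the exact eigenvalues and an explicit basis of the $n$-dimensional null space of $\mathcal{P}$, i.e.\ $\rank\mathcal{P}=n$, information of the kind that feeds directly into the KKT-based analyses and the rank observations later in the paper, and which your rank count ($\rank\mathcal{P}=\rank\gamma_{xx}+\rank(\text{Schur complement})=n$) recovers in aggregate but not vector-by-vector.
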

\begin{proof}
The physicality condition reads as
\begin{equation}
	\mathcal{P}(\gamma_{xx}, \gamma_{pp}) = 
	\frac{1}{2}
	\begin{pmatrix}
		C(\lambda, \mu) & E \\
		E & C(\lambda^{-1}, \mu^{-1})
	\end{pmatrix}
	\succcurlyeq 0.
\end{equation}
The eigenvalues of this matrix are
\begin{equation}
	\left(0, \ldots, 0, \lambda + \lambda^{-1}, \mu + \mu^{-1}, \ldots, \mu + \mu^{-1}\right),
\end{equation}
all of which are non-negative and also contain zeros. The eigenvectors
corresponding to zero eigenvalues are
\begin{displaymath}
	\left(\frac{\lambda-\mu}{n\lambda\mu}, \ldots, \underline{-\frac{(n-1)\lambda + \mu}{n\lambda\mu}}, 
	\ldots, \frac{\lambda-\mu}{n\lambda\mu}, 0, \ldots, \underline{1}, \ldots, 0\right),
\end{displaymath}
where the underlined terms are on the $i$ and $n+i$ position, respectively, for
$i = 1, \ldots, n$. The eigenvector with the eigenvalue $\lambda + \lambda^{-1}$ is
\begin{equation}
	(\lambda, \ldots, \lambda, 1, \ldots, 1).
\end{equation}
The eigenvectors with the eigenvalue $\mu + \mu^{-1}$ are
\begin{equation}
	(-\mu, 0, \ldots, \underline{\mu}, \ldots, 0, -1, 0, \ldots, \underline{1}, \ldots 0),
\end{equation}
where the underlined terms are in the position $i$ and $n+i$, respectively, for
$i = 2, \ldots, n$.
\end{proof}
These CMs are diagonal if $\lambda = \mu$, and for $\lambda = \mu = 1$ it is the
CM of the vacuum state, $\gamma^S(1, 1) = (1/2)E$. These states have been
introduced in \cite{PhysRevLett.84.3482}, see also
Refs.~\cite{PhysRevA.67.052315, PhysRevLett.93.220504, PhysRevA.71.032349}.
Below we show that these non-diagonal states, i.e. if $\lambda \not= \mu$ or,
equivalently, if $y \not= 0$ in \eqref{eq:Gxy}, are genuine multipartite
entangled. These states will be used as a benchmark for the entanglement
detection optimization problems presented in the next section.

If we take a reduced matrix with $X = C(\lambda, \mu)$ and $P = C(\lambda',
\mu')$, where all the parameters are non-zero, then the corresponding optimal
covariance matrix, Eq.~\eqref{eq:gammaXP}, is of the same form 
\begin{displaymath}
	\gamma^\star_{xx} = \frac{1}{2}C\left(\sqrt{\frac{\lambda'}{\lambda}}, \sqrt{\frac{\mu'}{\mu}}\right), \quad
	\gamma^\star_{pp} = \frac{1}{2}C\left(\sqrt{\frac{\lambda}{\lambda'}}, \sqrt{\frac{\mu}{\mu'}}\right).
\end{displaymath}
In other words, the optimal solution is the symmetric state
\begin{equation}
	\gamma^\star = \gamma^S\left(\sqrt{\frac{\lambda'}{\lambda}}, \sqrt{\frac{\mu'}{\mu}}\right).
\end{equation}
It follows that for the special choice of $\lambda' = \lambda^{-1}$ and $\mu' =
\mu^{-1}$ we have
\begin{displaymath}
	\gamma^\star_{xx} = \frac{1}{2}C(\lambda^{-1}, \mu^{-1}) = \frac{1}{2}P, \quad
	\gamma^\star_{pp} = \frac{1}{2}C(\lambda, \mu) = \frac{1}{2}X,
\end{displaymath}
so the optimal solution is the same as the reduced matrix with the blocks $X$ and $P$
exchanged (and scaled by the factor $1/2$). 

\section{Entanglement detection}\label{sec:IV}

Let $\mathcal{I} = \{I_1, \ldots, I_l\}$ be some partition of the set $[n] =
\{1, \ldots, n\}$ of $n$ indices, i.e. $I_i \cap I_j = \varnothing$ and 
\begin{equation}
	I_1 \cup \ldots \cup I_l = [n].
\end{equation}
An $n$-partite state $\hat{\varrho}$ is called $\mathcal{I}$-factorizable if it
is a product of $l$ states with indices $I_i$, $i = 1, \ldots, l$:
\begin{equation}
	\hat{\varrho} = \hat{\varrho}^{(I_1)} \otimes \ldots \otimes \hat{\varrho}^{(I_l)}.
\end{equation}
A state $\hat{\varrho}_{\mathcal{I}}$ is called $\mathcal{I}$-separable if
it is a convex combination of $\mathcal{I}$-factorizable states: 
\begin{equation}
	\hat{\varrho}_{\mathcal{I}} = \sum_i p_i \hat{\varrho}^{(I_1)}_i \otimes \ldots \otimes \hat{\varrho}^{(I_l)}_i,
\end{equation}
where $p_i \geqslant 0$ sum up to one. If $\mathfrak{I} = \{\mathcal{I}_1,
\ldots, \mathcal{I}_k\}$ is a set of partitions then a state
$\hat{\varrho}_{\mathfrak{I}}$ is called $\mathfrak{I}$-separable if it is a
convex combination of $\mathcal{I}_i$-separable states, $i = 1, \ldots, k$:
\begin{equation}
	\hat{\varrho}_{\mathfrak{I}} = \sum_{\mathcal{I} \in \mathfrak{I}} 
	p_{\mathcal{I}} \hat{\varrho}_{\mathcal{I}},
\end{equation}
where $p_{\mathcal{I}} \geqslant 0$ sum up to one. For $k=1$ and $\mathfrak{I} =
\{\mathcal{I}\}$ this definition reduces to $\mathcal{I}$-separability. For $k =
2^{n-1}-1$ and $\mathfrak{I}_2$ being the set of all bipartitions of $[n]$, a
$\mathfrak{I}_2$-separable state is called biseparable. A non-biseparable state
is called genuine multipartite entangled. Biseparability puts the loosest
requirement on the state, so the set of the biseparable states is the largest
among other sets of separable states. Conversely, the set of genuine entangled
states is the smallest among other sets of entangled states.

Given a partition $\mathcal{I} = \{I_1, \ldots, I_l\}$, the CM
$\gamma_{\mathcal{I}}$ of any $\mathcal{I}$-separable state satisfies the
inequality 
\begin{equation}\label{eq:gAB}
	\gamma \succcurlyeq \gamma^{(I_1)} \oplus \ldots \oplus \gamma^{(I_l)}
\end{equation}
for some physical covariance matrices $\gamma^{(I_i)}$
\cite{PhysRevLett.86.3658}. Using the concavity of CM
\begin{equation}\label{eq:CMconcave}
	\gamma\left(\sum_i p_i \hat{\varrho}_i\right) \geqslant \sum_i p_i \gamma(\hat{\varrho}_i),
\end{equation}
we derive that the CM $\gamma_{\mathfrak{I}}$ of any $\mathfrak{I}$-separable
state $\hat{\varrho}_{\mathfrak{I}}$ satisfies the inequality
\begin{equation}\label{eq:gammatest}
	\gamma_{\mathfrak{I}} \geqslant \sum_{\mathcal{I} = \{I_1, \ldots, I_l\} \in \mathfrak{I}} 
	p_{\mathcal{I}} \gamma^{(I_1)} \oplus \ldots \oplus \gamma^{(I_l)},
\end{equation}
for some $p_{\mathcal{I}}$ and $\gamma^{(I)}$. It is this condition that we will
use for the two types of tests below. We refer to a CM $\gamma$ as
$\mathfrak{I}$-separable if it satisfies an inequality of the form
\eqref{eq:gammatest}. The CM of any $\mathfrak{I}$-separable state is also
$\mathfrak{I}$-separable, but the reverse need not be the case even for Gaussian
states. As has been recently shown in \cite{Baksova2025}, if $|\mathfrak{I}| >
1$ then a $\mathfrak{I}$-entangled Gaussian state can have the
$\mathfrak{I}$-separable CM.

The two types are the additive and the multiplicative tests. If we have two
positive quantities $Q$ and $Q'$ and we want to know which one is larger, we can
either compare their difference $Q-Q'$ with zero, or compare their ratio $Q/Q'$
with one. These tasks are trivial when $Q$ and $Q'$ are given explicitly, and
it does not really matter which type of test to use. But when $Q-Q'$ and $Q/Q'$
cannot be computed directly and are results of complicated computational
procedures, the situation becomes less clear. The efficiency and the precision of
the outcome of the comparison might differ for the two types of tests, depending
on the nature of the computational procedures used to compute the difference and
the ratio. That is why we provide both additive and multiplicative version of the
entanglement test.

\subsection{Eigenvalue test}\label{sec:IVA}

\begin{figure*}
	\includegraphics[scale=0.73]{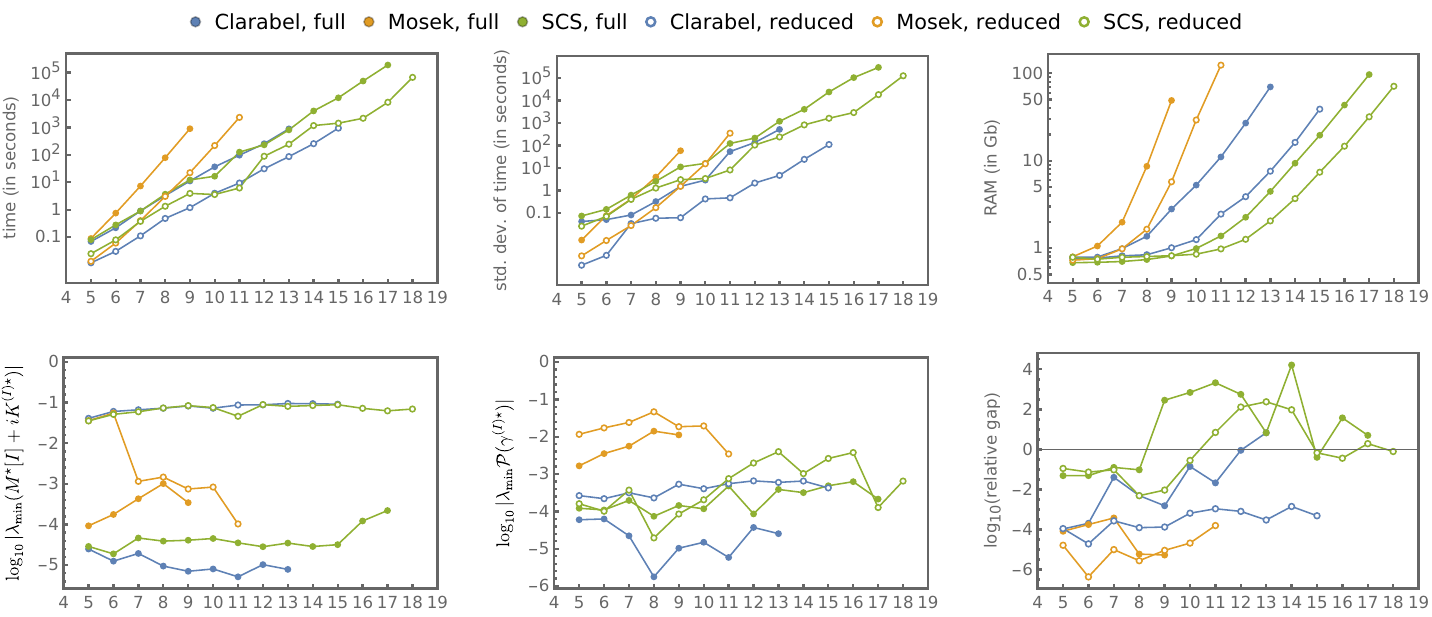}
	\caption{Comparing performance and accuracy characteristics of different
	solvers for the eigenvalue entanglement test as a function of the number of
	modes. CMs are given by Eq.~\eqref{eq:gammalm} for random $\lambda$ and
	$\mu$.}
	\label{fig:maxmineigval}
\end{figure*}

From inequality \eqref{eq:gammatest} it follows that for any $M \succcurlyeq 0$ the inequality
\begin{displaymath}
\begin{split}
	\tr(M\gamma_{\mathfrak{I}}) &\geqslant 
	\sum_{\mathcal{I} \in \mathfrak{I}} p_{\mathcal{I}} \min_{\gamma^{(I_1)}, \ldots, \gamma^{(I_l)}}
	\tr(M \gamma^{(I_1)} \oplus \ldots \oplus \gamma^{(I_l)}) \\
	&\geqslant \min_{\mathcal{I} \in \mathfrak{I}} \min_{\gamma^{(I_1)}, \ldots, \gamma^{(I_l)}} 
	\tr(M \gamma^{(I_1)} \oplus \ldots \oplus \gamma^{(I_l)}).
\end{split}
\end{displaymath}
is satisfied. The minimum on the right-hand side can be easily computed
\begin{displaymath}
\begin{split}
	\min_{\gamma^{(I_1)}, \ldots, \gamma^{(I_l)}}
	&\tr(M \gamma^{(I_1)} \oplus \ldots \oplus \gamma^{(I_l)}) \\
	&= \sum_{I \in \mathcal{I}} \min_{\gamma^{(I)}} \tr(M[I] \gamma^{(I)}) 
	= \sum_{I \in \mathcal{I}}\str(M[I]),
\end{split}
\end{displaymath}
where $M[I]$ is the submatrix of $M$ with row and column indices equal to $I
\cup (I+n)$. We thus derive the condition that the CM $\gamma_{\mathfrak{I}}$ of
any $\mathfrak{I}$-separable state satisfies
\begin{equation}
	\tr(M\gamma_{\mathfrak{I}}) \geqslant \min_{\mathcal{I} \in \mathfrak{I}} 
	\sum_{I \in \mathcal{I}} \str(M[I]),
\end{equation}
for any $M \succcurlyeq 0$. If for a given CM $\gamma$ we manage to find a
matrix $M \succcurlyeq 0$ such that
\begin{equation}\label{eq:MW}
\begin{split}
	&\tr(M\gamma) < \min_{\mathcal{I} \in \mathfrak{I}} 
	\sum_{I \in \mathcal{I}} \str(M[I]) \\
	&= \frac{1}{2}\min_{\mathcal{I} \in \mathfrak{I}}\sum_{I \in \mathcal{I}} 
	\tr\sqrt{\sqrt{M[I]}\Omega^{\mathrm{T}}_{|I|}M[I]\Omega_{|I|}\sqrt{M[I]}},
\end{split}
\end{equation}
using Eq.~\eqref{eq:strM}; then $M$ witnesses that $\gamma$ cannot be
$\mathfrak{I}$-separable. In the reduced case this inequality reads as
\begin{equation}\label{eq:entXP}
	\tr(X \gamma_{xx} + P \gamma_{pp}) < \min_{\mathcal{I} \in \mathfrak{I}}
	\sum_{I \in \mathcal{I}} \tr\sqrt{\sqrt{X[I]} P[I] \sqrt{X[I]}},
\end{equation}
where $X[I]$ and $P[I]$ are submatrices of $X$ and $P$, respectively, with row
and column indices in $I$. We now give a simple application of this test.
\begin{thrm}\label{thrm:lm} 
The states in Eq.~\eqref{eq:gammalm} are genuine
multipartite entangled provided that $\lambda \not= \mu$.
\end{thrm}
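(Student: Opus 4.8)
The plan is to exhibit a single symmetric witness pair $(X,P)$ that makes the reduced biseparability test \eqref{eq:entXP} fail for $\mathfrak{I}=\mathfrak{I}_2$, the set of all bipartitions of $[n]$. Since the CM of any biseparable state is $\mathfrak{I}_2$-separable, showing that the CM \eqref{eq:gammalm} violates \eqref{eq:entXP} for $\mathfrak{I}_2$ certifies that the state is not biseparable, i.e.\ genuine multipartite entangled. Concretely, it suffices to find $X,P\succcurlyeq 0$ with
\[
	\tr(X\gamma^S_{xx}+P\gamma^S_{pp}) < \min_{\{I,\bar{I}\}}\left[\str(X[I],P[I])+\str(X[\bar{I}],P[\bar{I}])\right].
\]
The natural candidate is the ``dual'' witness $X=C(\lambda^{-1},\mu^{-1})=2\gamma^S_{pp}$ and $P=C(\lambda,\mu)=2\gamma^S_{xx}$, both positive definite and of the permutation-symmetric form \eqref{eq:Gxy}.

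First I would evaluate the left-hand side. Using the multiplication rule $C(\lambda,\mu)C(\lambda',\mu')=C(\lambda\lambda',\mu\mu')$ together with $C(1,1)=E_n$, both traces collapse to $\tr(X\gamma^S_{xx})=\tr(P\gamma^S_{pp})=\tfrac{n}{2}$, giving a left-hand side of exactly $n$.

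Next I would compute the right-hand side for a bipartition with $|I|=k$. The crux of the argument is that every principal submatrix $X[I]$ and $P[I]$ is again of the form $G$, so they share the eigenbasis spanned by $(1,\dots,1)$ and its orthogonal complement and therefore commute. This collapses the generically awkward $\str(X[I],P[I])=\tr\sqrt{\sqrt{X[I]}P[I]\sqrt{X[I]}}$ to $\tr\sqrt{X[I]P[I]}$. Applying \eqref{eq:lmxy} to the $k\times k$ block, $X[I]$ has eigenvalue $\tfrac{1}{n}(k\lambda^{-1}+(n-k)\mu^{-1})$ once and $\mu^{-1}$ with multiplicity $k-1$, while $P[I]$ has $\tfrac{1}{n}(k\lambda+(n-k)\mu)$ once and $\mu$ with multiplicity $k-1$. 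Multiplying eigenvalue by eigenvalue yields
\[
	\str(X[I],P[I])=\frac{1}{n}\sqrt{(k\lambda^{-1}+(n-k)\mu^{-1})(k\lambda+(n-k)\mu)}+(k-1).
\]

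Finally I would expand the product under the root as $k^2+(n-k)^2+k(n-k)(\lambda/\mu+\mu/\lambda)$ and apply AM--GM: $\lambda/\mu+\mu/\lambda\geqslant 2$ with equality iff $\lambda=\mu$, and since $k(n-k)>0$ for $1\leqslant k\leqslant n-1$ the product is $\geqslant n^2$, strictly when $\lambda\neq\mu$. Hence $\str(X[I],P[I])\geqslant k$, strictly for $\lambda\neq\mu$; the same bound with $k\mapsto n-k$ gives $\str(X[\bar{I}],P[\bar{I}])>n-k$. Summing, the right-hand side exceeds $k+(n-k)=n$ for \emph{every} bipartition, so the minimum over $\mathfrak{I}_2$ is strictly larger than the left-hand side $n$ whenever $\lambda\neq\mu$. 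This verifies \eqref{eq:entXP} and establishes genuine multipartite entanglement. The only steps requiring care are the commutativity reduction that trivializes the symplectic trace and the bookkeeping of how the single ``totally symmetric'' eigenvalue depends on the block size $k$ while the degenerate eigenvalue $\mu^{\pm 1}$ is inherited unchanged.
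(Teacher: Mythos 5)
Your proof is correct, and it shares the paper's basic strategy for Theorem~\ref{thrm:lm} --- exhibit an explicit witness pair in the permutation-symmetric family, use that principal submatrices stay in the family with $\tilde{\alpha} = (k\alpha+(n-k)\beta)/n$, $\tilde{\beta}=\beta$, and verify \eqref{eq:entXP} for each block size $k=1,\ldots,n-1$ --- but your witness and closing inequality genuinely differ. The paper uses the asymmetric pair $X = C(\lambda^2\mu^{-1},\mu)$, $P = \mu^{-1}E$ and must square the resulting inequality twice to land on $-\tfrac{k}{n}\bigl(1-\tfrac{k}{n}\bigr)(\lambda^2-\mu^2)^2 < 0$; you instead take the ``swap'' witness $X = 2\gamma^S_{pp}$, $P = 2\gamma^S_{xx}$, for which the left-hand side collapses to exactly $n$ via $C(\lambda,\mu)C(\lambda^{-1},\mu^{-1}) = C(1,1) = E_n$, and AM--GM makes the strictness for $\lambda \neq \mu$ visible in one line. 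Your choice is also conceptually natural: by the remark at the end of Sec.~\ref{sec:VI}, $\gamma^S(\lambda,\mu)$ is itself the global minimizer of $\tr(X\gamma_{xx}+P\gamma_{pp})$ for this witness, with minimum $\str(X,P)=n$, so the certified gap comes entirely from the bipartite symplectic traces. All details check: $X,P \succ 0$ since $\lambda,\mu>0$; the commutativity reduction is legitimate because all $G(x,y)=(x-y)E+yJ$ commute and eigenvalues pair by eigenvector (symmetric with symmetric, degenerate with degenerate), which is the pairing you use; the trace normalization is irrelevant for a witness; and since finitely many bipartitions each give a sum strictly above $n$, so does their minimum.
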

\begin{proof}
We just need to find a pair of positive semidefinite matrices $X$ and $P$ that
satisfy the inequality \eqref{eq:entXP} for $\mathfrak{I} = \mathfrak{I}_2$, the
set of all $2^{n-1}-1$ non-trivial bipartitions of $n$ parts. We take $X =
C(\alpha, \beta)$ and $P = C(\alpha', \beta')$ and find their parameters to
satisfy that condition. We have
\begin{equation}
\begin{split}
	&\tr(X \gamma^S_{xx} + P \gamma^S_{pp}) \\
	&= \frac{1}{2}\left(\frac{\alpha}{\lambda} + (n-1)\frac{\beta}{\mu} + \alpha'\lambda + (n-1)\beta'\mu\right).
\end{split}
\end{equation}
To compute the right-hand side of Eq.~\eqref{eq:entXP} note that due to symmetry
of the matrices of the form $C(\alpha, \beta)$, both $X[I]$ and $P[I]$ depend
only on $|I| = k$, so we need to take the minimum over $n-1$ values (for $k = 1,
\ldots, n-1$) instead of all $2^{n-1}-1$ non-trivial bipartitions. Every
submatrix $C(\alpha, \beta)[I] = C(\tilde{\alpha}, \tilde{\beta})$ is also of
the same form, where the new parameters are related to the original ones via
\begin{equation}
	\frac{\tilde{\alpha} + (k-1)\tilde{\beta}}{k} = \frac{\alpha + (n-1)\beta}{n}, \quad
	\frac{\tilde{\alpha} - \tilde{\beta}}{k} = \frac{\alpha - \beta}{n}.
\end{equation}
Explicitly they read as
\begin{equation}
	\tilde{\alpha} = \frac{k\alpha + (n-k)\beta}{n}, \quad \tilde{\beta} = \beta.
\end{equation}
We then have $X[I] = C(\tilde{\alpha}, \tilde{\beta})$, $P[I] =
C(\tilde{\alpha}', \tilde{\beta}')$ and 
\begin{equation}
\begin{split}
    &\tr\sqrt{\sqrt{X[I]} P[I] \sqrt{X[I]}} 
	= \sqrt{\tilde{\alpha}\tilde{\alpha}'} + (k-1) \sqrt{\tilde{\beta}\tilde{\beta'}} \\
	& = \frac{1}{n}\sqrt{(k\alpha + (n-k)\beta)(k\alpha' + (n-k)\beta')} \\
	&+ (k-1) \sqrt{\beta\beta'}.
\end{split}
\end{equation}
The sum of this expression for $k$ and $n-k$, minimized over $k = 1, \ldots,
n-1$, is the right-hand side of Eq.~\eqref{eq:entXP}. We thus need to find
$\alpha$, $\beta$, $\alpha'$ and $\beta'$ such that the following inequality is
satisfied
\begin{equation}\label{eq:abab}
\begin{split}
	&\frac{1}{2}\left(\frac{\alpha}{\lambda} + (n-1)\frac{\beta}{\mu} + \alpha'\lambda + (n-1)\beta'\mu\right) \\
	&< \frac{1}{n}\sqrt{(k\alpha + (n-k)\beta)(k\alpha' + (n-k)\beta')} \\
	&+ \frac{1}{n}\sqrt{((n-k)\alpha + k\beta)((n-k)\alpha' + k\beta')} \\
	&+ (n-2) \sqrt{\beta\beta'},
\end{split}	
\end{equation}
for all $k = 1, \ldots, n-1$. We show that for $\alpha = \lambda^2\mu^{-1}$,
$\beta = \mu$, $\alpha' = \beta' = \mu^{-1}$ these inequalities hold. Note that
$P = C(\alpha', \beta')$ is diagonal since $\alpha' = \beta'$, but strictly
positive-definite, as it must be. Substituting these values into the inequality
\eqref{eq:abab}, we can simplify it as follows:
\begin{displaymath}
	\lambda + \mu < \sqrt{\frac{k}{n}\lambda^2 + \left(1-\frac{k}{n}\right)\mu^2} 
	+\sqrt{\left(1-\frac{k}{n}\right)\lambda^2 + \frac{k}{n}\mu^2}.
\end{displaymath}
Because both sides are nonnegative, we can square this inequality and get an
equivalent one
\begin{equation}
	\lambda \mu < \sqrt{\frac{k}{n}\lambda^2 + \left(1-\frac{k}{n}\right)\mu^2} 
	\sqrt{\left(1-\frac{k}{n}\right)\lambda^2 + \frac{k}{n}\mu^2}.
\end{equation}
Squaring again and taking the difference of the left-hand side and the
right-hand side, for a bipartition $\mathcal{I} = \{I, \overline{I}\}$ with
$|I|=k$ we obtain
\begin{equation}
	-\frac{k}{n}\left(1-\frac{k}{n}\right)(\lambda^2 - \mu^2)^2 < 0,
\end{equation}
for all $k = 1, \ldots, n-1$ provided that $\lambda \not= \mu$. This verifies
that the states given by Eq.~\eqref{eq:gammalm} are genuine multipartite
entangled if the parameters $\lambda$ and $\mu$ are different.
\end{proof}

The states \eqref{eq:gammalm} have a high degree of symmetry, so it was possible
to verify their entanglement analytically. However, in general, it is almost
impossible to find a witness just by guessing it, so we need a more systematic
approach. In addition, we aim not just to find a witness for a given entangled
state, but to find the best possible one, the witness for which the difference
of the left-hand side and the right-hand side of Eq.~\eqref{eq:MW} is the
largest possible (by absolute value). Since scaling $M$ by a positive factor
also scales the difference by the same factor, we put the additional restriction
on $M$, $\tr(M) = 1$. Given a CM $\gamma$, we thus need to solve the following
optimization problem:
\begin{equation}\label{eq:EM}
	-\mathcal{E}_{\mathfrak{I}}(\gamma) = \min_{\substack{M \succcurlyeq 0 \\ \tr(M) = 1}} 
	\left[\tr(M\gamma) - \min_{\mathcal{I} \in \mathfrak{I}}
	\sum_{I \in \mathcal{I}} \str(M[I])\right].
\end{equation}
If the optimal value turns out to be negative or, equivalently,
$\mathcal{E}_{\mathfrak{I}}(\gamma) > 0$, the corresponding optimal $M^\star$ is
the optimal witness for $\mathfrak{I}$-entanglement of $\gamma$ (different
$\gamma$ are likely to have different optimal witnesses). Using the analytical
expression \eqref{eq:Mstr} for the symplectic trace together with
Eq.~\eqref{eq:strM}, the objective function has an analytic expression as a
function of $M$. This expression has a complicated structure, as we have already
seen in the proof of the theorem above, so no standard optimization technique
can be applied straightforwardly to it. To tackle this problem, we use the dual
expression instead. This approach is expressed by the theorem below.
\begin{thrm}\label{thrm:additive}
For any CM $\gamma$ the following equality takes place:
\begin{equation}\label{eq:ent-test-full}
	\begin{split}
		&\min_{M \succcurlyeq 0} \left[\tr(M\gamma) - \min_{\mathcal{I} \in \mathfrak{I}}
		\sum_{I \in \mathcal{I}} \str(M[I])\right] \\
		&= \min_{M \succcurlyeq 0, K^{(I)}} 
		\max_{\mathcal{I} \in \mathfrak{I}}\left[
		\tr(M\gamma) - \frac{1}{2}\sum_{I \in \mathcal{I}} \tr(K^{(I)}\Omega_{|I|})\right] \\
		&= \max_{\gamma^{(I)},  p_{\mathcal{I}}}
		\lambda_{\mathrm{min}}\left(\gamma - \sum_{\mathcal{I} \in \mathfrak{I}} p_{\mathcal{I}}
		\bigoplus_{I \in \mathcal{I}} \gamma^{(I)}\right),
	\end{split}
\end{equation}
the additional constraints of the primal problem being
\begin{equation}
	\tr(M) = 1, \quad M[I] + i K^{(I)} \succcurlyeq 0,
\end{equation}
and the constraints of the dual problem being
\begin{equation}\label{eq:p1}
	p_{\mathcal{I}} \geqslant 0, \quad
	\sum_{\mathcal{I} \in \mathfrak{I}} p_{\mathcal{I}} = 1.
\end{equation}
The KKT conditions read as
\begin{equation}\label{eq:KKT-1}
	\mathcal{P}(\gamma^{(I)\star})(M^\star[I] + i K^{(I)\star}) = 0, 
\end{equation}
for all $\mathcal{I} \in \mathfrak{I}$ and $I \in \mathcal{I}$ with
$p^\star_{\mathcal{I}} > 0$, and 
\begin{equation}
	M^\star\left(\gamma - \sum_{\mathcal{I} \in \mathfrak{I}}
	p^\star_{\mathcal{I}} \bigoplus_{I \in \mathcal{I}} \gamma^{(I)\star} \right)
	 = \lambda^\star_{\mathrm{min}} M^\star,
\end{equation}
where for all $\mathcal{I} \in \mathfrak{I}$ with $p^\star_{\mathcal{I}} > 0$
\begin{equation}
	\lambda^\star_{\mathrm{min}} = \tr(M^\star\gamma) - \frac{1}{2}\sum_{I \in \mathcal{I}} 
	\tr(K^{(I)\star}\Omega_{|I|})
\end{equation}
is the primal optimal objective value. Note that $M^\star[I]$ here are
submatrices of one and the same matrix $M^\star$, and $K^{(I) \star}$ are
totally independent matrices, which is reflected in different notation.

In the reduced case the duality reads as
\begin{displaymath}
	\begin{split}
		&\min_{X, P \succcurlyeq 0} \tr(X\gamma_{xx} + P\gamma_{pp}) - \min_{\mathcal{I} \in \mathfrak{I}}
		\sum_{I \in \mathcal{I}} \str(X[I], P[I]) \\
		&= \min_{X, P \succcurlyeq 0, Z^{(I)}} 
		\max_{\mathcal{I} \in \mathfrak{I}}\left[
		\tr(X\gamma_{xx} + P\gamma_{pp}) - \sum_{I \in \mathcal{I}} \tr Z^{(I)}\right] \\
		&= \max_{\gamma^{(I)}_{xx}, \gamma^{(I)}_{pp}, p_{\mathcal{I}}} 
		\lambda_{\mathrm{min}}\left(\gamma_{xx} - \sum_{\mathcal{I} \in \mathfrak{I}} p_{\mathcal{I}}
		\bigoplus_{I \in \mathcal{I}} \gamma^{(I)}_{xx}\right) \\
		&= \max_{\gamma^{(I)}_{xx}, \gamma^{(I)}_{pp}, p_{\mathcal{I}}} 
		\lambda_{\mathrm{min}}\left(\gamma_{pp} - \sum_{\mathcal{I} \in \mathfrak{I}} p_{\mathcal{I}}
		\bigoplus_{I \in \mathcal{I}} \gamma^{(I)}_{pp}\right),
	\end{split}
\end{displaymath}
the additional constraints of the primal problem being
\begin{equation}
	\tr(X + P) = 1, \quad
	\begin{pmatrix}
		X[I] & Z^{(I)} \\ 
		Z^{(I)\mathrm{T}} & P[I]
	\end{pmatrix}
	\succcurlyeq 0,
\end{equation}
and the additional constraint of the dual problem given by Eq.~\eqref{eq:p1}.
The KKT conditions are
\begin{equation}\label{eq:MMKKT}
	\mathcal{P}(\gamma^{(I)\star}_{xx}, \gamma^{(I)\star}_{pp})
	\begin{pmatrix}
		X^\star[I] & -Z^{(I)\star} \\
		-Z^{(I)\star\mathrm{T}} & P^\star[I]
	\end{pmatrix}
	= 0,
\end{equation}
for all $I \in \mathcal{I}$ with $p^\star_{\mathcal{I}} > 0$, and
\begin{equation}\label{eq:XXKKT}
	\begin{split}
		X^\star\left(\gamma_{xx} - \sum_{\mathcal{I} \in \mathfrak{I}} 
		p^\star_{\mathcal{I}} \bigoplus_{I \in \mathcal{I}} \gamma^{(I)\star}_{xx}\right)
		&= \lambda^\star_{\mathrm{min}} X^\star, \\
		P^\star\left(\gamma_{pp} - \sum_{\mathcal{I} \in \mathfrak{I}} 
		p^\star_{\mathcal{I}} \bigoplus_{I \in \mathcal{I}} \gamma^{(I)\star}_{pp}\right) 
		&= \lambda^\star_{\mathrm{min}} P^\star,
	\end{split}
\end{equation}
where for all $\mathcal{I} \in \mathfrak{I}$ with $p^\star_{\mathcal{I}} > 0$
\begin{equation}
	\lambda^\star_{\mathrm{min}} = \tr(X^\star\gamma_{xx} + P^\star\gamma_{pp}) - 
		\sum_{I \in \mathcal{I}} \tr Z^{(I)\star}
\end{equation}
is the primal optimal objective value.
\end{thrm}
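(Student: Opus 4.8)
The plan is to establish the displayed chain in two stages: the first equality (line~1 $=$ line~2) is an elementary rewriting using the symplectic-trace duality proved earlier, while the second (line~2 $=$ line~3) is genuine strong Lagrangian duality, from which the KKT relations fall out by complementary slackness together with Lemma~\ref{lmn:AB}.

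For the first equality I would substitute the dual characterization of the symplectic trace, $\str(M[I]) = \max\{\tfrac{1}{2}\tr(K^{(I)}\Omega_{|I|}) : M[I]+iK^{(I)}\succcurlyeq 0\}$, into the inner sum. Because the $K^{(I)}$ attached to distinct subsets are independent, the sum of maxima equals the maximum of the sum, and for a fixed $M$ every $K^{(I)}$ can be driven to its own optimum $\str(M[I])$ simultaneously. Using $-\min_{\mathcal{I}}(\cdot)=\max_{\mathcal{I}}(-\cdot)$ then converts $\tr(M\gamma)-\min_{\mathcal{I}}\sum_{I\in\mathcal{I}}\str(M[I])$ into $\min_{\{K^{(I)}\}}\max_{\mathcal{I}}[\tr(M\gamma)-\tfrac12\sum_{I\in\mathcal{I}}\tr(K^{(I)}\Omega_{|I|})]$, and applying $\min_M$ reproduces line~2 exactly.

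The substantive step is the second equality. I would first put line~2 into epigraph form: minimize a scalar $t$ over $(M,\{K^{(I)}\},t)$ subject to $t\geqslant \tr(M\gamma)-\tfrac12\sum_{I\in\mathcal{I}}\tr(K^{(I)}\Omega_{|I|})$ for every $\mathcal{I}\in\mathfrak{I}$, together with $M\succcurlyeq 0$, $\tr M=1$, and $M[I]+iK^{(I)}\succcurlyeq 0$. Introducing multipliers $p_{\mathcal{I}}\geqslant 0$ for the epigraph inequalities, Hermitian $\Xi^{(I)}\succcurlyeq 0$ for the cone constraints, $\Theta\succcurlyeq 0$ for $M\succcurlyeq 0$, and $\nu$ for the trace, I form the Lagrangian and take the infimum over the primal variables one block at a time. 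The infimum over $t$ is finite iff $\sum_{\mathcal{I}}p_{\mathcal{I}}=1$. The infimum over each antisymmetric $K^{(I)}$ is finite iff $\im\Xi^{(I)}=\tfrac12 w_I\Omega_{|I|}$, where $w_I:=\sum_{\mathcal{I}\ni I}p_{\mathcal{I}}$; writing $\re\Xi^{(I)}=w_I\gamma^{(I)}$ gives $\Xi^{(I)}=w_I\mathcal{P}(\gamma^{(I)})$, so $\Xi^{(I)}\succcurlyeq 0$ becomes the physicality of $\gamma^{(I)}$ — this is exactly how the physical CMs $\gamma^{(I)}$ enter. Finally the infimum over symmetric $M$ is finite iff $\Theta=\gamma-\bar\gamma+\nu E$ with $\bar\gamma:=\sum_{\mathcal{I}}p_{\mathcal{I}}\bigoplus_{I\in\mathcal{I}}\gamma^{(I)}$, using the reassembly $\sum_I w_I\hat\gamma^{(I)}=\sum_{\mathcal{I}}p_{\mathcal{I}}\bigoplus_{I\in\mathcal{I}}\gamma^{(I)}$ of the zero-padded real parts. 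Then $\Theta\succcurlyeq 0$ is dual feasibility, and maximizing the surviving constant $-\nu$ yields $\lambda_{\min}(\gamma-\bar\gamma)$, i.e. line~3. Strong duality holds by Slater's condition, e.g. the strictly feasible point $M=\tfrac{1}{2n}E$ with $K^{(I)}$ chosen so that $M[I]+iK^{(I)}\succ 0$.

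The KKT relations then follow from complementary slackness: $\tr[\Xi^{(I)\star}(M^\star[I]+iK^{(I)\star})]=0$ with both factors positive semidefinite gives, via Lemma~\ref{lmn:AB}, the equality $\mathcal{P}(\gamma^{(I)\star})(M^\star[I]+iK^{(I)\star})=0$ for every $I$ with $w_I>0$ (equivalently, for $I\in\mathcal{I}$ with $p^\star_{\mathcal{I}}>0$), while $\tr[\Theta^\star M^\star]=0$ gives $(\gamma-\bar\gamma^\star)M^\star=\lambda^\star_{\min}M^\star$ with $\lambda^\star_{\min}=-\nu^\star$. I expect the main obstacle to be bookkeeping rather than conceptual: correctly aggregating the per-subset multipliers through $w_I$, verifying the reassembly into $\sum_{\mathcal{I}}p_{\mathcal{I}}\bigoplus_{I\in\mathcal{I}}\gamma^{(I)}$, treating the degenerate subsets with $w_I=0$ (for which the optimal choice is $\re\Xi^{(I)}=0$, since any nonzero positive semidefinite part would only enlarge $\bar\gamma$ and lower the objective), and confirming that $M[I]+iK^{(I)}$ is genuinely Hermitian so the complex cone and its self-dual multiplier behave as in the symplectic-trace theorem. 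The reduced case is entirely parallel: replace the blocks of $M$ by $(X,P)$, the variables $K^{(I)}$ by $Z^{(I)}$ subject to the $2\times2$-block positive-semidefiniteness, and use the reduced formula $\str(X[I],P[I])=\tr\sqrt{\sqrt{X[I]}P[I]\sqrt{X[I]}}$ to supply the first equality, after which the same Lagrangian computation delivers $\lambda_{\min}$ of the $xx$- and $pp$-residuals and the KKT conditions~\eqref{eq:MMKKT}--\eqref{eq:XXKKT}.
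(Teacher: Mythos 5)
Your proposal is correct and matches the paper's own route: the paper proves Theorem~\ref{thrm:1} in the appendix by exactly this kind of conic Lagrangian computation (introducing multipliers, demanding finiteness of the dual objective to extract the linear constraints, and assembling the real multiplier blocks into a complex Hermitian matrix whose positivity becomes the physicality condition $\mathcal{P}(\gamma^{(I)})\succcurlyeq 0$), and explicitly states that the remaining theorems, including this one, are derived in the same way. Your use of the dual characterization of $\str$ from Eq.~\eqref{eq:Mstr} for the first equality, the epigraph reformulation with multipliers $p_{\mathcal{I}}$, the aggregation $w_I=\sum_{\mathcal{I}\ni I}p_{\mathcal{I}}$ with $\Xi^{(I)}=w_I\mathcal{P}(\gamma^{(I)})$, and the extraction of the KKT conditions via complementary slackness and Lemma~\ref{lmn:AB} all reproduce the paper's intended argument, including the correct treatment of subsets with $w_I=0$.
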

The duality is directly related to the basic condition \eqref{eq:gammatest}. If
that condition holds, then there is a mixture of factorizable states such that
the difference with the given state is positive-semidefinite and thus has
non-negative eigenvalues only. But if the difference with these mixtures always
has a negative eigenvalue, then the condition \eqref{eq:gammatest} cannot hold
for any mixture, so the state under study is entangled. The duality is a
quantitative characterization of the ``best'' possible minimal eigenvalue of the
difference.

The number of variables is given by
\begin{displaymath}
	N_v = 
	\begin{cases}
		n(2n+1) + \sum\limits_{\mathcal{I} \in \mathfrak{I}} \sum\limits_{I \in \mathcal{I}} |I|(2|I| - 1) & \text{full case} \\
		n(n+1) + \sum\limits_{\mathcal{I} \in \mathfrak{I}} \sum\limits_{I \in \mathcal{I}} |I|^2 & \text{reduced case}.
	\end{cases}
\end{displaymath}
The first term is the number of different elements in the symmetric matrix $M$
(or the symmetric matrices $X$ and $P$) and the rest is the number of different
elements in the anti-symmetric matrices $K^{(I)}$ (or the general matrices
$Z^{(I)}$). For the bipartite separability $\mathfrak{I} = \mathfrak{I}_2$ we
have
\begin{equation}
	N_v = 
	\begin{cases}
		2n(n2^{n-2} + 1) & \text{full case} \\
		n[(n+1)2^{n-2} + 1] & \text{reduced case}.
	\end{cases}
\end{equation}
Note that from the KKT condition \eqref{eq:KKT-1} it follows that if
$p_{\mathcal{I}} > 0$ then for all $I \in \mathcal{I}$ the state
$\gamma^{(I)\star}$ is given by Eq.~\eqref{eq:gammaM}, $\gamma^{(I)\star} =
\gamma_{M[I]}$, provided that $M[I] \succ 0$. The optimal separable state in the
dual problem in Eq.~\eqref{eq:ent-test-full} is thus a mixture of factorizable
states, where each factor is a boundary pure state. Given $M^\star$ (or
$X^\star$ and $P^\star$ in the reduced case), we can easily check (without any
optimizations) that it really witnesses the entanglement of the CM $\gamma$ with
the inequality \eqref{eq:MW} (or \eqref{eq:entXP}, respectively). This is where
we need an explicit expression for the symplectic trace. The corresponding dual
solution and KKT conditions verify that this witness is optimal.

In Fig.~\ref{fig:maxmineigval} we show the results of testing the symmetric
states $\gamma^S(\lambda, \mu)$ for 10 different random values of $\lambda$ and
$\mu$ in the range $[0.5, 2.5]$. The average time to test an $n$-partite state
grows exponentially with $n$, as well as the standard deviation of this time,
while the memory needed grows super-exponentially. The large relative gap for
SCS is likely due to the exceeding the default number of iterations, so the
result produced is not fully converged to the solution. But this curve shows the
largest value (the worst) over the 10 runs, for some cases the solution obtained
is much more precise.

The quantity $\mathcal{E}_{\mathfrak{I}}(\gamma)$, given by Eq.~\eqref{eq:EM},
can be considered as an $\mathfrak{I}$-entanglement measure. We show that it is
nonnegative and convex, which justifies using it as a measure of entanglement.
\begin{lmn}
The measure $\mathcal{E}_{\mathfrak{I}}(\gamma) \leqslant 0$ iff $\gamma$ is
$\mathfrak{I}$-separable. In addition, it is a convex function of $\gamma$:
\begin{equation}
	\mathcal{E}_{\mathfrak{I}}(p\gamma_1 + (1-p)\gamma_2) \leqslant
	p \mathcal{E}_{\mathfrak{I}}(\gamma_1) +
	(1-p)\mathcal{E}_{\mathfrak{I}}(\gamma_2)
\end{equation}
for all CMs $\gamma_1, \gamma_2$ and all $0 \leqslant p \leqslant 1$.
\end{lmn}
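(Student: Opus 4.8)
The plan is to read off both claims from the two equivalent expressions for $\mathcal{E}_{\mathfrak{I}}$ supplied by Theorem~\ref{thrm:additive}. Abbreviating $c(M) = \min_{\mathcal{I} \in \mathfrak{I}} \sum_{I \in \mathcal{I}} \str(M[I])$, the definition \eqref{eq:EM} is the primal (witness) form
\begin{equation}\label{eq:plan-prim}
	\mathcal{E}_{\mathfrak{I}}(\gamma) = \max_{\substack{M \succcurlyeq 0 \\ \tr(M) = 1}} \bigl[\,c(M) - \tr(M\gamma)\,\bigr],
\end{equation}
while the strong duality of Theorem~\ref{thrm:additive} turns it into the dual form
\begin{equation}\label{eq:plan-dual}
	\mathcal{E}_{\mathfrak{I}}(\gamma) = -\max_{\gamma^{(I)}, p_{\mathcal{I}}} \lambda_{\mathrm{min}}\Bigl(\gamma - \sum_{\mathcal{I}} p_{\mathcal{I}} \bigoplus_{I} \gamma^{(I)}\Bigr).
\end{equation}
Convexity I would obtain from \eqref{eq:plan-prim} and the separability criterion from \eqref{eq:plan-dual}.

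Convexity is immediate from \eqref{eq:plan-prim}: for each fixed feasible $M$ the function $\gamma \mapsto c(M) - \tr(M\gamma)$ is affine, since $c(M)$ does not depend on $\gamma$ and $\tr(M\gamma)$ is linear in $\gamma$. Thus $\mathcal{E}_{\mathfrak{I}}$ is a pointwise maximum of affine functions and is therefore convex; writing $c(M) = p\,c(M) + (1-p)\,c(M)$ and applying subadditivity of the maximum to the split objective yields the stated inequality in one line.

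For the separability criterion I would treat the two implications separately. If $\gamma$ is $\mathfrak{I}$-separable, i.e.\ it satisfies \eqref{eq:gammatest} for some weights $p_{\mathcal{I}}$ and physical blocks $\gamma^{(I)}$, then for every $M \succcurlyeq 0$
\begin{equation}
\begin{split}
	\tr(M\gamma) &\geqslant \sum_{\mathcal{I}} p_{\mathcal{I}} \sum_{I \in \mathcal{I}} \tr(M[I]\gamma^{(I)}) \\
	&\geqslant \sum_{\mathcal{I}} p_{\mathcal{I}} \sum_{I \in \mathcal{I}} \str(M[I]) \geqslant c(M),
\end{split}
\end{equation}
where the first step combines \eqref{eq:gammatest} with the block identity $\tr(M\bigoplus_I \gamma^{(I)}) = \sum_I \tr(M[I]\gamma^{(I)})$, the second is the symplectic-trace minimisation \eqref{eq:minstr}, and the third bounds a convex combination by its minimum. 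Hence $c(M) - \tr(M\gamma) \leqslant 0$ for all feasible $M$, so $\mathcal{E}_{\mathfrak{I}}(\gamma) \leqslant 0$. Conversely, if $\mathcal{E}_{\mathfrak{I}}(\gamma) \leqslant 0$ then \eqref{eq:plan-dual} gives $\max_{\gamma^{(I)}, p_{\mathcal{I}}} \lambda_{\mathrm{min}}(\cdots) \geqslant 0$; extracting an optimiser produces physical blocks and weights with $\gamma - \sum_{\mathcal{I}} p_{\mathcal{I}} \bigoplus_I \gamma^{(I)} \succcurlyeq 0$, which is precisely \eqref{eq:gammatest}, so $\gamma$ is $\mathfrak{I}$-separable.

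The one genuine obstacle is the very last step of the converse: passing from ``the supremum in \eqref{eq:plan-dual} is nonnegative'' to the existence of an actual maximiser with \emph{physical} blocks, rather than merely a limiting configuration. I would resolve this by compactness. Because every summand $p_{\mathcal{I}} \bigoplus_I \gamma^{(I)}$ is positive semidefinite, enlarging any block only increases the sum $S = \sum_{\mathcal{I}} p_{\mathcal{I}} \bigoplus_I \gamma^{(I)}$ in the positive-semidefinite order and can therefore only drive $\lambda_{\mathrm{min}}(\gamma - S)$ toward $-\infty$; consequently the supremum is unchanged when the blocks are confined to a bounded, hence compact, set of physical CMs, on which the continuous map $\lambda_{\mathrm{min}}(\gamma - S)$ attains its maximum. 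The weights range over a simplex and introduce no further noncompactness, so a genuine maximiser with physical $\gamma^{(I)}$ exists, and the equivalence $\lambda_{\mathrm{min}}(A) \geqslant 0 \Leftrightarrow A \succcurlyeq 0$ then closes the argument. Equivalently, one may simply invoke attainment of the dual optimum already guaranteed by the strong duality of Theorem~\ref{thrm:additive}.
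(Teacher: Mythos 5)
Your proof is correct and takes essentially the paper's route: the paper likewise obtains convexity by viewing $-\mathcal{E}_{\mathfrak{I}}$ in Eq.~\eqref{eq:EM} as a pointwise minimum of a family of functions linear in $\gamma$ (your max-of-affine-functions phrasing is the same argument), and it gets the separability criterion directly from Theorem~\ref{thrm:additive}, whose primal and dual forms you merely unpack in detail. One caution on your optional compactness detour: boundedness of the sum $S=\sum_{\mathcal{I}} p_{\mathcal{I}} \bigoplus_I \gamma^{(I)}$ only bounds the products $p_{\mathcal{I}}\gamma^{(I)}$, not the blocks themselves, so along a maximising sequence a block with $p_{\mathcal{I}}\to 0$ can escape every compact set of physical CMs while $p_{\mathcal{I}}\bigoplus_I\gamma^{(I)}$ converges to a nonzero positive-semidefinite limit not of the required form --- this is repairable (drop the vanishing-weight terms, whose positive-semidefinite limits can be removed from $S$ without decreasing $\lambda_{\mathrm{min}}(\gamma - S)$, and note the surviving weights still sum to one), but your fallback of simply invoking the attained dual optimum guaranteed by the strong duality of Theorem~\ref{thrm:additive} is exactly what the paper does and is the cleaner way to close the converse.
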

\begin{proof}
The first statement directly follows from Theorem~\ref{thrm:additive}. As for
the convexity, by the definition \eqref{eq:EM} we have that
$-\mathcal{E}_{\mathfrak{I}}$ is the minimum of an infinite family of linear
functions, and thus is concave. We conclude that $\mathcal{E}_{\mathfrak{I}}$ is
convex.
\end{proof}

In fact, this measure is also convex with respect to states $\hat{\varrho}$, not
only with respect to their CMs. To prove this note that if $\gamma \succcurlyeq
\gamma'$, then $\mathcal{E}_{\mathfrak{I}}(\gamma) \leqslant
\mathcal{E}_{\mathfrak{I}}(\gamma')$. The dual form of
Eq.~\eqref{eq:ent-test-full} shows that by ``increasing'' $\gamma$ we can only
decrease its entanglement measure. Using the concavity of CM, expressed by
Eq.~\eqref{eq:CMconcave}, and the lemma above we conclude that
\begin{displaymath}
	\mathcal{E}_{\mathfrak{I}}(\gamma(p\hat{\varrho}_1 + (1-p)\hat{\varrho}_2)) \leqslant 
	p \mathcal{E}_{\mathfrak{I}}(\gamma(\hat{\varrho}_1)) +
	(1-p) \mathcal{E}_{\mathfrak{I}}(\gamma(\hat{\varrho}_2)).
\end{displaymath}
In addition, the whole set of these measures behaves in an intuitive way with
respect to $\mathfrak{I}$. We say that a partition $\mathcal{I}$ is finer than a
partition $\mathcal{J}$, $\mathcal{I} \prec \mathcal{J}$, if for any $I \in
\mathcal{I}$ there is $J \in \mathcal{J}$ such that $I \subseteq J$. It means
that a $\mathcal{I}$-factorizable state is also $\mathcal{J}$-factorizable. The
partition $1|2|\ldots|n$ is the minimum with respect to this partial order. This
order can be extended to sets of partitions. We say that $\mathfrak{I} \prec
\mathfrak{J}$ if for any $\mathcal{I} \in \mathfrak{I}$ there is $\mathcal{J}
\in \mathfrak{J}$ such that $\mathcal{I} \prec \mathcal{J}$. It means that a
$\mathfrak{I}$-separable state is also $\mathfrak{J}$-separable. In other words,
the larger $\mathfrak{I}$, the larger the set of $\mathfrak{I}$-separable states
(and the smaller the set of $\mathfrak{I}$-entangled states). It follows from
the dual problem of Eq.~\eqref{eq:ent-test-full} that by increasing the set of
separable state we also increase the maximal value of the dual objective, and
thus decrease the corresponding $\mathcal{E}_\mathfrak{I}(\gamma)$ for a fixed
$\gamma$. We have thus derived the following inequality:
\begin{equation}\label{eq:EE}
	\mathcal{E}_\mathfrak{J}(\gamma) \leqslant \mathcal{E}_\mathfrak{I}(\gamma),
\end{equation}
provided that $\mathfrak{I} \prec \mathfrak{J}$. For a given CM $\gamma$,
$\mathcal{E}_{1|\ldots|n}(\gamma)$ is the largest and
$\mathcal{E}_{\mathfrak{I}_2}(\gamma)$ is the smallest.

The measure $\mathcal{E}_{\mathfrak{I}}(\gamma)$ can be strictly negative, for
example, for the scaled vacuum state $\gamma = c \gamma_0 \oplus \ldots \oplus
\gamma_0$ with $c>1$ and 
\begin{equation}
	\gamma_0 = \frac{1}{2}
	\begin{pmatrix}
		1 & 0 \\
		0 & 1
	\end{pmatrix}.
\end{equation}
It thus makes sense to ``fix'' this measure by truncating its values below zero
\begin{equation}\label{eq:EM+}
	\mathcal{E}^+_{\mathfrak{I}}(\gamma) = \max(\mathcal{E}_{\mathfrak{I}}(\gamma), 0).
\end{equation}
This function is nonnegative, convex (as the maximum of two convex functions)
and has the following addition property: $\mathcal{E}^+_{\mathfrak{I}}(\gamma) =
0$ iff $\gamma$ is $\mathfrak{I}$-separable.

\subsection{Scaling test}\label{sec:IVB}

We now use a different approach to obtain an entanglement test from the
inequality \eqref{eq:gammatest}. Instead of the difference between the left-hand
side and the right-hand side we studied above we now study their ``ratio''. More
concretely, given a CM $\gamma$ we will find the maximal value of scaling
factors $t$ that satisfy the condition
\begin{equation}\label{eq:gammatestt}
	\gamma \geqslant t \sum_{\mathcal{I} = \{I_1, \ldots, I_l\} \in \mathfrak{I}} 
	p_{\mathcal{I}} \gamma^{(I_1)} \oplus \ldots \oplus \gamma^{(I_l)},
\end{equation}
for some $p_{\mathcal{I}}$ and $\gamma^{(I)}$. 
From Eq.~\eqref{eq:gammatest} it follows for the maximal scaling factor
$t^\star$ that $t^\star \geqslant 1$ iff $\gamma$ is $\mathfrak{I}$-separable
and thus the condition $t^\star < 1$ is necessary and sufficient for
entanglement. The dual problem is given by the theorem below.
\begin{thrm}\label{thrm:scaling}
For a CM $\gamma$ the following equality is valid:
\begin{equation}
	\max_{\gamma^{(I)}, p_{\mathcal{I}}} t = 
	\min_{M \succcurlyeq 0} \tr(M\gamma),
\end{equation}
with the additional constraint for the primal problem (on the left-hand side)
given by Eqs.~\eqref{eq:gammatestt} and \eqref{eq:p1}, and the additional
constraint for the dual problem (on the right-hand side) given by
\begin{equation}\label{eq:KO}
	M[I] + i K^{(I)} \succcurlyeq 0, \quad \frac{1}{2}\sum_{I \in \mathcal{I}} \tr(K^{(I)} \Omega_{|I|}) = 1,
\end{equation}
for all $\mathcal{I} \in \mathfrak{I}$. The KKT conditions are
Eq.~\eqref{eq:KKT-1} and
\begin{equation}
	M^\star\left(\gamma - t^\star \sum_{\mathcal{I} \in \mathfrak{I}}
	p^\star_{\mathcal{I}} \bigoplus_{I \in \mathcal{I}} \gamma^{(I)\star} \right) = 0,
\end{equation}
where $t^\star = \tr(M^\star \gamma)$ is the primal optimal objective value of the primal
objective function.

In the reduced case the duality reads as
\begin{equation}
	\max_{\gamma^{(I)}_{xx}, \gamma^{(I)}_{pp}, p_{\mathcal{I}}, t} t = 
	\min_{X, P \succcurlyeq 0, Z^{(I)}} 
	\tr(X\gamma_{xx} + P\gamma_{pp}),
\end{equation}
with the additional constraint for the primal problem (on the left-hand side)
being \eqref{eq:p1} and
\begin{equation}
	\gamma_{xx} \succcurlyeq t \sum_{\mathcal{I} \in \mathfrak{I}} p_{\mathcal{I}} 
	\bigoplus_{I \in \mathcal{I}} \gamma^{(I)}_{xx}, \quad
	\gamma_{pp} \succcurlyeq t \sum_{\mathcal{I} \in \mathfrak{I}} p_{\mathcal{I}} 
	\bigoplus_{I \in \mathcal{I}} \gamma^{(I)}_{pp},
\end{equation}
and the additional constraint for the dual problem (on the right-hand side)
being
\begin{equation}
	\begin{pmatrix} 
		X[I] & Z^{(I)} \\ 
		Z^{(I)\mathrm{T}} & P[I] 
	\end{pmatrix}
	\succcurlyeq 0, \quad
	\sum_{I \in \mathcal{I}} \tr Z^{(I)} = 1.
\end{equation}
The KKT conditions are given by Eq.~\eqref{eq:MMKKT} and
\begin{equation}\label{eq:KKTscaling2}
\begin{split}
	X^\star\left(\gamma_{xx} - t^\star \sum_{\mathcal{I} \in \mathfrak{I}} 
	p^\star_{\mathcal{I}} \bigoplus_{I \in \mathcal{I}}\gamma^{(I)\star}_{xx} \right) &= 0 \\
	P^\star\left(\gamma_{pp} - t^\star \sum_{\mathcal{I} \in \mathfrak{I}} 
	p^\star_{\mathcal{I}} \bigoplus_{I \in \mathcal{I}}\gamma^{(I)\star}_{pp} \right) &= 0,
\end{split}
\end{equation}
where $t^\star = \tr(X^\star \gamma_{xx} + P^\star \gamma_{pp})$ is the primal
optimal objective value.
\end{thrm}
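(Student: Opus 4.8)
The plan is to exhibit the two sides of Theorem~\ref{thrm:scaling} as a primal–dual pair in the conic framework of Eqs.~\eqref{eq:CO}--\eqref{eq:DCO} and then read off the duality, the constraints~\eqref{eq:KO} and the KKT conditions, exactly as was done for the symplectic trace and for Theorem~\ref{thrm:additive}. The one genuine difficulty is that the primal constraint \eqref{eq:gammatestt} is \emph{bilinear} in the unknowns $t$, $p_{\mathcal{I}}$ and $\gamma^{(I)}$, so the maximization is not manifestly convex. First I would remove the bilinearity by the substitution $r_{\mathcal{I}} = t\,p_{\mathcal{I}} \geqslant 0$ and $\tilde\gamma^{(I)}_{\mathcal{I}} = r_{\mathcal{I}}\gamma^{(I)}$. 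Using $\sum_{\mathcal{I}} p_{\mathcal{I}} = 1$ the objective becomes the linear functional $t = \sum_{\mathcal{I}} r_{\mathcal{I}}$; the coupling constraint \eqref{eq:gammatestt} becomes the single linear matrix inequality $\gamma \succcurlyeq \sum_{\mathcal{I}} \bigoplus_{I \in \mathcal{I}} \tilde\gamma^{(I)}_{\mathcal{I}}$; and the implicit physicality $\gamma^{(I)} + \tfrac{i}{2}\Omega_{|I|} \succcurlyeq 0$ becomes $\tilde\gamma^{(I)}_{\mathcal{I}} + \tfrac{i}{2} r_{\mathcal{I}}\Omega_{|I|} \succcurlyeq 0$, again linear. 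One checks that this reparametrization preserves the optimal value: the inverse map $p_{\mathcal{I}} = r_{\mathcal{I}}/t$, $\gamma^{(I)} = \tilde\gamma^{(I)}_{\mathcal{I}}/r_{\mathcal{I}}$ is valid whenever $r_{\mathcal{I}}>0$, and partitions with $r_{\mathcal{I}}=0$ carry zero weight and can only tighten the coupling inequality, so they never help. After this step the primal is a genuine semidefinite program.

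Next I would attach a multiplier $M \succcurlyeq 0$ to the coupling inequality, Hermitian multipliers $\Lambda^{(I)}_{\mathcal{I}} \succcurlyeq 0$ to the physicality inequalities and $\mu_{\mathcal{I}} \geqslant 0$ to $r_{\mathcal{I}} \geqslant 0$, and form the Lagrangian. Collecting the coefficient of the free matrix $\tilde\gamma^{(I)}_{\mathcal{I}}$ forces $\re\Lambda^{(I)}_{\mathcal{I}} = M[I]$, so with $K^{(I)} = \im\Lambda^{(I)}_{\mathcal{I}}$ antisymmetric one recovers the constraint $M[I]+iK^{(I)}\succcurlyeq 0$ of \eqref{eq:KO}; here $M[I]$ is automatically a principal submatrix of the single matrix $M$ because $\tr[M\bigoplus_{I}\tilde\gamma^{(I)}_{\mathcal{I}}] = \sum_{I}\tr(M[I]\tilde\gamma^{(I)}_{\mathcal{I}})$. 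Collecting the coefficient of the free scalar $r_{\mathcal{I}}$, and using $\tr(M[I]\Omega_{|I|})=0$ for symmetric $M[I]$ so that $\tfrac{i}{2}\tr(\Lambda^{(I)}_{\mathcal{I}}\Omega_{|I|}) = -\tfrac12\tr(K^{(I)}\Omega_{|I|})$, yields $\tfrac12\sum_{I\in\mathcal{I}}\tr(K^{(I)}\Omega_{|I|}) = 1+\mu_{\mathcal{I}}$. Everything else cancels and the leftover is $\tr(M\gamma)$, so the dual is $\min_{M\succcurlyeq 0}\tr(M\gamma)$, the stated right-hand side.

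The point needing a short argument --- and the one I expect to be the main obstacle --- is the passage from the inequality $\tfrac12\sum_{I}\tr(K^{(I)}\Omega_{|I|}) \geqslant 1$ (forced by $\mu_{\mathcal{I}}\geqslant 0$) to the equality in \eqref{eq:KO}. The resolution is that $\tr(M\gamma)$ does not depend on the $K^{(I)}$, while for fixed $M$ the linear quantity $\tfrac12\sum_{I}\tr(K^{(I)}\Omega_{|I|})$ sweeps the interval from $0$ (at $K^{(I)}=0$, admissible since $M\succcurlyeq 0$ gives $M[I]\succcurlyeq 0$) up to the maximum $\sum_I\str(M[I])$ delivered by the symplectic-trace duality \eqref{eq:Mstr}; replacing each $K^{(I)}$ by $cK^{(I)}$ with $c=1/\sum_I\str(M[I])\in(0,1]$ keeps $M[I]+ic K^{(I)}=c(M[I]+iK^{(I)})+(1-c)M[I]\succcurlyeq 0$ and lands exactly on $1$. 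Since the $K^{(I)}$ are independent across partitions, the ``$\geqslant 1$'' and ``$=1$'' formulations define the same feasible set of matrices $M$. I would then note that Slater's condition holds (the scaling primal is strictly feasible for small $t>0$ because $\gamma\succ0$, and on the min side $M\propto E$ with suitable interior $K^{(I)}$ is strictly feasible), so the strong duality of Sec.~\ref{sec:realIII} applies and $t^\star = \tr(M^\star\gamma)$.

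Finally the KKT conditions come from complementary slackness together with Lemma~\ref{lmn:AB}. Slackness on the coupling constraint gives $\tr[M^\star(\gamma - \sum_{\mathcal{I}}\bigoplus_{I}\tilde\gamma^{(I)\star}_{\mathcal{I}})]=0$, and since $\sum_{\mathcal{I}}\bigoplus_{I}\tilde\gamma^{(I)\star}_{\mathcal{I}} = t^\star\sum_{\mathcal{I}}p^\star_{\mathcal{I}}\bigoplus_{I}\gamma^{(I)\star}$, the lemma upgrades this to $M^\star(\gamma - t^\star\sum_{\mathcal{I}}p^\star_{\mathcal{I}}\bigoplus_{I}\gamma^{(I)\star})=0$. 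Slackness on the physicality constraints, after noting $\tilde\gamma^{(I)\star}_{\mathcal{I}} + \tfrac{i}{2}r^\star_{\mathcal{I}}\Omega_{|I|} = r^\star_{\mathcal{I}}\mathcal{P}(\gamma^{(I)\star})$, gives $r^\star_{\mathcal{I}}\tr[\Lambda^{(I)\star}\mathcal{P}(\gamma^{(I)\star})]=0$, which for $p^\star_{\mathcal{I}}>0$ (hence $r^\star_{\mathcal{I}}>0$) yields Eq.~\eqref{eq:KKT-1} via the lemma. The reduced case is structurally identical: one replaces the complex physicality condition by the real form \eqref{eq:pcred}, so the Hermitian multiplier $M[I]+iK^{(I)}$ is traded for the real block $\bigl(\begin{smallmatrix} X[I] & Z^{(I)} \\ Z^{(I)\mathrm{T}} & P[I]\end{smallmatrix}\bigr)\succcurlyeq 0$ with normalization $\sum_{I}\tr Z^{(I)}=1$, and the reduced symplectic-trace duality supplies the same ``$\geqslant1$ versus $=1$'' reduction and the KKT conditions \eqref{eq:MMKKT} and \eqref{eq:KKTscaling2}.
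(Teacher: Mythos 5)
Your proposal is correct and takes essentially the same route as the paper, whose stated method for this theorem is precisely to write the conic Lagrangian and simplify it by matrix manipulations (as carried out explicitly for Theorem~\ref{thrm:1} in the appendix); your identification of the multipliers, the use of Lemma~\ref{lmn:AB} to upgrade the trace slackness conditions to the matrix equations, and the reduced-case translation via Eq.~\eqref{eq:pcred} all match that template. The only material you add beyond the paper's implicit treatment is the explicit homogenization $r_{\mathcal{I}} = t\,p_{\mathcal{I}}$, $\tilde\gamma^{(I)}_{\mathcal{I}} = r_{\mathcal{I}}\gamma^{(I)}$ removing the bilinearity, and the rescaling argument showing the normalization inequality $\tfrac12\sum_{I\in\mathcal{I}}\tr(K^{(I)}\Omega_{|I|})\geqslant 1$ can be replaced by the equality in Eq.~\eqref{eq:KO} --- both sound and worth stating.
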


\begin{figure*}
	\includegraphics[scale=0.73]{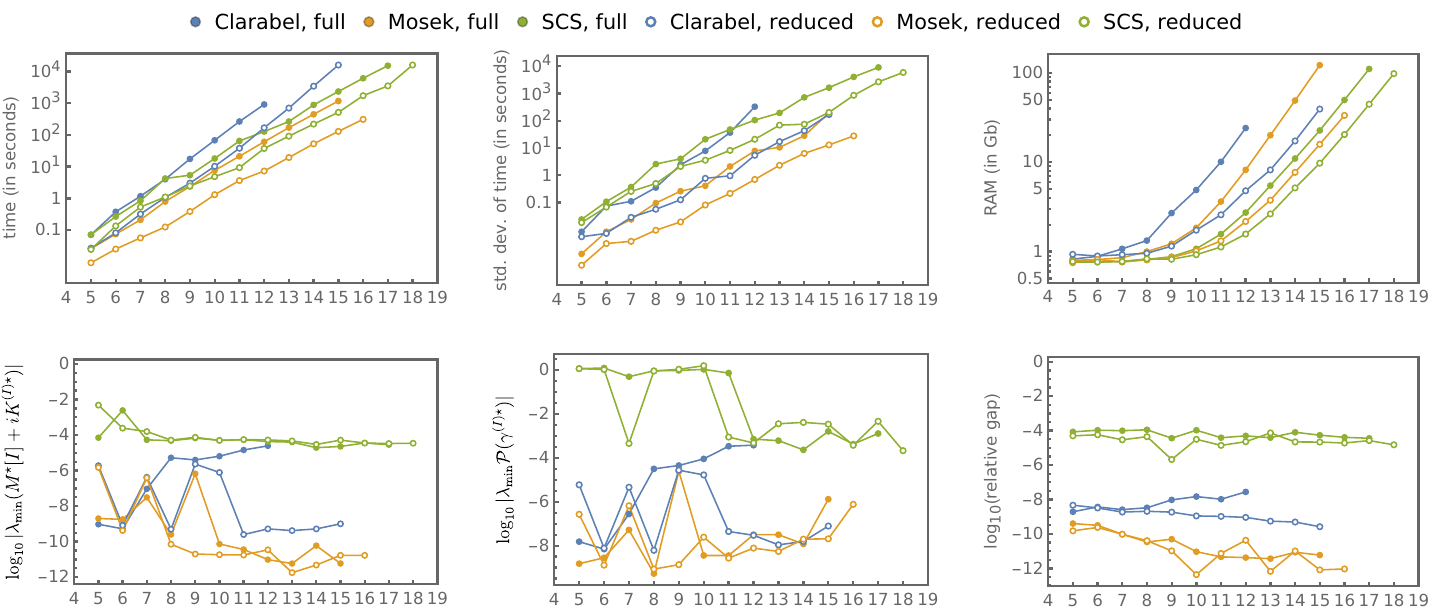}
	\caption{Comparing performance and accuracy characteristics of different solvers for the 
	scaling entanglement test as a function of the number of modes.}
	\label{fig:maxscaling}
\end{figure*}

From the first KKT condition we see that in this case the optimal states
$\gamma^{(I)\star}$ are again the pure boundary states. It is easy to see that
the optimal dual matrix $M^\star$ satisfies the inequality
\begin{equation}\label{eq:Mwp}
	\tr\left[M^\star \bigoplus_{I \in \mathcal{I}} \gamma^{(I)}\right] \geqslant 1
\end{equation}
for arbitrary CMs $\gamma^{(I)}$ and for all $\mathcal{I} \in \mathfrak{I}$. In
fact, the inequalities
\begin{equation}
\begin{split}
	\tr[\mathcal{P}(\gamma^{(I)}) &(M^\star[I] + i K^{(I)\star})] = 
	\tr(M^\star \gamma^{(I)}) \\
	&- \frac{1}{2} \tr(K^{(I)\star}\Omega_{|I|}) \geqslant 0,
\end{split}
\end{equation}
are valid for all $I$ since $M^\star[I] + i K^{(I)} \succcurlyeq 0$. Then we
have
\begin{equation}
\begin{split}
	\tr\left[M^\star \bigoplus_{I \in \mathcal{I}} \gamma^{(I)}\right] &= 
	\sum_{I \in \mathcal{I}} \tr(M^\star[I]\gamma^{(I)}) \\
	&\geqslant \frac{1}{2} \sum_{I \in \mathcal{I}} \tr(K^{(I)\star}\Omega_{|I|}) = 1,
\end{split}
\end{equation}
where we used Eq.~\eqref{eq:KO}. It follows that any $M^\star$ satisfies the
condition $\tr(M^\star\gamma_{\mathfrak{I}}) \geqslant 1$ for the CM
$\gamma_{\mathfrak{I}}$ of any $\mathfrak{I}$-separable state. Thus, if the
optimal value
\begin{equation}
	t^\star = \tr(M^\star \gamma) < 1,
\end{equation}
then $M^\star$ witnesses that $\gamma$ cannot be $\mathfrak{I}$-separable. We
have proved that $M^\star$ satisfies the property \eqref{eq:Mwp} with the help
of the related matrices $K^{(I)\star}$ provided by the solver, but given an
$M^\star$ we can always check this property directly with the explicit
expression for the symplectic trace.

The results of testing the same CMs as in the eigenvalue case are shown in
Fig.~\ref{fig:maxscaling}. The time and RAM requirements are basically the same,
but the relative gap is much smaller, so it seems that this problem can be
solved more accurately (for these test data at least).

The optimal value $t^\star(\gamma)$ is a concave function of the CM $\gamma$,
since it is the minimum of a family of linear functions. It follows that the
quantity $\mathcal{T}_{\mathfrak{I}}(\gamma)$ defined via
\begin{equation}
	\mathcal{T}_{\mathfrak{I}}(\gamma) = 1 - t^\star(\gamma)
\end{equation}
is convex and thus can be referred to as an entanglement measure. This measure
has the same properties as $\mathcal{E}_\mathfrak{I}$ --- it is convex also with
respect to the density operator and is ordered with respect to the order on
$\mathfrak{I}$. As in the previous case, this function can be negative, so we
``fix'' it in the same way 
\begin{equation}
	\mathcal{T}^+_{\mathfrak{I}}(\gamma) = \max(\mathcal{T}_{\mathfrak{I}}(\gamma), 0).
\end{equation}
This function is nonnegative, convex (as the maximum of two convex functions)
and has the following addition property: $\mathcal{T}^+_{\mathfrak{I}}(\gamma) =
0$ iff $\gamma$ is $\mathfrak{I}$-separable.

\subsection{Tests with uncertainty}\label{sec:IVC}

When the CM $\gamma$ is known with uncertainty expressed by the matrix of
standard covariances $\sigma$ it is not enough just to find an $M$ that satisfies
the inequality \eqref{eq:MW} to demonstrate entanglement of $\gamma$. We need to
establish a stronger inequality
\begin{equation}\label{eq:MWs}
	\tr(M\gamma) < \min_{\mathcal{I} \in \mathfrak{I}} 
	\sum_{I \in \mathcal{I}} \str(M[I]) - s \sigma(M),
\end{equation}
where $s$ is a sufficiently large number, e.g. $s=3$, and 
\begin{equation}
	\sigma^2(M) = \sum^{2n}_{i,j = 1} M^2_{ij} \sigma^2_{ij}.
\end{equation}
We now formulate the problem of finding the maximal $s$ such that there is an
$M$ that satisfies the inequality \eqref{eq:MWs}. Since the inequality
\eqref{eq:MWs} is invariant under the scaling of $M$, without loss of generality
we can assume that $\sigma(M) \leqslant 1$ (we cannot use strict equality here
because in this case the optimization domain will not be convex).
\begin{thrm}
For any CM $\gamma$ we have the duality
\begin{equation}
	\max_{M \succcurlyeq 0} s = \min_{\gamma^{(I)}, Q} q,
\end{equation}
with the additional constraints of the primal problem (on the left-hand side)
given by Eq.~\eqref{eq:MWs} and $\sigma(M) \leqslant 1$, and the constraints of
the dual problem (on the right-hand side) are \eqref{eq:p1} and 
\begin{equation}
	\gamma \succcurlyeq \sum_{\mathcal{I} \in \mathfrak{I}} 
	p_{\mathcal{I}} \bigoplus_{I \in \mathcal{I}} \gamma^{(I)} + \sigma \cdot Q, \quad
	\|Q\| \leqslant q,
\end{equation}
where the norm of the symmetric matrix $Q$ is defined via
\begin{equation}
	\|Q\|^2 = \tr(Q^{\mathrm{T}}Q) = \tr(Q^2).
\end{equation}
The KKT conditions read as
\begin{equation}
\begin{split}
	\tr[M^\star (\sigma \cdot Q^\star)] + q^\star &= 0 \\
	M^\star\left(\gamma - \sum_{\mathcal{I} \in \mathfrak{I}} 
	p^\star_{\mathcal{I}} \bigoplus_{I \in \mathcal{I}} \gamma^{(I)\star} 
	- \sigma \cdot Q^\star\right) &= 0.
\end{split}
\end{equation}
In the first equality we have the standard matrix product of $M^\star$ with the
Hadamard product $\sigma \cdot Q^\star$. In addition,
\begin{equation}
	\tr(M^\star \gamma) - \sum_{I \in \mathcal{I}} \str(M^\star[I]) + q^\star = 0
\end{equation}
for all $\mathcal{I} \in \mathfrak{I}$ with $p_{\mathcal{I}} > 0$.

In the reduced case the duality reads as
\begin{equation}\label{eq:XPsigma}
	\max_{X, P \succcurlyeq 0} s = \min_{\gamma^{(I)}_{xx}, \gamma^{(I)}_{pp},	Q, R} q,
\end{equation}
with the additional constraint of the primal problem (on the left-hand side)
being 
\begin{equation}
\begin{split}
	\tr(X\gamma_{xx} + P\gamma_{pp}) &- \min_{\mathcal{I} \in \mathfrak{I}}
	\sum_{I \in \mathcal{I}}\str(X[I], P[I]) \\
	&+ s\sqrt{\sigma^2(X) + \sigma^2(P)} < 0,
\end{split}
\end{equation}
where 
\begin{equation}
	\sigma^2(X) = \sum^n_{i, j = 1} X^2_{ij}\sigma^2_{xx, ij}, \quad
	\sigma^2(P) = \sum^n_{i, j = 1} P^2_{ij}\sigma^2_{pp, ij},
\end{equation}
and $\sigma^2(X) + \sigma^2(P) \leqslant 1$, and the constraints of the dual
problem (on the right-hand side) are \eqref{eq:p1} and
\begin{equation}
\begin{split}
	\gamma_{xx} \succcurlyeq \sum_{\mathcal{I} \in \mathfrak{I}} 
	p_{\mathcal{I}} \bigoplus_{I \in \mathcal{I}} \gamma^{(I)}_{xx} + \sigma_{xx} \cdot Q \\
	\gamma_{pp} \succcurlyeq \sum_{\mathcal{I} \in \mathfrak{I}},
	p_{\mathcal{I}} \bigoplus_{I \in \mathcal{I}} \gamma^{(I)}_{pp} + \sigma_{pp} \cdot R,
\end{split}
\end{equation}
where symmetric matrices $Q$ and $R$ satisfy the inequality
\begin{equation}
	\sqrt{\|Q\|^2 + \|R\|^2} \leqslant q.
\end{equation}
The KKT conditions read as
\begin{equation}
\begin{split}
	\tr[X^\star (\sigma_{xx} \cdot Q^\star)] + \tr[P^\star (\sigma_{pp} \cdot R^\star)] + q^\star &= 0, \\
	X^\star\left(\gamma_{xx} - \sum_{\mathcal{I} \in \mathfrak{I}} 
	p^\star_{\mathcal{I}} \bigoplus_{I \in \mathcal{I}} \gamma^{(I)\star}_{xx} 
	- \sigma_{xx} \cdot Q^\star\right) &= 0, \\
	P^\star\left(\gamma_{pp} - \sum_{\mathcal{I} \in \mathfrak{I}} 
	p^\star_{\mathcal{I}} \bigoplus_{I \in \mathcal{I}} \gamma^{(I)\star}_{pp} 
	- \sigma_{pp} \cdot R^\star\right) &= 0.
\end{split}
\end{equation}
In addition,
\begin{equation}
	\tr(X^\star \gamma_{xx} + P^\star\gamma_{pp}) - 
	\sum_{I \in \mathcal{I}} \str(X^\star[I], P^\star[I]) + q^\star = 0
\end{equation}
for all $\mathcal{I} \in \mathfrak{I}$ with $p_{\mathcal{I}} > 0$.
\end{thrm}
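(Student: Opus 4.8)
The plan is to follow the Lagrangian-duality template of the symplectic-trace theorem of Sec.~\ref{sec:III}, with one genuinely new ingredient: the second-order-cone constraint $\sigma(M)\leqslant 1$, whose self-dual cone will generate the symmetric matrix $Q$ with $\|Q\|\leqslant q$. First I would dispose of the bilinear term $s\,\sigma(M)$ in \eqref{eq:MWs}. Since both $\tr(M\gamma)$ and $\str(M[I])$ are homogeneous of degree one in $M$ and $\sigma(M)=\|M\cdot\sigma\|$ is a norm, the inequality \eqref{eq:MWs} is scale-invariant, so for fixed $M$ the largest admissible $s$ equals $[\min_{\mathcal{I}}\sum_{I}\str(M[I])-\tr(M\gamma)]/\sigma(M)$. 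Normalising to $\sigma(M)\leqslant 1$ therefore turns the primal into the clean concave maximisation
\begin{displaymath}
	s^\star=\max_{M\succcurlyeq 0,\ \sigma(M)\leqslant 1}\Big[\min_{\mathcal{I}\in\mathfrak{I}}\sum_{I\in\mathcal{I}}\str(M[I])-\tr(M\gamma)\Big],
\end{displaymath}
in which the bilinear term has disappeared. The objective is concave because each $\str(M[I])$ is concave (being the minimum \eqref{eq:Mstr} of linear functions), and the feasible set is convex and, since every $\sigma_{ij}>0$, compact.

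For the duality I would run two parallel computations. The Lagrangian route mirrors Sec.~\ref{sec:III}: introduce the dual representation $\str(M[I])=\max\{\tfrac12\tr(K^{(I)}\Omega_{|I|}): M[I]+iK^{(I)}\succcurlyeq 0\}$ to obtain the middle (primal-with-$K^{(I)}$) form, then attach multipliers $p_{\mathcal{I}}\geqslant 0$ to the $|\mathfrak{I}|$ epigraph inequalities, a positive-semidefinite multiplier to $M\succcurlyeq 0$, positive-semidefinite multipliers to each $M[I]+iK^{(I)}\succcurlyeq 0$, and a self-dual second-order-cone multiplier to $\sigma(M)\leqslant 1$. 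Taking the infimum over the scalar $s$ forces $\sum_{\mathcal{I}}p_{\mathcal{I}}=1$, i.e. \eqref{eq:p1}; the infimum over the antisymmetric $K^{(I)}$ reproduces exactly the symplectic-trace duality and identifies the corresponding multipliers with physical covariance matrices $\gamma^{(I)}\succcurlyeq 0$ (via the KKT relation \eqref{eq:KKT-1} and Lemma~\ref{lmn:AB}, giving $\gamma^{(I)\star}=\gamma_{M^\star[I]}$ of \eqref{eq:gammaM}); and the infimum over $M$ collects the linear-in-$M$ terms into the positive-semidefinite residual constraint $\gamma\succcurlyeq\sum_{\mathcal{I}}p_{\mathcal{I}}\bigoplus_{I}\gamma^{(I)}+\sigma\cdot Q$, with the cone multiplier appearing as the symmetric matrix $Q$ and leaving the objective $q=\|Q\|$.

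As an independent check, and a cleaner derivation, I would use the minimax route. Replacing each $\str(M[I])$ by its primal form \eqref{eq:Mstr} and the outer $\min_{\mathcal{I}}$ by a convex combination turns the objective into $\min_{p_{\mathcal{I}},\gamma^{(I)}}\tr\!\big(M(\Sigma-\gamma)\big)$ with $\Sigma=\sum_{\mathcal{I}}p_{\mathcal{I}}\bigoplus_{I}\gamma^{(I)}$. Because the inner set is convex, the objective is bilinear, and the $M$-set is convex and compact, a minimax theorem (Sion's) permits exchanging $\max_M\min_{p,\gamma^{(I)}}$. The inner support function $\max_M\tr(MN)$, with $N=\Sigma-\gamma$ fixed, is then evaluated by the elementary bound $\tr(MN)=\tr[M(\sigma\cdot Q)]+\tr[M(N-\sigma\cdot Q)]\leqslant\|M\cdot\sigma\|\,\|Q\|\leqslant\|Q\|$ whenever $N-\sigma\cdot Q\preccurlyeq 0$, giving $\max_{M\succcurlyeq0,\sigma(M)\leqslant1}\tr(MN)=\min\{\|Q\|:N\preccurlyeq\sigma\cdot Q\}$; replacing $Q$ by $-Q$ then reproduces the claimed dual constraints $\gamma\succcurlyeq\Sigma+\sigma\cdot Q$, $\|Q\|\leqslant q$, with optimal value $q^\star=s^\star$.

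Finally I would read the KKT conditions off complementary slackness: slackness on $M\succcurlyeq 0$ gives $M^\star(\gamma-\Sigma^\star-\sigma\cdot Q^\star)=0$; slackness on each $M[I]+iK^{(I)}\succcurlyeq 0$ together with Lemma~\ref{lmn:AB} gives $\mathcal{P}(\gamma^{(I)\star})(M^\star[I]+iK^{(I)\star})=0$; slackness on the cone gives $\tr[M^\star(\sigma\cdot Q^\star)]+q^\star=0$; and evaluating the active epigraph constraints yields $\tr(M^\star\gamma)-\sum_{I}\str(M^\star[I])+q^\star=0$ for every $\mathcal{I}$ with $p_{\mathcal{I}}>0$. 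The reduced case is obtained by substituting the reduced symplectic-trace duality (the value $\tr Z^{(I)}$ under the $2\times2$-block semidefinite constraint) and splitting $\sigma(M)^2=\sigma^2(X)+\sigma^2(P)$ into two matrices $Q,R$ with $\sqrt{\|Q\|^2+\|R\|^2}\leqslant q$; the remaining steps are identical. Strong duality holds by the Slater condition of Sec.~\ref{sec:realIII}. I expect the main obstacle to be the correct dualisation of the coupled triple (the scalar objective, the $\ell^2$-type normalisation $\sigma(M)\leqslant 1$, and the bilinear $s\,\sigma(M)$): the homogeneity reduction that eliminates $s\,\sigma(M)$, and the compactness of $\{M\succcurlyeq0,\sigma(M)\leqslant1\}$ that licenses the minimax swap, are the two points that require care.
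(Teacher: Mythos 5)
Your main (Lagrangian) route coincides with the paper's own proof: the appendix derives Theorem~\ref{thrm:1} by writing the Lagrangian, attaching one multiplier per cone constraint, and minimizing over the primal variables, and explicitly states that all remaining theorems, this one included, are obtained the same way. Your identification of the multipliers of $M[I]+iK^{(I)}\succcurlyeq 0$ with physical CMs via Lemma~\ref{lmn:AB}, Eq.~\eqref{eq:KKT-1} and Eq.~\eqref{eq:gammaM}, of the multiplier of the epigraph inequalities with $p_{\mathcal{I}}$ subject to \eqref{eq:p1}, and of the self-dual second-order-cone multiplier with the pair $(Q,q)$, is exactly that computation. What you add beyond the paper is worthwhile: the explicit homogeneity reduction that eliminates the bilinear term $s\,\sigma(M)$ before dualizing (the paper leaves this step implicit, even though the problem as literally written is not jointly convex in $(M,s)$), and the independent minimax verification, where the support-function identity $\max_{M\succcurlyeq 0,\,\sigma(M)\leqslant 1}\tr(MN)=\min\{\|Q\|:\,N\preccurlyeq\sigma\cdot Q\}$ (weak duality by Cauchy--Schwarz on $\tr[M(\sigma\cdot Q)]$, equality by Slater, since $\sigma\cdot(cE)$ is a positive diagonal matrix dominating $N$ for large $c$) makes it transparent why the Hadamard term $\sigma\cdot Q$ and the bound $\|Q\|\leqslant q$ appear in the dual constraint; Sion's theorem applies because $\{M\succcurlyeq 0,\ \sigma(M)\leqslant 1\}$ is compact, as you note, precisely thanks to $\sigma_{ij}>0$.

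One caveat you should state as a hypothesis rather than leave buried: your reduced primal $\max_{M\succcurlyeq 0,\,\sigma(M)\leqslant 1}\bigl[\min_{\mathcal{I}}\sum_{I}\str(M[I])-\tr(M\gamma)\bigr]$ admits $M=0$ and is therefore always nonnegative; by homogeneity it equals $\max(s^\star,0)$, not $s^\star$. This is internally consistent with the dual, whose value satisfies $q\geqslant\|Q\|\geqslant 0$, but it means your chain of equalities establishes the stated identity $\max s=\min q$ only when $s^\star\geqslant 0$, i.e. when some $M$ actually witnesses entanglement; for a $\gamma$ strictly interior to the separable set the literal primal of \eqref{eq:MWs} is strictly negative while the dual is zero. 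The paper is silent on this degenerate case as well (compare its remark that a zero optimum renders the test inconclusive), so this is a shared imprecision, not a defect specific to your argument.
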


\begin{figure}
	\includegraphics[scale=1.0]{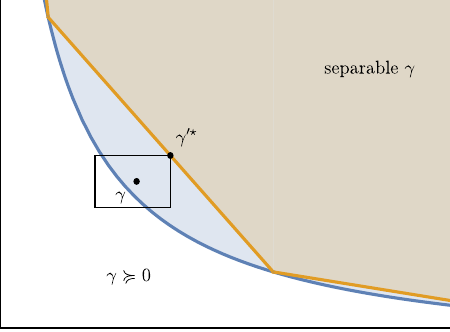}
	\caption{Closest separable state.}\label{fig:cs}
\end{figure}

\begin{figure*}
	\includegraphics[scale=0.73]{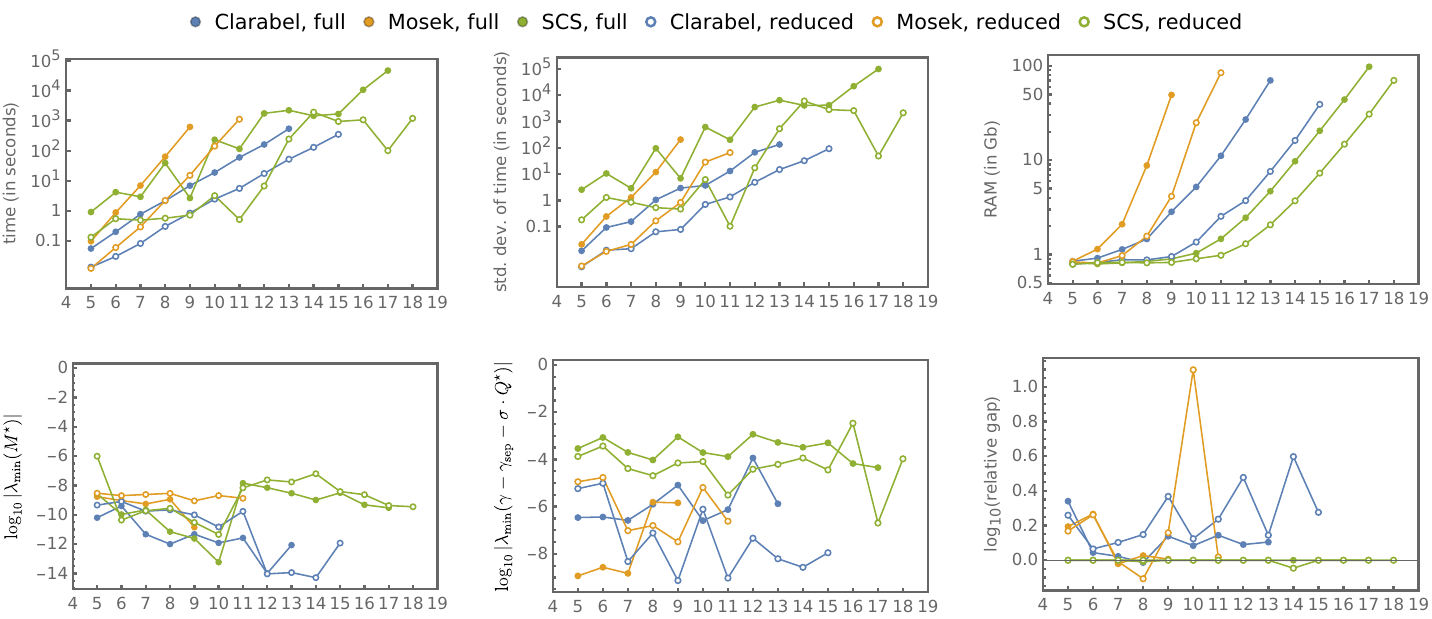}
	\caption{Comparing performance and accuracy characteristics of different solvers for the 
	problem expressed by Theorem 8 as a function of the number of modes.}
	\label{fig:maxs}
\end{figure*}

Another way to reliably test a CM $\gamma$, known with uncertainty $\sigma$, is
to determine the distance (measured in units of $\sigma$) to the closest
separable state, Fig~\ref{fig:cs}. If $\gamma$ is entangled, then we will look
for the largest $\sigma$-neighborhood $|\gamma' - \gamma| \leqslant s \sigma$
such that all its points $\gamma'$ are also entangled. An equivalent formulation
is as a smallest $\sigma$-neighborhood that contains a separable point
$\gamma'$. Such a neighborhood is illustrated in Fig.\ref{fig:cs}. The
structures in this figure are the same as those of Fig.~\ref{fig:Capprox}, and
in addition the yellow polygon denotes the set of separable states. For this
kind of test we use the scaling approach.

\begin{thrm}
For any CM $\gamma$ and the matrix of standard deviations $\sigma$ the following
duality is valid:
\begin{displaymath}
	\min_{\gamma', \gamma^{(I)}} s = 
	\max_{\Lambda, W \succcurlyeq 0, U^{\pm} \geqslant 0, K^{(I)}, \nu}  
	-\tr[\Lambda \mathcal{P}(\gamma)] - \tr(W\gamma) + \nu, 
\end{displaymath}
with the additional constrains for the primal problem (on the left-hand side)
being \eqref{eq:p1} and
\begin{displaymath}
	|\gamma' - \gamma| \leqslant s \sigma, \quad \gamma' \geqslant 
	\sum_{\mathcal{I} = \{I_1, \ldots, I_l\} \in \mathfrak{I}} 
	p_{\mathcal{I}} \gamma^{(I_1)} \oplus \ldots \oplus \gamma^{(I_l)}.
\end{displaymath}
The variables of the dual problem (on the right-hand side) are a Hermitian
matrix $\Lambda$, a real symmetric $W$, real anti-symmetric $K^{(I)}$, real
$U^{\pm}$ and a real number $\nu$. The additional constraints are $W[I] +
iK^{(I)} \succcurlyeq 0$ for all $I \in \mathcal{I}$, $\mathcal{I} \in
\mathfrak{I}$, and
\begin{equation}
\begin{split}
	U^+ - U^- + \re(\Lambda) + W &= 0, \\
	\tr[(U^+ + U^-)\sigma] &= 1, \\
	\frac{1}{2} \sum_{I \in \mathcal{I}} \tr(K^{(I)}\Omega_{|I|}) &= \nu,
\end{split}
\end{equation}
for all $\mathcal{I} \in \mathfrak{I}$. The KKT conditions read as
\begin{equation}
\begin{split}
	\mathcal{P}(\gamma^{\prime\star}) \Lambda^\star &= 0, \\
	W^\star\left(\gamma^{\prime\star} - \sum_{\mathcal{I} \in \mathfrak{I}} 
	p^\star_{\mathcal{I}} \bigoplus_{I \in \mathcal{I}} \gamma^{(I)\star}\right) &= 0, \\
	\tr[U^{\pm\star}(\gamma - \gamma^{\prime\star} \mp s^\star \sigma)] &= 0,
\end{split}
\end{equation}
where $s^\star$ is the optimal prime objective value, and
\begin{equation}
	\mathcal{P}(\gamma^{(I)\star})(W^\star[I] + i K^{(I)\star}) = 0
\end{equation}
for all $I \in \mathcal{I}$ and $\mathcal{I} \in \mathfrak{I}$ with
$p_{\mathcal{I}} > 0$.

In the reduced case the duality reads as
\begin{equation}
\begin{split}
	\min_{\substack{\gamma'_{xx}, \gamma'_{pp}, \\ \gamma^{(I)}_{xx}, \gamma^{(I)}_{pp}}} s &= 
	\max_{\substack{W, X, P \succcurlyeq 0, Z^{(I)} \\ U^{\pm}, V^{\pm} \geqslant 0, \nu}} 
	- \tr[W \mathcal{P}(\gamma_{xx}, \gamma_{pp})] \\
	&- \tr(X\gamma_{xx} + P\gamma_{pp}) + \nu, 
\end{split}
\end{equation}
with the additional constraints of the primal problem (on the left-hand side)
being \eqref{eq:p1} and
\begin{equation}
\begin{split}
	\gamma'_{xx} \geqslant \sum_{\mathcal{I} \in \mathfrak{I}} 
	p_{\mathcal{I}} \bigoplus_{I \in \mathcal{I}} \gamma^{(I)}_{xx}, \quad
	\gamma'_{pp} \geqslant \sum_{\mathcal{I} \in \mathfrak{I}} 
	p_{\mathcal{I}} \bigoplus_{I \in \mathcal{I}} \gamma^{(I)}_{pp},
\end{split}
\end{equation}
together with $|\gamma'_{xx} - \gamma_{xx}| \leqslant s \sigma_{xx}$, $|\gamma'_{pp} -
\gamma_{pp}| \leqslant s \sigma_{pp}$. The additional constraints of the dual
problem (on the right-hand side) are
\begin{equation}
\begin{split}
	U^+ - U^- + X + W_{xx} &= 0, \\
	V^+ - V^- + P + W_{pp} &= 0, \\
	\tr[(U^+ + U^-)\sigma_{xx} + (V^+ + V^-)\sigma_{pp}] &= 1, \\
	\sum_{I \in \mathcal{I}} \tr Z^{(I)} &= \nu,
\end{split}
\end{equation}
and, in addition,
\begin{equation}
	\begin{pmatrix}
		X[I] & Z^{(I)} \\
		Z^{(I)\mathrm{T}} & P[I]
	\end{pmatrix}
	\succcurlyeq 0
\end{equation}
for all $I \in \mathcal{I}$, $\mathcal{I} \in \mathfrak{I}$. The KKT conditions
are given by
\begin{equation}
\begin{split}
	\mathcal{P}(\gamma^{\prime\star}_{xx}, \gamma^{\prime\star}_{pp}) W^\star &= 0, \\
	X^\star\left(\gamma^{\prime\star}_{xx} - \sum_{\mathcal{I} \in \mathfrak{I}}
	p^\star_{\mathcal{I}} \bigoplus_{I \in \mathcal{I}} \gamma^{(I)\star}_{xx} \right) &= 0, \\
	P^\star\left(\gamma^{\prime\star}_{pp} - \sum_{\mathcal{I} \in \mathfrak{I}}
	p^\star_{\mathcal{I}} \bigoplus_{I \in \mathcal{I}} \gamma^{(I)\star}_{pp} \right) &= 0, \\
	\tr[U^{\pm\star}(\gamma_{xx} - \gamma^{\prime\star}_{xx} \mp s^\star \sigma_{xx})] &= 0, \\
	\tr[V^{\pm\star}(\gamma_{pp} - \gamma^{\prime\star}_{pp} \mp s^\star \sigma_{pp})] &= 0,
\end{split}
\end{equation}
and Eq.~\eqref{eq:KKT-1}.
\end{thrm}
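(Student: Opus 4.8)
The plan is to treat the primal as a single convex conic program and dualize it with the machinery of Sec.~\ref{sec:realIII}, reusing the symplectic-trace duality \eqref{eq:Mstr} for the subsystem states and the Chebyshev-type handling of the box constraint $|\gamma'-\gamma|\leqslant s\sigma$ from Theorem~\ref{thrm:1}. The one genuinely non-linear feature of the primal is the bilinear product $p_{\mathcal{I}}\gamma^{(I)}$ in the separability bound $\gamma'\succcurlyeq\sum_{\mathcal{I}}p_{\mathcal{I}}\bigoplus_{I}\gamma^{(I)}$. First I would remove it by the substitution $\rho^{(I)}=p_{\mathcal{I}}\gamma^{(I)}$: the bound becomes the linear constraint $\gamma'\succcurlyeq\sum_{\mathcal{I}}\bigoplus_{I}\rho^{(I)}$, and the physicality $\mathcal{P}(\gamma^{(I)})\succcurlyeq0$, multiplied by $p_{\mathcal{I}}\geqslant0$, turns into the jointly linear cone constraint $\rho^{(I)}+\tfrac{i}{2}p_{\mathcal{I}}\Omega_{|I|}\succcurlyeq0$. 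The resulting program in $s,\gamma',\{\rho^{(I)}\},\{p_{\mathcal{I}}\}$ is convex, and a strictly feasible point (a strongly physical $\gamma'$ with large diagonal that strictly dominates a separable state and lies strictly inside the box for $s$ large) secures the strong duality used in Sec.~\ref{sec:realIII}.

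Next I would assign multipliers: $\Lambda\succcurlyeq0$ to $\mathcal{P}(\gamma')\succcurlyeq0$, Hermitian $\Lambda^{(I)}\succcurlyeq0$ to $\rho^{(I)}+\tfrac{i}{2}p_{\mathcal{I}}\Omega_{|I|}\succcurlyeq0$, $W\succcurlyeq0$ to the separability bound, $U^{\pm}\geqslant0$ to the two halves of the box constraint, and $\eta_{\mathcal{I}}\geqslant0,\ \nu'$ to $p_{\mathcal{I}}\geqslant0$ and $\sum p_{\mathcal{I}}=1$ from \eqref{eq:p1}. Minimizing the Lagrangian over $s$ gives the finite constraint $\tr[(U^{+}+U^{-})\sigma]=1$; over the free symmetric $\gamma'$ it gives $U^{+}-U^{-}+\re(\Lambda)+W=0$; and over the free symmetric $\rho^{(I)}$ it forces $\re(\Lambda^{(I)})=W[I]$, so that, writing $K^{(I)}=\im(\Lambda^{(I)})$, the cone condition $\Lambda^{(I)}\succcurlyeq0$ becomes exactly $W[I]+iK^{(I)}\succcurlyeq0$. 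Here I would import the elementary identities already used for the symplectic trace, namely that for real symmetric $Y$ and Hermitian $\Lambda$ one has $\tr(\Lambda Y)=\tr(\re(\Lambda)Y)$ and $\tr(\Lambda\Omega)=i\,\tr(\im(\Lambda)\Omega)$, which split every trace against $\mathcal{P}(\cdot)$ into its real and symplectic pieces.

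The crux is the minimization over $p_{\mathcal{I}}$. Collecting its coefficient, the term $-\tfrac{i}{2}p_{\mathcal{I}}\sum_{I}\tr(\Lambda^{(I)}\Omega_{|I|})$ contributes $\tfrac{1}{2}p_{\mathcal{I}}\sum_{I}\tr(K^{(I)}\Omega_{|I|})$, so finiteness of the infimum over $p_{\mathcal{I}}\in\mathbf{R}$ requires $\tfrac{1}{2}\sum_{I\in\mathcal{I}}\tr(K^{(I)}\Omega_{|I|})-\eta_{\mathcal{I}}+\nu'=0$ for every $\mathcal{I}$. Since the $\eta_{\mathcal{I}}\geqslant0$ appear nowhere else and the families $\{K^{(I)}\}_{I\in\mathcal{I}}$ are independent across partitions, this is equivalent, after setting $\nu=-\nu'$, to the single scalar condition $\tfrac{1}{2}\sum_{I\in\mathcal{I}}\tr(K^{(I)}\Omega_{|I|})=\nu$ holding for all $\mathcal{I}$, with the constant $-\nu'=\nu$ surviving in the objective. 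Assembling the remaining constants — $\tfrac{1}{2}\tr(\im(\Lambda)\Omega)$ from $\gamma'$, $\tr[(U^{+}-U^{-})\gamma]$ from the box terms, and $\nu$ from the simplex — and eliminating $U^{+}-U^{-}=-\re(\Lambda)-W$ collapses the dual objective to $-\tr[\re(\Lambda)\gamma]+\tfrac{1}{2}\tr(\im(\Lambda)\Omega)-\tr(W\gamma)+\nu=-\tr[\Lambda\mathcal{P}(\gamma)]-\tr(W\gamma)+\nu$, as claimed.

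Finally I would read the KKT relations off complementary slackness together with Lemma~\ref{lmn:AB}: $\tr[\Lambda^{\star}\mathcal{P}(\gamma'^{\star})]=0$ upgrades to $\mathcal{P}(\gamma'^{\star})\Lambda^{\star}=0$; slackness of the separability bound gives $W^{\star}(\gamma'^{\star}-\sum p^{\star}_{\mathcal{I}}\bigoplus\gamma^{(I)\star})=0$; slackness of the two box halves gives $\tr[U^{\pm\star}(\gamma-\gamma'^{\star}\mp s^{\star}\sigma)]=0$; and slackness of $\rho^{(I)}+\tfrac{i}{2}p_{\mathcal{I}}\Omega\succcurlyeq0$, divided by $p^{\star}_{\mathcal{I}}>0$ and combined with Lemma~\ref{lmn:AB}, yields $\mathcal{P}(\gamma^{(I)\star})(W^{\star}[I]+iK^{(I)\star})=0$. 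The reduced case follows the same route with the block-diagonal specialization of the symplectic-trace theorem: the variable $K^{(I)}$ and the weight $\tfrac{1}{2}\tr(K^{(I)}\Omega_{|I|})$ are replaced by $Z^{(I)}$ with $\left(\begin{smallmatrix} X[I] & Z^{(I)} \\ Z^{(I)\mathrm{T}} & P[I] \end{smallmatrix}\right)\succcurlyeq0$ and $\tr Z^{(I)}$, while the physicality multiplier $W$ for the pair $\gamma'_{xx},\gamma'_{pp}$ and the separability multiplier $X,P$ split the $\gamma'$-stationarity into the two relations $U^{+}-U^{-}+X+W_{xx}=0$ and $V^{+}-V^{-}+P+W_{pp}=0$. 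I expect the $p_{\mathcal{I}}$-minimization — keeping the signs straight and recognizing that the independent per-partition symplectic weights must be pinned to a common value $\nu$ — to be the only delicate point; everything else is the bookkeeping already rehearsed in Theorem~\ref{thrm:1} and the symplectic-trace theorem.
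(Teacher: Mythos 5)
Your derivation is correct and takes essentially the same route as the paper, whose appendix proves only Theorem~\ref{thrm:1} explicitly and derives the remaining theorems ``in a similar way'': your Lagrangian bookkeeping --- including the linearization $\rho^{(I)}=p_{\mathcal{I}}\gamma^{(I)}$ that turns the bilinear separability bound into a conic constraint, the identification $\re(\Lambda^{(I)})=W[I]$ with $K^{(I)}=\im(\Lambda^{(I)})$, and the elimination of $\eta_{\mathcal{I}}$ and $\nu'$ --- reproduces exactly the stated dual objective, constraints and KKT conditions, with your sign relation $U^{+}-U^{-}+\re(\Lambda)+W=0$ consistent with the theorem once $U^{+}$ is attached to the half $\gamma-\gamma'-s\sigma\leqslant 0$, as your complementary-slackness conditions implicitly require. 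The only point stated a bit quickly is replacing the eliminated inequality $\tfrac12\sum_{I\in\mathcal{I}}\tr(K^{(I)}\Omega_{|I|})\geqslant\nu$ by the equality of the theorem, which is harmless since rescaling $K^{(I)}\to tK^{(I)}$ with $t\in[-1,1]$ preserves $W[I]+iK^{(I)}\succcurlyeq 0$ (and the $K^{(I)}$ appear nowhere else), so equality can be imposed without changing the optimal value.
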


Figs.~\ref{fig:maxs} and \ref{fig:maxsscaling} show the results of testing
randomly generated covariance matrices and matrices of standard deviations
according to the optimization problems given by Theorems 8 and 9, respectively.
This time the randomly generated CMs turned out not to be genuine multipartite
entangled, so the primal optimal values happen to be very close to zero. The
relative gap is thus a ratio of two nearly-zero quantities and in this case it
is not a very informative accuracy measure. It nevertheless shows that some
solvers are more accurate then the others.

The optimal value $s^\star$ produced by the these optimization problems gives a
quantitative measure of how reliable the conclusion is with the provided
experimental data. The value $s^\star \geqslant 3$ should be considered as a
reliable entanglement verification. The problem can be when the two tests,
additive and multiplicative, give significantly different results, an example of
which will be shown below. As with the best physical covariance matrix which
agrees with the measured non-physical one, there is no right approach. But from
our point of view, the multiplicative test produces a more ``correct'' result.

\begin{figure*}
	\includegraphics[scale=0.73]{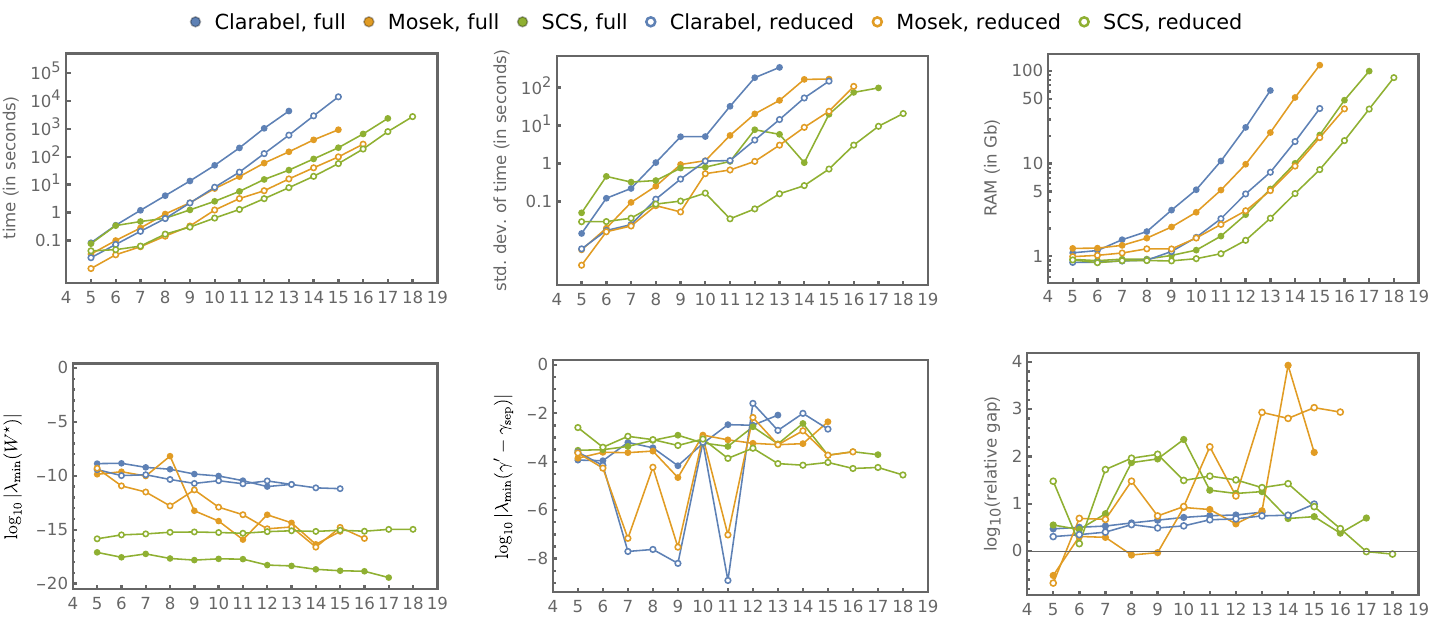}
	\caption{Comparing performance and accuracy characteristics of different solvers for the 
	problem expressed by Theorem 9 as a function of the number of modes.}
	\label{fig:maxsscaling}
\end{figure*}

\section{Refined condition}\label{sec:VIII}

In Ref.~\cite{NewJPhys.20.023030} the classical $\mathcal{I}$-separability
condition \cite{PhysRevLett.86.3658} given by Eq.~\eqref{eq:gAB}, where
$\mathcal{I} = \{I_1, \ldots, I_l\}$ is a partition of $n$-modes, was refined as
the inequality
\begin{equation}
	\gamma_{\mathcal{I}} \succcurlyeq \gamma^{(I_1)} \oplus \ldots \oplus \gamma^{(I_{l-1})}
	\oplus \frac{i}{2}\Omega_{|I_l|},
\end{equation}
that is satisfied by any $\mathcal{I}$-separable CM $\gamma_{\mathcal{I}}$. It
follows that if a CM $\gamma$ is $\mathfrak{I}$-separable, where $\mathfrak{I} =
\{\mathcal{I}_1, \ldots, \mathcal{I}_k\}$ is a collection of partitions, then
the inequality
\begin{equation}\label{eq:gammatestt3}
	\gamma \geqslant \sum_{\mathcal{I} = \{I_1, \ldots, I_l\} \in \mathfrak{I}} 
	p_{\mathcal{I}} \gamma^{(I_1)} \oplus \ldots \oplus \gamma^{(I_{l-1})} 
	\oplus \frac{i}{2}\Omega_{|I_l|},
\end{equation}
is satisfied for some $p_{\mathcal{I}}$ and $\gamma^{(I)}$. Unfortunately, this
inequality cannot be used to construct an eigenvalue-form optimization problem:
since $\tr(MK) = 0$ for any symmetric matrix $M$ and any anti-symmetric $K$ (of
compatible sizes), the information about the term $(i/2)\Omega$ on the
right-hand side is lost.

To transform this inequality into the scaling-form optimization problem, we need
the following simple statement: if $A \succcurlyeq t B$ for some $t > 0$, where
$A$ and $B$ are Hermitian matrices and $A \succcurlyeq 0$, then $A \succcurlyeq
t' B$ for all $0 \leqslant t' \leqslant t$. If $B \succcurlyeq 0$, as it is in
the classical entanglement condition, then the proof is obvious
\begin{equation}
	A - t'B = (A-tB) + (t-t')B \succcurlyeq 0,
\end{equation} 
since each term is positive semidefinite. In the new case $B$ has negative
eigenvalues due to the term $(i/2)\Omega$, so this reasoning is not applicable
here. We need a different approach,
\begin{equation}
	A - t' B = (1 - t'/t)A + (t'/t)(A - t B) \succcurlyeq 0,
\end{equation}
since both terms are positive semidefinite. We thus can look for the maximal
number $t$ such that 
\begin{equation}\label{eq:gammatestt2}
	\gamma \geqslant t\sum_{\mathcal{I} = \{I_1, \ldots, I_l\} \in \mathfrak{I}} 
	p_{\mathcal{I}} \gamma^{(I_1)} \oplus \ldots \oplus \gamma^{(I_{l-1})} 
	\oplus \frac{i}{2}\Omega_{|I_l|}.
\end{equation}
For any $\mathfrak{I}$-separable state the maximal $t^\star$ must satisfy the
inequality $t^\star \geqslant 1$. If we determine that $t^\star < 1$, then
$\gamma$ is guaranteed to be $\mathfrak{I}$-entangled. The task of finding the
maximal $t$ can be formulated as the following optimization problem.
\begin{thrm}
	For a CM $\gamma$ the following equality is valid:
	\begin{equation}
		\max_{\gamma^{(I)}, p_{\mathcal{I}}} t = 
		\min_{M + i K \succcurlyeq 0} \tr(M\gamma),
	\end{equation}
	where $\Lambda  = M + i K$ is a Hermitian matrix, with the additional
	constraint for the primal problem (on the left-hand side) is given by
	Eqs.~\eqref{eq:gammatestt2} and \eqref{eq:p1}, and the additional constraint
	for the dual problem (on the right-hand side) reads as $M[I] + i K^{(I)}
	\succcurlyeq 0$ and
	\begin{equation}\label{eq:KO1}
		\frac{1}{2}\sum_{I \in \mathcal{I}'} \tr(K^{(I)} \Omega_{|I|}) + 
		\frac{1}{2} \tr(K^{(I_l)} \Omega_{|I_l|}) = 1,
	\end{equation}
	for all $I \in \mathcal{I}'$, where $\mathcal{I}' = \{I_1, \ldots,
	I_{l-1}\}$ for $\mathcal{I} = \{I_1, \ldots, I_l\} \in \mathfrak{I}$. The
	KKT conditions are 
	\begin{equation}\label{eq:PK}
		\mathcal{P}(\gamma^{(I)\star})(M^\star[I] + i K^{(I)\star}) = 0, 
	\end{equation}
	for all $\mathcal{I} \in \mathfrak{I}$ and $I \in \mathcal{I}'$ with
$p^\star_{\mathcal{I}} > 0$ and
	\begin{equation}\label{eq:Lambdagamma}
		\overline{\Lambda^{\star}}\left(\gamma - t^\star \sum_{\mathcal{I} \in \mathfrak{I}}
		p^\star_{\mathcal{I}} \gamma^{(I_1)} \oplus \ldots \oplus \gamma^{(I_{l-1})} 
		\oplus \frac{i}{2}\Omega_{|I_l|} \right) = 0,
	\end{equation}
	where $t^\star = \tr(M^\star \gamma)$ is the primal optimal objective value
	of the primal objective function. 
\end{thrm}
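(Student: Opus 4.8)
The plan is to treat this as a conic (semidefinite) duality, exactly parallel to the scaling test of Theorem~\ref{thrm:scaling}, the one new ingredient being the fixed block $\frac{i}{2}\Omega_{|I_l|}$ attached to the last subset of each partition. The primal maximization over $t$, the weights $p_{\mathcal{I}}$ and the part CMs $\gamma^{(I)}$ is not convex as written, because the constraint \eqref{eq:gammatestt2} contains the triple product $t\,p_{\mathcal{I}}\,\gamma^{(I)}$. First I would linearize it by the substitution $s_{\mathcal{I}} = t\,p_{\mathcal{I}}$ and $\delta^{(I_i)}_{\mathcal{I}} = t\,p_{\mathcal{I}}\,\gamma^{(I_i)}$ for $i=1,\dots,l-1$. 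Using $\sum_{\mathcal{I}} p_{\mathcal{I}} = 1$ one has $\sum_{\mathcal{I}} s_{\mathcal{I}} = t$, the physicality of $\gamma^{(I_i)}$ becomes the linear matrix inequality $\delta^{(I_i)}_{\mathcal{I}} + s_{\mathcal{I}}\frac{i}{2}\Omega_{|I_i|}\succcurlyeq 0$, and the whole problem turns into the genuine linear program $\max\sum_{\mathcal{I}} s_{\mathcal{I}}$ subject to $s_{\mathcal{I}}\geqslant 0$, the block inequalities, and the global constraint $\gamma - \sum_{\mathcal{I}}\bigl(\bigoplus_{i<l}\delta^{(I_i)}_{\mathcal{I}}\oplus s_{\mathcal{I}}\frac{i}{2}\Omega_{|I_l|}\bigr)\succcurlyeq 0$.

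Second, I would attach a positive semidefinite multiplier $\Lambda = M + iK\succcurlyeq 0$ to the global inequality and a multiplier $\Pi^{(I_i)}_{\mathcal{I}}\succcurlyeq 0$ to each block inequality, form the Lagrangian, and take the supremum over the primal variables. Stationarity in the real symmetric variables $\delta^{(I_i)}_{\mathcal{I}}$ forces $\re\Pi^{(I_i)}_{\mathcal{I}} = M[I_i]$, so that $\Pi^{(I_i)}_{\mathcal{I}} = M[I_i] + iK^{(I_i)}$ with the antisymmetric part $K^{(I_i)}$ left free subject only to $M[I_i] + iK^{(I_i)}\succcurlyeq 0$; these are the independent dual matrices of the statement. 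Collecting the coefficient of the scalar $s_{\mathcal{I}}$ and using that $\tr(S\Omega)=0$ for symmetric $S$, so that only imaginary parts survive, the finiteness of the dual objective produces the scalar constraint \eqref{eq:KO1}; here the term $\frac12\tr(K^{(I_l)}\Omega_{|I_l|})$ is precisely the contribution of the fixed block $\frac{i}{2}\Omega_{|I_l|}$ through $\Lambda$, with $K^{(I_l)} = \im\Lambda[I_l]$ up to a sign fixed by the convention in \eqref{eq:tildeH}. After these stationarity conditions the only surviving term is $\tr(\Lambda\gamma) = \tr(M\gamma)$ (the imaginary part drops because $\gamma$ is symmetric), which is the dual objective.

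Third, I would read off the KKT conditions from complementary slackness together with Lemma~\ref{lmn:AB}. Slackness of each block inequality, at an index with $s^\star_{\mathcal{I}}>0$, gives $\tr[\Pi^{(I_i)}_{\mathcal{I}}\mathcal{P}(\gamma^{(I_i)\star})]=0$ and hence, by the lemma and an adjoint, the identity \eqref{eq:PK}; this is exactly the relation \eqref{eq:strKKT} of the symplectic-trace theorem for the submatrix $M^\star[I_i]$, so each $\gamma^{(I_i)\star}$ is the unique pure boundary state \eqref{eq:gammaM}. Slackness of the global inequality gives $\tr[\Lambda^\star(\gamma - t^\star\Gamma^\star)]=0$ with $\Gamma^\star = \sum_{\mathcal{I}} p^\star_{\mathcal{I}}\bigoplus_{i<l}\gamma^{(I_i)\star}\oplus\frac{i}{2}\Omega_{|I_l|}$, and Lemma~\ref{lmn:AB} upgrades this to a matrix identity; rewriting it with the equivalence \eqref{eq:SA} and an adjoint yields the conjugated form \eqref{eq:Lambdagamma}. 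Strong duality and $t^\star=\tr(M^\star\gamma)$ then follow from Slater's condition, since a strictly feasible primal point exists: physical CMs are strictly positive definite and strictly physical part CMs exist (a scaled vacuum has $\mathcal{P}\succ 0$), so a small enough $t$ makes the global inequality strict.

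The hard part will be the complex bookkeeping of the fixed block $\frac{i}{2}\Omega_{|I_l|}$, which is what distinguishes this theorem from the plain scaling test. Two points need care. First, justifying the \emph{equality} in \eqref{eq:KO1}: the finiteness condition only yields $\geqslant 1$, and one must argue tightness at the optimum. Using the symplectic-trace dual \eqref{eq:Mstr} one has $\frac12\tr(K^{(I)}\Omega_{|I|})\leqslant\str(M[I])$ over the feasible $K^{(I)}$, so the attainable values of the left-hand side of \eqref{eq:KO1} fill an interval whose upper end is $\sum_{I}\str(M[I])$, and tightness at $1$ is enforced by complementary slackness at indices with $s^\star_{\mathcal{I}}>0$. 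Second, the precise placement of the complex conjugate $\overline{\Lambda^\star}$ in \eqref{eq:Lambdagamma}, together with the sign of $K^{(I_l)}$, hinges on the real-form representation \eqref{eq:tildeH} and on the sign equivalence \eqref{eq:SA}; the linearizing substitution of the first step is the conceptual crux that makes the whole conic-duality machinery applicable, while this final sign tracking is where a careless calculation would go astray.
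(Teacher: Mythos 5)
Your proposal is correct and takes essentially the same route as the paper: the paper proves only Theorem~\ref{thrm:1} explicitly in the appendix and states that all remaining theorems, including this one, follow ``in a similar way, by writing the Lagrangian and simplifying it with matrix manipulations'' --- which is precisely your plan, down to the use of Lemma~\ref{lmn:AB} for the KKT conditions \eqref{eq:PK} and \eqref{eq:Lambdagamma}, the symplectic-trace dual for the normalization \eqref{eq:KO1}, and the real-to-complex correspondence \eqref{eq:tildeH} for the multiplier. Your explicit homogenization $s_{\mathcal{I}} = t\,p_{\mathcal{I}}$, $\delta^{(I_i)}_{\mathcal{I}} = t\,p_{\mathcal{I}}\,\gamma^{(I_i)}$, the scaling argument reducing the finiteness inequality to the stated equality, and your flagging of the sign convention behind $\overline{\Lambda^\star}$ merely spell out steps the paper leaves implicit.
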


As in the classical case \cite{PhysRevLett.86.3658}, the optimal value $t^\star$
is a concave function of $\gamma$, so the difference $1-t^\star$ is convex and
thus could be also called an entanglement measure. But now this value might
depend on how we arrange the parts inside the partitions. When we have a single
partition, $|\mathfrak{I}| = 1$, and the only partition $\mathcal{I} = \{I, J\}$
is a bipartition, we numerically verified that the optimal $t^\star$ value is
the same as in the classical case, irrespective of whether we replace the
variable $\gamma^{(I)}$ or $\gamma^{(J)}$ by a corresponding constant
$(i/2)\Omega$. From a computation efficiency perspective we should discard the
larger component of the bipartition. For example, when $n$ is large and $|I|=1$
and $|J|=n-1$, the difference in the performance of the optimization will be
significant.

\begin{figure*}
	\includegraphics[scale=0.73]{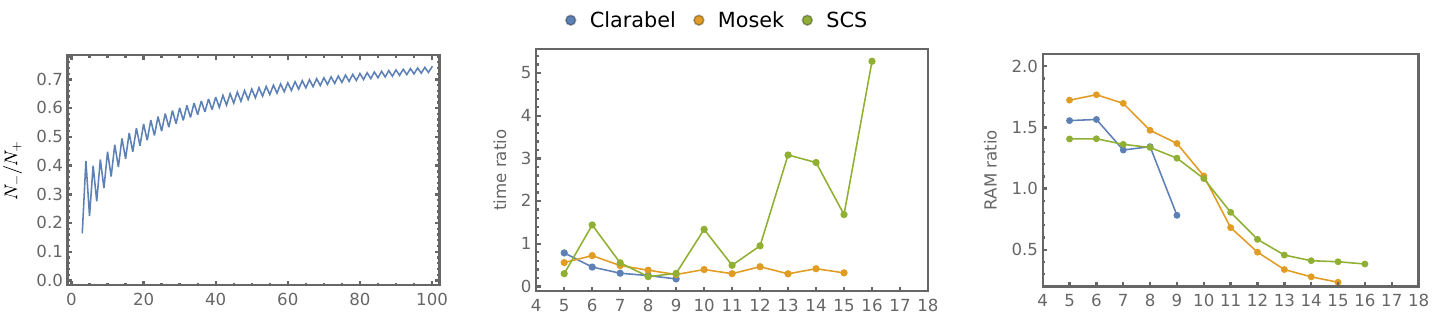}
	\caption{The ratio $N_-/N_+$ (left), the ratio of times
	to solve the new and the standard problems (middle) and the analogous ratio
	of RAMs (right) as a function of the number of modes.} \label{fig:N}
\end{figure*}

When there is more than one partition, $|\mathfrak{J}| > 1$, then we can discard
arbitrary components from $\mathcal{I} \in \mathfrak{I}$ individually. The most
optimal choice is when we discard the largest component. But the advantage will
diminish with $n$. We are most interested in genuine multipartite entanglement,
when $\mathfrak{I} = \mathfrak{I}_2$ is the set of all possible bipartitions of
$n$ modes. Let us compute the number of $K$ variables when we discard the larger
components (the most efficient way) with the same number when we discard the
smaller ones (the least efficient way). The former number is given by
\begin{equation}
	N_- = \sum^{\lfloor \frac{n}{2} \rfloor}_{k=1} \binom{n}{k} k(2k-1),
\end{equation}
and the latter is
\begin{equation}
	N_+ = \sum^{\lfloor \frac{n}{2} \rfloor}_{k=1} \binom{n}{k} (n-k)(2n-2k-1).
\end{equation}
These sums can be computed analytically:
\begin{equation}
	N_- = 
	\begin{cases}
		n^2 2^{n-2} - n(n-1) \binom{n-2}{\frac{n}{2}-1} & n\ \text{is even} \\
		n^2 2^{n-2} - n(2n-1) \binom{n-2}{\frac{n-1}{2}} & n\ \text{is odd} 
	\end{cases}
\end{equation}
and 
\begin{displaymath}
	N_+ = 
	\begin{cases}
		n^2 2^{n-2} + (3n-2)(n-1) \binom{n-2}{\frac{n}{2}-1} -n(2n-1) \\
		n^2 2^{n-2} + n(2n-1) \binom{n-2}{\frac{n-1}{2}}-n(2n-1)
	\end{cases}
\end{displaymath}
for even and odd $n$, respectively. As Fig.~\ref{fig:N} (left) shows, the ratio
$N_-/N_+$ grows and it can be shown that in the limit $n \to +\infty$ it tends
to 1. It means that for large values of $n$ it is irrelevant which components to
discard, but these large values of $n$ are currently intractable to deal with.
For the values considered in this work, $n < 20$, the new condition gives a
quite noticeable saving in RAM and also in time (but not always). Since the
refined condition is not necessary and sufficient when more than one partition
is involved, we do not study it further and do not specialize it for the reduced
case.

\section{Applications}\label{sec:V}

Here we first apply our methods to some experimental data and show the advantage
of our approach over another one, Ref.~\cite{PhysRevLett.114.050501}. That other
approach is based on genetic programming, a heuristic method that usually
produces suboptimal results. Our methods are based on convex optimization and
thus guaranteed to produce the optimal result with a certificate of optimality.
The computational resources needed to perform the genetic-optimization based
algorithms are not reported in that other work, so we cannot compare ours with
theirs. We also show that in some cases our optimization problems admit
analytical solutions.

\subsection{Experimental data}\label{sec:VIIA}

\begin{figure*}
	\includegraphics{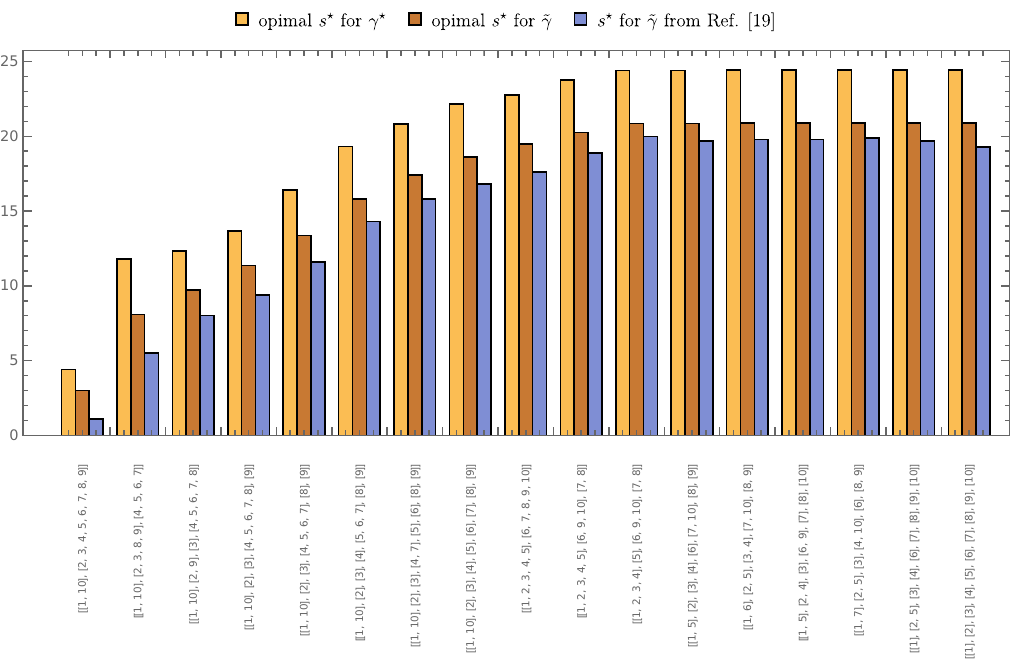}
	\caption{The optimal $s^\star$ of the optimization problem \eqref{eq:XPsigma} for the $10-$mode
	state of Ref.~\cite{PhysRevLett.114.050501}.}\label{fig:10}
\end{figure*}

To apply our approach to experimental measurements, we solve our optimization
problems for the data presented in Ref.~\cite{PhysRevLett.114.050501}. In that
work the physical CM was simply taken to be 
\begin{equation}
	\tilde{\gamma} = \gamma^\circ + 1.001 \lambda_{\mathrm{min}}(\mathcal{P}(\gamma^\circ))E_{2n}.
\end{equation}
The coefficient $1.001$ is just a number, slightly above $1$, to make the
minimal eigenvalue of $\mathcal{P}(\tilde{\gamma})$ slightly positive. For the
$4-$mode state some diagonal elements of $\tilde{\gamma}$ are more than
$10\sigma$ away from the measured averages, so $\tilde{\gamma}$ cannot be the
true physical CM. Therefore we choose not to fully explore this particular case.

Nevertheless, we apply the optimization problem of Theorem 6 to $\tilde{\gamma}$
for a few partitions of parts and compare the maximal $s^\star$ values with the
values obtained in Ref.~\cite{PhysRevLett.114.050501}. For the $4-$ and $6-$mode
states our results are pretty close, but for the $10-$mode state the advantage
of our approach can be seen in Fig.~\ref{fig:10} (orange vs blue). For
comparison we also show the optimal $s^\star$ for the most likely physical CM
$\gamma^\star$ given by Theorem 2 (yellow). Any value $s^\star > 3$ should be
enough to verify inseparability. As the first partition in Fig.~\ref{fig:10}
shows, our method clearly verifies entanglement with respect to the partition
$1,10|23456789$, while that of Ref.~\cite{PhysRevLett.114.050501} does not. The
optimal values for the most probable physical covariance matrix are even higher.

\begin{figure}
	\includegraphics{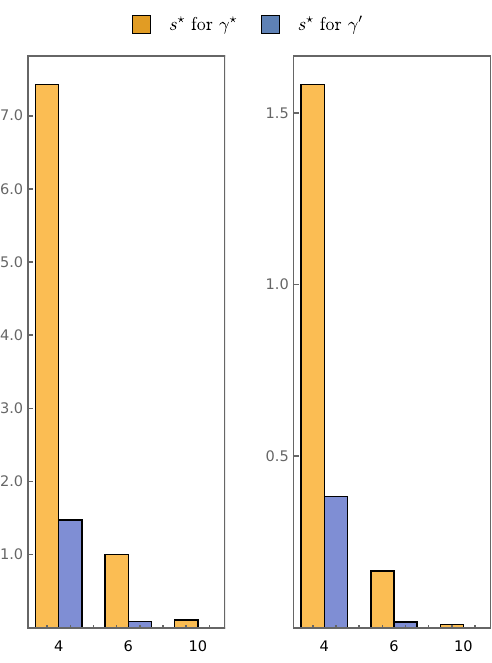}
	\caption{Optimal $s^\star$ for genuine multipartite entanglement test.}\label{fig:fig2}
\end{figure}

Testing individual partitions is not very useful if we can establish genuine
multipartite entanglement. This problem was not addressed in
Ref.~\cite{PhysRevLett.114.050501}, so we can compare our own results only. We
compare $s^\star$ for the naive physical CM $\tilde{\gamma}$ and the most
probable CM $\gamma^\star$. The results are shown in Fig.~\ref{fig:fig2}. The
left figure shows the solution given by Theorem 8 and the right figure shows the
solution given by Theorem 9. The states $\gamma^\star$ show larger values of
$s^\star$ than that of $\tilde{\gamma}$ and only the 4-partite state is verified
to be genuine multipartite entangled. The optimal solution for the 10-partite
state $\tilde{\gamma}$ is zero, which means this test is inconclusive in this
case. Here we see that $s^\star$ given by Theorems 8 and 9 can be significantly
different. In this case the latter seems to be a better measure of reliability
of the two, since it gives the largest $\sigma$-neighborhood that consists only
of entangled CMs. The former only measures the reliability of entanglement
verification of the given state.

\subsection{Bound entanglement}\label{sec:VIIB}

Here we show that in some cases the problems given by Theorems
\ref{thrm:additive} and \ref{thrm:scaling} can be solved analytically. The
method to solve them analytically relies on their numerical solution. Some
values in the solution can be identified, some can be seen to be equal (for
example, due to some symmetry in the state) or satisfy simple relations like
differing by a factor of 2. Any such observations obtained from the numerical
solution allow one to formulate the conjecture about the structure of the
analytical solution. This conjecture is then transformed to a system of
equations for the unknown parameters of the analytical solution and, if our
guess is correct, this system produces one or several analytical solutions. One
solution numerically agrees with the result produced by the solver, and this
solution is then the exact analytical solution of the problem. If our guess is
wrong, then the system would have no solution which agrees with the numerical
result or, more likely, it would have no solutions at all. The more relations we
can correctly guess from the numerical solution, the more likely that the
resulting system can be solved analytically. All the relations guessed from a
numerical solution form an overspecified system of equations, i.e. the system
with a larger number of equations than the number of variables. This
overspecification helps to verify that our conjecture on the analytical
structure of the solution is correct --- if such an overspecified system has a
solution, it is a strong indicator that our conjecture is correct.

We start to demonstrate this method on a well-known bound entangled state from
Ref.~\cite{PhysRevLett.86.3658} given by
\begin{equation}\label{eq:BBS}
	\gamma_{xx} = \frac{1}{2}
	\begin{pmatrix}
		2 & 0 & 1 & 0 \\
		0 & 2 & 0 & -1 \\
		1 & 0 & 2 & 0 \\
		0 & -1 & 0 & 2
	\end{pmatrix}, \
	\gamma_{pp} = \frac{1}{2}
	\begin{pmatrix}
		1 & 0 & 0 & -1 \\
		0 & 1 & -1 & 0 \\
		0 & -1 & 4 & 0 \\
		-1 & 0 & 0 & 4
	\end{pmatrix}.
\end{equation}
According to the condition \eqref{eq:pureReduced}, the Gaussian state with this
CM is a mixed state. It is easy to check that partial transposition of any two
modes is positive-semidefinite. We now apply Theorem~\ref{thrm:additive} and
show that the Gaussian state with this CM is $13|24$- and $14|23$-separable, but
$12|34$-entangled. Following the intentions of the authors of
Ref.~\cite{PhysRevLett.86.3658} to provide a fully analytical example without
relying on any numerical approximations, we present an analytical solution of
two of our optimization problems. The analytical solution is constructed from
the numerical one produced by the solver by guessing the exact form of the
solution and then verifying the conjecture by constructing the dual solution and
checking that they agree. The matrix elements of the optimal matrices obtained
numerically we denote without prime, and we use prime to denote their exact
analytical expressions.

\subsubsection{The $13|24$ partition, eigenvalue}

In this case the matrices $X^\star_{13|24}$ and $P^\star_{13|24}$ are easy to
recognize and they are given by
\begin{equation}\label{eq:XP1324}
\begin{split}
	X^\star_{13|24} &= \frac{1}{36}
	\begin{pmatrix}
		5 & 0 & -3 & 0 \\
		0 & 5 & 0 & 3 \\
		-3 & 0 & 2 & 0 \\
		0 & 3 & 0 & 2
	\end{pmatrix}, \\
	P^\star_{13|24} &= \frac{1}{36}
	\begin{pmatrix}
		10 & 0 & -1 & 3 \\
		0 & 10 & 3 & 1 \\
		-1 & 3 & 1 & 0 \\
		3 & 1 & 0 & 1
	\end{pmatrix}.
\end{split}
\end{equation}
Note that $\rank(X\star_{13|24}) = 4$, but $\rank(P\star_{13|24}) = 2$. The
matrices $Z^{13\star}$ and $Z^{24\star}$ are equally easy to determine as well,
\begin{equation}\label{eq:Z1324}
	Z^{13\star} = \frac{1}{36}
	\begin{pmatrix}
		7 & -1 \\
		-4 & 1
	\end{pmatrix}, \quad
	Z^{24\star} = \frac{1}{36}
	\begin{pmatrix}
		7 & 1 \\
		4 & 1
	\end{pmatrix}.
\end{equation}
The matrices satisfy the positivity condition $X^\star_{13|24}, P^\star_{13|24}
\succcurlyeq 0$ and the normalization condition
\begin{equation}
	\tr(X^\star_{13|24} + P^\star_{13|24}) = 1.
\end{equation}
In addition, we have that the matrices
\begin{displaymath}
\begin{split}
	\begin{pmatrix}
		X^\star_{13|24}[1, 3] & Z^{13\star} \\
		Z^{13\star\mathrm{T}} & P^\star_{13|24}[1, 3]
	\end{pmatrix} &= \frac{1}{36}
	\begin{pmatrix}
		5 & -3 & 7 & -1 \\
		-3 & 2 & -4 & 1 \\
		7 & -4 & 10 & -1 \\
		-1 & 1 & -1 & 1
	\end{pmatrix} \\
    \begin{pmatrix}
		X^\star_{13|24}[2, 4] & Z^{24\star} \\
		Z^{24\star\mathrm{T}} & P^\star_{13|24}[2, 4]
	\end{pmatrix} &= \frac{1}{36}
	\begin{pmatrix}
		5 & 3 & 7 & 1 \\
		3 & 2 & 4 & 1 \\
		7 & 4 & 10 & 1 \\
		1 & 1 & 1 & 1
	\end{pmatrix}
\end{split}
\end{displaymath}
are positive-semidefinite and satisfy the equalities
\begin{equation}
\begin{split}
	\str(X^\star_{13|24}[1, 3], P^\star_{13|24}[1, 3]) = \tr Z^{13\star}, \\
	\str(X^\star_{13|24}[2, 4], P^\star_{13|24}[2, 4]) = \tr Z^{24\star}.
\end{split}
\end{equation}
The primal objective reads as
\begin{equation}
	\tr(X^\star_{13|24}\gamma_{xx} + P^\star_{13|24}\gamma_{pp}) 
	- \tr Z^{13\star} - \tr Z^{24\star} = 0.
\end{equation}
To show that this value is really optimal, we need the dual solution.

The dual solution can also be recognized from the numbers produced by the solver
\begin{equation}\label{eq:gamma1324}
	\begin{alignedat}{2}
		&\gamma^{13\star}_{xx} = \frac{1}{2}
		\begin{pmatrix}
			2 & 1 \\
			1 & 2
		\end{pmatrix}, \quad
		&&\gamma^{13\star}_{pp} = \frac{1}{6}
		\begin{pmatrix}
			2 & -1 \\
			-1 & 2
		\end{pmatrix}, \\
		&\gamma^{24\star}_{xx} = \frac{1}{2}
		\begin{pmatrix}
			2 & -1 \\
			-1 & 2
		\end{pmatrix}, \quad
		&&\gamma^{24\star}_{pp} = \frac{1}{6}
		\begin{pmatrix}
			2 & 1 \\
			1 & 2
		\end{pmatrix}.
	\end{alignedat}
\end{equation}
It is easy to see that $\gamma_{xx} = \gamma^{13\star}_{xx} \oplus \gamma^{24\star}_{xx}$ and
\begin{equation}
	\gamma_{pp} \succcurlyeq
	\gamma^{13\star}_{pp} \oplus \gamma^{24\star}_{pp} = \frac{1}{6}
	\begin{pmatrix}
		2 & 0 & -1 & 0 \\
		0 & 2 & 0 & 1 \\
		-1 & 0 & 2 & 0 \\
		0 & 1 & 0 & 2
	\end{pmatrix}.
\end{equation}
It follows that the dual objective value, the minimal eigenvalue of the
differences, is also zero, which verifies the optimality of the solutions. One
can verify that the KKT conditions \eqref{eq:MMKKT} are satisfied
\begin{equation}\label{eq:KKT1324}
\begin{split}
	\mathcal{P}(\gamma^{13\star}_{xx}, \gamma^{13\star}_{pp})
	\begin{pmatrix}
		X^\star_{13|24}[1, 3] & -Z^{13\star} \\
		-Z^{13\star\mathrm{T}} & P^\star_{13|24}[1, 3]
	\end{pmatrix}
	&= 0, \\
	\mathcal{P}(\gamma^{24\star}_{xx}, \gamma^{24\star}_{pp})
	\begin{pmatrix}
		X^\star_{13|24}[2, 4] & -Z^{24\star} \\
		-Z^{24\star\mathrm{T}} & P^\star_{13|24}[2, 4]
	\end{pmatrix}
	&= 0, \\
	X^\star_{13|24}(\gamma_{xx} - \gamma^{13\star}_{xx} \oplus \gamma^{24\star}_{xx}) &= 0, \\
	P^\star_{13|24}(\gamma_{pp} - \gamma^{13\star}_{pp} \oplus \gamma^{24\star}_{pp}) &= 0.
\end{split}
\end{equation}
In terms of the entanglement measure this result can be simply stated as
\begin{equation}
	\mathcal{E}^+_{13|24}(\gamma_{xx}, \gamma_{pp}) = 0,
\end{equation}
and thus the CM given by Eq.~\eqref{eq:BBS} is $13|24$-separable.

\subsubsection{The $13|24$ partition, scaling}

The primal solution $\gamma^{13\star}_{xx}$, $\gamma^{13\star}_{pp}$,
$\gamma^{24\star}_{xx}$ and $\gamma^{24\star}_{pp}$ is given by
Eq.~\eqref{eq:gamma1324} and corresponds to the optimal value $t^\star = 1$. The
dual solution, the matrices $X^\star_{13|24}$ and $P^\star_{13|24}$ are given by
\begin{equation}
	\begin{split}
		X^\star_{13|24} = \frac{1}{16}
		\begin{pmatrix}
			5 & 0 & -3 & 0 \\
			0 & 5 & 0 & 3 \\
			-3 & 0 & 2 & 0 \\
			0 & 3 & 0 & 2
		\end{pmatrix}, \\
		P^\star_{13|24} = \frac{1}{16}
		\begin{pmatrix}
			10 & 0 & -1 & 3 \\
			0 & 10 & 3 & 1 \\
			-1 & 3 & 1 & 0 \\
			3 & 1 & 0 & 1
		\end{pmatrix}.
	\end{split}
\end{equation}
The matrices $Z^{13\star}$ and $Z^{24\star}$ read as
\begin{equation}
		Z^{13\star} = \frac{1}{16}
		\begin{pmatrix}
			7 & -1 \\
			-4 & 1
		\end{pmatrix}, \quad
		Z^{24\star} = \frac{1}{16}
		\begin{pmatrix}
			7 & 1 \\
			4 & 1
		\end{pmatrix}.
\end{equation}
These matrices just differ by a factor from the matrices \eqref{eq:XP1324} and
\eqref{eq:Z1324} of the eigenvalue version of the test. One can easily check
that 
\begin{displaymath}
	\tr Z^{13\star} + \tr Z^{24\star} = 1, 
	\quad \tr(X^\star_{13|24}\gamma_{xx} + P^\star_{13|24}\gamma_{pp}) = t^\star.
\end{displaymath}
The KKT conditions \eqref{eq:MMKKT} and \eqref{eq:KKTscaling2}, which in
this case read as \eqref{eq:KKT1324}, are obviously satisfied. In terms of the
measure $\mathcal{T}^+_{13|24}$ we have
\begin{equation}
	\mathcal{T}^+_{13|24}(\gamma_{xx}, \gamma_{pp}) = 0,
\end{equation}
which gives another demonstration that the state is $13|24$-separable.

\subsubsection{The $14|23$ partition, eigenvalue}

The matrices $X^\star_{14|23}$ and $P^\star_{14|23}$ are given by
\begin{equation}\label{eq:XP1423}
\begin{split}
	X^\star_{14|23} = \frac{1}{36}
	\begin{pmatrix}
		5 & 0 & -3 & 1 \\
		0 & 5 & 1 & 3 \\
		-3 & 1 & 2 & 0 \\
		1 & 3 & 0 & 2
	\end{pmatrix}, \\
	P^\star_{14|23} = \frac{1}{36}
	\begin{pmatrix}
		10 & 0 & 0 & 3 \\
		0 & 10 & 3 & 0 \\
		0 & 3 & 1 & 0 \\
		3 & 0 & 0 & 1
	\end{pmatrix}.
\end{split}
\end{equation}
Note that $\rank(X^\star_{14|23}) = 2$, but $\rank(P^\star_{14|23}) = 4$. The
matrices $Z^{14\star}$ and $Z^{23\star}$ read as
\begin{equation}\label{eq:Z1423}
	Z^{14\star} = \frac{1}{36}
	\begin{pmatrix}
		7 & 2 \\
		2 & 1
	\end{pmatrix}, \quad
	Z^{23\star} = \frac{1}{36}
	\begin{pmatrix}
		7 & 2 \\
		2 & 1
	\end{pmatrix}.
\end{equation}
It is easy to check that $X^\star_{14|23}, P^\star_{14|23} \succcurlyeq 0$ and
\begin{equation}
	\tr(X^\star_{14|23} + P^\star_{14|23}) = 1.
\end{equation}
In addition, we have 
\begin{displaymath}
\begin{split}
	\begin{pmatrix}
		X^\star_{14|23}[1, 4] & Z^{14\star} \\
		Z^{14\star\mathrm{T}} & P^\star_{14|23}[1, 4]
	\end{pmatrix} &= \frac{1}{36}
	\begin{pmatrix}
		5 & 1 & 7 & 2 \\
		1 & 2 & 2 & 1 \\
		7 & 2 & 10 & 3 \\
		2 & 1 & 3 & 1
	\end{pmatrix} \\
    \begin{pmatrix}
		X^\star_{14|23}[2, 3] & Z^{23\star} \\
		Z^{23\star\mathrm{T}} & P^\star_{14|23}[2, 3]
	\end{pmatrix} &= \frac{1}{36}
	\begin{pmatrix}
		5 & 1 & 7 & 2 \\
		1 & 2 & 2 & 1 \\
		7 & 2 & 10 & 3 \\
		2 & 1 & 3 & 1
	\end{pmatrix}
\end{split}
\end{displaymath}
are positive-semidefinite and
\begin{equation}
\begin{split}
	\str(X^\star_{14|23}[1, 4], P^\star_{14|23}[1, 4]) = \tr Z^{14\star}, \\
	\str(X^\star_{14|24}[2, 3], P^\star_{14|22}[2, 3]) = \tr Z^{23\star}.
\end{split}
\end{equation}
The primal objective reads as
\begin{equation}
	\tr(X^\star_{14|23}\gamma_{xx} + P^\star_{14|23}\gamma_{pp}) 
	- \tr Z^{14\star} - \tr Z^{23\star} = 0.
\end{equation}
To show that this value is really optimal, we need the dual solution.

The dual solution is given by:
\begin{equation}\label{eq:gamma1423}
\begin{alignedat}{2}
	&\gamma^{14\star}_{xx} = \frac{1}{6}
	\begin{pmatrix}
		4 & 1 \\
		1 & 1
	\end{pmatrix}, \quad
	&&\gamma^{14\star}_{pp} = \frac{1}{2}
	\begin{pmatrix}
		1 & -1 \\
		-1 & 4
	\end{pmatrix}, \\
	&\gamma^{23\star}_{xx} = \frac{1}{6}
	\begin{pmatrix}
		4 & 1 \\
		1 & 1
	\end{pmatrix}, \quad
	&&\gamma^{23\star}_{pp} = \frac{1}{2}
	\begin{pmatrix}
		1 & -1 \\
		-1 & 4
	\end{pmatrix}.
\end{alignedat}
\end{equation}
It is easy to see that $\gamma_{pp} = \gamma^{14\star}_{pp} \oplus
\gamma^{23\star}_{pp}$ and
\begin{equation}
	\gamma_{xx} \succcurlyeq
	\gamma^{14\star}_{xx} \oplus \gamma^{23\star}_{xx} = \frac{1}{6}
	\begin{pmatrix}
		4 & 0 & 0 & 1 \\
		0 & 4 & 1 & 0 \\
		0 & 1 & 1 & 0 \\
		1 & 0 & 0 & 1
	\end{pmatrix}.
\end{equation}
It follows that the dual objective value, the minimal eigenvalue of the
differences, is also zero, which verifies the optimality of the solutions. One
can verify that the KKT conditions \eqref{eq:MMKKT} are satisfied
\begin{equation}\label{eq:KKT1423}
\begin{split}
	\mathcal{P}(\gamma^{14\star}_{xx}, \gamma^{14\star}_{pp})
	\begin{pmatrix}
		X^\star_{14|23}[1, 4] & -Z^{14\star} \\
		-Z^{14\star\mathrm{T}} & P^\star_{14|23}[1, 4]
	\end{pmatrix}
	&= 0, \\
	\mathcal{P}(\gamma^{23\star}_{xx}, \gamma^{23\star}_{pp})
	\begin{pmatrix}
		X^\star_{14|23}[2, 3] & -Z^{23\star} \\
		-Z^{23\star\mathrm{T}} & P^\star_{14|23}[2, 3]
	\end{pmatrix}
	&= 0, \\
	X^\star_{14|23}(\gamma_{xx} - \gamma^{14\star}_{xx} \oplus \gamma^{23\star}_{xx}) &= 0, \\
	P^\star_{14|23}(\gamma_{pp} - \gamma^{14\star}_{pp} \oplus \gamma^{23\star}_{pp}) &= 0.
\end{split}
\end{equation}
In terms of the entanglement measure this result can be simply stated as
\begin{equation}
	\mathcal{E}^+_{14|23}(\gamma_{xx}, \gamma_{pp}) = 0,
\end{equation}
and thus the CM given by Eq.~\eqref{eq:BBS} is $14|23$-separable.

\subsubsection{The $14|23$ partition, scaling}

The primal solution $\gamma^{14\star}_{xx}$, $\gamma^{14\star}_{pp}$,
$\gamma^{23\star}_{xx}$ and $\gamma^{23\star}_{pp}$ is given by
Eq.~\eqref{eq:gamma1423} and corresponds to the optimal value $t^\star = 1$. The
dual solution, the matrices $X^\star_{14|23}$ and $P^\star_{14|23}$ are given by
\begin{equation}
	\begin{split}
		X^\star_{14|23} = \frac{1}{16}
		\begin{pmatrix}
			5 & 0 & -3 & 1 \\
			0 & 5 & 1 & 3 \\
			-3 & 1 & 2 & 0 \\
			1 & 3 & 0 & 2
		\end{pmatrix}, \\
		P^\star_{14|23} = \frac{1}{16}
		\begin{pmatrix}
			10 & 0 & 0 & 3 \\
			0 & 10 & 3 & 0 \\
			0 & 3 & 1 & 0 \\
			3 & 0 & 0 & 1
		\end{pmatrix}.
	\end{split}
\end{equation}
The matrices $Z^{14\star}$ and $Z^{23\star}$ read as
\begin{equation}
	Z^{14\star} = \frac{1}{16}
	\begin{pmatrix}
		7 & 2 \\
		2 & 1
	\end{pmatrix}, \quad
	Z^{23\star} = \frac{1}{16}
	\begin{pmatrix}
		7 & 2 \\
		2 & 1
	\end{pmatrix}.
\end{equation}
These matrices just differ by a factor from the matrices \eqref{eq:XP1423} and
\eqref{eq:Z1423} of the eigenvalue version of the test. One can easily check
that 
\begin{displaymath}
	\tr Z^{14\star} + \tr Z^{23\star} = 1, 
	\quad \tr(X^\star_{14|23}\gamma_{xx} + P^\star_{14|23}\gamma_{pp}) = t^\star.
\end{displaymath}
The KKT conditions \eqref{eq:MMKKT} and \eqref{eq:KKTscaling2}, which in
this case read as \eqref{eq:KKT1423}, are obviously satisfied. In terms of the
measure $\mathcal{T}^+_{14|23}$ we have
\begin{equation}
	\mathcal{T}^+_{14|23}(\gamma_{xx}, \gamma_{pp}) = 0,
\end{equation}
which gives another demonstration that the state is $14|23$-separable.

\subsubsection{The $12|34$ partition, eigenvalue}

The matrices $X^\star_{12|34}$ and $P^\star_{12|34}$ are given by
\begin{equation}\label{eq:BBSXP2}
	X^\star_{12|34} = 
	\begin{pmatrix}
		x & 0 & -z & 0 \\
		0 & x & 0 & z \\
		-z & 0 & y & 0 \\
		0 & z & 0 & y
	\end{pmatrix}, \quad
	P^\star_{12|34} = 
	\begin{pmatrix}
		p & 0 & 0 & r \\
		0 & p & r & 0 \\
		0 & r & q & 0 \\
		r & 0 & 0 & q
	\end{pmatrix},
\end{equation}
where the values of the parameters are given by
\begin{equation}\label{eq:xypq}
\begin{alignedat}{3}
	x &= 0.1359..., &\quad y &= 0.0706..., &\quad z &= 0.0979..., \\ 
	p &= 0.2578..., &\quad q &= 0.0356..., &\quad r &= 0.0958... .
\end{alignedat}
\end{equation}
The normalization condition is given by
\begin{equation}\label{eq:be-kkt}
	\tr(X^\star_{12|34} + P^\star_{12|34}) = 2(x + y + p + q) = 1.
\end{equation}
The matrices $Z^{12\star}$ and $Z^{34\star}$ read as
\begin{equation}
	Z^{12\star} = u
	\begin{pmatrix}
		1 & 0 \\
		0 & 1
	\end{pmatrix}, \quad
	Z^{34\star} = v
	\begin{pmatrix}
		1 & 0 \\
		0 & 1
	\end{pmatrix},
\end{equation}
where $u = 0.1872...$ and $v = 0.0501...$. This time the numbers $x$, $y$, $z$,
$p$, $q$ and $r$ are not so easy to identify. 

Both matrices
\begin{equation}
	\begin{pmatrix}
		X^\star[1, 2] & Z^{12\star} \\
		Z^{12\star\mathrm{T}} & P^\star[1, 2]
	\end{pmatrix} = 
	\begin{pmatrix}
		x & 0 & u & 0 \\
		0 & x & 0 & u \\
		u & 0 & p & 0 \\
		0 & u & 0 & p
	\end{pmatrix}
\end{equation}
and 
\begin{equation}
	\begin{pmatrix}
		X^\star[3, 4] & Z^{34\star} \\
		Z^{34\star\mathrm{T}} & P^\star[3, 4]
	\end{pmatrix} = 
	\begin{pmatrix}
		y & 0 & v & 0 \\
		0 & y & 0 & v \\
		v & 0 & q & 0 \\
		0 & v & 0 & q
	\end{pmatrix}
\end{equation}
should have two zero eigenvalues, so we have the equations
\begin{equation}\label{eq:be-kkt-1}
	x p = u^2, \quad y q = v^2.
\end{equation}
There are only two equations, not four, since the eigenvalues have multiplicity
2.

The dual solution produced by the solver is
\begin{equation}\label{eq:BBSgamma}
\begin{split}
	\gamma^{12}_{xx} = \frac{1}{2}
	\begin{pmatrix}
		a & 0 \\
		0 & a
	\end{pmatrix}, \quad
	\gamma^{12}_{pp} = \frac{1}{2}
	\begin{pmatrix}
		\frac{1}{a} & 0 \\
		0 & \frac{1}{a}
	\end{pmatrix}, \\
	\gamma^{34}_{xx} = \frac{1}{2}
	\begin{pmatrix}
		b & 0 \\
		0 & b
	\end{pmatrix}, \quad
	\gamma^{34}_{pp} = \frac{1}{2}
	\begin{pmatrix}
		\frac{1}{b} & 0 \\
		0 & \frac{1}{b}
	\end{pmatrix},
\end{split}
\end{equation}
where $a$ and $b$ read as
\begin{equation}\label{eq:BBSab}
	a = 1.37716..., \quad b = 0.71033... .
\end{equation}
The KKT conditions \eqref{eq:MMKKT}, which read as
\begin{equation}
	\begin{split}
		\mathcal{P}(\gamma^{12\star}_{xx}, \gamma^{12\star}_{pp})
		\begin{pmatrix}
			X^\star_{12|34}[1, 2] & -Z^{12\star} \\
			-Z^{12\star\mathrm{T}} & P^\star_{12|34}[1, 2]
		\end{pmatrix} 
		&= 0, \\
		\mathcal{P}(\gamma^{34\star}_{xx}, \gamma^{34\star}_{pp})
		\begin{pmatrix}
			X^\star_{12|34}[3, 4] & -Z^{34\star} \\
			-Z^{34\star\mathrm{T}} & P^\star_{12|34}[3, 4]
		\end{pmatrix} 
		&= 0,
	\end{split}
\end{equation}
produce the following system of four equations:
\begin{equation}\label{eq:be-kkt-2}
	u = a x, \quad p = a u, \quad v = b y, \quad q = b v.
\end{equation}
The KKT conditions \eqref{eq:XXKKT}, which read as
\begin{equation}
	\begin{split}
		X^\star_{12|34}(\gamma_{xx} - \gamma^{12\star}_{xx} \oplus \gamma^{34\star}_{xx}) 
		&= \lambda_{\mathrm{min}} X^\star_{12|34}, \\
		P^\star_{12|34}(\gamma_{pp} - \gamma^{12\star}_{pp} \oplus \gamma^{34\star}_{pp}) 
		&= \lambda_{\mathrm{min}} P^\star_{12|34},
	\end{split}
\end{equation}
where 
\begin{displaymath}
	\begin{split}
		\lambda_{\mathrm{min}} &= \tr(X^\star_{12|34}\gamma_{xx} + P^\star_{12|34}\gamma_{pp}) - 
		\tr(Z^{12\star} + Z^{34\star}) \\
		&= 1 - p + 2 (q - r - u - v - z),
	\end{split}
\end{displaymath}
produce the following system of eight equations:
\begin{equation}\label{eq:be-kkt-3}
\begin{split}
	4x(z+u+v-q+r) &= x(a-2p)+z, \\
	4y(z+u+v-q+r) &= y(b-2p)+z, \\
	4z(z+u+v-q+r) &= z(b-2p)+x, \\
	4z(z+u+v-q+r) &= z(a-2p)+y, \\
	4p(z+u+v-q+r) &= p(1/a+1-2p)+r, \\
	4q(z+u+v-q+r) &= q(1/b-2-2p)+r, \\
	4r(z+u+v-q+r) &= r(1/b-2-2p)+p, \\
	4r(z+u+v-q+r) &= r(1/a+1-2p)+q.
\end{split}
\end{equation}
The equations given by Eqs.~\eqref{eq:be-kkt}, \eqref{eq:be-kkt-1},
\eqref{eq:be-kkt-2} and \eqref{eq:be-kkt-3} form a system of 15 equations for 10
variables $x$, $y$, $z$, $p$, $q$, $r$, $u$, $v$, $a$ and $b$. It is easy to
derive from Eq.~\eqref{eq:be-kkt-2} that 
\begin{equation}\label{eq:uxp}
	u^\star = \sqrt{x^\star p^\star}, \quad v^\star = \sqrt{y^\star q^\star}.
\end{equation}
These equalities are equivalent to the relations
\begin{equation}
\begin{split}
	\str(X^\star_{12|34}[1, 2], P^\star_{12|34}[1, 2]) &= \tr(Z^{12\star}), \\
	\str(X^\star_{12|34}[3, 4], P^\star_{12|34}[3, 4]) &= \tr(Z^{34\star}).
\end{split}
\end{equation}
Less obvious equalities are given by
\begin{equation}\label{eq:zxy}
	z^\star = \sqrt{x^\star y^\star}, \quad r^\star = \sqrt{p^\star q^\star}.
\end{equation}
The simplest way to derive these equalities is to compute the minimal
polynomials (say, in the variable $\xi$) of the differences of the right- and
left-hand sides. The minimal polynomials turn out to be just $\xi$, which means
that these equalities hold true. The minimal polynomials of the other optimal
values are given by:
\begin{displaymath}
	\begin{split}
		541504 x^{4} - 270752 x^{3} + 38408 x^{2} - 1762 x + 25 &= 0, \\
		67688 y^{4} - 33844 y^{3} + 5362 y^{2} - 347 y + 8 &= 0, \\
		33844 p^{4} - 16922 p^{3} + 2486 p^{2} - 100 p + 1 &= 0, \\
		541504 q^{4} - 270752 q^{3} + 30008 q^{2} - 778 q + 1 &= 0, \\
		5a^4 + 11a^3 - 8 a^2 - 20 a - 4 &= 0, \\
        8b^4 + 2b^3 - 26 b^2 + 16 b - 1 &= 0.
	\end{split}
\end{displaymath}
It is possible to obtain the optimal values explicitly
\begin{widetext}
	\begin{equation}
	\begin{alignedat}{2}
		x^\star &= \frac{1}{8}\left(1+x_0-\sqrt{\frac{6179}{8461} - x^2_0 + \frac{1242}{8461 x_0}}\right), \quad
		&x^2_0 &= \frac{6179+4\sqrt{3225478}
			\cos\left(\frac{1}{3}\arccos\left(\frac{22836699733}{12901912\sqrt{3225478}}\right)\right)}{25383} \\
		y^\star &= \frac{1}{8}\left(1-y_0-\sqrt{\frac{3935}{8461} - y^2_0 - \frac{1026}{8461 y_0}}\right), \quad
		&y^2_0 &= \frac{3935+16\sqrt{1093}
			\cos\left(\frac{1}{3}\arccos\left(\frac{121301}{8744\sqrt{1093}}\right)\right)}{25383}, \\
		p^\star &= \frac{1}{8}\left(1+p_0+\sqrt{\frac{5495}{8461} - p^2_0 + \frac{234}{8461 p_0}}\right), \quad 
		&p^2_0 &= \frac{5495+16\sqrt{377431}
			\cos\left(\frac{1}{3}\arccos\left(\frac{919571683}{1509724\sqrt{377431}}\right)\right)}{25383}, \\
		q^\star &= \frac{1}{8}\left(1-q_0+\sqrt{\frac{10379}{8461} - q^2_0 - \frac{3474}{8461 q_0}}\right), \quad
		&q^2_0 &= \frac{10379+4\sqrt{4297546}
			\cos\left(\frac{1}{3}\arccos\left(\frac{26612573953}{17190184\sqrt{4297546}}\right)\right)}{25383} \\
		a^\star &= \frac{1}{20}\left(-11 + a_0 + \sqrt{683 - a^2_0 + \frac{1818}{a_0}}\right), \quad
        &a^2_0 &= \frac{683}{3} + \frac{880}{3}\cos\left[\frac{1}{3}\arccos\left(\frac{3137}{5324}\right)\right] \\
		b^\star &= \frac{1}{16}\left(-1 + b_0 - \sqrt{419 - b^2_0 - \frac{2466}{b_0}}\right), \quad
		&b^2_0 &= \frac{419}{3} + \frac{352}{3}\cos\left[\frac{1}{3}\arccos\left(\frac{3137}{5324}\right)\right],
	\end{alignedat}
	\end{equation}
\end{widetext}
where $x_0$, $y_0$, $p_0$, $q_0$ and $\nu_0$ denote the positive root of the
expression of the right-hand side. Their numerical values coincide with the
values \eqref{eq:xypq} and \eqref{eq:BBSab} produced by the solver. From
Eq.~\eqref{eq:be-kkt-2} it follows that
\begin{equation}\label{eq:abid}
	a^\star = \sqrt{\frac{p^\star}{x^\star}}, \quad 
	b^\star = \sqrt{\frac{q^\star}{y^\star}},
\end{equation}
and just by looking at the explicit expressions above it seems like a very
difficult task to validate these relations. Computing the minimal polynomials of
the differences quickly verifies that these identities are valid.

We now derive the optimal solution in another way. From the equalities
\eqref{eq:uxp} and \eqref{eq:zxy} it follows that the primal objective value is
of the form
\begin{displaymath}
	\lambda_{\mathrm{min}} = 
	1 + 2q - p -2(\sqrt{x}+\sqrt{q})(\sqrt{y}+\sqrt{p}).
\end{displaymath}
We show that the numbers $x^\star$, $y^\star$, $p^\star$ and $q^\star$ are a
minimizing point of this function, i.e. they are a solution of the following
minimization problem:
\begin{equation}
	\min_{x, y, p, q} 1 + 2q - p - 2(\sqrt{x}+\sqrt{q})(\sqrt{y}+\sqrt{p}),
\end{equation}
subject to $x+y+p+q=1/2$. To verify this conjecture, we find an analytical
solution of this problem. The Lagrangian is given by
\begin{equation}
\begin{split}
	\mathcal{L} &= 1 + 2q - p - 2(\sqrt{x}+\sqrt{q})(\sqrt{y}+\sqrt{p}) \\
	&+ \nu\left(x+y+p+q-\frac{1}{2}\right).
\end{split}
\end{equation}
The KKT conditions produce a system of 5 equations for 5 variables:
\begin{displaymath}
\begin{split}
		&\frac{\sqrt{y} + \sqrt{p}}{\sqrt{x}} = \nu, \quad
		\frac{\sqrt{x} + \sqrt{q}}{\sqrt{y}} = \nu, \quad
		\frac{\sqrt{x} + \sqrt{q}}{\sqrt{p}} + 1 = \nu, \\
		&\frac{\sqrt{y} + \sqrt{p}}{\sqrt{q}} - 2 = \nu, \quad
		x + y + p + q - \frac{1}{2} = 0.
\end{split}
\end{displaymath}
To verify that the numbers $x^\star$, $y^\star$, $p^\star$, $q^\star$ do satisfy
this system, we just need to substitute them into it and check that the
resulting $\nu^\star$ is the same in the first four equations. This can be done
by computing the minimal polynomials of the differences of the left-hand sides.
An explicit expression for $\nu^\star$ reads as
\begin{equation}
\begin{split}
	\nu^\star &= \frac{1}{4}\left(-1+2\nu_0+2\sqrt{\frac{51}{4} - \nu^2_0 - \frac{9}{4 \nu_0}}\right), \\
	\nu^2_0 &= \frac{17}{4}+2\sqrt{\frac{22}{3}}
		\cos\left(\frac{1}{3}\arccos\left(\frac{27}{22}\sqrt{\frac{3}{22}}\right)\right).
\end{split}
\end{equation}
The optimal objective value $\lambda^\star_{\mathrm{min}}$ is given by
\begin{equation}\label{eq:lambdamin}
\begin{split}
	\lambda^\star_{\mathrm{min}} &= 
	\frac{1}{8}\left(9 - \lambda_0 - \sqrt{51 - \lambda^2_0 - \frac{18}{\lambda_0}}\right), \\
	\lambda^2_0 &= 17 + 
	8\sqrt{\frac{22}{3}}\cos\left[\frac{1}{3}\arccos\left(\frac{27}{22}\sqrt{\frac{3}{22}}\right)\right].
\end{split}
\end{equation}
The key point to efficiently work with such numbers is representing them in an
explicit algebraic form, for example, we have 
\begin{displaymath}
	\cos\left[\frac{1}{3}\arccos\left(\frac{27}{22}\sqrt{\frac{3}{22}}\right)\right] = 
	\frac{\re\sqrt[3]{81+i\sqrt{25383}}}{\sqrt[6]{3}\sqrt{22}}.
\end{displaymath}
In fact, this is how all starred quantities are dealt with here. The minimal
polynomial of this optimal value is
\begin{equation}
	8\lambda^4 - 36 \lambda^3 + 48 \lambda^2 - 18 \lambda - 1 = 0.
\end{equation}
Further simplification of the expression \eqref{eq:lambdamin} does not seem to
be possible. This lengthy construction shows that 
\begin{equation}
	\mathcal{E}^+_{12|34}(\gamma) = -\lambda^\star_{\mathrm{min}} = 0.0489333...,
\end{equation}
and thus the CM given by Eq.~\eqref{eq:BBS} is $12|34$-entangled. Note that the
matrices $X^\star_{12|34}$ and $P^\star_{12|34}$ are degenerate, but of the same
rank, $\rank(X^\star_{12|34}) = \rank(P^\star_{12|34}) = 2$.

\subsubsection{The $12|34$ partition, scaling}

The primal solution produced by the solver is easily identified,
\begin{equation}\label{eq:gamma1234}
	\begin{alignedat}{2}
		&\gamma^{12\star}_{xx} = \frac{1}{\sqrt{2}}
		\begin{pmatrix}
			1 & 0 \\
			0 & 1
		\end{pmatrix}, \quad
		&&\gamma^{12\star}_{pp} = \frac{1}{2\sqrt{2}}
		\begin{pmatrix}
			1 & 0 \\
			0 & 1
		\end{pmatrix}, \\
		&\gamma^{34\star}_{xx} = \frac{1}{2\sqrt{2}}
		\begin{pmatrix}
			1 & 0 \\
			0 & 1
		\end{pmatrix}, \quad
		&&\gamma^{34\star}_{pp} = \frac{1}{\sqrt{2}}
		\begin{pmatrix}
			1 & 0 \\
			0 & 1
		\end{pmatrix}.
	\end{alignedat}
\end{equation}
The analytical expression of the optimal value $t^\star$ can be found from the
requirement that the minimums of the eigenvalues of the differences
\begin{equation}
	\gamma_{xx} - t^\star \gamma^{12\star}_{xx} \oplus \gamma^{34\star}_{xx}, \quad
	\gamma_{pp} - t^\star \gamma^{12\star}_{pp} \oplus \gamma^{34\star}_{pp}
\end{equation}
are zero. Equating the minimal eigenvalues of the differences to zero, we derive
that $t^\star$ is the minimum of the solutions of the following two equations:
\begin{equation}
\begin{split}
	8 - 3\sqrt{2}t^\star - \sqrt{16 + t^{\star 2}} &= 0, \\
	10 - 3\sqrt{2}t^\star - \sqrt{52 - 12\sqrt{2}t^\star + 2t^{\star 2}} &= 0.
\end{split}
\end{equation}
It turns out that the roots of these equations coincide:
\begin{equation}\label{eq:tstar}
	t^\star = \frac{3\sqrt{2} - \sqrt{6}}{2} = \sqrt{6 - 3\sqrt{3}}.  
\end{equation}

We could construct and solve the system of KKT equations to find the dual
solution, but in this case the matrices $X^\star_{12|34}$ and $P^\star_{12|34}$
are easily recognizable to be in the form
\begin{equation}
	\begin{split}
		X^\star_{12|34} = 
		\begin{pmatrix}
			u & 0 & -\sqrt{2uv} & 0 \\
			0 & u & 0 & \sqrt{2uv} \\
			-\sqrt{2uv} & 0 & 2v & 0 \\
			0 & \sqrt{2uv} & 0 & 2v
		\end{pmatrix}, \\
		P^\star_{12|34} = 
		\begin{pmatrix}
			2u & 0 & 0 & \sqrt{2uv} \\
			0 & 2u & \sqrt{2uv} & 0 \\
			0 & \sqrt{2uv} & v & 0 \\
			\sqrt{2uv} & 0 & 0 & v
		\end{pmatrix},
	\end{split}
\end{equation}
where the numerical values of $u$ and $v$ are
\begin{equation}
	u = 0.2788..., \quad v = 0.0747... .
\end{equation}
The matrices $Z^{12\star}$ and $Z^{34\star}$ read as
\begin{equation}
	Z^{12\star} = \sqrt{2}u
	\begin{pmatrix}
		1 & 0 \\
		0 & 1
	\end{pmatrix}, \quad
	Z^{34\star} = \sqrt{2}v
	\begin{pmatrix}
		1 & 0 \\
		0 & 1
	\end{pmatrix}.
\end{equation}
Our conjecture is that the exact values of $u$ and $v$ are the solution
$u^\star$ and $v^\star$ of the following optimization problem:
\begin{equation}
\begin{split}
	&\min_{\tr(Z^{12\star} + Z^{34\star})=1} \tr(X^\star_{12|34}\gamma_{xx} + P^\star_{12|34}\gamma_{pp}) \\
	&= \min_{u+v=1/(2\sqrt{2})} 4(u - \sqrt{2}\sqrt{uv} + 2v).
\end{split}
\end{equation}
This problem can be solved much easier than a similar problem in the eigenvalue
case, and the solution reads as
\begin{equation}
	u^\star = \frac{3\sqrt{2} + \sqrt{6}}{24}, \quad
	v^\star = \frac{3\sqrt{2} - \sqrt{6}}{24}.
\end{equation}
The optimal value is exactly the value \eqref{eq:tstar}. The normalization
condition
\begin{equation}
	\tr Z^{12\star} + \tr Z^{34\star} = 1,
\end{equation}
and the KKT conditions \eqref{eq:MMKKT} and \eqref{eq:KKTscaling2} are
satisfied:
\begin{equation}
	\begin{split}
		\mathcal{P}(\gamma^{12\star}_{xx}, \gamma^{12\star}_{pp})
		\begin{pmatrix}
			X^\star_{12|34}[1, 2] & -Z^{12\star} \\
			-Z^{12\star\mathrm{T}} & P^\star_{12|34}[1, 2]
		\end{pmatrix} 
		&= 0, \\
		\mathcal{P}(\gamma^{34\star}_{xx}, \gamma^{34\star}_{pp})
		\begin{pmatrix}
			X^\star_{12|34}[3, 4] & -Z^{34\star} \\
			-Z^{34\star\mathrm{T}} & P^\star_{12|34}[3, 4]
		\end{pmatrix} 
		&= 0, \\
		X^\star_{12|34}(\gamma_{xx} - t^\star \gamma^{12\star}_{xx} \oplus \gamma^{34\star}_{xx}) &= 0, \\
		P^\star_{12|34}(\gamma_{pp} - t^\star \gamma^{12\star}_{pp} \oplus \gamma^{34\star}_{pp}) &= 0.
	\end{split}
\end{equation}
In terms of the measure $\mathcal{T}^+_{12|34}$ we have
\begin{displaymath}
	\mathcal{T}^+_{12|34}(\gamma_{xx}, \gamma_{pp}) = 1 - t^\star 
	= 1 - \sqrt{6 - 3\sqrt{3}} = 0.1034...,
\end{displaymath}
which gives another demonstration that the state is $12|34$-entangled. This
number was obtained in Ref.~\cite{NewJPhys.8.51} in a purely numerical way, here
we derived an analytical expression for this value. Note that the matrices
$X^\star_{12|34}$ and $P^\star_{12|34}$ are degenerate, but of the same rank,
$\rank(X^\star_{12|34}) = \rank(P^\star_{12|34}) = 2$.

\subsubsection{The $12|34$ partition, refined condition}

Here we solve the full problem, not only the reduced one as we did up to now.
The primal solution $\gamma^{12\star}$ produced by the solver is
\begin{equation}
	\gamma^{12\star} = \frac{1}{2\sqrt{2}}
	\begin{pmatrix}
		2 & 0 & 0 & 0 \\
		0 & 2 & 0 & 0 \\
		0 & 0 & 1 & 0 \\
		0 & 0 & 0 & 1
	\end{pmatrix}.
\end{equation}
The dual solution, the matrix $M$ is of the form
\begin{equation}
	M^\star = 
	\begin{pmatrix}
		x & 0 & -y & 0 & 0 & 0 & 0 & 0 \\
		0 & x & 0 & y & 0 & 0 & 0 & 0 \\
		-y & 0 & 2z & 0 & 0 & 0 & 0 & 0 \\
		0 & y & 0 & 2z & 0 & 0 & 0 & 0 \\
		0 & 0 & 0 & 0 & 2x & 0 & 0 & y \\
		0 & 0 & 0 & 0 & 0 & 2x & y & 0 \\
		0 & 0 & 0 & 0 & 0 & y & z & 0 \\
		0 & 0 & 0 & 0 & y & 0 & 0 & z
	\end{pmatrix},
\end{equation}
and matrix $K$ is of the form
\begin{equation}
	K^\star = 
	\begin{pmatrix}
		0 & 0 & 0 & 0 & 0 & u & v & 0 \\ 
		0 & 0 & 0 & 0 & -u & 0 & 0 & -v \\
		0 & 0 & 0 & 0 & 0 & -2v & -w & 0 \\
		0 & 0 & 0 & 0 & -2v & 0 & 0 & -w \\
		0 & u & 0 & 2v & 0 & 0 & 0 & 0 \\
		-u & 0 & 2v & 0 & 0 & 0 & 0 & 0 \\
		-v & 0 & w & 0 & 0 & 0 & 0 & 0 \\
		0 & v & 0 & w & 0 & 0 & 0 & 0
	\end{pmatrix}.
\end{equation}
The matrix $K^{12\star}$ is of the form
\begin{equation}
	K^{12\star} = 
	\begin{pmatrix}
		0 & 0 & -u & 0 \\
		0 & 0 & 0 & -u \\
		u & 0 & 0 & 0 \\
		0 & u & 0 & 0
	\end{pmatrix}.
\end{equation}
The numerical values of the parameters are given by
\begin{equation}\label{eq:params}
	\begin{alignedat}{3}
		x &= 0.2788..., &\quad y &= 0.2041..., &\quad z &= 0.0747..., \\ 
		u &= 0.3943..., &\quad v &= 0.1443..., &\quad w &= 0.1056... .
	\end{alignedat}
\end{equation}
The normalization condition \eqref{eq:KO1} reads as
\begin{equation}
	\frac{1}{2} \tr(K^{12\star}\Omega_2) + 
	\frac{1}{2} \tr(K^\star[3] \Omega_2) = 1,
\end{equation}
where, as in Sec.~\ref{sec:IVA}, in the full case we denote by $K[I]$ the
submatrix of $K$ with row and column indices in the set $I \cup (I+n)$. In this
case the full notation would be $K^\star[3, 7]$. This condition produces
the equation
\begin{equation}\label{eq:new-1}
	2(u+w) = 1.
\end{equation}
The KKT condition \eqref{eq:PK} is
\begin{equation}
	\mathcal{P}(\gamma^{12\star})(M^\star[1, 2] + i K^{12\star}) = 0.
\end{equation}
According to the note above, the full notation for $M[1, 2]$ would be $M[1, 2,
5, 6]$. This condition produces a single equation
\begin{equation}\label{eq:new-2}
	u = \sqrt{2}x.
\end{equation}
The  KKT condition \eqref{eq:Lambdagamma} reads as
\begin{equation}
	(M^\star - i K^\star)(\gamma - t \gamma^{12\star} \oplus \frac{i}{2}\Omega_2) = 0,
\end{equation}
where 
\begin{equation}
	t = \tr(M^\star \gamma) = 4(x - y + 2z).
\end{equation}
This condition produces a system of equations, which contains 
\begin{equation}
\begin{split}
	x - 2 y + 4 v (x - y + 2 z) &= 0, \\
	v + 2 w - 4 z (x - y + 2 z) &= 0, \\
	v + u - 2\sqrt{2}u(x - y + 2 z) & = 0,
\end{split}
\end{equation}
and several other similar equations. Combining these equations with
Eqs.~\eqref{eq:new-1} and \eqref{eq:new-2} and solving the resulting system, we
obtain two solutions:
\begin{displaymath}
\begin{alignedat}{3}
	x^\star &= \frac{3\sqrt{2} \pm \sqrt{6}}{24}, &\quad y^\star &= \pm \frac{1}{2\sqrt{6}}, &\quad z^\star &= \frac{3\sqrt{2} \mp \sqrt{6}}{24}, \\ 
	u^\star &= \frac{3 \pm \sqrt{3}}{12}, &\quad v^\star &= \pm\frac{1}{4\sqrt{3}}, &\quad w^\star &= \frac{3 \mp \sqrt{3}}{12}.
\end{alignedat}
\end{displaymath}
The solution \eqref{eq:params} produced by the solver corresponds to the first
solution, where we choose the upper sign, though both choices are valid
solutions. The corresponding primal objective values are given by
\begin{equation}
	t^\star = \sqrt{6 \mp 3\sqrt{3}},
\end{equation}
so only the first choice gives the minimal value for $t$. This value coincides
with the value \eqref{eq:tstar} obtained with the scaling optimization problem
and shows that the CM is $12|34$-entangled. Note that $M^\star$ is degenerate,
$\rank(M^\star) = 4$.

\subsection{Genuine 3-partite entanglement}\label{sec:VIIC}

Here we consider a more complicated case of a 3-partite genuine entangled state
treated numerically in \cite{NewJPhys.8.51}. That state is also reduced with the
CM given by
\begin{equation}
	\gamma_{xx} = \frac{1}{4}
	\begin{pmatrix}
		2 & 1 & 1 \\
		1 & 2 & 1 \\
		1 & 1 & 2
	\end{pmatrix}, \quad
	\gamma_{pp} = \frac{1}{4}
	\begin{pmatrix}
		3 & -1 & -1 \\
		-1 & 3 & -1 \\
		-1 & -1 & 3
	\end{pmatrix}.
\end{equation}
It is an instance of the symmetric states \eqref{eq:gammalm} with $\lambda = 2$
and $\mu = 1/2$ and thus is a pure state. It has been numerically demonstrated
that this state is genuine entangled. Here we again construct analytical
solutions of the two optimizations problems. These solutions are a bit more
complicated then in the previous case, where only single bipartitions are
involved.

\subsubsection{Eigenvalue problem}

The solver produces the following optimal matrices
\begin{equation}
	X^\star = 
	\begin{pmatrix}
		2x & -x & -x \\
		-x & 2x & -x \\
		-x & -x & 2x
	\end{pmatrix}, \quad
	P^\star = 
	\begin{pmatrix}
		p & q & q \\
		q & p & p \\
		q & q & p
	\end{pmatrix},
\end{equation}
where the numerical values of $x$, $p$ and $q$ are given by
\begin{equation}
	x = 0.0905..., \quad p = 0.1522..., \quad q = 0.1308....
\end{equation}
Because the state is symmetric, it is enough to consider only one bipartition,
for example, $12|3$. The $Z$ matrices read as
\begin{equation}
	Z^{12\star} = 
	\begin{pmatrix}
		u & v \\
		v & u
	\end{pmatrix}, \quad
	Z^{3\star} = 
	\begin{pmatrix}
		w
	\end{pmatrix},
\end{equation}
where the numerical values of $u$, $v$ and $w$ are given by
\begin{equation}
	u = 0.1181..., \quad v = 0.0419..., \quad w = 0.1660....
\end{equation}
Two eigenvalues of the matrix
\begin{equation}
	\begin{pmatrix}
		X^\star[1, 2] & Z^{12\star} \\
		Z^{12\star\mathrm{T}} & P^\star[1, 2]
	\end{pmatrix} = 
	\begin{pmatrix}
		2x & -x & u & v \\
		-x & 2x & v & u \\
		u & v & p & q \\
		v & u & q & p
	\end{pmatrix}
\end{equation}
must be zero, so we have two equations
\begin{equation}\label{eq:eq-1}
	(u + v)^2 = x(p + q), \quad (u - v)^2 = 3x(p - q).
\end{equation}
One eigenvalue of the matrix 
\begin{equation}
	\begin{pmatrix}
		X^\star[3] & Z^{3\star} \\
		Z^{3\star\mathrm{T}} & P^\star[3]
	\end{pmatrix} = 
	\begin{pmatrix}
		2x & w \\
		w & p 
	\end{pmatrix}
\end{equation}
must be zero, so we have another equation
\begin{equation}\label{eq:eq-2}
	w^2 = 2xp.
\end{equation}
Normalization condition reads as
\begin{equation}\label{eq:eq-3}
	\tr(X^\star + P^\star) = 6x + 3p = 1.
\end{equation}
The primal objective function is given by
\begin{displaymath}
\begin{split}
	\lambda_{\mathrm{min}} &= \tr(X^\star \gamma_{xx} + P^\star \gamma_{pp}) 
	-\tr(Z^{12\star}) - \tr(Z^{3\star}) \\
	&= \frac{1}{4}(9p - 6q - 8u- 4w + 6x).
\end{split}
\end{displaymath}
The dual solution produced by the solver is given by
\begin{displaymath}
\begin{split}
	&\gamma^{12\star}_{xx} = \gamma^{13\star}_{xx} = \gamma^{23\star}_{xx} = 
	\begin{pmatrix}
		a & b \\
		b & a
	\end{pmatrix}, \quad \gamma^{3}_{xx} = \gamma^{2}_{xx} = \gamma^{1}_{xx} = 
	\begin{pmatrix}
		c
	\end{pmatrix} \\
	&\gamma^{12\star}_{pp} = \gamma^{13\star}_{pp} = \gamma^{23\star}_{pp} = 
	\begin{pmatrix}
		d & e \\
		e & d
	\end{pmatrix}, \quad \gamma^{3}_{pp} = \gamma^{2}_{pp} = \gamma^{1}_{pp} = 
	\begin{pmatrix}
		f
	\end{pmatrix},
\end{split}
\end{displaymath}
where the numerical values of the parameters read as
\begin{equation}
	\begin{alignedat}{3}
		a &= 0.5121..., \quad &b &= 0.3719..., \quad &c &= 0.4584..., \\
		d &= 1.0327..., \quad &e &= -3/4, \quad &f &= 0.5453....
	\end{alignedat}
\end{equation}
Though all $xx$- and $pp$-matrices are the same, they
compose in different ways, for example
\begin{displaymath}
	\gamma^{12\star}_{xx} \oplus \gamma^{3\star}_{xx} = 
	\begin{pmatrix}
		a & b & 0 \\
		b & a & 0 \\
		0 & 0 & c
	\end{pmatrix}, \quad
	\gamma^{13\star}_{xx} \oplus \gamma^{2\star}_{xx} = 
	\begin{pmatrix}
		a & 0 & b \\
		0 & c & 0 \\
		b & 0 & a
	\end{pmatrix}.
\end{displaymath}
The coefficients of the mixture are equal
\begin{equation}
	p^\star_{12|3} = p^\star_{13|2} = p^\star_{23|1} = \frac{1}{3}.
\end{equation}
Two eigenvalues of the matrix $\mathcal{P}(\gamma^{12\star}_{xx},
\gamma^{12\star}_{pp})$ must be zero, which reduces to the following equations:
\begin{equation}\label{eq:eq-4}
	(a+b)(4d-3) = 1, \quad (a-b)(4d+3) = 1.
\end{equation}
Two eigenvalues of the matrix $\mathcal{P}(\gamma^{3\star}_{xx},
\gamma^{3\star}_{pp})$ must be zero, which reduces to the following equation:
\begin{equation}\label{eq:eq-5}
	4cf = 1.
\end{equation}
The KKT condition
\begin{equation}
	\mathcal{P}(\gamma^{12\star}_{xx}, \gamma^{12\star}_{pp})
	\begin{pmatrix}
		X^\star[1, 2] & -Z^{12\star} \\
		-Z^{12\star\mathrm{T}} & P^\star[1, 2]
	\end{pmatrix}
	= 0
\end{equation}
produces the following system of equations:
\begin{equation}\label{eq:eq-6}
\begin{alignedat}{4}
	&2x(2a-b) &&= u, &&\quad\quad
	2x(2b-a) &&= v, \\
	&2(au + bv) &&= p, &&\quad\quad
	2(bu + av) &&= q, \\
	&3v + 4x &&= 4du, &&\quad\quad
	3u - 2x &&= 4dv, \\
	&3q + 2u &&= 4dp, &&\quad\quad
	3p + 2v &&= 4dq.
\end{alignedat}
\end{equation}
The KKT condition
\begin{equation}\label{eq:kkt-2}
	\mathcal{P}(\gamma^{3\star}_{xx}, \gamma^{3\star}_{pp})
	\begin{pmatrix}
		X^\star[3] & -Z^{3\star} \\
		-Z^{3\star\mathrm{T}} & P^\star[3]
	\end{pmatrix}
	= 0
\end{equation}
produces the following system of equations:
\begin{equation}\label{eq:eq-7}
	w = 4cx, \quad	p = 2cw, \quad x = fw, \quad w = 2fp.
\end{equation}
The KKT conditions yield that the matrices satisfy
\begin{displaymath}
\begin{split}
	&X^\star\left(\gamma_{xx} - \frac{1}{3}\gamma^{12\star}_{xx}\oplus\gamma^{3\star}_{xx}
	- \ldots \right) = \lambda_{\mathrm{min}}X^\star \\
	&P^\star\left(\gamma_{pp} - \frac{1}{3}\gamma^{12\star}_{pp}\oplus\gamma^{3\star}_{pp}
	- \ldots \right) = \lambda_{\mathrm{min}}P^\star, 
\end{split}
\end{displaymath}
where the dots stand for the combination of the bipartitions, which produces the
following system of equations:
\begin{equation}\label{eq:eq-8}
\begin{split}
	&18x - 12w -24u -18q + 27p + 8d +4f = 9, \\
	&8 a - 4 b + 4 c - 8 d - 4 f = -6, \\
	&3 p + 4 a p - 2 b p + 2 c p - 4 d p - 2 f p = 0, \\
	&6 q + 8 a q - 4 b q + 4 c q - 8 d q - 4 f q = 0.
\end{split}
\end{equation}
Combining the equations given by Eqs.~\eqref{eq:eq-1}, \eqref{eq:eq-2},
\eqref{eq:eq-3}, \eqref{eq:eq-4}, \eqref{eq:eq-5}, \eqref{eq:eq-6},
\eqref{eq:eq-7} and \eqref{eq:eq-8}, we obtain a system of 21 equations for 11
variables $x$, $p$, $q$, $u$, $v$, $w$, $a$, $b$, $c$, $d$ and $f$. This system
has several solutions, one of which agrees with the values produced by the
solver. This solution consists of algebraic numbers of degree 7, the minimal
polynomials are given by
\begin{widetext}
\begin{displaymath}
\begin{split}
	2033478432x^{7} - 1355652288x^{6} +360282816x^{5}
	-48215520x^{4} +3371688x^{3} -114528x^{2} +146x -3 &= 0, \\
	21182067p^{7} -21182067p^{6} +7951095p^{5}
	-1388655p^{4} +116613p^{3} -5461p^{2} +285p -9 &= 0, \\
	15441726843q^{7} +6291073899q^{6} -401977647q^{5}
	-261154287q^{4} -1979829q^{3} +565821q^{2} +448227q -13633 &= 0, \\
	36602611776u^{7} -4066956864u^{6} -27226368u^{5}
	+30483216u^{4} -7018344u^{3} +38700u^{2} +35296u +851 &= 0, \\
	36602611776v^{7} +8133913728v^{6} -272310336v^{5}
	-124092000v^{4} +649656v^{3} +492948v^{2} -14480v +59 &= 0, \\
	7060689w^{7} -296802w^{5} +15552w^{4}
	+4138w^{3} -504w^{2} -36w +2 &= 0, \\
	6912a^{7} +20736a^{6} +7168a^{5}
	-7008a^{4} -2000a^{3} +144a^{2} -56a +51 &= 0, \\
	20736b^{7} -48384b^{6} -7296b^{5}
	+24736b^{4} +96b^{3} -1728b^{2} -432b +81 &= 0, \\
	768c^{7} -2048c^{6} -4416c^{5}
	+160c^{4} +1104c^{3} -56c^{2} -12c +9 &= 0, \\
	16384d^{7} -36864d^{6} +1024d^{5}
	+46848d^{4} -29504d^{3} -8112d^{2} +14508d -4131 &= 0, \\
	576f^{7} -192f^{6} -224f^{5}
	+1104f^{4} +40f^{3} -276f^{2} -32f +3 &= 0.
\end{split}
\end{displaymath}
\end{widetext}
The indices of the roots of these polynomials read as
\begin{displaymath}
\begin{alignedat}{6}
	x^\star &: 4, \quad p^\star &&: 2, \quad q^\star &&: 4, \quad u^\star &&: 5, \quad v^\star &&: 4, 
	\quad w^\star &&: 5, \\ 
	a^\star &: 5, \quad b^\star &&: 3, \quad c^\star &&: 4, \quad d^\star &&: 4, \quad f^\star &&: 5.
\end{alignedat}
\end{displaymath}
Different computer algebra systems might use different enumeration schemes for
roots, these indices are valid in the \textsl{Mathematica} nomenclature.
Standard arithmetic roots of real numbers are also definite roots of some simple
polynomials, so this representation is no less analytical than an expression
with explicit radicals.

It turns out that the numbers $x^\star$, $p^\star$ and $q^\star$ produce the
minimum of the function 
\begin{displaymath}
\begin{split}
	&f(x, p, q) = \tr(X^\star \gamma_{xx} + P^\star \gamma_{pp}) - \str_{12|3}(X^\star, P^\star) \\
	&\frac{1}{4}\left(1+6(p-q)-4\sqrt{x}(\sqrt{2p}+\sqrt{3(p-q)}+\sqrt{p+q})\right).
\end{split}
\end{displaymath}
Because $p^\star > q^\star$ we can omit the term containing $p-q$ in the
Lagrangian of this problem, 
\begin{equation}
	\mathcal{L} = f(x, p, q) + \nu\left(2x+p-\frac{1}{3}\right).
\end{equation}
To prove the optimality of the point $(x^\star, p^\star, q^\star)$ we need to
show that it is a solution of the following system of equations:
\begin{displaymath}
\begin{split}
	2\frac{\partial \mathcal{L}}{\partial x} &= 4\nu^\star - 
	\sqrt{\frac{2p^\star}{x^\star}}+\sqrt{\frac{3(p^\star-q^\star)}{x^\star}}+
	\sqrt{\frac{p^\star+q^\star}{x^\star}} = 0, \\
	2\frac{\partial \mathcal{L}}{\partial p} &= 2\nu^\star + 3 - \sqrt{\frac{2x^\star}{p^\star}}
	- \sqrt{\frac{3x^\star}{p^\star-q^\star}} - \sqrt{\frac{x^\star}{p^\star+q^\star}} = 0, \\
	2\frac{\partial \mathcal{L}}{\partial q} &= 
	-3 + \sqrt{\frac{3x^\star}{p^\star-q^\star}} - \sqrt{\frac{x^\star}{p^\star+q^\star}} = 0.
\end{split}
\end{displaymath}
By computing the minimal polynomials, one can check that the numbers $x^\star$,
$p^\star$ and $q^\star$ satisfy the equations
\begin{equation}
	\frac{\partial \mathcal{L}}{\partial q} = 0, \quad 
	\frac{\partial \mathcal{L}}{\partial x} - 2\frac{\partial \mathcal{L}}{\partial p} = 0,
\end{equation}
and thus are a solution of the system above, where $\nu^\star$ is the 5th root
of the equation
\begin{equation}
\begin{split}
	576 \nu^7 &+ 3264\nu^6 +2368\nu^5 -7728\nu^4 -1744\nu^3 \\
	&+3060\nu^2 -936\nu +81 = 0.
\end{split}
\end{equation}
For $\lambda^\star_{\mathrm{min}}$ we have
\begin{equation}
	\lambda^\star_{\mathrm{min}} = \frac{1}{4}(6x^\star + 9p^\star - 6q^\star - 8u^\star - 4w^\star),
\end{equation}
which is the 1st root of the equation
\begin{equation}
\begin{split}
	11943936 \lambda^7 &-43462656\lambda^6 +54973440\lambda^5 \\
	&-28567296\lambda^4 +5711552\lambda^3 \\
	&-119952\lambda^2 -91556\lambda +8163 = 0.
\end{split}
\end{equation}
We derive that
\begin{equation}
	\mathcal{E}^+_{\mathfrak{I}_2}(\gamma_{xx}, \gamma_{pp}) = -\lambda^\star_{\mathrm{min}} = 0.1202...,
\end{equation}
and thus the state is genuine 3-partite entangled. Note that $X$ and $P$ are of
different rank: $\rank(X^\star) = 2$, but $\rank(P^\star) = 3$.

\subsubsection{Scaling}

The primal optimal solution reads as
\begin{displaymath}
	\begin{split}
		&\gamma^{12\star}_{xx} = \gamma^{13\star}_{xx} = \gamma^{23\star}_{xx} = 
		\begin{pmatrix}
			a & b \\
			b & a
		\end{pmatrix}, \quad \gamma^{3}_{xx} = \gamma^{2}_{xx} = \gamma^{1}_{xx} = 
		\begin{pmatrix}
			c
		\end{pmatrix}, \\
		&\gamma^{12\star}_{pp} = \gamma^{13\star}_{pp} = \gamma^{23\star}_{pp} = 
		\begin{pmatrix}
			d & e \\
			e & d
		\end{pmatrix}, \quad \gamma^{3}_{pp} = \gamma^{2}_{pp} = \gamma^{1}_{pp} = 
		\begin{pmatrix}
			f
		\end{pmatrix},
	\end{split}
\end{displaymath}
where the values of the parameters read as
\begin{equation}
\begin{alignedat}{3}
	a &= 0.5100..., \quad &b &= 0.4073..., \quad &c &= 0.4672..., \\
	d &= 1.3526..., \quad &e &= -1.0800..., \quad &f &= 0.5350....
\end{alignedat}
\end{equation}
Two eigenvalues of the matrix $\mathcal{P}(\gamma^{12\star}_{xx},
\gamma^{12\star}_{pp})$ must be zero, which reduces to the following equations:
\begin{equation}\label{eq:eq2-1}
	4(a+b)(d+e) = 1, \quad 4(a-b)(d-e) = 1.
\end{equation}
Two eigenvalues of the matrix $\mathcal{P}(\gamma^{3\star}_{xx},
\gamma^{3\star}_{pp})$ must be zero, which reduces to the same equation as in
the eigenvalue case
\begin{equation}\label{eq:eq2-2}
	4cf = 1.
\end{equation}
The dual solution has the same form as in the previous case
\begin{equation}
	X^\star = 
	\begin{pmatrix}
		2x & -x & -x \\
		-x & 2x & -x \\
		-x & -x & 2x
	\end{pmatrix}, \quad
	P^\star = 
	\begin{pmatrix}
		p & q & q \\
		q & p & p \\
		q & q & p
	\end{pmatrix}.
\end{equation}
The $Z$ matrices read as
\begin{equation}
	Z^{12\star} = 
	\begin{pmatrix}
		u & v \\
		v & u
	\end{pmatrix}, \quad
	Z^{3\star} = 
	\begin{pmatrix}
		w
	\end{pmatrix}.
\end{equation}
The numerical values of all the parameters read as
\begin{equation}
	\begin{alignedat}{3}
		x &= 0.2314..., \quad &p &= 0.4042..., \quad &q &= 0.3749..., \\
		u &= 0.2836..., \quad &v &= 0.1409..., \quad &w &= 0.4326....
	\end{alignedat}
\end{equation}
Not surprisingly, we have the same equations for these parameters as in in the
eigenvalue case,
\begin{equation}\label{eq:eq2-3}
	(u + v)^2 = x(p + q), \ (u - v)^2 = 3x(p - q), \ w^2=2xp.
\end{equation}
The normalization condition is different and reads as
\begin{equation}\label{eq:eq2-4}
	\tr(Z^{12\star}) + \tr(Z^{3\star}) = 2u + w = 1.
\end{equation}
The KKT conditions \eqref{eq:kkt-1} and \eqref{eq:kkt-2} produce the following
system of equations:
\begin{equation}\label{eq:eq2-5}
	\begin{alignedat}{4}
		&2x(2a-b) &&= u, &&\quad\quad
		2x(2b-a) &&= v, \\
		&2(au + bv) &&= p, &&\quad\quad
		2(bu + av) &&= q, \\
		&du + ev &&= x, &&\quad\quad
		2(eu + dv) &&= -x, \\
		&2(dp + eq) &&= u, &&\quad\quad
		2(ep + dq) &&= v,
	\end{alignedat}
\end{equation}
and
\begin{equation}\label{eq:eq2-6}
	w = 4cx, \quad	p = 2cw, \quad x = fw, \quad w = 2fp.
\end{equation}
The KKT condition \eqref{eq:KKTscaling2} in this case state that 
\begin{equation}
	\begin{split}
		X^\star\left(\gamma_{xx} - \frac{t}{3}\gamma^{12\star}_{xx}\oplus\gamma^{3\star}_{xx}
		- \ldots \right) &= 0, \\
		P^\star\left(\gamma_{pp} - \frac{t}{3}\gamma^{12\star}_{pp}\oplus\gamma^{3\star}_{pp}
		- \ldots \right) &= 0,
	\end{split}
\end{equation}
where the dots stand for the combination of the other two bipartitions and
\begin{equation}
	t = \tr(X^\star \gamma_{xx} + P^\star \gamma_{pp}) = \frac{3}{4}(2x+3p-2q).
\end{equation}
These conditions produce the following additional equations:
\begin{equation}\label{eq:eq2-7}
\begin{split}
	(2 a - b + c) (3 p - 2 q + 2 x) &= 1, \\
	(2 q - 3 p - 2 x) (2 d p + f p + 2 e q) &= 2 q - 3 p, \\
	(2 q - 3 p - 2 x) (e p + 2 d q + e q + f q) &= p - 2 q.
\end{split}
\end{equation}
Combining the equations given by Eqs.~\eqref{eq:eq2-1}, \eqref{eq:eq2-2},
\eqref{eq:eq2-3}, \eqref{eq:eq2-4}, \eqref{eq:eq2-5}, \eqref{eq:eq2-6},
\eqref{eq:eq2-7}, we obtain a system of 22 equations for 12 variables $a$, $b$,
$c$, $d$, $e$, $f$, $x$, $p$, $q$, $u$, $v$, and $w$. This system has several
solutions, one of which agrees with the values produced by the solver. The
minimal polynomials of this solution read as
\begin{widetext}
\begin{equation}
\begin{alignedat}{2}
	14400 a^8 - 2880 a^6 - 396 a^4 + 29 a^2 + 4 &= 0, &\quad
	9216 x^8 - 11968 x^6 + 4224 x^4 - 660 x^2 + 25 &= 0, \\
	230400 b^8 - 31680 b^6 + 3024 b^4 - 736 b^2 + 9 &= 0, &\quad 
	14400 p^8 - 8635 p^6 + 879 p^4 + 18 p^2 + 1 &= 0, \\
	384 c^4  + 25600 c^8 - 7360 c^6 + 384 c^4 - 4 c^2 + 1 &= 0, &\quad 
	3686400 q^8 - 3271360 q^6 + 933744 q^4 - 81300 q^2 + 625 &= 0, \\
	1600 d^8 - 4240 d^6 + 2616 d^4 - 427 d^2 + 64 &= 0, &\quad 
	192 u^4 - 424 u^3 + 324 u^2 - 108 u + 13 &= 0, \\
	64 f^8 - 16 f^6 + 96 f^4 - 115 f^2 +25 &= 0, &\quad 
	96 v^4 + 88 v^3 + 36 v^2 - 1 &= 0, \\
	1600 e^8 - 2640 e^6 + 1056 e^4 - 187 e^2 + 9 &= 0, &\quad 
	12 w^4 + 5 w^3 - 6 w^2 + 3 w - 1 &= 0.
\end{alignedat}
\end{equation} 
\end{widetext}
The indices of the roots of these polynomials read as
\begin{displaymath}
\begin{alignedat}{6}
	a^\star &: 4, \quad b^\star &&: 4, \quad c^\star &&: 4, \quad d^\star &&: 4, \quad e^\star &&: 1, 
	\quad f^\star &&: 3, \\ 
	x^\star &: 3, \quad p^\star &&: 3, \quad q^\star &&: 4, \quad u^\star &&: 1, \quad v^\star &&: 2, 
	\quad w^\star &&: 2.
\end{alignedat}
\end{displaymath}
It is possible to give explicit expression for these numbers in radicals, here
we show $x^\star$, $p^\star$ and $q^\star$ only:
\begin{widetext}
\begin{equation}
\begin{alignedat}{2}
	x^{\star 2} &= \frac{1}{576}\left(187 + x_0 - \sqrt{28875 - x^2_0 + \frac{2281862}{x_0}}\right), \quad
	&x^2_0 &= 9625 - 432(x_+ + x_-), \\
	p^{\star 2} &= \frac{1}{11520}\left(1727 + p_0 - \sqrt{4897155 - p^2_0 + \frac{2828801662}{p_0}}\right), \quad 
	&p^2_0 &= 1632385 + 2304(p_+ + p_-), \\
	q^{\star 2} &= \frac{1}{46080} \left(10223 - q_0 + \sqrt{44610915 - q^2_0 +
	\frac{72877799522}{q_0}}\right), \quad
	&q^2_0 &= 14870305 + 13824 (q_+ + q_-),
\end{alignedat}
\end{equation}
\end{widetext}
where all square roots are understood in the arithmetic sense and
\begin{equation}
\begin{split}
	x^3_{\pm} &= \frac{4873 \pm 395 \sqrt{237}}{2}, \\
	p^3_{\pm} &= \frac{141323729 \pm 4374115 \sqrt{237}}{2}, \\
	q^3_{\pm} &= \frac{3212429769 \pm 111739125 \sqrt{237}}{2},
\end{split}
\end{equation}
and here again cubic roots are taken as real roots of real numbers. The optimal
value 
\begin{equation}
	t^\star = \frac{3}{4}(2x^\star+3p^\star-2q^\star)
\end{equation}
has minimal polynomial
\begin{equation}
	1024 t^8 - 11968 t^6 + 38016 t^4 - 53460 t^2 + 18225 = 0
\end{equation}
and explicitly reads as
\begin{equation}
	t^{\star 2} = \frac{1}{64}\left(187 + t_0 - \sqrt{28875 - t^2_0 + \frac{2281862}{t_0}}\right),
\end{equation}
where 
\begin{displaymath}
	t^2_0 = 9625 - 432 (t_+ + t_-), \quad t^3_{\pm} = \frac{4873 \pm 395 \sqrt{237}}{2}.
\end{displaymath}
We derive that
\begin{equation}
	\mathcal{T}^+_{\mathfrak{I}_2}(\gamma_{xx}, \gamma_{pp}) = 1-t^\star = 0.3056...,
\end{equation}
which proves that the state is genuine 3-partite entangled. Note that the values
for $x^\star$, $p^\star$, $q^\star$ and $t^\star$ were obtained in
Ref.~\cite{NewJPhys.8.51} in a purely numerical way and without any evidence of
their optimality. Here we constructed analytical expressions for these numbers
together with an analytical validation of their optimality.

\subsubsection{Refined condition}

We first consider the case of $\mathfrak{I}_2 = \{1|23, 2|13, 3|12\}$, so the
second, larger, components are discarded and the smaller ones remain. The primal
solution in this case is given by
\begin{equation}
	\gamma^{1\star} = \gamma^{2\star} = \gamma^{3\star} = \frac{1}{2}
	\begin{pmatrix}
		1 & 0 \\
		0 & 1
	\end{pmatrix}.
\end{equation}
The optimal $M^\star$ read as
\begin{equation}\label{eq:Mstar}
	M^\star = 
	\begin{pmatrix}
		x & -y & -y & 0 & 0 & 0 \\
		-y & x & -y & 0 & 0 & 0 \\
		-y & -y & x & 0 & 0 & 0 \\
		0 & 0 & 0 & p & q & q \\
		0 & 0 & 0 & q & p & q \\
		0 & 0 & 0 & q & q & p
	\end{pmatrix},
\end{equation}
and the optimal $K^\star$ is given by
\begin{equation}
	K^\star = 
	\begin{pmatrix}
		0 & 0 & 0 & -u & -v & -v \\
		0 & 0 & 0 & -v & -u & -v \\
		0 & 0 & 0 & -v & -v & -u \\
		u & v & v & 0 & 0 & 0 \\
		v & u & v & 0 & 0 & 0 \\
		v & v & u & 0 & 0 & 0
	\end{pmatrix}.
\end{equation}
The matrices $K^{j\star}$ read as
\begin{equation}
	K^{1\star} = K^{2\star} = K^{13\star} = x
	\begin{pmatrix}
		0 & -1 \\
		1 & 0
	\end{pmatrix}.
\end{equation}
The numerical values of the parameters produced by the solver are
where the numerical values of the parameters read as
\begin{equation}
	\begin{alignedat}{3}
		x &= 0.4469..., \quad &y &= 0.1516..., \quad &p &= 0.4469..., \\
		q &= 0.3750..., \quad &u &= 0.2765..., \quad &v &= 0.0691....
	\end{alignedat}
\end{equation}
The normalization condition \eqref{eq:KO1} reads as
\begin{equation}
	\frac{1}{2} \tr(K^{1\star}\Omega_1) + 
	\frac{1}{2} \tr(K^\star[2, 3] \Omega_2) = 1,
\end{equation}
and produces the equation
\begin{equation}\label{eq:new2-1}
	2u+w = 1.
\end{equation}
The KKT condition \eqref{eq:PK} is
\begin{equation}
	\mathcal{P}(\gamma^{1\star})(M^\star[1] + i K^{1\star}) = 0,
\end{equation}
which produces a single equation 
\begin{equation}\label{eq:new2-2}
	x = p.
\end{equation}
The  KKT condition \eqref{eq:Lambdagamma} reads as
\begin{equation}
\begin{split}
	(M^\star &- i K^\star)\left(\gamma - \frac{t}{3} \gamma^{1\star} \oplus \frac{i}{2}\Omega^{23}_2 \right.\\
	&- \left.\frac{t}{3} \gamma^{2\star} \oplus \frac{i}{2}\Omega^{13}_2
	- \frac{t}{3} \gamma^{3\star} \oplus \frac{i}{2}\Omega^{12}_2\right) = 0,
\end{split}
\end{equation}
where 
\begin{equation}
	t = \tr(M^\star \gamma) = \frac{3}{4}(2x-2y+3p-2q).
\end{equation}
This condition produces a system of equations, which contains 
\begin{equation}
\begin{split}
	(2 u + x) (2 x - 2y +3 p - 2 q) &= 4(x-y), \\
	(2 v - y) (2 x - 2y +3 p - 2 q) &= 2(x-3y),
\end{split}
\end{equation}
and several others. Combining these equations with Eqs.~\eqref{eq:new2-1} and
\eqref{eq:new2-2} and solving the resulting system, we obtain the following
solution:
\begin{displaymath}
\begin{alignedat}{2}
	x^\star &= \frac{-73+20\sqrt{73}}{219}, &\quad y^\star &= \frac{73+7\sqrt{73}}{876}, \\
	p^\star &= \frac{-73+20\sqrt{73}}{219}, &\quad q^\star &= \frac{-73+47\sqrt{73}}{876}, \\
	u^\star &= \frac{2(73-5\sqrt{73})}{219}, &\quad v^\star &= \frac{73-5\sqrt{73}}{438}.
\end{alignedat}
\end{displaymath}
The optimal $t^\star$ reads as
\begin{equation}
	t^\star = \frac{-5+\sqrt{73}}{4} = 0.8860....
\end{equation}
Since $t^\star < 1$, it shows that the state is genuine 3-partite entangled.
Note that the matrix $M^\star$ is degenerate and $\rank(M^\star) = 3$.

Now we consider the case of $\mathfrak{I}_2 = \{23|1, 13|2, 12|3\}$, so the
smaller components are discarded and the larger ones remain. The primal solution
is given by
\begin{equation}
	\gamma^{12\star} = \gamma^{13\star} = \gamma^{23\star} = 
	\begin{pmatrix}
		a & b & 0 & 0 \\
		b & a & 0 & 0 \\
		0 & 0 & c & -d \\
		0 & 0 & -d & c
	\end{pmatrix}.
\end{equation}
The dual solution is of the form given by Eq.~\eqref{eq:Mstar} and $K^\star$
reads as
\begin{equation}
	K^\star = 
	\begin{pmatrix}
		0 & 0 & 0 & -u & v & v \\
		0 & 0 & 0 & v & -u & v \\
		0 & 0 & 0 & v & v & -u \\
		u & -v & -v & 0 & 0 & 0 \\
		-v & u & -v & 0 & 0 & 0 \\
		-v & -v & u & 0 & 0 & 0
	\end{pmatrix}.
\end{equation}
The matrices $K^{jk\star}$ are given by
\begin{equation}
	K^{12\star} = K^{13\star} = K^{23\star} = 
	\begin{pmatrix}
		0 & 0 & -z & -w \\
		0 & 0 & -w & -z \\
		z & w & 0 & 0 \\
		w & z & 0 & 0
	\end{pmatrix}.
\end{equation}
The numerical values of the parameters parameters produced by the solver are
\begin{equation}
	\begin{alignedat}{3}
		a &= 0.4484..., \quad &b &= 0.2522..., \quad &c &= 0.8154..., \\
		d &= 0.4586..., \quad &x &= 0.5416..., \quad &y &= 0.2443..., \\
		p &= 0.3525..., \quad &q &= 0.2314..., \quad &u &= 0.2748..., \\
		v &= 0.0335..., \quad &z &= 0.3625..., \quad &w &= 0.0541....
	\end{alignedat}
\end{equation}
The normalization condition reads as
\begin{equation}
	\frac{1}{2} \tr(K^{12\star}\Omega_2) + \frac{1}{2} \tr(K^\star[3]\Omega_1) = 1,
\end{equation}
and produces the equation
\begin{equation}\label{eq:new3-1}
	u+2z=1.
\end{equation}
The KKT condition \eqref{eq:PK} is
\begin{equation}
	\mathcal{P}(\gamma^{12\star})(M^\star[1, 2] + i K^{12\star}) = 0,
\end{equation}
which produces the equations
\begin{equation}\label{eq:new3-2}
\begin{alignedat}{2}
	2(ax - by) &= z, \quad &2(bx-ay) &= w, \\
	2(az+bw) &= p, \quad &2(aw+bz) &= q, \\
	2(cz-dw) &= x, \quad &2(dz-cw) &= y, \\
	2(cp-dq) &= z, \quad &2(cq-dp) &= w.
\end{alignedat}
\end{equation}
The  KKT condition \eqref{eq:Lambdagamma} reads as
\begin{equation}
\begin{split}
	(M^\star &- i K^\star)\left(\gamma - \frac{t}{3} \gamma^{12\star} \oplus \frac{i}{2}\Omega^{3}_1 \right.\\
	&- \left.\frac{t}{3} \gamma^{13\star} \oplus \frac{i}{2}\Omega^{2}_1
	- \frac{t}{3} \gamma^{23\star} \oplus \frac{i}{2}\Omega^{1}_1\right) = 0,
\end{split}
\end{equation}
where 
\begin{equation}
	t = \tr(M^\star \gamma) = \frac{3}{4}(2x-2y+3p-2q).
\end{equation}
This relation produces several larger equations, not presented here. Combining
these larger equations with Eqs.~\eqref{eq:new3-1} and \eqref{eq:new3-2} and
solving the resulting system, we obtain the following analytical expressions for
the parameters:
\begin{displaymath}
	\begin{alignedat}{3}
		a^\star &= \frac{4}{5}\sqrt{\frac{11}{35}}, \quad &b^\star &= \frac{9}{20}\sqrt{\frac{11}{35}}, \quad &c^\star &= \frac{16}{\sqrt{385}}, \\
		d^\star &= \frac{9}{\sqrt{385}}, \quad &x^\star &= \frac{491}{231}\sqrt{\frac{5}{77}}, \quad &y^\star &= \frac{443}{462}\sqrt{\frac{5}{77}}, \\
		p^\star &= \frac{83}{12\sqrt{385}}, \quad &q^\star &= \frac{109}{24\sqrt{385}}, \quad &u^\star &= \frac{127}{462}, \\
		v^\star &= \frac{31}{924}, \quad &z^\star &= \frac{335}{924}, \quad &w^\star &= \frac{25}{462}.
	\end{alignedat}
\end{displaymath}
The optimal value $t^\star$ is given by
\begin{equation}
	t^\star = \frac{1}{2}\sqrt{\frac{35}{11}} = 0.8918....
\end{equation}
Since $t^\star < 1$, it shows the state is genuine 3-partite entangled. Note
that in this case $M^\star$ is also degenerate and $\rank(M^\star) = 3$.

\subsection{Genuine 4-partite entanglement}\label{sec:VIID}

Our last example is a little bit more complicated case of a 4-partite reduced
state with the following CM:
\begin{equation}\label{eq:G4}
	\gamma_{xx} = 
	\begin{pmatrix}
		\alpha & \beta & \beta & 0 \\
		\beta & \alpha & 0 & \beta \\
		\beta & 0 & \alpha & -\beta \\
		0 & \beta & -\beta & \alpha
	\end{pmatrix}, \quad
	\gamma_{pp} = 
	\begin{pmatrix}
		\alpha & -\beta & -\beta & 0 \\
		-\beta & \alpha & 0 & -\beta \\
		-\beta & 0 & \alpha & \beta \\
		0 & -\beta & \beta & \alpha
	\end{pmatrix},
\end{equation}
where the parameters $\alpha$ and $\beta$ are given by
\begin{equation}\label{eq:Galphabeta}
\begin{split}
	\alpha &= \frac{1}{4}\left(\frac{1}{2\sqrt[3]{2}} + 10\sqrt[3]{2}\right) = 3.2490..., \\
	\beta &= \frac{1}{4\sqrt{2}}\left(\frac{1}{2\sqrt[3]{2}} - 10\sqrt[3]{2}\right) = -2.1570....
\end{split}
\end{equation}
This a mixed state. Such states occur in an intermediate step of a
continuous-variable entanglement swapping, as introduced in
Refs.~\cite{PhysRevA.60.2752, PhysRevA.61.010302}. This somewhat uncommon choice
of parameters was proposed in Ref.~\cite{NewJPhys.8.51}. As usually, we
construct the solutions of the two optimization problems.

\subsubsection{Eigenvalue}

The optimal matrices produced by the solver have the following form:
\begin{equation}
\begin{split}
	X^\star &= \frac{1}{8}
	\begin{pmatrix}
		1 & 8x & 8x & 0 \\
		8x & 1 & 0 & 8x \\
		8x & 0 & 1 & -8x \\
		0 & 8x & -8x & 1
	\end{pmatrix}, \\
	P^\star &= \frac{1}{8}
	\begin{pmatrix}
		1 & -8x & -8x & 0 \\
		-8x & 1 & 0 & -8x \\
		-8x & 0 & 1 & 8x \\
		0 & -8x & 8x & 1
	\end{pmatrix}.
\end{split}
\end{equation}
The primal numerical solution shows that all $2\times2$ partitions are inactive,
and all $1\times3$ partitions are active, so we show only $Z$ matrices of the
latter partitions:
\begin{displaymath}
\begin{alignedat}{2}
	Z^{123\star} &= 
	\begin{pmatrix}
		u & 0 & 0 \\
		0 & v & -w \\
		0 & -w & v
	\end{pmatrix}, &\quad
	Z^{124\star} &= 
	\begin{pmatrix}
		v & 0 & -w \\
		0 & u & 0 \\
		-w & 0 & v
	\end{pmatrix}, \\
	Z^{134\star} &= 
	\begin{pmatrix}
		v & 0 & w \\
		0 & u & 0 \\
		w & 0 & v
	\end{pmatrix}, &\quad
	Z^{234\star} &= 
	\begin{pmatrix}
		v & w & 0 \\
		w & v & 0 \\
		0 & 0 & u
	\end{pmatrix}.
\end{alignedat}
\end{displaymath}
All single-partite $Z$ matrices are equal
\begin{equation}
	Z^{1\star} = Z^{2\star} = Z^{3\star} = Z^{4\star} = \frac{1}{8}
	\begin{pmatrix}
		1
	\end{pmatrix}.
\end{equation}
The numerical values of the parameters of these matrices read as:
\begin{equation}
\begin{alignedat}{2}
	x &= 0.0880..., &\quad u &= 0.0102..., \\
	v &= 0.0676..., &\quad w &= 0.0573....
\end{alignedat}
\end{equation}
Three eigenvalues of the matrix
\begin{equation}
	\begin{pmatrix}
		X^\star[1, 2, 3] & Z^{123\star} \\
		Z^{123\star\mathrm{T}} & P^\star[1, 2, 3]
	\end{pmatrix}
\end{equation}
must be zero, so we have three equations
\begin{equation}\label{eq:eq3-1}
\begin{split}
	8(v+w) &= 1, \quad	u + w = v, \quad
	64(2x^2 + u^2) = 1.
\end{split}
\end{equation}
The primal objective value reads as
\begin{equation}
	\lambda_{\mathrm{min}} = 16x\beta + \alpha -u - 2v - \frac{1}{8}.
\end{equation}
The dual solution is given by
\begin{equation}\label{eq:psol}
\begin{alignedat}{2}
	\gamma^{123\star}_{xx} &= 
	\begin{pmatrix}
		b & -c & -c \\
		-c & a & d \\
		-c & d & a
	\end{pmatrix}, &\quad
	\gamma^{123\star}_{pp} &= 
	\begin{pmatrix}
		b & c & c \\
		c & a & d \\
		c & d & a
	\end{pmatrix}, \\
	\gamma^{124\star}_{xx} &= 
	\begin{pmatrix}
		a & -c & d \\
		-c & b & -c \\
		d & -c & a
	\end{pmatrix}, &\quad
	\gamma^{124\star}_{pp} &= 
	\begin{pmatrix}
		a & c & d \\
		c & b & c \\
		d & c & a
	\end{pmatrix}, \\
	\gamma^{134\star}_{xx} &= 
	\begin{pmatrix}
		a & -c & -d \\
		-c & b & c \\
		-d & c & a
	\end{pmatrix}, &\quad
	\gamma^{134\star}_{pp} &= 
	\begin{pmatrix}
		a & c & -d \\
		c & b & -c \\
		-d & -c & a
	\end{pmatrix}, \\
	\gamma^{234\star}_{xx} &= 
	\begin{pmatrix}
		a & -d & -c \\
		-d & a & c \\
		-c & c & b
	\end{pmatrix}, &\quad
	\gamma^{234\star}_{pp} &= 
	\begin{pmatrix}
		a & -d & c \\
		-d & a & -c \\
		c & -c & b
	\end{pmatrix},
\end{alignedat}
\end{equation}
where the numerical values of the parameters read as
\begin{equation}
	\begin{alignedat}{2}
		a &= 3.3107..., &\quad b &= 6.1215..., \\
		c &= 4.3141..., &\quad d &= 2.81073....
	\end{alignedat}
\end{equation}
The single-partite matrices $\gamma^{i\star}_{xx}$ and $\gamma^{i\star}_{pp}$
are all equal to the CM of the vacuum state, $(\begin{smallmatrix} 1/2
\end{smallmatrix}$), for $i = 1, \ldots, 4$. The 2$\times$2 partitions do not
participate in the optimal solution and the 1$\times$3 partitions have equal
contribution to it,
\begin{equation}
\begin{split}
	&p^\star_{12|34} = p^\star_{13|24} = p^\star_{14|23} = 0, \\
	&p^\star_{1|234} = p^\star_{2|134} = p^\star_{3|124} = p^\star_{4|123} = \frac{1}{4}.
\end{split}
\end{equation}
The KKT conditions that the matrices 
\begin{equation}
	\mathcal{P}(\gamma^{123\star}_{xx}, \gamma^{123\star}_{pp}), \quad
	\mathcal{P}(\gamma^{4\star}_{xx}, \gamma^{4\star}_{pp})
\end{equation}
have 3 and 1 zero eigenvalues, respectively, give the following equations:
\begin{equation}\label{eq:eq3-2}
	2(a-d) = 1, \quad a + d = b, \quad 4(b^2-2c^2) = 1.
\end{equation}
The KKT condition
\begin{equation}\label{eq:kkt-1}
	\mathcal{P}(\gamma^{123\star}_{xx}, \gamma^{123\star}_{pp})
	\begin{pmatrix}
		X^\star[1, 2, 3] & -Z^{123\star} \\
		-Z^{123\star\mathrm{T}} & P^\star[1, 2, 3]
	\end{pmatrix}
	= 0
\end{equation}
produces the equations
\begin{equation}\label{eq:eq3-3}
\begin{alignedat}{3}
	4(4cx + u) &= b, &\quad 8bx &= c, &\quad 16bu &= 1, \\
	\quad 2c(v-w) &= x,	&\quad 4(2cx+v) &= a, &\quad d+4w &= 8cx, \\
	2cu &= x, &\quad 16(av - dw) &= 1, &\quad dv-aw &= 0.
\end{alignedat}
\end{equation}
The KKT condition
\begin{equation}
	\mathcal{P}(\gamma^{4\star}_{xx}, \gamma^{4\star}_{pp})
	\begin{pmatrix}
		X^\star[4] & -Z^{4\star} \\
		-Z^{4\star\mathrm{T}} & P^\star[4]
	\end{pmatrix}
	= 0
\end{equation}
is already satisfied, so it gives no new equations. Due to some kind of
symmetry, the other three states give exactly the same equations. The KKT
condition
\begin{equation}
	X^\star \left(\gamma_{xx} - \frac{1}{4}\gamma^{123\star}_{xx} \oplus 
	\gamma^{4\star}_{xx} - \ldots\right) = \lambda_{\mathrm{min}} X^\star,
\end{equation}
where dots denote the sum of the other three 1$\times$3 combinations, produces
the following two equations:
\begin{equation}\label{eq:eq3-4}
\begin{split}
	4(u + 2 v + 8 c x) &= 2 a + b, \\
	4 x (2 a + b + 64 x \beta) &= c + 2 (8 u x + 16 v x + \beta).
\end{split}
\end{equation}
The similar condition with $P^\star$ and $pp$-parts gives exactly the same two
equations again. The combined system of equations given by
Eqs.~\eqref{eq:eq3-1}, \eqref{eq:eq3-2}, \eqref{eq:eq3-3} and \eqref{eq:eq3-4}
has two solutions, one of which numerically coincides with the numbers produced
by the solver. The explicit expressions are simpler than in the previous
examples, the primal solution reads as:
\begin{equation}
\begin{split}
	x^{\star 2} &= \frac{1023971841 - 5120288\sqrt[3]{2} -230404\sqrt[3]{4}}{64\times 2047972354}, \\
	u^\star &= \frac{2\sqrt{2}}{31999}(40 + \sqrt[3]{2} + 800\sqrt[3]{4}) x^\star, \\
	v^\star &= \frac{\sqrt{2}}{31999}(40 + \sqrt[3]{2} + 800\sqrt[3]{4}) x^\star + \frac{1}{16}. \\
\end{split}
\end{equation}
The dual solution is given by
\begin{equation}
\begin{split}
	b^{\star2} &= \frac{1}{16}\left(400\sqrt[3]{4} + \frac{1}{\sqrt[3]{4}} - 36\right), \\
	a^\star &= \frac{2b^\star + 1}{4}, \quad
	d^\star = \frac{2b^\star - 1}{4}, \\
	c^\star &= \frac{1}{\sqrt[6]{2}}\left(5 - \frac{1}{4\sqrt[3]{4}}\right).
\end{split}
\end{equation}
The optimal value reads as
\begin{displaymath}
	-16\lambda^\star_{\mathrm{min}} = \sqrt{2}\sqrt{800\sqrt[3]{4} + \sqrt[3]{2} - 72}
	-\sqrt[3]{4}- 40\sqrt[3]{2} + 4.
\end{displaymath}
We also need to check that the 2$\times$2 partitions are indeed inactive. Due to the equality
\begin{equation}
	\str_{12|34}(X^\star, P^\star) = \str_{13|24}(X^\star, P^\star),
\end{equation}
we need to check only two of the three quantities. We have
\begin{equation}
\begin{split}
	\tr(X^\star\gamma_{xx} &+ P^\star\gamma_{pp}) - \str_{12|34}(X^\star, P^\star) \\
	&= \alpha + 16\beta x^\star - \frac{1}{2}\sqrt{1-64x^{\star2}}, \\
	\tr(X^\star\gamma_{xx} &+ P^\star\gamma_{pp}) - \str_{14|23}(X^\star, P^\star) \\
	&= \alpha + 16\beta x^\star - \frac{1}{2}.
\end{split}
\end{equation}
\textsl{Mathematica} can symbolically prove that these two quantities are
smaller than $\lambda^\star_{\mathrm{min}}$, and thus the 2$\times$2 partitions
are indeed inactive. One can prove this fact ``by hand'' by squaring and
subtracting the left-hand side from the right-hand side several times, but this task
is what CASs are good at. We arrive to the following result:
\begin{equation}
	\mathcal{E}^+_{\mathfrak{I}_2}(\gamma_{xx}, \gamma_{pp}) = -\lambda^\star_{\mathrm{min}}
	 = 0.0618...,
\end{equation}
which verifies that the state is genuine multipartite entangled. Note that the
matrices $X^\star$ and $P^\star$ are non-degenerate and thus have rank 4.

\subsubsection{Scaling}

The primal solution has exactly the same form as given by Eq.~\eqref{eq:psol},
but with different values for the parameters
\begin{equation}
	\begin{alignedat}{2}
		a &= 4.2229..., &\quad b &= 7.9458..., \\
		c &= 5.6074..., &\quad d &= 3.7229...,
	\end{alignedat}
\end{equation}
so the first set of equations is given by Eq.~\eqref{eq:eq3-2}. The dual
solutions read as
\begin{equation}\label{eq:sXY}
	X^\star = 
	\begin{pmatrix}
		x & y & y & 0 \\
		y & x & 0 & y \\
		y & 0 & x & -y \\
		0 & y & -y & x
	\end{pmatrix}, \quad
	P^\star = 
	\begin{pmatrix}
		x & -y & -y & 0 \\
		-y & x & 0 & -y \\
		-y & 0 & x & y \\
		0 & -y & y & x
	\end{pmatrix},
\end{equation}
where the parameters are given by
\begin{equation}
	x = 0.4704..., \quad y = 0.3319....
\end{equation}
The $Z$ matrices read as
\begin{equation}\label{eq:Z1+3-3}
\begin{split}
	Z^{123\star} &= \frac{1}{4}
	\begin{pmatrix}
		4u & 0 & 0 \\
		0 & 1 & -4v \\
		0 & -4v & 1
	\end{pmatrix}, \\
	Z^{124\star} &= \frac{1}{4}
	\begin{pmatrix}
		1 & 0 & -4v \\
		0 & 4u & 0 \\
		-4v & 0 & 1
	\end{pmatrix}, \\
	Z^{134\star} &= \frac{1}{4}
	\begin{pmatrix}
		1 & 0 & 4v \\
		0 & 4u & 0 \\
		4v & 0 & 1
	\end{pmatrix}, \\
	Z^{123\star} &= \frac{1}{4}
	\begin{pmatrix}
		1 & 4v & 0 \\
		4v & 1 & 0 \\
		0 & 0 & 4u
	\end{pmatrix},
\end{split}
\end{equation}
and 
\begin{equation}\label{eq:Z1+3-1}
	Z^{1\star} = Z^{2\star} = Z^{3\star} = Z^{4\star} = 
	\begin{pmatrix}
		x
	\end{pmatrix},
\end{equation}
where the values of the parameters are
\begin{equation}
	u = 0.0295..., \quad v = 0.2204....
\end{equation}
The matrices $Z^{ij\star}$ are all the same and read as
\begin{equation}\label{eq:Z2+2}
	Z^{ij\star} = \frac{1}{4} E_2.
\end{equation}
Clearly, we have 
\begin{equation}
	\tr Z^{12\star} + \tr Z^{34\star} = 1,
\end{equation}
and two other similar equalities for $2+2$ bipartitions. The condition
\begin{equation}
	\tr Z^{123\star} + \tr Z^{4\star} = 1
\end{equation}
and similar for $1+3$ bipartitions produce the equation
\begin{equation}\label{eq:tr1}
	u + x + \frac{1}{2} = 1.
\end{equation}
The condition that the matrix
\begin{equation}
	\begin{pmatrix}
		X^\star[1, 2, 3] & Z^{123\star} \\
		Z^{123\star\mathrm{T}} & P^\star[1, 2, 3]
	\end{pmatrix}
\end{equation}
has three zero eigenvalues produces the equations
\begin{equation}\label{eq:eq4-2}
\begin{split}
	4(x-v) = 1, \quad 4(u+v) &= 1, \\
	x - 2 u - 4 v x + 4 x^2 - 8 y^2 &= 0.
\end{split}
\end{equation}
The other three 1$\times$3 partitions produce identical equations. The KKT
condition \eqref{eq:kkt-1} gives the following equations:
\begin{equation}\label{eq:eq4-3}
\begin{alignedat}{2}
	2(bx-2cy) &= u, &\quad cx - by &= 0, \\
	2(2cv + y) &= c, &\quad 2bu -x &= 0, \\
	8(ax-cy) &= 1, &\quad (a+d)y &= cx, \\
	2(cy-dx) &= v, &\quad 2cu-y &= 0, \\
	2(2dv+x) &= a, &\quad 4av - d &= 0.
\end{alignedat}
\end{equation}
The KKT condition \eqref{eq:kkt-2} is already satisfied. The KKT condition
\eqref{eq:KKTscaling2} in this case reads as 
\begin{displaymath}
	\begin{split}
		&X^\star\left(\gamma_{xx} - \frac{t}{4}\gamma^{123\star}_{xx}\oplus\gamma^{4\star}_{xx}
		- \ldots \right) = 0, \\
		&P^\star\left(\gamma_{pp} - \frac{t}{4}\gamma^{123\star}_{pp}\oplus\gamma^{4\star}_{pp}
		- \ldots\right) = 0,
	\end{split}
\end{displaymath}
where the dots stand for the other three 1$\times$3 partitions and
\begin{equation}
	t = \tr(X^\star \gamma_{xx} + P^\star \gamma_{pp}) = 8(\alpha x + 2 \beta y).
\end{equation}
These conditions produce the following additional equations:
\begin{equation}\label{eq:eq4-4}
\begin{split}
	4 a x + 2 b x - 8 c y + x - 1 &= 0, \\
	(2 (1 + 4 a + 2 b) y^2-x - 8 c x y)\beta &= (8 c y^2-4 c x^2) \alpha.
\end{split}
\end{equation}
The equations \eqref{eq:eq3-2}, \eqref{eq:tr1}, \eqref{eq:eq4-2},
\eqref{eq:eq4-3} and \eqref{eq:eq4-4} together form a system that has a solution
which numerically agrees with the values produced by the solver. The primal
solution reads as
\begin{equation}
\begin{split}
	x^\star &= \frac{341290667 - 17065600\sqrt[3]{2} + 853320\sqrt[3]{4}}{682709334}, \\
	y^\star &= \sqrt[6]{2}\frac{80 + 31999\sqrt[3]{2} - 1600\sqrt[3]{4}}{128004}, \\
	u^\star &= 20\frac{1600 + 426640\sqrt[3]{2} -21333\sqrt[3]{4}}{341354667}, \\
	v^\star &= \frac{341226667 - 34131200\sqrt[3]{2} +1706640\sqrt[3]{4}}{1365418668}.
\end{split}
\end{equation}
The dual solution is simpler
\begin{equation}
\begin{split}
	a^\star &= \frac{80 + \sqrt[3]{2} + 800\sqrt[3]{4}}{320}, \\
	b^\star &= 5\sqrt[3]{2} + \frac{1}{80\sqrt[3]{2}}, \quad
	c^\star = 5\sqrt[6]{2} - \frac{1}{160\sqrt[6]{2}}, \\
	d^\star &= \frac{-80 + \sqrt[3]{2} + 800\sqrt[3]{4}}{320}.
\end{split}
\end{equation}
The optimal scaling factor $t^\star$ is equal to
\begin{displaymath}
	t^\star = u^\star + x^\star + \frac{1}{2} = 20\frac{-40+\sqrt[3]{2}+800\sqrt[3]{2}}{32001} = 0.7694....
\end{displaymath}
We arrive to the following result:
\begin{equation}
	\mathcal{T}^+_{\mathfrak{I}_2}(\gamma_{xx}, \gamma_{pp}) = 1-t^\star_{\mathrm{min}}
	 = 0.2305...,
\end{equation}
which verifies that the state is genuine multipartite entangled. The values for
$x^\star$, $y^\star$ and $t^\star$ were obtained in Ref.~\cite{NewJPhys.8.51} in
a purely numerical way and without any evidence of their optimality. Here we
again constructed analytical expressions for these numbers together with an
analytical validation of their optimality. Note that the matrices $X^\star$ and
$P^\star$ are non-degenerate and thus have rank 4.

\subsection{Parametric solution}\label{sec:secPar}

The state with the CM given by Eq.~\eqref{eq:G4} can, in fact, be solved for
arbitrary $\alpha$ and $\beta$, not only for the values given by
Eq.~\eqref{eq:Galphabeta}. The eigenvalues of $\mathcal{P}(\gamma_{xx},
\gamma_{pp})$ are given by
\begin{equation}
	\alpha \pm \frac{1}{2}\sqrt{1+8\beta^2}
\end{equation}
with multiplication 4, so this CM is physical iff
\begin{equation}\label{eq:alphabeta}
	\beta^2 \leqslant \frac{1}{2}\left(\alpha^2-\frac{1}{4}\right).
\end{equation}
The condition $\alpha \geqslant 1/2$ is then automatically satisfied. This state
is pure exactly when the inequality \eqref{eq:alphabeta} becomes equality. For
$\alpha = 1/2$ the only possible value for $\beta$ is zero and this CM is then
the CM of the four-mode vacuum state. We derive an entanglement condition on
$\alpha$ and $\beta$ for the general state. As in the previous parts for
concrete states, we construct an analytical solution of both problems.

\subsubsection{Eigenvales}

All the machinery of constructing and solving the system of KKT equations has
already been developed before, so here we just present the final result. The
primal solution reads as
\begin{equation}
	\begin{split}
		X^\star &= \frac{1}{8}
		\begin{pmatrix}
			1 & 8x^\star & 8x^\star & 0 \\
			8x^\star & 1 & 0 & 8x^\star \\
			8x^\star & 0 & 1 & -8x^\star \\
			0 & 8x^\star & -8x^\star & 1
		\end{pmatrix}, \\
		P^\star &= \frac{1}{8}
		\begin{pmatrix}
			1 & -8x^\star & -8x^\star & 0 \\
			-8x^\star & 1 & 0 & -8x^\star \\
			-8x^\star & 0 & 1 & 8x^\star \\
			0 & -8x^\star & 8x^\star & 1
		\end{pmatrix},
	\end{split}
\end{equation}
where $x^\star$ is given by
\begin{eqnarray}
	x^\star = -\frac{\beta}{2\sqrt{1+32\beta^2}}.
\end{eqnarray}
The eigenvalues of these matrices are the same and are given by
\begin{equation}
	\frac{1}{8} \pm \frac{\beta}{\sqrt{2(1+32\beta^2)}} > 0,
\end{equation}
with multiplicity 2, so $X^\star$ and $P^\star$ are positive-definite and
obviously satisfy the normalization condition $\tr(X^\star + P^\star) = 1$. 

In the dual solution only the 1$\times$3 bipartitions are active, so that 
\begin{equation}
	p^\star_{123|4} = p^\star_{124|3} = p^\star_{134|2} = p^\star_{234|1} = \frac{1}{4},
\end{equation}
and coefficients of the other bipartitions are zero. The states corresponding to
non-zero coefficients read as
\begin{equation}\label{eq:gamma123}
	\begin{alignedat}{2}
		\gamma^{123\star}_{xx} &= 
		\begin{pmatrix}
			b^\star & -c^\star & -c^\star \\
			-c^\star & a^\star & d^\star \\
			-c^\star & d^\star & a^\star
		\end{pmatrix}, &\ 
		\gamma^{123\star}_{pp} &= 
		\begin{pmatrix}
			b^\star & c^\star & c^\star \\
			c^\star & a^\star & d^\star \\
			c^\star & d^\star & a^\star
		\end{pmatrix}, \\
		\gamma^{124\star}_{xx} &= 
		\begin{pmatrix}
			a^\star & -c^\star & d^\star \\
			-c^\star & b^\star & -c^\star \\
			d^\star & -c^\star & a^\star
		\end{pmatrix}, &\ 
		\gamma^{124\star}_{pp} &= 
		\begin{pmatrix}
			a^\star & c^\star & d^\star \\
			c^\star & b^\star & c^\star \\
			d^\star & c^\star & a^\star
		\end{pmatrix}, \\
		\gamma^{134\star}_{xx} &= 
		\begin{pmatrix}
			a^\star & -c^\star & -d^\star \\
			-c^\star & b^\star & c^\star \\
			-d^\star & c^\star & a^\star
		\end{pmatrix}, &\  
		\gamma^{134\star}_{pp} &= 
		\begin{pmatrix}
			a^\star & c^\star & -d^\star \\
			c^\star & b^\star & -c^\star \\
			-d^\star & -c^\star & a^\star
		\end{pmatrix}, \\
		\gamma^{234\star}_{xx} &= 
		\begin{pmatrix}
			a^\star & -d^\star & -c^\star \\
			-d^\star & a^\star & c^\star \\
			-c^\star & c^\star & b^\star
		\end{pmatrix}, &\ 
		\gamma^{234\star}_{pp} &= 
		\begin{pmatrix}
			a^\star & -d^\star & c^\star \\
			-d^\star & a^\star & -c^\star \\
			c^\star & -c^\star & b^\star
		\end{pmatrix},
	\end{alignedat}
\end{equation}
where the parameters are 
\begin{equation}
\begin{split}
	a^\star &= \frac{1}{4}\left(\sqrt{1+32\beta^2} + 1\right), \quad
	b^\star = \frac{1}{2}\sqrt{1+32\beta^2}, \\
	c^\star &= -2\beta, \quad d^\star = \frac{1}{4}\left(\sqrt{1+32\beta^2} - 1\right).
\end{split}
\end{equation}
The single-partite matrices $\gamma^{i\star}_{xx}$ and $\gamma^{i\star}_{pp}$
are all equal to the CM of the vacuum state. Eigenvalues of
$\mathcal{P}(\gamma^{123\star}_{xx}, \gamma^{123\star}_{pp})$ and of the other
three 1$\times$3 partitions read as
\begin{equation}
	(0, 0, 0, 1, \sqrt{1+32\beta^2}, \sqrt{1+32\beta^2}),
\end{equation}
so these are physical CMs.

It is easy to see that 
\begin{displaymath}
\begin{split}
	\lambda^\star_{\mathrm{min}} &\equiv \tr(X^\star\gamma_{xx} + P^\star\gamma_{pp}) 
	-\str(X^\star[1, 2, 3], P^\star[1, 2, 3]) \\
	&-\str(X^\star[4], P^\star[4]) = -\frac{1}{4}(1-4\alpha + \sqrt{1+32\beta^2}),
\end{split}
\end{displaymath}
and the other three 1$\times$3 bipartitions give the same value. For the
2$\times$2 bipartitions we have different values. For $12|34$ we have
\begin{displaymath}
	\begin{split}
		\tr(X^\star\gamma_{xx} &+ P^\star\gamma_{pp}) 
		-\str(X^\star[1, 2], P^\star[1, 2]) \\
		&-\str(X^\star[3, 4], P^\star[3, 4]) \\
		&= \alpha-\frac{16\beta^2 +\sqrt{1+16\beta^2}}{2\sqrt{1+32\beta^2}} 
		< \lambda^\star_{\mathrm{min}}.
	\end{split}
\end{displaymath}
The case of partition $13|24$ is identical to $12|34$, and for $14|23$ we have
\begin{displaymath}
	\begin{split}
		\tr(X^\star\gamma_{xx} &+ P^\star\gamma_{pp}) 
		-\str(X^\star[1, 4], P^\star[1, 4]) \\
		&-\str(X^\star[2, 3], P^\star[2, 3]) \\
		&= -\frac{1}{2}\left(1-2\alpha+\frac{16\beta^2}{\sqrt{1+32\beta^2}}\right) 
		< \lambda^\star_{\mathrm{min}}.
	\end{split}
\end{displaymath}
We see that $\lambda^\star_{\mathrm{min}}$ is the maximal value among the values
obtained with all bipartitions. On the other hand, for the dual solution we have
\begin{equation}
\begin{split}
	\gamma_{xx} - \frac{1}{4}\gamma^{123\star}_{xx} \oplus \gamma^{4\star}_{xx} - \ldots
	&= \lambda^\star_{\mathrm{min}} E_4, \\
	\gamma_{pp} - \frac{1}{4}\gamma^{123\star}_{pp} \oplus \gamma^{4\star}_{pp} - \ldots
	&= \lambda^\star_{\mathrm{min}} E_4, 
\end{split}
\end{equation}
where the dots stand for the combinations corresponding to the rest of
1$\times$3 bipartitions. These equalities demonstrate strong duality between the
constructed primal and dual solutions, verifying that they are indeed optimal.

For the entanglement measure, we have
\begin{equation}
	\mathcal{E}_{\mathfrak{I}_2}(\gamma) = -\lambda^\star_{\mathrm{min}} = 
	\frac{1}{4}(1-4\alpha + \sqrt{1+32\beta^2}).
\end{equation}
Combining the condition $\mathcal{E}^+(\gamma) > 0$ with the physicality
condition \eqref{eq:alphabeta}, we derive the following condition for the
entanglement:
\begin{equation}\label{eq:eab}
	\frac{1}{4} \leqslant \alpha^2 - 2\beta^2 < \frac{\alpha}{2}.
\end{equation}
The larger $\beta^2$, the larger the entanglement measure, and for the maximal
allowed value given by Eq.~\eqref{eq:alphabeta} we have
\begin{equation}
	\mathcal{E}_{\mathfrak{I}_2}(\gamma) = \frac{1}{4}(1-4\alpha + \sqrt{16\alpha^2-3}).
\end{equation}
The plot of this quantity as a function of $\alpha$ is shown in
Fig.~\ref{fig:Ea}. We see that this pure state is entangled for all $\alpha >
1/2$ and its entanglement measure tends to $1/4$ when $\alpha\to+\infty$. Note
that the matrices $X^\star$ and $P^\star$ are always non-degenerate and thus
have rank 4.

\begin{figure}
	\includegraphics[scale=0.9]{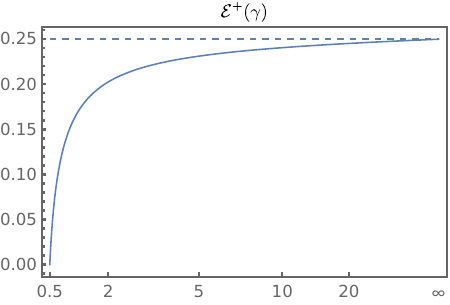}
	\caption{$\mathcal{E}^+$ of the pure state \eqref{eq:G4} as a function of
	$\alpha$ with $\beta$ given by the right-hand side of \eqref{eq:alphabeta}.}
	\label{fig:Ea}
\end{figure}

\subsubsection{Scaling}

\begin{figure}
	\includegraphics[scale=0.9]{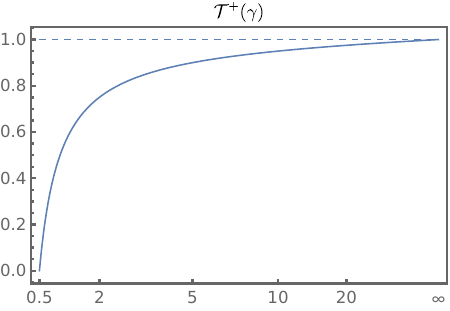}
	\caption{$\mathcal{T}^+$ of the pure state \eqref{eq:G4} as a function of
	$\alpha$ with $\beta$ given by the right-hand side of \eqref{eq:alphabeta}.}
	\label{fig:Ta}
\end{figure}

\begin{figure*}[t!]
	\hspace*{-8mm}
	\begin{subfigure}[t]{0.65\textwidth}
	    \includegraphics[scale=0.9]{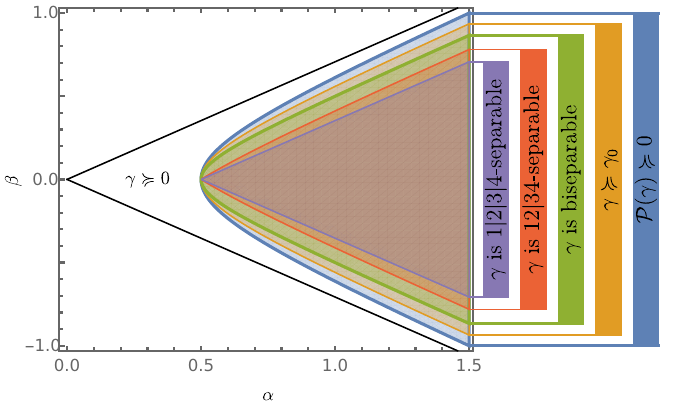}
	    \caption{The regions defined by Eqs.~\eqref{eq:region-1}-\eqref{eq:region-6}.}\label{fig:regions}
	\end{subfigure}
    \hspace*{-10mm}
	\begin{subfigure}[t]{0.42\textwidth}
	    \includegraphics[scale=0.85]{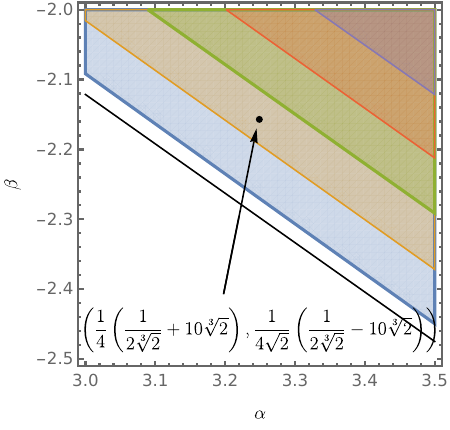}
	    \caption{The location of the point given by Eq.~\eqref{eq:Galphabeta}.}\label{fig:regions2}
	\end{subfigure}
	\caption{The physicality region and the regions of various separability properties of the state given by Eq.~\eqref{eq:G4}.}
\end{figure*}

The primal solution in this case has the form of Eq.~\eqref{eq:gamma123}, but
the parameters are different and read as
\begin{equation}
\begin{alignedat}{2}
	a^\star &= \frac{\alpha^2}{2(\alpha^2 - 2\beta^2)}, &\quad b^\star &= \frac{\alpha^2 + 2\beta^2}{2(\alpha^2 - 2\beta^2)}, \\
	c^\star &= -\frac{\alpha\beta}{\alpha^2 - 2\beta^2}, &\quad d^\star &= \frac{\beta^2}{\alpha^2 - 2\beta^2}.
\end{alignedat}
\end{equation}
Note that from the physicality of the state, the first inequality of
Eq.~\eqref{eq:eab}, it follows that the denominators in these expressions are
strictly positive. The dual solution is of the form given by Eq.~\eqref{eq:sXY}
where the parameters read as
\begin{equation}
	x^\star = \frac{\beta^2}{2\alpha^2} + \frac{1}{4}, \quad y^\star = -\frac{\beta}{2\alpha}.
\end{equation}
The dual solution is given by Eqs.~\eqref{eq:Z1+3-3}, \eqref{eq:Z1+3-1} and
\eqref{eq:Z2+2}, where 
\begin{equation}
	u^\star = \frac{1}{4} - \frac{\beta^2}{2\alpha^2}, \quad
	v^\star = \frac{\beta^2}{2\alpha^2}.
\end{equation}
The optimal scaling parameter is given by
\begin{equation}
	t^\star = \tr(X^\star \gamma_{xx} + P^\star \gamma_{pp}) = 2\frac{\alpha^2-2\beta^2}{\alpha}.
\end{equation}
Simple algebraic manipulations show that the entanglement condition $t^\star <
1$, together with the physicality condition, again produce Eq.~\eqref{eq:eab}.
For the measure $\mathcal{T}^+$ we have
\begin{equation}
	\mathcal{T}_{\mathfrak{I}_2}(\gamma) = 1 - t^\star = 1 - 2\alpha + 4\frac{\beta^2}{\alpha}.
\end{equation}
This measure is maximal when the state is pure and is given by
\begin{equation}
	\mathcal{T}_{\mathfrak{I}_2}(\gamma) = 1 - \frac{1}{2\alpha}.
\end{equation}
The plot of this quantity is shown in Fig.~\ref{fig:Ta} and looks very similar
to the plot of $\mathcal{E}_{\mathfrak{I}_2}(\gamma)$ for the same state, Fig.~\ref{fig:Ea}.

We now show that the refined condition \eqref{eq:gammatestt3} is not necessary and
sufficient when $|\mathfrak{I}| > 1$. As a counter example we take the CM
\eqref{eq:G4} and $\mathfrak{I} = \{1|234, 2|134, 3|124, 4|123\}$. As
single-mode states we take the vacuum 
\begin{equation}
	\gamma^{1} = \gamma^{2} = \gamma^{3} = \gamma^{4} = \frac{1}{2}
	\begin{pmatrix}
		1 & 0 \\
		0 & 1
	\end{pmatrix},
\end{equation}
and as the coefficients we take $p_1 = p_2 = p_3 = p_4 = 1/4$. We then have
\begin{equation}
\begin{split}
	\gamma_0 &= p_1 \gamma^{1} \oplus \frac{i}{2}\Omega^{234}_3 + \ldots + p_4 \gamma^{4} \oplus \frac{i}{2}\Omega^{123}_3 \\
	&= \frac{1}{8}
	\begin{pmatrix}
		1 & 0 & 0 & 0 & 3i & 0 & 0 & 0 \\
		0 & 1 & 0 & 0 & 0 & 3i & 0 & 0 \\
		0 & 0 & 1 & 0 & 0 & 0 & 3i & 0 \\
		0 & 0 & 0 & 1 & 0 & 0 & 0 & 3i \\
		-3i & 0 & 0 & 0 & 1 & 0 & 0 & 0 \\
		0 & -3i & 0 & 0 & 0 & 1 & 0 & 0 \\
		0 & 0 & -3i & 0 & 0 & 0 & 1 & 0 \\
		0 & 0 & 0 & -3i & 0 & 0 & 0 & 1
	\end{pmatrix}.
\end{split}
\end{equation}
The eigenvalues of the difference are
\begin{equation}
	\lambda_\pm = \frac{1}{8}(8\alpha-1 \pm \sqrt{128\beta^2+9})
\end{equation}
with multiplicity 4. It follows that
\begin{equation}
	\gamma \succcurlyeq \gamma_0 \ \Leftrightarrow\ \lambda_- \geqslant 0 \ 
	\Leftrightarrow\ \beta^2 \leqslant \frac{\alpha^2}{2} - \frac{2\alpha+1}{16}.
\end{equation}
We thus derived three conditions: the condition for physicality of $\gamma$
reads as
\begin{equation}\label{eq:region-1}
	\beta^2 \leqslant \frac{\alpha^2}{2} - \frac{1}{8},
\end{equation}
the condition for biseparability of $\gamma$ is given by
\begin{equation}\label{eq:region-2}
	\beta^2 \leqslant \frac{\alpha^2}{2} - \frac{\alpha}{4},
\end{equation}
and the condition for the inequality $\gamma \succcurlyeq \gamma_0$ is
\begin{equation}\label{eq:region-3}
	\beta^2 \leqslant \frac{\alpha^2}{2} - \frac{2\alpha+1}{16}.
\end{equation}
In addition, one can also derive the conditions for full $1|2|3|4$-separability
and for $12|34$-separability. We present here only the final results: the state
is fully separable iff
\begin{equation}\label{eq:region-4}
	|\beta| \leqslant \frac{1}{\sqrt{2}} \left(\alpha - \frac{1}{2}\right),
\end{equation}
and it is $12|34$-separable iff
\begin{equation}\label{eq:region-5}
	|\beta| \leqslant \frac{1}{4}(\sqrt{8\alpha^2-1} - 1).
\end{equation}
We also add the condition $\gamma \succcurlyeq 0$ here:
\begin{equation}\label{eq:region-6}
	|\beta| \leqslant \frac{\alpha}{\sqrt{2}}.
\end{equation}
These six regions are shown in Fig.~\ref{fig:regions}. The ``strip'' between the
blue and the green lines is the domain where $\gamma$ is genuine multipartite
entangled. We emphasize that this figure is not schematic like a Venn diagram,
which shows different sets for pure illustrative purpose only, it shows real
sets of different kinds of separability. Figures \ref{fig:Capprox} and
\ref{fig:cs} are also not purely abstract and correspond to a realistic
situation. If we work with states of the form given by Eq.~\eqref{eq:G4}, then
the former figure illustrates finding the best physical point $(\alpha^\star,
\beta^\star)$ corresponding to a given non-physical one $(\alpha^\circ,
\beta^\circ)$, and the latter figure illustrates the fully separable state
closest to a given state.

As Fig.~\ref{fig:regions} illustrates, this domain intersects with the region
where the inequality $\gamma \succcurlyeq \gamma_0$ holds true, the intersection
being the ``strip'' between the yellow and the green lines. The point given by
Eq.~\eqref{eq:Galphabeta} is located exactly in this ``gray'' zone.
Consequently, the new optimization problem does not say that the corresponding
CM is entangled:
\begin{equation}
	t^\star = 1.2847... > 1.
\end{equation}
In the domain where the new problem shows a violation of the new separability
condition, the ``strip'' between the blue and the yellow lines, the state is
definitely genuine entangled.

\subsection{Tripartite symmetric Gaussian and non-Gaussian states}\label{sec:X}

Here we consider a reduced tripartite state with the following CM:
\begin{equation}\label{eq:CM3S}
    \gamma_{xx} = 
    \begin{pmatrix}
        \alpha & \beta & \beta \\
        \beta & \alpha & \beta \\
        \beta & \beta & \alpha
    \end{pmatrix}, \quad
    \gamma_{pp} = 
    \begin{pmatrix}
        \alpha & -\beta & -\beta \\
        -\beta & \alpha & -\beta \\
        -\beta & -\beta & \alpha
    \end{pmatrix}.
\end{equation}
This matrix is positive semidefinite, $\gamma \succcurlyeq 0$, iff 
\begin{equation}
    |\beta| \leqslant \frac{\alpha}{2},
\end{equation}
and it is a physical CM provided that
\begin{equation}
    |\beta| \leqslant \frac{1}{4}\sqrt{4\alpha^2-1}.
\end{equation}
This covariance matrix is symmetric, so it is enough to consider only one
bipartition, for example $1|23$, and any result will be equally valid for the
other two bipartitions. Below we will also consider combinations of two
bipartitions, and it is enough to study only one combination, say $\{1|23,
2|13\}$. Here we present just the final results obtained by solving the
eigenvalue optimization problem with the approach presented in details in the
previous sections. Changing $\beta \to -\beta$ just exchanges $\gamma_{xx}
\leftrightarrow \gamma_{pp}$, so it suffices to consider just the case of $\beta
\geqslant 0$. Because the presentation of the results is too short to put them
into separate subsections, we use lemmas to clearly separate where each
individual result starts and ends.

Our first result is 
\begin{lmn}
The Gaussian state with the CM given by Eq.~\eqref{eq:CM3S} is $1|2|3$-separable
iff the parameters $\alpha$ and $\beta$ satisfy the relation
\begin{equation}\label{eq:1|2|3-sep}
    3|\beta| + \sqrt{\beta^2 + 1} \leqslant 2\alpha.
\end{equation}
\end{lmn}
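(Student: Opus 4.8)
The plan is to specialize the general CM separability condition of Eq.~\eqref{eq:gammatest} to the single partition into singletons, $\mathcal{I}=\{1,2,3\}$. Since only one partition is involved, $1|2|3$-separability of the reduced CM is equivalent to the existence of physical single-mode blocks such that $\gamma_{xx}\succcurlyeq\diag(a_1,a_2,a_3)$ and $\gamma_{pp}\succcurlyeq\diag(b_1,b_2,b_3)$ with $a_ib_i\geqslant\tfrac14$ (the $n=1$ case of Eq.~\eqref{eq:pcred}, which also forces $a_i,b_i\geqslant0$). Any $xp$-correlation in the single-mode CMs may be dropped for free: the two diagonal blocks of the PSD difference $\gamma-\bigoplus_i\gamma^{(i)}$ are themselves PSD, and discarding $c_i$ only relaxes physicality. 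Throughout I assume $\beta\geqslant0$, the general $|\beta|$ following from the $\beta\to-\beta$ symmetry noted above, and I use that by Eq.~\eqref{eq:lmxy} $\lambda_{\min}(\gamma_{xx})=\alpha-\beta$ and $\lambda_{\min}(\gamma_{pp})=\alpha-2\beta$, both positive in the physical region.

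For sufficiency I would exhibit an explicit symmetric witness. Assuming $(\alpha-\beta)(\alpha-2\beta)\geqslant\tfrac14$, take the three identical single-mode CMs $a_i=\alpha-\beta$, $b_i=\alpha-2\beta$. Each is physical since $a_ib_i=(\alpha-\beta)(\alpha-2\beta)\geqslant\tfrac14$, and $\gamma_{xx}-(\alpha-\beta)E_3\succcurlyeq0$, $\gamma_{pp}-(\alpha-2\beta)E_3\succcurlyeq0$ because these scalars are precisely the minimal eigenvalues. This certifies $1|2|3$-separability.

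The necessity direction is the crux, and the key device is symmetrization followed by Cauchy--Schwarz. Given any admissible $(a_i,b_i)$, I would exploit the permutation invariance of $\gamma_{xx}$ and $\gamma_{pp}$: conjugating $\gamma_{xx}\succcurlyeq\diag(a_i)$ by a permutation matrix $P_\pi$ (an orthogonal congruence preserving the PSD order) gives $\gamma_{xx}\succcurlyeq\diag(a_{\pi(i)})$, and averaging over $S_3$ yields $\gamma_{xx}\succcurlyeq\bar aE_3$ with $\bar a=\tfrac13\sum_ia_i$; likewise $\gamma_{pp}\succcurlyeq\bar bE_3$. Applying Cauchy--Schwarz to $(\sqrt{a_i})$ and $(\sqrt{b_i})$ gives $(\sum_ia_i)(\sum_ib_i)\geqslant(\sum_i\sqrt{a_ib_i})^2\geqslant(3\cdot\tfrac12)^2=\tfrac94$, hence $\bar a\bar b\geqslant\tfrac14$. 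Since $0\leqslant\bar a\leqslant\alpha-\beta$ and $0\leqslant\bar b\leqslant\alpha-2\beta$, monotonicity of the product yields $(\alpha-\beta)(\alpha-2\beta)\geqslant\bar a\bar b\geqslant\tfrac14$.

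Finally I would convert $(\alpha-\beta)(\alpha-2\beta)\geqslant\tfrac14$ into the stated form: expanding gives $(2\alpha-3\beta)^2\geqslant\beta^2+1$, and since $\beta\leqslant\tfrac14\sqrt{4\alpha^2-1}<\alpha/2$ in the physical region we have $2\alpha-3\beta>0$, so taking the positive square root produces $3\beta+\sqrt{\beta^2+1}\leqslant2\alpha$; restoring $|\beta|$ via the $\beta\to-\beta$ symmetry gives Eq.~\eqref{eq:1|2|3-sep}. The main obstacle is the necessity direction, where one must rule out that an asymmetric or non-scalar choice of single-mode blocks beats the symmetric ansatz; the symmetrization-plus-Cauchy--Schwarz step is exactly what closes this gap without invoking an explicit dual optimization.
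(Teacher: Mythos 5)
Your proof is correct, but it takes a genuinely different route from the paper's. The paper runs its general duality machinery: it exhibits the optimal witness pair $X^\star, P^\star$ and the optimal product decomposition into identical squeezed vacua $a^\star=\tfrac12(\sqrt{\beta^2+1}+\beta)$, $b^\star=\tfrac12(\sqrt{\beta^2+1}-\beta)$ (so $a^\star b^\star=\tfrac14$), and certifies optimality by matching the primal value $\tr(X^\star\gamma_{xx}+P^\star\gamma_{pp})-3\str(X^\star[1],P^\star[1])$ against the minimal eigenvalue of the differences, thereby obtaining the exact measure $\mathcal{E}_{1|2|3}(\gamma_{xx},\gamma_{pp})=\tfrac12(3|\beta|+\sqrt{\beta^2+1})-\alpha$, of which the lemma is the statement $\mathcal{E}_{1|2|3}\leqslant 0$. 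You bypass the optimization entirely: sufficiency via the scalar ansatz $a_i=\alpha-\beta$, $b_i=\alpha-2\beta$ placed at the minimal eigenvalues from Eq.~\eqref{eq:lmxy}, and necessity via $S_3$-symmetrization (legitimate: each permuted bound $\gamma_{xx}\succcurlyeq\diag(a_{\pi(i)})$ holds by permutation invariance of $\gamma_{xx}$, and averaging stays in the PSD order by convexity) followed by Cauchy--Schwarz to get $\bar a\,\bar b\geqslant\tfrac14$; the identity $4(\alpha-\beta)(\alpha-2\beta)=(2\alpha-3\beta)^2-\beta^2$ together with $2\alpha-3\beta>0$ (which holds in the physical region, and even follows directly from $\alpha-\beta\geqslant\bar a>0$, $\alpha-2\beta\geqslant\bar b>0$) converts your product criterion into Eq.~\eqref{eq:1|2|3-sep}. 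Reassuringly, the two constructions agree on the boundary: when $2\alpha=3\beta+\sqrt{\beta^2+1}$ one has $\alpha-\beta=a^\star$ and $\alpha-2\beta=b^\star$. What each buys: your argument is elementary and self-certifying --- the symmetrization-plus-Cauchy--Schwarz step is exactly what rules out asymmetric or correlated single-mode blocks, with the $xp$-correlations correctly discarded since dropping $c_i$ only increases $a_ib_i-c_i^2$ --- whereas the paper's heavier KKT route is quantitative (it yields the optimal witness and the value of $\mathcal{E}_{1|2|3}$ also in the entangled regime) and transfers unchanged to the asymmetric cases ($1|23$, $\{1|23,2|13\}$, biseparability) of the subsequent lemmas, where your permutation-averaging over all three modes no longer applies in this simple form. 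Both directions of your proof also rest, as you note, on the equivalence between state separability and the CM condition \eqref{eq:gAB} for a single partition; that is indeed available here since $|\mathfrak{I}|=1$, the paper's caveat (via Ref.~\cite{Baksova2025}) concerning only $|\mathfrak{I}|>1$.
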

\begin{proof}
If $\beta \geqslant 0$ then the optimal matrices $X^\star$ and $P^\star$ read as
\begin{equation}
    X^\star = 
    \begin{pmatrix}
        x^\star & y^\star & y^\star \\
        y^\star & x^\star & y^\star \\
        y^\star & y^\star & x^\star
    \end{pmatrix}, \quad
    P^\star =
    \begin{pmatrix}
        p^\star & p^\star & p^\star \\
        p^\star & p^\star & p^\star \\
        p^\star & p^\star & p^\star
    \end{pmatrix}, 
\end{equation}
where
\begin{equation}
\begin{split}
    x^\star &= \frac{1}{6}\left(1 - \frac{|\beta|}{\sqrt{\beta^2+1}}\right), \\
    y^\star &= -\frac{1}{12}\left(1 - \frac{|\beta|}{\sqrt{\beta^2+1}}\right), \\
    p^\star &= \frac{1}{6}\left(1 + \frac{|\beta|}{\sqrt{\beta^2+1}}\right).
\end{split}
\end{equation}
If $\beta < 0$ then $X^\star$ and $P^\star$ are swapped. The eigenvalues of
$X^\star$ are 0 and
\begin{equation}
    \frac{1}{4}\left(1-\frac{|\beta|}{\sqrt{\beta^2+1}}\right) > 0,
\end{equation}
with multiplicity 2. The eigenvalues of $P^\star$ are 0 with multiplicity 2 and 
\begin{equation}
    \frac{1}{2}\left(1+\frac{|\beta|}{\sqrt{\beta^2+1}}\right) > 0,
\end{equation}
so $X^\star, P^\star \succ 0$.

The optimal dual solution read as
\begin{equation}
\begin{split}
    \gamma^{1\star}_{xx} &= \gamma^{2\star}_{xx} = \gamma^{3\star}_{xx} = 
    \begin{pmatrix}
        a^\star
    \end{pmatrix}, \\ 
    \gamma^{1\star}_{pp} &= \gamma^{2\star}_{pp} = \gamma^{3\star}_{pp} = 
    \begin{pmatrix}
        b^\star
    \end{pmatrix},
\end{split}
\end{equation}
where
\begin{equation}
    a^\star = \frac{1}{2}(\sqrt{\beta^2+1} + \beta), \quad
    b^\star = \frac{1}{2}(\sqrt{\beta^2+1} - \beta).
\end{equation}
These states are single-mode squeezed vacuum states with the degree of squeezing
$r$ given by
\begin{equation}
    r = \frac{1}{2} \ln(\sqrt{\beta^2+1} + \beta).
\end{equation}

It is straightforward to verify that 
\begin{equation}\label{eq:abmin}
\begin{split}
    \tr(X^\star\gamma_{xx} &+ P^\star\gamma_{pp}) - 3\str(X^\star[1], P^\star[1]) \\
    &= \alpha - \frac{1}{2}(3|\beta| + \sqrt{\beta^2+1}).
\end{split}
\end{equation}
On the other hand, we have that distinct eigenvalues of both differences
\begin{equation}
    \gamma_{xx} - \gamma^{1\star}_{xx} \oplus \gamma^{2\star}_{xx} \oplus \gamma^{3\star}_{xx}, \quad
    \gamma_{pp} - \gamma^{1\star}_{pp} \oplus \gamma^{2\star}_{pp} \oplus \gamma^{3\star}_{pp},
\end{equation}
are given by
\begin{equation}
    \alpha - \frac{1}{2}(\pm 3\beta + \sqrt{\beta^2 + 1}).
\end{equation}
It follows that the minimal eigenvalue is given by the right-hand side of
Eq.~\eqref{eq:abmin}, which proves the optimality of the presented solutions.

We have just derived that
\begin{equation}
    \mathcal{E}_{1|2|3}(\gamma_{xx}, \gamma_{pp}) = \frac{1}{2}(3|\beta| + \sqrt{\beta^2+1}) - \alpha.
\end{equation}
The $1|2|3$-separability is expressed by the inequality
$\mathcal{E}_{1|2|3}(\gamma_{xx}, \gamma_{pp}) \leqslant 0$, which is exactly
the statement of the lemma, Eq.~\eqref{eq:1|2|3-sep}.
\end{proof}

\begin{lmn}
The Gaussian state with the CM given by Eq.~\eqref{eq:CM3S} is $1|23$-separable
iff its partial transposition with respect to the first mode is positive
semidefinite. This happens provided that the condition
\begin{equation}\label{eq:pos1|23}
    10\beta^2 + 2|\beta|\sqrt{9\beta^2+8} + 1 \leqslant 4\alpha^2.
\end{equation}
holds true.
\end{lmn}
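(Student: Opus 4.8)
The plan is to prove the two claims in turn: first that $1|23$-separability coincides with positivity of the partial transpose, and then that the latter reduces to \eqref{eq:pos1|23}. Since the bipartition $1|23$ pits a single mode against a two-mode block, I would invoke the standard fact that for $1 \times N$ Gaussian bipartitions the positive-partial-transpose (PPT) condition is necessary and sufficient for separability, which settles the first ``iff'' at once. Partial transposition with respect to mode $1$ acts on the CM by flipping $p_1 \to -p_1$, i.e. conjugation by $T = \diag(1,1,1,-1,1,1)$. Because the state is reduced ($\gamma_{xp}=0$), this leaves $\gamma_{xx}$ untouched and sends $\gamma_{pp}$ to $\tilde{\gamma}_{pp}$, the matrix obtained by flipping the signs of the off-diagonal entries in its first row and column; the result stays reduced, so by \eqref{eq:pcred} PPT is equivalent to $\mathcal{P}(\gamma_{xx}, \tilde{\gamma}_{pp}) \succcurlyeq 0$.

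Next I would exploit the $2 \leftrightarrow 3$ exchange symmetry, which both $\gamma_{xx}$ and $\tilde{\gamma}_{pp}$ respect, to block-diagonalize $\mathcal{P}(\gamma_{xx}, \tilde{\gamma}_{pp})$ into an antisymmetric $2\times2$ sector (spanned by $(x_2-x_3)/\sqrt{2}$ and $(p_2-p_3)/\sqrt{2}$) and a symmetric $4\times4$ sector (spanned by $x_1, (x_2+x_3)/\sqrt{2}$ and their momenta). The antisymmetric block is
\begin{equation*}
	\begin{pmatrix} \alpha-\beta & \tfrac{1}{2} \\ \tfrac{1}{2} & \alpha+\beta \end{pmatrix},
\end{equation*}
whose positivity amounts to $\alpha^2 - \beta^2 \geqslant \tfrac{1}{4}$; this is implied by the physicality bound $4\alpha^2 \geqslant 16\beta^2 + 1$ and hence never binds. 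The symmetric block is precisely the reduced physicality matrix $\mathcal{P}(A, B)$ of an effective two-mode CM with
\begin{equation*}
	A = \begin{pmatrix} \alpha & \sqrt{2}\beta \\ \sqrt{2}\beta & \alpha+\beta \end{pmatrix}, \quad
	B = \begin{pmatrix} \alpha & \sqrt{2}\beta \\ \sqrt{2}\beta & \alpha-\beta \end{pmatrix}.
\end{equation*}

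By the symplectic-eigenvalue characterization of physicality used in Sec.~\ref{sec:III}, $\mathcal{P}(A, B) \succcurlyeq 0$ holds iff the symplectic eigenvalues of $(A, B)$ are at least $\tfrac{1}{2}$, i.e. iff both eigenvalues of $\sqrt{A}B\sqrt{A}$ (equivalently, by similarity, of $AB$) are at least $\tfrac{1}{4}$. Two positive reals both exceed $\tfrac{1}{4}$ exactly when $\det(AB - \tfrac{1}{4}E_2) \geqslant 0$ and $\tr(AB) \geqslant \tfrac{1}{2}$; the trace condition reads $2\alpha^2 + 3\beta^2 \geqslant \tfrac{1}{2}$ and holds automatically for physical states since $\alpha^2 \geqslant \tfrac{1}{4}$. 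A routine computation then gives
\begin{equation*}
	\det\!\left(AB - \tfrac{1}{4}E_2\right) = \tfrac{1}{16}\Big[(4\alpha^2 - 10\beta^2 - 1)^2 - 4\beta^2(9\beta^2+8)\Big],
\end{equation*}
so the remaining requirement is $|4\alpha^2 - 10\beta^2 - 1| \geqslant 2|\beta|\sqrt{9\beta^2+8}$.

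I expect the final step to be the one needing care. The determinant inequality splits into two branches, and only the branch $4\alpha^2 - 10\beta^2 - 1 \geqslant 2|\beta|\sqrt{9\beta^2+8}$ is the claimed \eqref{eq:pos1|23}. The hard part is to discard the other branch, $4\alpha^2 \leqslant 10\beta^2 + 1 - 2|\beta|\sqrt{9\beta^2+8}$: I would rule it out using physicality, since combining it with $4\alpha^2 \geqslant 16\beta^2 + 1$ forces $6\beta^2 \leqslant -2|\beta|\sqrt{9\beta^2+8} \leqslant 0$, hence $\beta = 0$, where the state is trivially separable. Thus for every physical state the determinant condition collapses to exactly \eqref{eq:pos1|23}, and together with the automatic trace condition and the non-binding antisymmetric sector this yields $\mathcal{P}(\gamma_{xx}, \tilde{\gamma}_{pp}) \succcurlyeq 0$, completing the equivalence.
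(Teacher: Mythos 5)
Your proposal is correct, but it takes a genuinely different route from the paper's own proof. The paper proceeds constructively through its SDP machinery: it recognizes the optimal witness pair $X^\star, P^\star$ and the dual Gaussian states from the solver output, extracts $\lambda^\star_{\mathrm{min}}$ as a large closed-form expression, separately computes the six eigenvalues of $\mathcal{P}(\gamma^{\mathrm{PT}}_{xx}, \gamma^{\mathrm{PT}}_{pp})$, and then verifies a nontrivial radical identity showing $\lambda^\star_{\mathrm{min}} = \alpha - \frac{1}{2}\sqrt{10\beta^2 + 2|\beta|\sqrt{9\beta^2+8}+1}$; the coincidence of $1|23$-separability with PPT is thereby \emph{derived} from the optimization framework rather than assumed. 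You instead \emph{import} the Werner--Wolf theorem \cite{PhysRevLett.86.3658} that PPT is necessary and sufficient for $1\times N$ Gaussian bipartitions, which settles the first clause at once, and you obtain the scalar threshold by a symmetry-adapted reduction: the $2\leftrightarrow 3$ exchange splits the PT physicality matrix into an antisymmetric $2\times2$ block whose positivity $\alpha^2-\beta^2\geqslant\tfrac{1}{4}$ is indeed implied by physicality $16\beta^2+1\leqslant 4\alpha^2$ (and reproduces the paper's eigenvalues $\alpha\pm\tfrac{1}{2}\sqrt{4\beta^2+1}$), plus an effective two-mode block $\mathcal{P}(A,B)$ handled via Schur complement, i.e. both eigenvalues of $AB$ at least $\tfrac{1}{4}$. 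I checked your determinant identity — with $\tr(AB)=2\alpha^2+3\beta^2$ and $\det A\det B=(\alpha^2-2\beta^2)^2-\alpha^2\beta^2$ one gets exactly $\tfrac{1}{16}\bigl[(4\alpha^2-10\beta^2-1)^2-4\beta^2(9\beta^2+8)\bigr]$ — and your elimination of the spurious branch $4\alpha^2\leqslant 10\beta^2+1-2|\beta|\sqrt{9\beta^2+8}$ via physicality (forcing $\beta=0$, where \eqref{eq:pos1|23} holds anyway) is precisely the care point the argument needs; the trace condition is automatic as you say. What your route buys is a short, hand-checkable, fully elementary derivation independent of the numerically guided KKT construction; what it gives up is what the paper's construction also delivers — the explicit optimal witness $X^\star, P^\star$, the dual decomposition into a squeezed vacuum tensored with a pure two-mode state, and the value of the measure $\mathcal{E}_{1|23}$ in the entangled regime — and, conceptually, the paper's version serves as an internal consistency check of its framework against PPT, whereas yours relies on the known equivalence as an external input.
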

\begin{proof}
The optimal witness matrices read as
\begin{equation}
    X^\star = 
    \begin{pmatrix}
        x^\star & y^\star & y^\star \\
        y^\star & z^\star & z^\star \\
        y^\star & z^\star & z^\star
    \end{pmatrix}, \quad
    P^\star = 
    \begin{pmatrix}
        p^\star & q^\star & q^\star \\
        q^\star & r^\star & r^\star \\
        q^\star & r^\star & r^\star
    \end{pmatrix},
\end{equation}
and the dual solution is given by
\begin{equation}
\begin{split}
    &\gamma^{1\star}_{xx} = 
    \begin{pmatrix}
        a^\star
    \end{pmatrix}, \quad
    \gamma^{23\star}_{xx} = 
    \begin{pmatrix}
        b^\star & c^\star \\
        c^\star & b^\star
    \end{pmatrix}, \\
    &\gamma^{1\star}_{pp} = 
    \begin{pmatrix}
        d^\star
    \end{pmatrix}, \quad
    \gamma^{23\star}_{pp} = 
    \begin{pmatrix}
        e^\star & f^\star \\
        f^\star & e^\star
    \end{pmatrix}.
\end{split}
\end{equation}
The matrix elements are now given by much more complicated expressions. For the
single-mode part we have $a^\star d^\star = 1/4$, so it is a squeezed vacuum
state. With respect to the two-mode part the problem is degenerate and multiple
solutions exist. It is possible to find a pure state among them, i.e. a solution
that satisfies the relation
\begin{equation}
    \gamma^{23\star}_{xx} \gamma^{23\star}_{pp} = \frac{1}{4}\Omega_2.
\end{equation}
The minimal eigenvalue, as produced by \textsl{Mathematica} for $\beta \geqslant
0$, reads as
\begin{displaymath}
\begin{split}
    \lambda^\star_{\mathrm{min}} &= \alpha - 
    \frac{32\beta^3 - 11\beta - \sqrt{9\beta^2+8}}{8\sqrt{2}(2\beta^2-1)} \\
    &\times \sqrt{\frac{\beta(16\beta^2-3)\sqrt{9\beta^2+8}+80\beta^4-23\beta^2+4}{64\beta^4-12\beta^2+1}}.
\end{split}
\end{displaymath}
Though \textsl{Mathematica} does not further simplify it, it turns out that it
is possible to simplify the last term on the right-hand side (which depends on
$\beta$ only).

The CM of the state partially transposed with respect to mode 1 is given by
\begin{equation}
    \gamma^{\mathrm{PT}}_{xx} = 
    \begin{pmatrix}
        \alpha & \beta & \beta \\
        \beta & \alpha & \beta \\
        \beta & \beta & \alpha
    \end{pmatrix}, \quad
    \gamma^{\mathrm{PT}}_{pp} = 
    \begin{pmatrix}
        \alpha & \beta & \beta \\
        \beta & \alpha & -\beta \\
        \beta & -\beta & \alpha
    \end{pmatrix}.
\end{equation}
The eigenvalues of $\mathcal{P}(\gamma^{\mathrm{PT}}_{xx},
\gamma^{\mathrm{PT}}_{pp})$ are
\begin{displaymath}
    \alpha \pm \frac{1}{2}\sqrt{4\beta^2+1}, \quad
    \alpha \pm \frac{1}{2}\sqrt{10\beta^2 \pm 2|\beta|\sqrt{9\beta^2+8}+1}.
\end{displaymath}
We see that the partially transposed state is a quantum state iff
\begin{equation}\label{eq:posPT}
    \alpha - \frac{1}{2}\sqrt{10\beta^2 + 2|\beta|\sqrt{9\beta^2+8}+1} \geqslant 0.
\end{equation}
This expression coincides with $\lambda^\star_{\mathrm{min}}$ since
\begin{displaymath}
\begin{split}
    &\sqrt{10\beta^2 + 2\beta\sqrt{9\beta^2+8}+1} = 
    \frac{32\beta^3 - 11\beta - \sqrt{9\beta^2+8}}{4\sqrt{2}(2\beta^2-1)} \\
    &\times \sqrt{\frac{\beta(16\beta^2-3)\sqrt{9\beta^2+8}+80\beta^4-23\beta^2+4}{64\beta^4-12\beta^2+1}},
\end{split}
\end{displaymath}
for $\beta \geqslant 0$. The entanglement measure reads as
\begin{displaymath}
    \mathcal{E}_{1|23}(\gamma_{xx}, \gamma_{pp}) = 
    \frac{1}{2}\sqrt{10\beta^2 + 2|\beta|\sqrt{9\beta^2+8}+1} - \alpha.
\end{displaymath}
We conclude that $1|23$-separability coincides with the positivity of partial
transposition, which is expressed by the inequality
$\mathcal{E}_{1|23}(\gamma_{xx}, \gamma_{pp}) \leqslant 0$ or \eqref{eq:posPT},
which is the same as the inequality \eqref{eq:pos1|23}. 
\end{proof}

\begin{lmn}
A state with the CM given by Eq.~\eqref{eq:CM3S} is $\{1|23, 2|13\}$-entangled
if the parameters $\alpha$ and $\beta$ satisfy the relation
\begin{equation}\label{eq:1|23,2|13-sep}
    \alpha < \frac{2}{3}|\beta| + \frac{\sqrt{28\beta^2+3}}{3}\cos\left[\frac{1}{3}
    \arccos\left(\frac{18|\beta|-80|\beta|^3}{\sqrt{(28\beta^2+3)^3}}\right)\right].
\end{equation}
\begin{proof}
The optimal solution for $\beta \geqslant 0$ reads as
\begin{equation}
    X^\star = 
    \begin{pmatrix}
        x^\star & -x^\star & 0 \\
        -x^\star & x^\star & 0 \\
        0 & 0 & 0
    \end{pmatrix}, \quad
    P^\star = 
    \begin{pmatrix}
        p^\star & p^\star & q^\star \\
        p^\star & p^\star & q^\star \\
        q^\star & q^\star & r^\star
    \end{pmatrix},
\end{equation}
and the dual solution is given by
\begin{equation}
    p^\star_{1|23} = p^\star_{2|13} = \frac{1}{2},
\end{equation}
and
\begin{equation}
\begin{split}
    \gamma^{1\star}_{xx} = \gamma^{2\star}_{xx} = 
    \begin{pmatrix}
        a^\star
    \end{pmatrix}, \quad
    \gamma^{23\star}_{xx} = \gamma^{13\star}_{xx} = 
    \begin{pmatrix}
        a^\star & b^\star \\
        b^\star & c^\star
    \end{pmatrix}, \\
    \gamma^{1\star}_{pp} = \gamma^{2\star}_{pp} = 
    \begin{pmatrix}
        d^\star
    \end{pmatrix}, \quad
    \gamma^{23\star}_{pp} = \gamma^{13\star}_{pp} = 
    \begin{pmatrix}
        e^\star & f^\star \\
        f^\star & g^\star
    \end{pmatrix},
\end{split}
\end{equation}
where the matrix elements are defined by even more complicated expressions than
in the previous case. The single-mode parts of the dual solution are squeezed
vacuum and the two-mode parts are pure. 

If we substitute these large expressions to
$\lambda^\star_{\mathrm{min}}$ we obtain
\begin{equation}
    \lambda^\star_{\mathrm{min}} = \alpha - \ldots,
\end{equation}
where $\ldots$ stands for another large expression of $\beta$ only. There is a
way to represent this expression more compactly. If we add the equation
$\lambda^\star_{\mathrm{min}} = 0$ to the system of KKT equations and eliminate
all other variables except $\alpha$ and $\beta$, we obtain the following
relation between $\alpha$ and $\beta$ when $\lambda^\star_{\mathrm{min}}$ is
zero:
\begin{equation}
    4\alpha^3 - 8\alpha^2 \beta - (4\beta^2 + 1)\alpha + 8\beta^3 = 0.
\end{equation}
It follows that $\ldots$ must be one of the roots of this polynomial of $\alpha$
for a fixed $\beta$. This root can be easily identified and for $\beta \geqslant
0$ we have
\begin{displaymath}
\begin{split}
    &\lambda^\star_{\mathrm{min}} = \alpha - \frac{2}{3}\beta \\
    &-\frac{1}{3}\re\sqrt[3]{18\beta-80\beta^3+3i\sqrt{48(36\beta^6+23\beta^4+\beta^2)+3}}.
\end{split}
\end{displaymath}
The latter term can be rewritten in a manifestly real form as it has been done
in a previous section, and we derive
\begin{displaymath}
\begin{split}
    &\mathcal{E}_{\{1|23, 2|13\}}(\gamma_{xx}, \gamma_{pp}) = -\alpha \\
    &+\frac{2}{3}|\beta| + \frac{\sqrt{28\beta^2+3}}{3}\cos\left[\frac{1}{3}
    \arccos\left(\frac{18|\beta|-80|\beta|^3}{\sqrt{(28\beta^2+3)^3}}\right)\right].
\end{split}
\end{displaymath}
The entanglement condition $\mathcal{E}_{\{1|23, 2|13\}}(\gamma_{xx},
\gamma_{pp}) > 0$ then becomes the inequality \eqref{eq:1|23,2|13-sep}.
\end{proof}
\end{lmn}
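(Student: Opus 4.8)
\emph{The plan is to} treat the statement as an instance of the reduced eigenvalue test of Theorem~\ref{thrm:additive} with $\mathfrak{I} = \{1|23,\,2|13\}$, and to compute $\mathcal{E}_{\{1|23,2|13\}}(\gamma_{xx},\gamma_{pp})$ analytically, exactly as in the preceding lemmas. Concretely, I would evaluate
\begin{equation*}
	-\mathcal{E}_{\{1|23,2|13\}} = \min_{\substack{X,P\succcurlyeq 0 \\ \tr(X+P)=1}}
	\left[\tr(X\gamma_{xx}+P\gamma_{pp}) - \min_{\mathcal{I}}\sum_{I\in\mathcal{I}}\str(X[I],P[I])\right],
\end{equation*}
and show that under the stated condition this minimum is strictly negative. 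Since the CM is invariant under permuting the three modes, the two partitions $1|23$ and $2|13$ are related by the $1\leftrightarrow 2$ swap, so it suffices to work with one combination and enforce equal weights $p^\star_{1|23}=p^\star_{2|13}=\tfrac12$ in the dual. First I would read off from the numerical solution the symmetry-adapted ans\"atze for the optimal witnesses $X^\star$, $P^\star$ (with the rank-deficient structure displayed in the lemma) and for the dual single-mode squeezed-vacuum factors and pure two-mode factors $\gamma^{(I)\star}_{xx},\gamma^{(I)\star}_{pp}$.

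Next I would assemble the full KKT system. This comprises: the boundary conditions that $\mathcal{P}(\gamma^{(I)\star}_{xx},\gamma^{(I)\star}_{pp})$ has the requisite zero eigenvalues (which also forces the two-mode factors to be pure, via Eq.~\eqref{eq:pureReduced}); the product conditions Eq.~\eqref{eq:MMKKT} relating each submatrix pair $X^\star[I],P^\star[I]$ and $Z^{(I)\star}$ to the dual factor; the eigenvalue conditions Eq.~\eqref{eq:XXKKT}; and the normalization $\tr(X^\star+P^\star)=1$. Solving this overdetermined system yields the primal and dual data, and strong duality is confirmed by checking that $\lambda^\star_{\mathrm{min}}$ computed from the primal objective agrees with the minimal eigenvalue of the dual differences $\gamma_{xx}-\sum p_{\mathcal{I}}\bigoplus\gamma^{(I)\star}_{xx}$ and $\gamma_{pp}-\sum p_{\mathcal{I}}\bigoplus\gamma^{(I)\star}_{pp}$. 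At this point $\lambda^\star_{\mathrm{min}}$ emerges as $\alpha$ minus a complicated algebraic function of $\beta$ alone.

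The crucial simplification, and the step I would lean on most heavily, is to avoid wrestling with that unwieldy radical directly. Instead I would append the equation $\lambda^\star_{\mathrm{min}}=0$ to the KKT system and eliminate every variable except $\alpha$ and $\beta$, producing the boundary polynomial
\begin{equation*}
	4\alpha^3 - 8\alpha^2\beta - (4\beta^2+1)\alpha + 8\beta^3 = 0.
\end{equation*}
The function of $\beta$ subtracted from $\alpha$ in $\lambda^\star_{\mathrm{min}}$ must therefore be a root of this cubic (at fixed $\beta\geqslant 0$); identifying the correct root and rewriting it via Cardano in the manifestly real \emph{casus irreducibilis} form gives the $\cos\!\big[\tfrac13\arccos(\cdot)\big]$ expression, and the entanglement condition $\mathcal{E}_{\{1|23,2|13\}}>0$ then reads off immediately as Eq.~\eqref{eq:1|23,2|13-sep}. \emph{The hard part} will be two-fold: verifying symbolically (e.g.\ by matching minimal polynomials, as done throughout this section) that the explicit $\lambda^\star_{\mathrm{min}}$ really coincides with the selected cubic root rather than a spurious branch, and confirming that for these parameters the third partition $3|12$ does not become active, so that restricting to $\{1|23,2|13\}$ is legitimate and the computed eigenvalue is indeed the global optimum.
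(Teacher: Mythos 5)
Your proposal follows the paper's proof essentially step for step: symmetry-adapted ans\"atze for $X^\star$, $P^\star$ and the dual factors read off from the numerical solution with equal weights $p^\star_{1|23}=p^\star_{2|13}=\tfrac12$, the full KKT system including the boundary/purity and product conditions, and then the paper's key trick of appending $\lambda^\star_{\mathrm{min}}=0$ to the KKT equations and eliminating all variables to obtain the boundary cubic $4\alpha^3-8\alpha^2\beta-(4\beta^2+1)\alpha+8\beta^3=0$, whose relevant root, written in the manifestly real \emph{casus irreducibilis} form, yields Eq.~\eqref{eq:1|23,2|13-sep}; your plan to certify the branch choice by matching minimal polynomials is exactly the verification style used throughout this section. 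One correction: your second ``hard part'' is vacuous --- the measure $\mathcal{E}_{\{1|23,2|13\}}$ is by definition the optimization over the fixed set $\mathfrak{I}=\{1|23,\,2|13\}$, so the partition $3|12$ never enters the problem and no inactivity check is required (indeed, the paper later exhibits a witness certifying precisely this two-partition entanglement while failing on $3|12$, which is perfectly consistent).
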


All the entanglement measures above were expressed by a single condition. As we
will see below, this does not have to be always the case.
\begin{lmn}
A state with the CM given by Eq.~\eqref{eq:CM3S} is $\{1|23, 2|13,
3|12\}$-entangled (genuine entangled) if the parameters $\alpha$ and $\beta$
satisfy the relation
\begin{equation}\label{eq:1|23,2|13,3|12-sep}
    \alpha < 
    \begin{cases}
        \frac{1}{3}\sqrt{36\beta^2 + 1} + \frac{1}{6} & |\beta| \leqslant \frac{1}{6\sqrt{3}} \\
        R_1(\beta) & |\beta| \geqslant \frac{1}{6\sqrt{3}}
    \end{cases},
\end{equation}
where $R_1(\beta)$ is the largest root of the following equation of the variable
$\alpha$:
\begin{equation}\label{eq:alphabetaeq}
\begin{split}
    &186624 \alpha^7 \beta^2 - 1119744 \alpha^5 \beta^4 - 57024 \alpha^5 \beta ^2 + 144 \alpha^5 \\
    &+186624 \alpha^3 \beta^6 + 98496 \alpha^3 \beta^4 + 1584 \alpha^3 \beta^2 - 40 \alpha^3 \\
    &+5225472 \alpha  \beta^8 + 435456 \alpha \beta^6 + 1728 \alpha \beta^4 + 76 \alpha \beta^2 \\
    &+\alpha -373248 \alpha^6 |\beta|^3 - 5184 \alpha^6 |\beta| + 2985984 \alpha^4 |\beta|^5 \\
   &+139968 \alpha^4 |\beta|^3 + 1584 \alpha^4 |\beta| - 5598720 \alpha^2 |\beta|^7 \\
   &-487296 \alpha^2 |\beta|^5 - 6768 \alpha^2 |\beta|^3 - 76 \alpha^2 |\beta| + 40 |\beta|^3 \\
   &-1492992 |\beta|^9 - 124416 |\beta|^7 + 1728 |\beta|^5 + |\beta| = 0.
\end{split}
\end{equation}
\end{lmn}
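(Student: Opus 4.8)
The plan is to apply the eigenvalue entanglement test of Theorem~\ref{thrm:additive} with $\mathfrak{I} = \mathfrak{I}_2 = \{1|23,\,2|13,\,3|12\}$, for which the state is genuine multipartite entangled exactly when $\mathcal{E}_{\mathfrak{I}_2}(\gamma_{xx},\gamma_{pp}) = -\lambda^\star_{\mathrm{min}} > 0$, with $\lambda^\star_{\mathrm{min}}$ the common optimal value in \eqref{eq:EM}. First I would exploit the two symmetries of the CM \eqref{eq:CM3S}: the map $\beta \to -\beta$ interchanges $\gamma_{xx} \leftrightarrow \gamma_{pp}$, so it suffices to treat $\beta \geqslant 0$, and the $S_3$ permutation symmetry lets me restrict to a fully symmetric dual configuration with equal weights $p^\star_{1|23} = p^\star_{2|13} = p^\star_{3|12} = \tfrac13$, all single-mode blocks $\gamma^{(i)\star}$ equal and all two-mode blocks $\gamma^{(jk)\star}$ equal. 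Under this ansatz the difference matrix $\gamma_{xx} - \tfrac13\sum_{\mathcal{I}}\bigoplus_I \gamma^{(I)\star}_{xx}$ and its $pp$-counterpart are again of the symmetric form $G(x,y)$ of Eq.~\eqref{eq:Gxy}, each carrying only two distinct eigenvalues — the nondegenerate one along $(1,1,1)$ and the doubly degenerate one in the orthogonal complement.

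The key structural observation, and the origin of the two-case answer, is that maximizing the smaller of these eigenvalues (simultaneously for the $xx$- and $pp$-differences, which must share the value $\lambda^\star_{\mathrm{min}}$) is a minimax problem in which the identity of the binding eigenvalue changes with $\beta$. For small $|\beta|$ the symmetric $(1,1,1)$-eigenvalue is the active one, while for large $|\beta|$ the degenerate eigenvalue takes over, the crossover being precisely $|\beta| = \tfrac{1}{6\sqrt3}$. Because the two eigenvalues are different radical functions of the witness parameters, imposing $\lambda^\star_{\mathrm{min}} = 0$ on each branch yields a different boundary curve, which is exactly the dichotomy in the statement.

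On the small-$|\beta|$ branch the binding eigenvalue has a clean closed form, and the condition $\lambda^\star_{\mathrm{min}} = 0$ collapses to the algebraic curve $\alpha = \tfrac13\sqrt{36\beta^2+1} + \tfrac16$ (equivalently $12\alpha^2 - 4\alpha - 1 = 48\beta^2$). On the large-$|\beta|$ branch I would instead follow the recipe used for the boundary throughout Sec.~\ref{sec:V}: write the full KKT system \eqref{eq:MMKKT}, \eqref{eq:XXKKT} for the guessed witness structure (read off, then rationalized, from the numerical solver output), adjoin the equation $\lambda^\star_{\mathrm{min}} = 0$, and eliminate every auxiliary variable except $\alpha$ and $\beta$ by resultant/Gr\"obner elimination in a computer algebra system. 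This elimination is what produces the degree-seven relation \eqref{eq:alphabetaeq}, and $R_1(\beta)$ is then singled out as its largest real root on the relevant interval.

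The hard part will be twofold. First, rigorously justifying the case split — proving that the maximized minimal eigenvalue is governed by the symmetric eigenvalue below $|\beta| = \tfrac{1}{6\sqrt3}$ and by the degenerate one above it, and that the two curves join continuously there — rather than merely reading it off the numerics. Second, the elimination on the large-$|\beta|$ branch is delicate: the auxiliary optimal parameters satisfy high-degree minimal polynomials, so the variable elimination must be organized carefully and the correct root $R_1(\beta)$ isolated, after which I would confirm optimality in each regime by verifying strong duality (equality of the primal witness objective and the dual minimal-eigenvalue objective) together with the physicality and positivity of all constructed blocks.
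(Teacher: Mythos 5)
Your overall skeleton --- the symmetric ansatz with equal weights $p^\star_{\mathcal{I}} = \tfrac13$, the reduction to $\beta \geqslant 0$, adjoining $\lambda^\star_{\mathrm{min}} = 0$ to the KKT system and eliminating all auxiliary variables to reach the degree-seven curve \eqref{eq:alphabetaeq}, then isolating the largest root --- matches the paper's proof. But the mechanism you propose for the case split is not what actually happens, and that mechanism is precisely the nontrivial content of the lemma. You claim the dichotomy arises because the binding eigenvalue of the difference matrices switches from the $(1,1,1)$-eigenvalue to the doubly degenerate one at $|\beta| = \tfrac{1}{6\sqrt{3}}$. In the actual solution on the small-$|\beta|$ branch both difference matrices equal $\lambda^\star_{\mathrm{min}} E$ exactly --- proportional to the identity --- so all eigenvalues coincide and there is no distinguished binding eigenvalue at all; indeed this is forced by the KKT condition \eqref{eq:XXKKT}, since as long as $X^\star$ is invertible, $X^\star(D - \lambda^\star_{\mathrm{min}}E) = 0$ implies $D = \lambda^\star_{\mathrm{min}}E$. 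The true origin of the threshold is primal, not dual: the explicit witness $x^\star = p^\star = \tfrac16$, $y^\star = -q^\star = -\beta/\sqrt{36\beta^2+1}$ has eigenvalue $x^\star + 2y^\star \propto \sqrt{36\beta^2+1} - 12\beta$ along $(1,1,1)$, which hits zero exactly at $\beta^* = \tfrac{1}{6\sqrt{3}}$; beyond that the constraint $X \succcurlyeq 0$ becomes active, the construction is infeasible, and the solution changes branch. Your eigenvalue-crossover picture is at best a dual shadow of this (once $X^\star$ is singular along $(1,1,1)$, the difference's eigenvalue in that direction may lift above $\lambda^\star_{\mathrm{min}}$), but as a proof strategy it starts from the wrong object and would not by itself deliver $\beta^*$ or certify the closed form $\alpha = \tfrac13\sqrt{36\beta^2+1} + \tfrac16$.

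A second gap: you assert the small-$|\beta|$ branch ``has a clean closed form'' but give no route to it. The paper obtains it by conjecturing that the optimal witness of the full problem also minimizes the simpler four-variable function $g$ built from a single bipartition's symplectic trace under the symmetric ansatz \eqref{eq:bisepprimal}, solving that Lagrangian problem exactly, and then verifying strong duality by exhibiting the dual solution explicitly (vacuum together with a two-mode squeezed vacuum, $r = \tfrac12\arcsin(6\beta)$, which makes the differences equal $\lambda^\star_{\mathrm{min}}E$). Your elimination would indeed recover the corresponding factor $12R^2 - 48\beta^2 + 4R - 1$ of \eqref{eq:fac1}, but only as one among several spurious factors alongside \eqref{eq:fac2}; the real work --- which the paper does via positivity of $X^\star$, the continuity check at $\beta^*$ (only \eqref{eq:fac2} with upper signs also vanishes there), and inspection of the roots --- is deciding which factor and which root governs which regime, and your proposal has no correct criterion for this. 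Finally, on the large-$|\beta|$ branch the paper notes that explicit primal and dual solutions do not appear to exist, so your plan to confirm optimality there by verifying strong duality in closed form is likely not executable as stated.
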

\begin{proof}
The optimal solution for $\beta \geqslant 0$ is of the form
    \begin{equation}\label{eq:bisepprimal}
        X^\star = 
        \begin{pmatrix}
            x^\star & y^\star & y^\star \\
            y^\star & x^\star & y^\star \\
            y^\star & y^\star & x^\star
        \end{pmatrix}, \quad
        P^\star =
        \begin{pmatrix}
            p^\star & q^\star & q^\star \\
            q^\star & p^\star & q^\star \\
            p^\star & q^\star & q^\star
        \end{pmatrix}, 
    \end{equation}
and the dual solution reads as
\begin{equation}
    p^\star_{1|23} = p^\star_{2|13} = p^\star_{3|12} = \frac{1}{3},
\end{equation}
and
\begin{equation}\label{eq:bisepdual}
\begin{split}
    &\gamma^{1\star}_{xx} = \gamma^{2\star}_{xx} = \gamma^{3\star}_{xx} = 
    \begin{pmatrix}
        a^\star
    \end{pmatrix}, \\
    &\gamma^{23\star}_{xx} = \gamma^{13\star}_{xx} = \gamma^{12\star}_{xx} = 
    \begin{pmatrix}
        b^\star & c^\star \\
        c^\star & b^\star
    \end{pmatrix}, \\
    &\gamma^{1\star}_{pp} = \gamma^{2\star}_{pp} = \gamma^{3\star}_{pp} = 
    \begin{pmatrix}
        d^\star
    \end{pmatrix}, \\
    &\gamma^{23\star}_{pp} = \gamma^{13\star}_{pp} = \gamma^{12\star}_{pp} = 
    \begin{pmatrix}
        e^\star & f^\star \\
        f^\star & e^\star
    \end{pmatrix}.
\end{split}
\end{equation}
This time the values of the optimal parameters cannot be found analytically, but
we can find an equation for $\alpha$ and $\beta$ when
$\lambda^\star_{\mathrm{min}}$ becomes zero and thus we can determine the region
of biseparability. This can be done as in the previous case by adding the
equation $\lambda^\star_{\mathrm{min}} = 0$ to the system of KKT conditions end
eliminating the elements of the primal and dual solutions, but this way we do
not get the expression for $\lambda^\star_{\mathrm{min}}$. So, we write
\begin{equation}
    \lambda^\star_{\mathrm{min}} = \alpha - R,
\end{equation}
and try to determine $R$. Adding this equation to the KKT conditions and
eliminating all variables except $\alpha$, $\beta$ and $R$, we obtain that $R$
is a zero of a product of several terms:
\begin{equation}
\begin{split}
    &(4R^2-4\beta^2-1)(4R^2-16\beta^2-1)(9R^2-9\beta^2-1) \\
    &(36R^2-36\beta^2-1)(36R^2-144\beta^2-1),
\end{split}
\end{equation}
\begin{equation}\label{eq:fac1}
    12R^2-48\beta^2\pm 4R-1,
\end{equation}
and
\begin{equation}\label{eq:fac2}
    \begin{split}
        &186624 R^7 \beta^2 - 1119744 R^5 \beta^4 - 57024 R^5 \beta ^2 + 144 R^5 \\
        &+186624 R^3 \beta^6 + 98496 R^3 \beta^4 + 1584 R^3 \beta^2 - 40 R^3 \\
        &+5225472 R  \beta^8 + 435456 R \beta^6 + 1728 R \beta^4 + 76 R \beta^2 \\
        &+R \mp 373248 R^6 \beta^3 \mp 5184 R^6 \beta \pm 2985984 R^4 \beta^5 \\
       &\pm139968 R^4 \beta^3 \pm 1584 R^4 \beta \mp 5598720 R^2 \beta^7 \\
       &\mp487296 R^2 \beta^5 \mp 6768 R^2 \beta^3 \mp 76 R^2 \beta \pm 40 \beta^3 \\
       &\mp1492992 \beta^9 \mp 124416 \beta^7 \pm 1728 \beta^5 \pm \beta,
    \end{split}
\end{equation}
where either the upper or lower sign is taken in all ambiguous terms. We see
that $R = R(\beta)$ depends only on $\beta$, but we need to determine of which
of these terms $R$ is a root. It might even be that this term is different for
different values of $\beta$. 

The difference $g = g(x, y, p, q)$ defined via
\begin{equation}
\begin{split}
    g = \tr(X\gamma_{xx} + P\gamma_{pp}) &- \str(X[1], P[1]) \\
    & - \str(X[2, 3], P[2, 3])
\end{split}
\end{equation}
has already a familiar form
\begin{equation}
\begin{split}
    g &= 3(x+p)\alpha - 6(q+y)\beta -\sqrt{xp} \\
    &-\sqrt{(x+y)(p-q)} - \sqrt{(x-y)(p+q)},
\end{split}
\end{equation}
so our conjecture is that the optimal $x^\star$, $y^\star$, $p^\star$ and
$q^\star$ are also optimal for this function. The optimization of $g$ has less
variables, just 4, and thus is simpler than the original problem. Setting the
derivatives of the Lagrangian
\begin{equation}
    \mathcal{L} = g(x, y, p, q) + \nu[3(x+p)-1],
\end{equation}
to zero together with the trace normalization condition
\begin{equation}
    \tr(X + P) = 3(x+p)-1 = 0,
\end{equation}
produces the following solution:
\begin{equation}
    x^\star = p^\star = \frac{1}{6}, \quad y^\star = -q^\star = -\frac{\beta}{\sqrt{36\beta^2+1}}.
\end{equation}
We then substitute these expressions into the original problem to find the dual solution:
\begin{equation}
    a^\star = d^\star = \frac{1}{2}, \quad 
    b^\star = e^\star = \frac{1}{2}\sqrt{36\beta^2+1}, \quad
    c^\star = f^\star = 3\beta.
\end{equation}
It follows that $\gamma^{1\star}$ is ths CM of the vacuum state and
$\gamma^{23\star}$ is the CM of the two-mode squeezed vacuum state. The two-mode
squeezed vacuum is defined as
\begin{equation}
    |\zeta\rangle = \hat{S}(\zeta)|00\rangle = 
    \frac{1}{\cosh(r)} \sum^{+\infty}_{n=0} \tanh^n(r) e^{-i \varphi n} |nn\rangle,
\end{equation}
where the squeezing operator $\hat{S}(\zeta)$ reads as
\begin{equation}
    \hat{S}(\zeta) = \exp(-\zeta \hat{a}^\dagger \hat{b}^\dagger + \zeta^* \hat{a}\hat{b}).
\end{equation}
In general, the squeezing parameter can be an arbitrary complex number, $\zeta =
r e^{i\varphi}$, but here $\zeta = r$ is a real number determined from $\beta$
according to the relation
\begin{equation}\label{eq:br}
    r = \frac{1}{2}\arcsin(6\beta).
\end{equation}
The eigenvalues of $X^\star$ are 
\begin{equation}
    \sqrt{36\beta^2+1} + 6\beta, \quad \sqrt{36\beta^2+1} - 12\beta,
\end{equation}
the first with multiplicity 2. The eigenvalues of $P^\star$ are
\begin{equation}
    \sqrt{36\beta^2+1} - 6\beta, \quad \sqrt{36\beta^2+1} + 12\beta,
\end{equation}
the first with multiplicity 2 (we consider the case of $\beta \geqslant 0$, for
$\beta \leqslant 0$ these matrices should be swapped). We see that $P^\star$ is
always positive semidefinite, and $X^\star$ is positive semidefinite iff
\begin{equation}
    \beta \leqslant \beta^* = \frac{1}{6\sqrt{3}}.
\end{equation}
This value corresponds to the squeezing parameter given by the following simple
expression:
\begin{equation}
    r^* = \frac{\ln(3)}{4}.
\end{equation}
For these parameters we have
\begin{equation}
    g^\star = g(x^\star, y^\star, p^\star, q^\star) = \alpha - \frac{1}{3}\sqrt{36\beta^2+1}-\frac{1}{6}.
\end{equation}
On the other hand, we also have
\begin{equation}
    \begin{split}
        &\gamma_{xx} - \frac{1}{3}\left(\gamma^{1\star}_{xx} \oplus \gamma^{23\star}_{xx} 
        + \gamma^{2\star}_{xx} \oplus \gamma^{13\star}_{xx} 
        + \gamma^{3\star}_{xx} \oplus \gamma^{12\star}_{xx}\right) \\
        &=\gamma_{pp} - \frac{1}{3}\left(\gamma^{1\star}_{pp} \oplus \gamma^{23\star}_{pp} 
        + \gamma^{2\star}_{xx} \oplus \gamma^{13\star}_{pp} 
        + \gamma^{3\star}_{xx} \oplus \gamma^{12\star}_{pp}\right) \\
        &= \lambda^\star_{\mathrm{min}}E,
    \end{split}
\end{equation}
where
\begin{equation}
    \lambda^\star_{\mathrm{min}} = \alpha - R \equiv \alpha - \frac{1}{3}\sqrt{36\beta^2+1}-\frac{1}{6}.
\end{equation}
This result shows that these parameters are valid primal and dual solutions, but
this construction works only provided that $|\beta|\leqslant \beta^*$, for
larger $\beta$ either $X^\star$ or $P^\star$ is not positive-semidefinite. This
proves the first part of the lemma.

Note that for $|\beta|\leqslant \beta^*$ the equality
$\lambda^\star_{\mathrm{min}} = 0$ holds iff 
\begin{equation}
    12\alpha^2-48\beta^2- 4\alpha-1 = 0,
\end{equation}
which means that $\alpha$ is a root of the factor of Eq.~\eqref{eq:fac1} when
the upper sign is taken. For $|\beta| \leqslant \beta^*$ the condition
$\lambda^\star_{\mathrm{min}} = 0$ must produce a root of some other term. By
substitution $\beta \to \beta^*$ to all the terms we find that the only term
that is also zero is given by Eq.~\eqref{eq:fac2} with all upper signs, so $R$
in this case is a root of that term. By plotting the roots of the term, we
conclude that $R$ must be the largest root, which proves the second part of the
lemma. The explicit expressions for the primal and dual solution does not seem
to exist. We thus obtain the following expression for the entanglement measure:
\begin{equation}\label{eq:GEM}
    \mathcal{E}_{\mathfrak{I}_2}(\gamma_{xx}, \gamma_{pp}) = R(\beta) - \alpha,
\end{equation}
where $R(\beta)$ is defined by different expressions depending on $\beta$. 
\end{proof}

\begin{figure*}
    \includegraphics{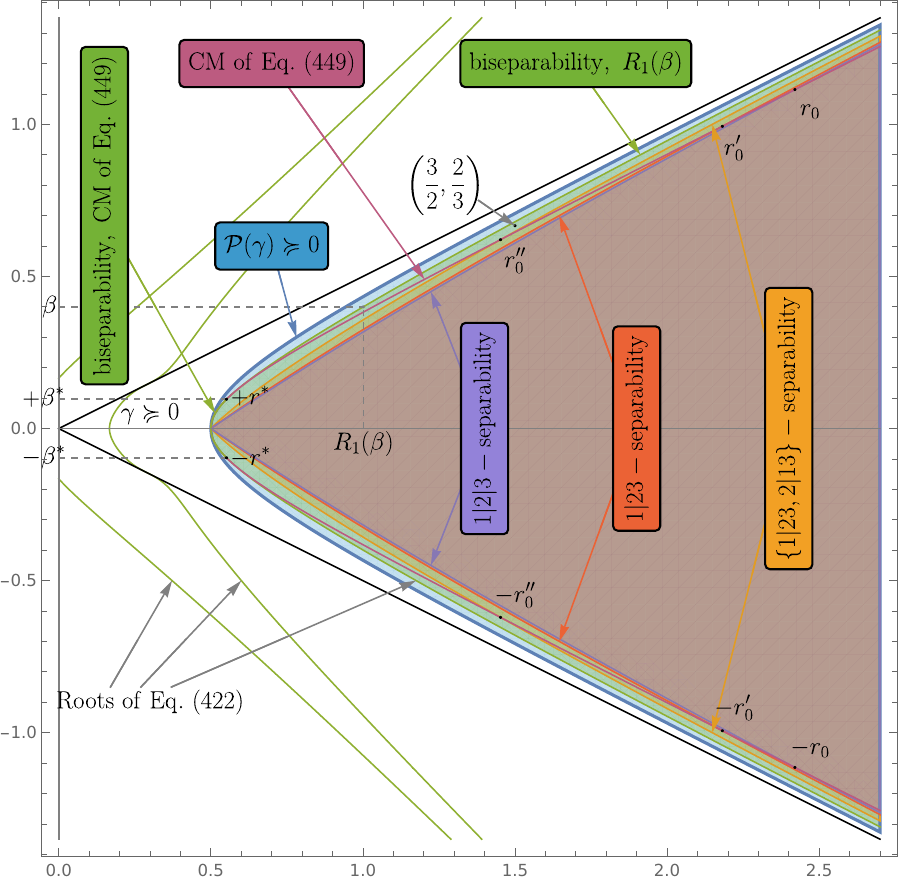}
    \caption{The regions of different kinds of separability of the state with
    the CM given by Eq.~\eqref{eq:CM3S}.}\label{fig:3sepregions}
\end{figure*}

All the obtained regions are shown in Fig.~\ref{fig:3sepregions}. The very thin
region between the red and purple lines correspond to CMs that are
simultaneously $1|23$-, $2|13$- and $3|12$-separable, but not $1|2|3$-separable.
The region between the yellow and red lines are the $\{1|23, 2|13\}$-separable
CMs that are not $1|23$-separable. The region between the green and yellow lines
are biseparable, but neither $\{1|23, 2|13\}$-, nor $\{1|23, 3|12\}$-, nor
$\{2|13, 3|12\}$-separable CMs. 

We now apply the obtained results to some simple, but interesting states. As a
first example, which is a non-Gaussian state, we take an equally weighted
mixture of products of single-mode vacuum with two-mode squeezed vacuum states:
\begin{equation}\label{eq:rho3sq}
\begin{split}
    \hat{\varrho} &= \frac{1}{3}(|0\rangle_1\langle 0|\otimes|\zeta\rangle_{23}\langle\zeta| + 
    |0\rangle_2\langle 0|\otimes|\zeta\rangle_{13}\langle\zeta| \\
    &+|0\rangle_3\langle 0|\otimes|\zeta\rangle_{12}\langle\zeta|).
\end{split}
\end{equation}
This state has the same CM as the optimal dual solution from the first part of
the last lemma above. Because the first-order moments of the state
\eqref{eq:rho3sq} are zero, the CM of the mixture is the mixture of CMs and we
have
\begin{equation}\label{eq:CM3sq}
\begin{split}
    \gamma_{xx} &= \frac{1}{3}
    \begin{pmatrix}
        \cosh(2r)+\frac{1}{2} & \frac{1}{2}\sinh(2r) & \frac{1}{2}\sinh(2r) \\[1mm]
        \frac{1}{2}\sinh(2r) & \cosh(2r)+\frac{1}{2} & \frac{1}{2}\sinh(2r) \\[1mm]
        \frac{1}{2}\sinh(2r) & \frac{1}{2}\sinh(2r) & \cosh(2r)+\frac{1}{2}
    \end{pmatrix}, \\
    \gamma_{pp} &= \frac{1}{3}
    \begin{pmatrix}
        \cosh(2r)+\frac{1}{2} & -\frac{1}{2}\sinh(2r) & -\frac{1}{2}\sinh(2r) \\[1mm]
        -\frac{1}{2}\sinh(2r) & \cosh(2r)+\frac{1}{2} & -\frac{1}{2}\sinh(2r) \\[1mm]
        -\frac{1}{2}\sinh(2r) & -\frac{1}{2}\sinh(2r) & \cosh(2r)+\frac{1}{2}
    \end{pmatrix}.
\end{split}
\end{equation}
This CM is of the form given by Eq.~\eqref{eq:CM3S} with
\begin{equation}
    \alpha = \frac{1}{3}\left(\cosh(2r)+\frac{1}{2}\right), \quad
    \beta = \frac{1}{6}\sinh(2r).
\end{equation}
This parametric curve is also shown in Fig.~\ref{fig:3sepregions}.

\begin{figure}[ht]
    \includegraphics{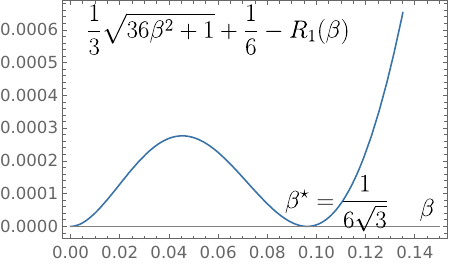}
    \caption{The difference between the curve \eqref{eq:c3s} and the curve $\alpha = R_1(\beta)$.}
    \label{fig:diff}
\end{figure}

From the proof of the first part of the last lemma above it follows that the
parametric curve described by these covariance matrices forms a part of the
boundary of the biseparable states for $|r| \leqslant r^*$. Note that this
parametric curve has the following explicit equation:
\begin{equation}\label{eq:c3s}
    \alpha = \frac{1}{3}\sqrt{36\beta^2+1}+\frac{1}{6}.
\end{equation}
This curve and the curve $\alpha = R_1(\beta)$ are very close near the origin
$\beta=0$, Fig.~\ref{fig:diff}. One would need to zoom
Fig.~\ref{fig:3sepregions} at least 1000x to see the difference, so for most, if
not all, practical purposes this difference is irrelevant, but nevertheless
these are different curves. For larger $\beta$ the curve Eq.~\eqref{eq:c3s}
diverges form the biseparability boundary and intersects the boundaries of all
the regions. The intersection with the $1|2|3$-separability region happens at
$\zeta = r_0$, where
\begin{equation}
    r_0 = \frac{1}{2}\ln\left(\frac{13+\sqrt{193}}{2}\right) = 1.2993....
\end{equation}
This result guarantees that the state \eqref{eq:rho3sq} with $\zeta = r < r_0$
is not $1|2|3$-separable, but for a larger value of squeezing we cannot say that
it is separable. The Gaussian state with the CM given by Eq.~\eqref{eq:CM3sq} is
$1|2|3$-separable if $r>r_0$, but for the non-Gaussian state \eqref{eq:rho3sq}
with the same CM this test is inconclusive. 

The intersection with $1|23$-separability regions happens at $\zeta = r'_0$,
where
\begin{displaymath}
    r'_0 = \frac{1}{2} \ln\left(\frac{7+2\sqrt{31}+2\sqrt{41+7\sqrt{31}}}{3}\right) = 1.2427....
\end{displaymath}
For $\zeta=r< r'_0$ the state \eqref{eq:rho3sq} is guaranteed to be
$1|23$-entangled (and due to symmetry also $2|13$- and $3|12$-entangled), but
for $r\geqslant r'_0$ we cannot say anything. This number was also obtained in
Refs.~\cite{Baksova2025, arXiv-2401.04376}.

The CMs \eqref{eq:CM3sq} intersect the $\{1|23, 2|13\}$-separability regions at
the squeezing parameter $r^{\prime\prime}_0$, where 
\begin{displaymath}
    r^{\prime\prime}_0 = \frac{1}{2} \ln(\tilde{r}) = 1.0128..., \quad
    3\tilde{r}^4 - 19 \tilde{r}^3 - 25 \tilde{r}^2 -25 \tilde{r} - 6 =0.
\end{displaymath}
The number $\tilde{r}$ is the 4-th root of its minimal polynomial, and one can
give an explicit expression for it in terms of radicals. As before, for the
squeezing parameter $\zeta = r < r^{\prime\prime}_0$ the state \eqref{eq:rho3sq}
is definitely not $\{1|23, 2|13\}$-separable, but for $r \geqslant
r^{\prime\prime}_0$ we cannot say anything. 

The plot of different entanglement measures for CMs of the form \eqref{eq:CM3sq}
is shown in Fig.~\ref{fig:GEM}. The genuine entanglement measure is always
non-positive, which means that the state is not genuine entangled. Since our
state is biseparable by construction, this is expectable. For $\zeta = r = 0$
the CM becomes the CM of the product of vacuum states, and this measure must be
zero. This measure is exactly zero up to where the squeezing parameter becomes
$r^*$, for larger values the measure becomes strictly negative, which means that
the corresponding CMs are no longer on the boundary of the biseparability
region. The points $(\alpha^*, \pm\beta^*)$ corresponding to $\pm r^*$ read as
\begin{equation}\label{eq:alphabetastar}
\begin{split}
    (\alpha^*, \pm\beta^*) &= \left(\frac{1}{3}\left[\cosh(2r^*)+\frac{1}{2}\right], 
    \pm\frac{1}{6}\sinh(2r^*)\right) \\
    &= \left(\frac{4+\sqrt{3}}{6\sqrt{3}}, \pm \frac{1}{6\sqrt{3}}\right).
\end{split}
\end{equation}
These points are simple non-trivial solutions of Eq.~\eqref{eq:alphabetaeq}.
Also note the inequality
\begin{equation}
	\mathcal{E}_{\mathfrak{I}_2} \leqslant \mathcal{E}_{\{1|23, 2|13\}}
	\leqslant \mathcal{E}_{1|23} \leqslant \mathcal{E}_{1|2|3},
\end{equation}
which fully agrees with the inequalities given by Eq.~\eqref{eq:EE}.

\begin{figure}[ht]
    \includegraphics{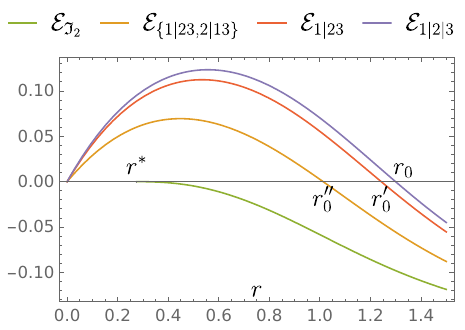}
    \caption{Genuine entanglement measure of CM \eqref{eq:CM3sq}.}\label{fig:GEM}
\end{figure}

We said before that an analytical construction of the optimal solution for
$\{1|23, 2|13\}$-separability is quite large and have not presented it
explicitly. Now we show that the CM \eqref{eq:CM3sq} with the parameters given
by Eq.~\eqref{eq:alphabetastar} is $\{1|23, 2|13\}$-entangled. If we do not aim
at the optimal violation of the separability condition, any pair of positive
semidefinite matrices $X$ and $P$, not necessarily trace-normalized, that
violate the condition will suffice. We compute the exact analytical solution,
which is expressed by algebraic numbers of 3rd order, rationalize it and get a
much simpler witness,
\begin{equation}
    X = 
    \begin{pmatrix}
        \frac{87}{368} & -\frac{87}{368} & 0 \\[1mm]
        -\frac{87}{368} & \frac{87}{368} & 0 \\[1mm]
        0 & 0 & 0
    \end{pmatrix}, \quad
    P = 
    \begin{pmatrix}
        \frac{54}{215} & \frac{54}{215} & \frac{23}{291} \\[1mm]
        \frac{54}{215} & \frac{54}{215} & \frac{23}{291} \\[1mm]
        \frac{23}{291} & \frac{23}{291} & \frac{5}{201}
    \end{pmatrix}.
\end{equation}
These matrices are positive semidefinite and for the primal objective value of
the $\{1|23, 2|31\}$-entanglement optimization problem we have
\begin{displaymath}
\begin{split}
    &\tr(X\gamma_{xx} + P\gamma_{pp}) - \str(X[1], P[1]) - \str(X[2, 3], P[2, 3]) \\
    &= \frac{771322369 + 696435289\sqrt{3}}{4627807920} - 9\sqrt{\frac{29}{9890}}
    = -0.0600...\ .
\end{split}
\end{displaymath}
For the $\{2|13\}$-partition we get the same negative value, which proves that
this state is $\{1|23, 2|31\}$-entangled. Not surprisingly, for the partition
$3|12$ we have
\begin{displaymath}
\begin{split}
    &\tr(X\gamma_{xx} + P\gamma_{pp}) - \str(X[3], P[3]) - \str(X[1, 2], P[1, 2]) \\
    &= \frac{771322369 + 696435289\sqrt{3}}{4627807920} 
    = 0.4273... > 0,
\end{split}
\end{displaymath}
so the matrices $X$ and $P$ certify only the $\{1|23, 2|13\}$-entanglement and
not the genuine entanglement (which would contradict to the fact the our state
is biseparable by construction). This gives an explicit example of a 3-mode
biseparable state which is not separable with respect to any other form of
separability (except biseparability). 

For most parameters $\alpha$ and $\beta$ the exact solution cannot be expressed
in terms of radicals and one has to work with algebraic numbers in terms of
their minimal polynomials. Consider a very simple case with easily memorable
parameters $\alpha = 3/2$ and $\beta=2/3$. The CM reads as
\begin{equation}\label{eq:bisepexample}
    \gamma_{xx} = \frac{1}{6}
    \begin{pmatrix}
        9 & 4 & 4 \\
        4 & 9 & 4 \\
        4 & 4 & 9
    \end{pmatrix}, \quad
    \gamma_{pp} = \frac{1}{6}
    \begin{pmatrix}
        9 & -4 & -4 \\
        -4 & 9 & -4 \\
        -4 & -4 & 9
    \end{pmatrix}.
\end{equation}
The corresponding point is shown in Fig.~\ref{fig:3sepregions} and lies between
the green (biseparability) and yellow ($\{1|23, 2|13\}$-separability) lines. We
show that this state is biseparable, but cannot be represented as a combination
of one or two separable states with respect to a fixed partition. 

The analytical solution of the KKT equations cannot be obtained even if we set
the above concrete values of $\alpha$ and $\beta$, but the numerical solution
produces $f^\star = -2.0$ in Eq.~\eqref{eq:bisepdual}. Adding this equation to
the KKT system of equation allows one to obtain an analytical expression for
both primal, Eq,~\eqref{eq:bisepprimal}, and dual, Eq.~\eqref{eq:bisepdual},
solutions. Here we need just the dual solution to demonstrate that the state
with the CM given by Eq.~\eqref{eq:bisepexample} is biseparable. The other
matrix elements of the dual solution are algebraic numbers with the minimal
polynomials given by
\begin{widetext}
\begin{equation}
\begin{split}
    7808 a^7 -54064 a^6 -100448 a^5 +159960 a^4 +25112 a^3 -3475 a^2 - 122a - 12 &= 0 \\
    562176 b^7 +2838528 b^6 -118416 b^5 -4411968 b^4 -587704 b^3 +6640 b^2 -3913 b -24 &= 0 \\
    281088 c^7 -1883520 c^6 +2786016 c^5 -509001 c^4 -638456 c^3 -93400 c^2 -4576 c - 144 & = 0 \\
    3072 d^7 +7808 d^6 +55600 d^5 -100448 d^4 -159960 d^3 +25112 d^2 +3379 d -122 &= 0 \\
    48 e^7 -160 e^6 -24 e^5 +1472 e^4 -2109 e^3 -3270 e^2 +5644 e -512 &= 0.
\end{split}
\end{equation}
\end{widetext}
In \textsl{Mathematica} nomenclature, the concrete roots have the following numbers:
\begin{equation}
    a^\star : 4, \quad b^\star : 5, \quad c^\star : 3, \quad d^\star : 4, \quad e^\star : 5.
\end{equation}
Numerically, the dual solution reads as
\begin{equation}
    \begin{split}
        &\gamma^{1\star}_{xx} = \gamma^{2\star}_{xx} = \gamma^{3\star}_{xx} = 
        \begin{pmatrix}
            1.1689...
        \end{pmatrix}, \\
        &\gamma^{23\star}_{xx} = \gamma^{13\star}_{xx} = \gamma^{12\star}_{xx} = 
        \begin{pmatrix}
            1.1981... & 1.1373... \\
            1.1373... & 1.1981...
        \end{pmatrix}, \\
        &\gamma^{1\star}_{pp} = \gamma^{2\star}_{pp} = \gamma^{3\star}_{pp} = 
        \begin{pmatrix}
            0.2138...
        \end{pmatrix}, \\
        &\gamma^{23\star}_{pp} = \gamma^{13\star}_{pp} = \gamma^{12\star}_{pp} = 
        \begin{pmatrix}
            2.1070... & -2 \\
            -2 & 2.1070...
        \end{pmatrix}.
    \end{split}
\end{equation}
It is possible to check that these are physical CMs and $a^\star d^\star = 1/4$,
so the single-mode states are squeezed vacuums states. In addition, the two-mode
states satisfy the equality 
\begin{equation}
    \gamma^{23\star}_{xx} \gamma^{23\star}_{pp} = \frac{1}{4}\Omega_2,
\end{equation}
and are thus pure. We have that
\begin{equation}
\begin{split}
    \gamma_{xx} &\succcurlyeq \frac{1}{3}\left(\gamma^{1\star}_{xx} \oplus \gamma^{23\star}_{xx} 
    + \gamma^{2\star}_{xx} \oplus \gamma^{13\star}_{xx} 
    + \gamma^{3\star}_{xx} \oplus \gamma^{12\star}_{xx}\right), \\
    \gamma_{pp} &\succcurlyeq \frac{1}{3}\left(\gamma^{1\star}_{pp} \oplus \gamma^{23\star}_{pp} 
    + \gamma^{2\star}_{xx} \oplus \gamma^{13\star}_{pp} 
    + \gamma^{3\star}_{xx} \oplus \gamma^{12\star}_{pp}\right),
\end{split}
\end{equation}
since the eigenvalues of the differences of the left-hand side and the
right-hand side are positive and given by
\begin{equation}
\begin{split}
    &(0.8867..., 0.0240..., 0.0240...), \\
    &(0.0240..., 0.0240..., 0.0240...),
\end{split}
\end{equation}
respectively. Note that $\lambda^\star_{\mathrm{min}} = 0.0240...$. We emphasize
that these statements can be verified exactly by computing the minimal
polynomial of the quantities in question.

\section{Open problems}\label{sec:IX}

In all the problems studied analytically above in the KKT condition
\eqref{eq:MMKKT}, which we reproduce here as
\begin{equation}\label{eq:XPrank}
	\begin{pmatrix}
		\gamma^{(I)\star}_{xx} & \frac{1}{2}E \\
		\frac{1}{2}E & \gamma^{(I)\star}_{pp}
	\end{pmatrix}
	\begin{pmatrix}
		X^\star[I] & -Z^{(I)\star} \\
		-Z^{(I)\star\mathrm{T}} & P^\star[I]
	\end{pmatrix}
	= 0,
\end{equation}
both terms have the same rank, equal to $k = |I|$, the size of the matrices
$\gamma^{(I)}_{xx}$, $\gamma^{(I)}_{pp}$, $X[I]$ and $P[I]$. Since rank of the
first term is always $\geqslant k$, the statement that both terms have rank $k$
has a simple interpretation. The optimal $\gamma^{(I)\star}_{xx}$ and
$\gamma^{(I)\star}_{pp}$ are such that the last $k$ column-vectors are linearly
dependent on the first $k$,
\begin{equation}
	\begin{pmatrix}
		* & \frac{1}{2}E \\
		\frac{1}{2}E & *
	\end{pmatrix}.
\end{equation}
Under the assumption that $\rank(X^{(I)\star}) = \rank(P^{(I)\star}) = k$ the
optimal $Z^{(I)\star}$ is such that it makes the last $k$ column-vectors
linearly dependent on the first $k$,
\begin{equation}
	\begin{pmatrix}
		X^{(I)\star} & * \\
		* & P^{(I)\star}
	\end{pmatrix}.
\end{equation}
This assumption turns out to be valid in all the analytical cases considered
above and for all $I$ participating in the optimal solution. The full optimal
matrices $X^\star$ and $P^\star$ are not necessarily of full rank. In fact, they
might even be of different rank. Numerical experiments show that the two terms
in the product \eqref{eq:XPrank} do not always have the same rank, but the
following less strong relation holds:
\begin{equation}
	\rank
	\begin{pmatrix}
		\gamma^{(I)\star}_{xx} & \frac{1}{2}E \\
		\frac{1}{2}E & \gamma^{(I)\star}_{pp}
	\end{pmatrix}
	+
	\rank
	\begin{pmatrix}
		X^\star[I] & -Z^{(I)\star} \\
		-Z^{(I)\star\mathrm{T}} & P^\star[I]
	\end{pmatrix}
	= 2k.
\end{equation}
If $A$ and $B$ are matrices of the same size $2k$ and $AB=0$, then a weaker
inequality is always satisfied,
\begin{equation}
	\rank(A) + \rank(B) \leqslant 2k.
\end{equation}
Our observation is the stronger statement that this inequality is always
equality, and in some cases an even stronger relation,
\begin{equation}
	\rank(A) = \rank(B) = k,
\end{equation}
is valid. In the full case, the similar numerical observation states that
\begin{equation}
	\rank \mathcal{P}(\gamma^{(I)\star}) + \rank(M^\star[I] + i K^{(I)\star}) = 2k.
\end{equation}
So, the first open problem is to prove this conjecture.

Another problem is to find the maximal values of the measures
$\mathcal{E}^+_{\mathfrak{I}_2}$
and $\mathcal{T}^+_{\mathfrak{I}_2}$. For the latter we have
$\mathcal{T}^+_{\mathfrak{I}_2}(\gamma) < 1$ for all
$\gamma$ and the symmetric states defined by Eq.~\eqref{eq:gammalm} numerically
show that
\begin{equation}
	\sup_{\lambda, \mu}\mathcal{T}^+_{\mathfrak{I}_2}(\gamma^S(\lambda, \mu)) = 1.
\end{equation}
An analytical proof of this equality would demonstrate that 
\begin{equation}\label{eq:maxT}
	\sup_\gamma \mathcal{T}^+_{\mathfrak{I}_2}(\gamma) = 1.
\end{equation}
For the same $n$-mode symmetric states we have, numerically
\begin{equation}
	\sup_{\lambda, \mu} \mathcal{E}^+_{\mathfrak{I}_2}(\gamma^S(\lambda, \mu)) = \frac{1}{n}.
\end{equation}
Our conjecture is then that the right-hand of this relation is also the maximal
value of the $\mathcal{E}^+$ measure,
\begin{equation}\label{eq:maxE}
	\sup_\gamma \mathcal{E}^+_{\mathfrak{I}_2}(\gamma) = \frac{1}{n}.
\end{equation}
Verifying the validity of the equations \eqref{eq:maxT} and \eqref{eq:maxE} is
our second open problem. A difficulty in maximizing these quantities is that
they are convex. Standard convex optimization refers to minimizing convex
functions or maximizing concave ones, but here wee need to maximize convex
functions.

The last open problem we formulate in this work is finding the genuine entangled
states with as many positive partial transpositions (PTs) as possible and with
$\mathcal{E}^+_{\mathfrak{I}_2}$ (or $\mathcal{T}^+_{\mathfrak{I}_2}$) as large
as possible. The proof given in Ref.~\cite{PhysRevLett.86.3658} states that
$1+n$ bipartitions of an $(n+1)$-mode state are equivalent to
$1+n$-separability, so we should exclude such bipartitions from the
consideration. The state with CM given by Eq.~\eqref{eq:BBS} has all three $2+2$
PTs positive (and all its $1+3$ PTs non-positive), but this state is
$12|34$-entangled. This is a very simple form of entanglement, but we are
interested in a much stronger notion of genuine entanglement. One way to look
for such states is to randomly generate a positive-semidefinite $8\times8$
matrix $M$ with trace 1 and then solve the following optimization problem:
\begin{equation}\label{eq:additiveM}
	\min_{\gamma} \left[\tr(M\gamma) - \min_{\mathcal{I} \in \mathfrak{I}_2}
		\sum_{I \in \mathcal{I}} \str(M[I])\right],
\end{equation}
where the minimization is over all physical CMs $\gamma$ with all three $2+2$
PTs positive and $\mathfrak{I}_2$ is the set of all seven bipartitions of 4
modes. So, given $M$, we optimize over $\gamma$. The optimal solution
$\gamma^\star$ might or might not be genuine entangled. Testing a few billions
of random matrices we found a matrix $M$ for which the optimal solution of the
problem \eqref{eq:additiveM} satisfies the inequality
\begin{equation}
	\mathcal{E}^+_{\mathfrak{I}_2}(\gamma^\star) > 0.01.	
\end{equation}
The measure $\mathcal{E}^+_{\mathfrak{I}_2}$ has been obtained by solving the
problem \eqref{eq:ent-test-full} where, given $\gamma$, we optimize over $M$.
The optimal matrix $M^\star$ corresponding to $\gamma^\star$ (and which is
typically different from the initial $M$) can be used as a new input to the
problem \eqref{eq:additiveM} and produce a new state $\gamma^{\star\star}$ with
slightly larger $\mathcal{E}^+_{\mathfrak{I}_2}$. This new state can be used as
an input to the problem \eqref{eq:ent-test-full} and produce a new matrix
$M^{\star\star}$, which can be used as an input to \eqref{eq:additiveM} and so
on. This process quickly leads to unwieldy numerical results, but
$\mathcal{E}^+_{\mathfrak{I}_2}(\gamma^{\star\star\ldots})$ seems to converge.
Our last open problem is to find 
\begin{equation}
	\sup_{\gamma} \mathcal{E}^{+}_{\mathfrak{I}_2}(\gamma),
\end{equation}
where the supremum is taken over the physical CMs with all PTs, except the ones
of the form $1+n$, positive. In the case of 4-mode states we just found that
this supremum is larger than $0.01$.

\section{Conclusion}

In this work we developed several approaches to determine the true physical
covariance matrix (CM) from uncertain experimentally measured non-physical data
and test the corresponding quantum state for entanglement. All these approaches
are based on solving convex optimization problems. Compared with existing
theoretical treatments, our current work is applicable to continuous-variable
multi-mode states of a larger size, consistent with the growing scale of optical
quantum computing experiments based on Gaussian cluster states
\cite{science2645,science4354,AghaeeRad2025,Jia2025}, though with the aim to go
beyond the scope of these experiments and their limited focus on full
inseparability. For other relevant recent works on Gaussian states we refer to
Refs.~\cite{PhysRevResearch.6.043113, roy2024}. We presented two ways to
determine the most likely physical CM given a measured non-physical one. The
first way is to compute the closest physical CM, and the other is to compute the
most probable CM under assumption of the Gaussian distribution of the measured
CM elements. These two procedures can be routinely applied to 200-mode states on
a high-end desktop PC.

Having recovered the physical CM, the next step is to test it for entanglement.
These tests are expressed in additive and multiplicative forms, for exactly
known CMs and for CMs with uncertainty. Using these tests, we analytically
establish the entanglement of one simple class of symmetric CMs of pure states
for arbitrary number of parts. These states are used for benchmarking of our
numerical procedures.

Though the representation of the symplectic trace as a minimum of some quantity
has already been known, neither its representation as a maximum of another
quantity, nor an explicit expression for it has been known before (to our
knowledge). These two results play an important role in the entanglement tests
we developed. 

Our tests are completely general in the sense that they can test for any
combinations of arbitrary partitions of the modes of the state. For example, in
the case of $n=5$ we can test not only for entanglement of concrete partitions
like $12|345$ or $1|24|35$, but also for arbitrary combinations of partitions
like $12|345, 1|24|35, 15|2|3|4$. We mostly work with genuine multipartite
entanglement, which involves all possible bipartitions, including their convex
combinations, and thus requires a more refined and less easily scalable approach
in terms of entanglement witnesses than just full inseparability. All
optimization problems are specialized for the practically relevant reduced case,
where the $xp$-part of CM is zero. We show that the specialized problems perform
much faster and take much less RAM then the problems in the full case, so we can
quickly solve the typical case without loosing the ability to solve the general
case when we really need to. 

Testing the multipartite states with $16$ parts (depending on whether it is the
full or reduced case) can be easily done on a high-end desktop PC. We use three
different solvers and show that for all of them the time to find a solution may
vary greatly depending on the state (with the same number of parts). It is a
feature of all iterative algorithms where the number of iterations (and thus the
time to converge to a solution) depends on some kind of condition on the input
data. We observe that different solvers work best for different kinds of
problem, so there is no one that is superior to others in all cases. In
addition, our figures show that more than one metric is needed to estimate the
quality of the produced solution. We are also able to test larger states, up to
$n=18$, but the accuracy of the results is not always satisfactory. We show that
the time needed to test for genuine multipartite entanglement grows
exponentially with $n$, while the RAM grows super-exponentially. As the software
for solving convex optimization problems will further advance, it will be
possible to solve larger problems on the given hardware (though this advancement
will not eliminate the exponential complexity of the entanglement tests). 

Besides purely numerical tests, in some cases the optimization problems we
presented here admit an analytical solution. We show how to analytically
construct the optimal solution of two kinds of the entanglement test
optimization problems for a few well-known example, also including a family of
non-Gaussian states, and for one representative, interesting parametrical family
of states. In addition, for this parametric family of states we show the regions
of parameters corresponding to various kinds of separability property and
demonstrate how these regions expand from the strongest property, the full
separability to the weakest property, the biseparability. All our examples are
either new or they are known from the literature, but were only numerically
treated before. We are confident that our ``modern'' treatment in this work will
find useful applications in Gaussian quantum optical technology and information
theory, and beyond. 

We acknowledge funding from the BMBF in Germany (QR.X, QuKuK, QuaPhySI,
PhotonQ), the EU’s Horizon Research and Innovation Actions (CLUSTEC), from the
EU/BMBF via QuantERA (ShoQC), and from the Deutsche Forschungsgemeinschaft (DFG,
German Research Foundation)–Project-ID 429529648 – TRR 306 QuCoLiMa (“Quantum
Cooperativity of Light and Matter”).

\appendix

\section{Proof of Theorem \ref{thrm:1}}

Introducing a new variable $s$ defined by Eq.~\eqref{eq:snorminf}, the objective
function to minimize is just $f_0(\vec{x}) = s$, where the variables of the
problem $\vec{x} = (s, \gamma)$ are $s$ and the elements of $\gamma$. There is
one semidefinite condition $\mathcal{P}(\gamma) \succcurlyeq 0$, which in real
form reads as
\begin{equation}
	-\begin{pmatrix}
		\gamma & \frac{1}{2}\Omega \\
		-\frac{1}{2}\Omega & \gamma
	\end{pmatrix}
	\preccurlyeq 0,
\end{equation}
and $2n(2n+1)$ linear conditions
\begin{equation}
	\pm\gamma \mp \gamma^\circ - s\sigma \leqslant 0.
\end{equation}
Thus, the dual problem has three symmetric matrix variables $(\bm{\lambda}_1,
\bm{\lambda}_2, \bm{\lambda}_3) = (Z, U^+, U^-)$, a $4n \times 4n$ matrix $Z
\succcurlyeq 0$ and two $2n \times 2n$ matrices $U^\pm \geqslant 0$. The
Lagrangian of this problem reads as follows:
\begin{displaymath}
\begin{split}
	L(s, &\gamma, Z, U^+, U^-) = s - \tr\left[Z\begin{pmatrix}
		\gamma & \frac{1}{2}\Omega \\
		-\frac{1}{2}\Omega & \gamma
	\end{pmatrix}\right] \\
	&+ \tr[U^+(\gamma - \gamma^\circ - s\sigma)] - \tr[U^-(\gamma - \gamma^\circ + s\sigma)].
\end{split}
\end{displaymath}
The objective of the dual problem is obtained from the Lagrangian by minimizing
it over the primal variables
\begin{equation}
	g(Z, U^+, U^-) = \inf_{s, \gamma} L(s, \gamma, Z, U^+, U^-),
\end{equation}
where the infimum is taken over all real $s$ and all symmetric matrices
$\gamma$. Let us decompose $Z$ as
\begin{equation}\label{eq:Zdec}
	Z = 
	\begin{pmatrix}
		Z_1 & Z_2 \\
		Z^{\mathrm{T}}_2 & Z_3
	\end{pmatrix},
\end{equation}
then the Lagrangian can be written as follows:
\begin{equation}
\begin{split}
	L &= \frac{1}{2}\tr[(Z_2 - Z^{\mathrm{T}}_2)\Omega] -\tr[(U^+ - U^-)\gamma^\circ] \\
	  &+(1-\tr[(U^+ + U^-)\sigma])s \\
	  &-\tr[(Z_1+Z_3 - U^+ + U^-)\gamma].
\end{split}
\end{equation}
A linear function is bounded from below if and only if it is identically equal
to zero, so the dual objective is
\begin{equation}
	g(Z, U^+, U^-) = \frac{1}{2}\tr[(Z_2 - Z^{\mathrm{T}}_2)\Omega] - \tr[(U^+ - U^-)\gamma^\circ]
\end{equation}
provided that the dual variables satisfy the conditions
\begin{equation}
\begin{split}
	\tr[(U^+ + U^-)\sigma] &= 1 \\ 
	U^+ - U^- &= Z_1 + Z_3,
\end{split}
\end{equation}
and $g(Z, U^+, U^-) = -\infty$ otherwise. Using the second condition we can write
the first, finite case as
\begin{equation}
	g(Z, U^+, U^-) = \frac{1}{2}\tr[(Z_2 - Z^{\mathrm{T}}_2)\Omega] - \tr[(Z_1 + Z_3)\gamma^\circ].
\end{equation}
We now introduce the complex Hermitian matrix
\begin{equation}
	\Lambda = \Lambda(Z) = Z_1 + Z_3 + i(Z_2 - Z^{\mathrm{T}}_2).
\end{equation}
It is easy to verify that
\begin{equation}
	\tr[\Lambda \mathcal{P}(\gamma^\circ)] = \tr[(Z_1 + Z_3)\gamma^\circ] - \frac{1}{2}\tr[(Z_2 - Z^{\mathrm{T}}_2)\Omega],
\end{equation}
so we derive that in the finite case the dual objective reads as
\begin{equation}
	g(Z, U^+, U^-) = \tr[\Lambda(Z) \mathcal{P}(\gamma^\circ)].
\end{equation}
The matrix $\Lambda(Z)$ is positive semidefinite since
\begin{equation}
\begin{split}
	\widetilde{\Lambda(Z)} &= 
	\begin{pmatrix}
		Z_1 + Z_3 & Z_2 - Z^{\mathrm{T}}_2 \\
		-Z_2 + Z^{\mathrm{T}}_2 & Z_1 + Z_3
	\end{pmatrix} \\
	&= 
	Z + 
	\begin{pmatrix}
		0 & E \\
		-E & 0
	\end{pmatrix}^{\mathrm{T}}
	Z
	\begin{pmatrix}
		0 & E \\
		-E & 0
	\end{pmatrix}
	\succcurlyeq 0.
\end{split}
\end{equation}
On the other hand, given a complex Hermitian positive semidefinite matrix 
\begin{equation}\label{eq:LambdaMK}
	\Lambda = M + i K,
\end{equation}
we can construct a real positive semidefinite matrix
\begin{equation}
	Z = \frac{1}{2}\tilde{\Lambda} = 
	\frac{1}{2}
	\begin{pmatrix}
		M & K \\
		-K & M
	\end{pmatrix}
\end{equation}
such that
\begin{equation}
	\Lambda(Z) = \Lambda.
\end{equation}
It means that when the variable $Z$ runs over all symmetric positive
semidefinite $4n\times4n$-matrices, the corresponding $\Lambda(Z)$ runs over all
Hermitian positive semidefinite $2n\times2n$-matrices and instead of $Z$ we can
use $\Lambda$. The dual problem now read as: Maximize 
\begin{equation}
	g(\Lambda, U^+, U^-) = \tr[\Lambda \mathcal{P}(\gamma^\circ)]
\end{equation} 
over all $\Lambda \succcurlyeq 0$ and $U^\pm \geqslant 0$ provided that
\begin{equation}\label{eq:UV}
\begin{split}
	\tr[(U^+ + U^-)\sigma] &= 1 \\
	U^+ - U^- &= \re(\Lambda).
\end{split}
\end{equation}
It is exactly the problem given in the main text.


In the reduced case the variables are $\vec{x} = (s, \gamma_{xx}, \gamma_{pp})$.
The semidefinite constrain of the primal problem is
\begin{equation}\label{eq:xpUV1}
	-\begin{pmatrix}
		\gamma_{xx} & \frac{1}{2}E \\
		\frac{1}{2}E & \gamma_{pp}
	\end{pmatrix}
	\preccurlyeq 0
\end{equation}
and the $2n(n+1)$ linear constraints are
\begin{equation}\label{eq:xpUV2}
\begin{split}
	\pm\gamma_{xx} \mp \gamma^\circ_{xx} - s\sigma_{xx} &\leqslant 0 \\
	\pm\gamma_{pp} \mp \gamma^\circ_{pp} - s\sigma_{pp} &\leqslant 0,
\end{split}
\end{equation}
so in this case there are five dual variables, a real positive semidefinite $2n
\times 2n$ matrix $Z \succcurlyeq 0$ and four $n \times n$ matrices $U^\pm, V^\pm \geqslant 0$
with non-negative elements. The Lagrangian reads as
\begin{displaymath}
\begin{split}
	&L(s, \gamma_{xx}, \gamma_{pp}, \Lambda, U^\pm, V^\pm)
	= s - \tr\left[Z\begin{pmatrix}
		\gamma_{xx} & \frac{1}{2}E \\
		\frac{1}{2}E & \gamma_{pp}
	\end{pmatrix}\right] \\
	&+ \tr[U^+(\gamma_{xx} - \gamma^\circ_{xx} - s\sigma_{xx})-U^-(\gamma_{xx} - \gamma^\circ_{xx} + s\sigma_{xx})] \\
	&+ \tr[V^-(\gamma_{pp} - \gamma^\circ_{pp} + s\sigma_{pp})-V^+(\gamma_{pp} - \gamma^\circ_{pp} - s\sigma_{pp})].
\end{split}
\end{displaymath}
The dual objective function is 
\begin{displaymath}
	g(\Lambda, U^\pm, V^\pm) = \min_{s, \gamma_{xx}, \gamma_{pp}} L(s, \gamma_{xx}, \gamma_{pp}, \Lambda, U^\pm, V^\pm),
\end{displaymath}
where the minimization is over all real $s$ and all symmetric $\gamma_{xx}$ and
$\gamma_{pp}$. For $g$ to be bounded the coefficient of $s$ must be zero, so we have
\begin{equation}
	\tr[(U^+ + U^-)\sigma_{xx} + (V^+ + V^-)\sigma_{pp}] = 1.
\end{equation}
Using the same decomposition for $\Lambda$ as \eqref{eq:Zdec} for $Z$ above, we
derive
\begin{displaymath}
	\tr\left[
	\begin{pmatrix}
		\Lambda_1 & \Lambda_2 \\
		\Lambda^{\mathrm{T}}_2 & \Lambda_3
	\end{pmatrix}
	\begin{pmatrix}
		\gamma_{xx} & \frac{1}{2}E \\
		\frac{1}{2}E & \gamma_{pp}
	\end{pmatrix}\right]
	= \tr[\Lambda_1 \gamma_{xx} + \Lambda_3 \gamma_{pp} + \Lambda_2].
\end{displaymath}
Equating the coefficients of $\gamma_{xx}$ and $\gamma_{pp}$ to zero, we obtain
\begin{equation}
\begin{split}
	\Lambda_{xx} &\equiv \Lambda_1 = U^+ - U^- \\
	\Lambda_{pp} &\equiv \Lambda_3 = V^+ - V^-.
\end{split}
\end{equation}
The dual objective is now readily seen to be given by
\begin{displaymath}
\begin{split}
	g(\Lambda, U^\pm, V^\pm) &= 
	-\tr[(U^+ - U^-)\gamma^\circ_{xx} + (V^+ - V^-)\gamma^\circ_{pp}] \\
	&- \tr(\Lambda_2) = 
	-\tr[\Lambda \mathcal{P}(\gamma^\circ_{xx}, \gamma^\circ_{pp})].
\end{split}
\end{displaymath}
The KKT conditions follow from the constrains \eqref{eq:xpUV1} and
\eqref{eq:xpUV2} of the primal problem.

The other theorems of the main text are derived in a similar way, by writing
the Lagrangian and simplifying it with matrix manipulations.

\end{document}